
\documentclass[12pt,twoside]{amsart}
\usepackage[margin=3cm]{geometry}
\usepackage[colorlinks=false]{hyperref}
\usepackage[english]{babel}
\usepackage{graphicx,titling}
\usepackage{float}
\usepackage{amsmath,amsfonts,amssymb,amsthm}
\usepackage{lipsum}
\usepackage[T1]{fontenc}
\usepackage{fourier}
\usepackage{color}
\usepackage[latin1]{inputenc}
\usepackage{esint}
\usepackage{caption}
\usepackage{ccicons}

\makeatletter
\def\blfootnote{\xdef\@thefnmark{}\@footnotetext}
\makeatother

\newcommand\ccnote{
    \blfootnote{\copyright\,\, Esteban Cárdenas, Joseph K. Miller, and Nata\v{s}a Pavlovi\'c}
    \blfootnote{\ccLogo\, \ccAttribution\,\, Licensed under a \href{https://creativecommons.org/licenses/by/4.0/}{Creative Commons Attribution License (CC-BY)}.}
}

\usepackage[export]{adjustbox}
\numberwithin{equation}{section}
\usepackage{setspace}\setstretch{1.05}

\renewcommand{\leq}{\leqslant}

\renewcommand{\geq}{\geqslant}
\renewcommand{\mathbb}{\varmathbb}
\usepackage{fancyhdr}
\pagestyle{fancy}
\fancyhf{}

\newtheorem{theorem}{Theorem}[section]
\newtheorem{lemma}[theorem]{Lemma}

\newtheorem{proposition}[theorem]{Proposition}
\newtheorem{definition}[theorem]{Definition}
\newtheorem{remark}[theorem]{Remark}
\fancyhead[LE,RO]{\thepage}

\fancyhead[RE]{E. Cárdenas, J. K. Miller \& N. Pavlovi\'c}
\fancyhead[LO]{Effective dynamics of Bose-Fermi mixtures}


\usepackage{mathrsfs}               
\usepackage{dsfont}
\usepackage{braket}

\usepackage{enumitem}						

\renewcommand{\d}{\mathrm{d}}							
\renewcommand{\i}{ \mathrm{ i }  }

\newcommand{\N}{\mathcal{N}} 		
\newcommand{\C}{\mathbb{C}}			 
\newcommand{\R}{\mathbb{R}} 			

\newcommand{\U}{\mathcal{U}}				
\newcommand{\calD}{\mathcal{D}}

\renewcommand{\L}{ \mathcal{L} }				 
\newcommand{\F}{\mathscr{F}}

\newcommand{\calS}{\mathcal{S}}

\renewcommand{\O}{\mathcal O}
\newcommand{\calC}{\mathcal{C}}

\newcommand{\calN}{\mathcal N}
\newcommand{\W}{  \mathcal{W}  }

\newcommand{\calR}{\mathcal{R}}

\renewcommand{\H}{\mathcal{H} }  			 
\newcommand{\h}{\mathfrak{h}}

\newcommand{\tr}{\mathrm{Tr}}
\newcommand{\bx}{		 \boldsymbol{x}}
\newcommand{\by}{		 \boldsymbol{y}}

\newcommand{\1}{\mathds{1}}
\newcommand{\0}{ \bf 0 }
\newcommand{\<}{\left\langle}							
\renewcommand{\>}{\right\rangle}

\renewcommand{\leq}{\leqslant}

\renewcommand{\geq}{\geqslant}     
\newcommand{\vp}{\varphi}

\renewcommand{\t}[1]{\textnormal{#1}}

\newtheorem{assumption}{Assumption}

\address{Esteban Cárdenas, The University of Texas at Austin, Department of Mathematics, 2515 Speedway, Austin TX, 78712, USA}
\email{eacardenas@utexas.edu}
\medskip
\address{Joseph K. Miller, Stanford University,  Department of Mathematics, 450 Jane Stanford Way Building 380, Stanford CA, 94305, USA} 
\email{jkm314@stanford.edu}
\medskip
\address{Nata\v{s}a Pavlovi\'{c}, The University of Texas at Austin, Department of Mathematics, 2515 Speedway, Austin TX, 78712, USA}
\email{natasa@math.utexas.edu}


\begin{document}

\thispagestyle{empty}

\begin{minipage}{0.28\textwidth}
\begin{figure}[H]
\includegraphics[width=2.5cm,height=2.5cm,left]{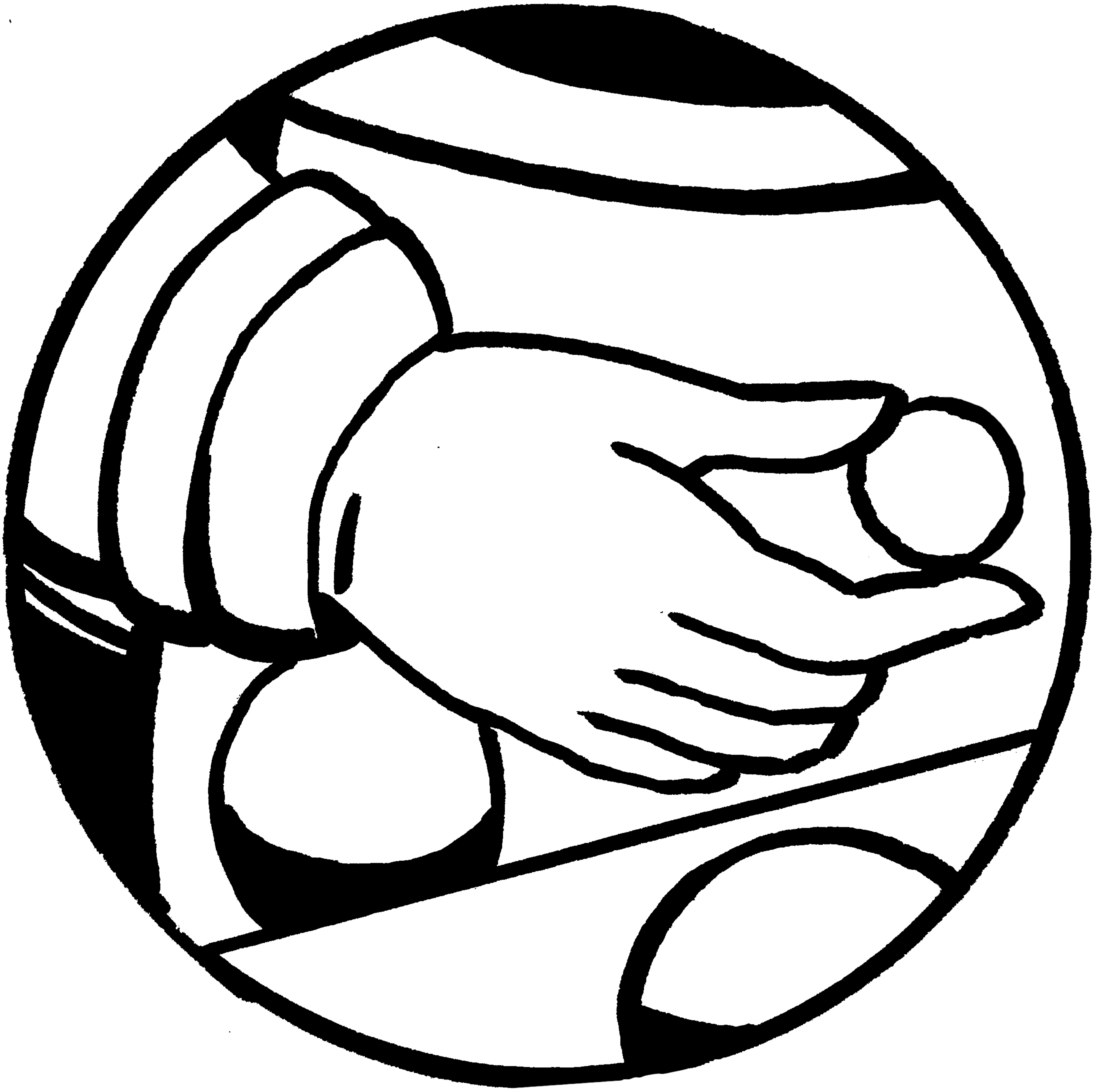}
\end{figure}
\end{minipage}
\begin{minipage}{0.7\textwidth} 
\begin{flushright}
Ars Inveniendi Analytica (2025), Paper No. 5, 54 pp.
\\
DOI 10.15781/av4e-gx71
\\
ISSN: 2769-8505
\end{flushright}
\end{minipage}

\ccnote

\vspace{1cm}


\begin{center}
\begin{huge}
\textit{On the effective dynamics}

\textit{of Bose-Fermi mixtures}

\end{huge}
\end{center}

\vspace{1cm}


\begin{minipage}[t]{.28\textwidth}
\begin{center}
{\large{\bf{Esteban Cárdenas}}} \\
\vskip0.15cm
\footnotesize{The University of Texas at Austin}
\end{center}
\end{minipage}
\hfill
\noindent
\begin{minipage}[t]{.28\textwidth}
\begin{center}
{\large{\bf{Joseph K. Miller}}} \\
\vskip0.15cm
\footnotesize{Stanford University}
\end{center}
\end{minipage}
\hfill
\noindent
\begin{minipage}[t]{.28\textwidth}
\begin{center}
{\large{\bf{Nata\v{s}a Pavlovi\'c}}} \\
\vskip0.15cm
\footnotesize{The University of Texas at Austin}
\end{center}
\end{minipage}

\vspace{1cm}


\begin{center}
\noindent \em{Communicated by Jean Dolbeault}
\end{center}
\vspace{1cm}


\noindent \textbf{Abstract.} \textit{In this work, we describe the dynamics of	a Bose-Einstein condensate interacting with a degenerate Fermi gas at zero temperature. 
	First, we analyze the mean-field approximation of the 
	many-body Schr\"odinger dynamics and prove   
the	emergence of a  coupled Hartree-type system of equations. 
	We obtain rigorous error control that yields a non-trivial scaling window in which the approximation 
	is meaningful. 	Second, starting from this Hartree system,
	we identify a  novel scaling regime in which the fermion  distribution
	behaves semi-classically, but the boson field remains quantum-mechanical; 
	this is  one of the main contributions of the present article. 
	In this regime, the  bosons are much lighter and more numerous than the fermions.
	We then prove convergence to a coupled Vlasov-Hartree system of equations
	with an explicit convergence rate. }
\vskip0.3cm

\noindent \textbf{Keywords.} Bose-Fermi mixtures, mean-field equations, semi-classical limit. 
\vspace{0.5cm}


	\section{Introduction}
\label{section introduction}

In this work, we study the dynamics of a gas composed of $M$ identical fermions and $N$ identical bosons
moving in   Euclidean space $\R^d$, 
for spatial dimensions $d\geq2.$ 
The Hilbert space for the system is the tensor product 
\begin{equation}
	\mathscr{H} 
	\equiv 
	L^2_a( \R^{dM}) \otimes L^2_s( \R^{dN}) \  , 
\end{equation}
where  $L_a^2$ and $L_s^2$ correspond to the subspaces of antisymmetric and symmetric functions, respectively.
We neglect any internal degrees of freedom the particles may have. 
We assume  that the two systems are non-relativistic
and interact by means of a   two-body potential $V : \R^d \rightarrow \R$. 
Thus, we  study the   Hamiltonian in 
dimensionless variables 
\begin{equation}
	\label{hamiltonian HNM}
	H 
	\equiv 
	\frac{\hbar^2}{2 m_F} \sum_{i =1}^M  (- \Delta_{x_i}) \otimes \1 
	+ 
	\frac{\hbar^2}{2 m_B} \sum_{j =1 }^N \1 \otimes   (- \Delta_{y_j})   
	+ 
	\lambda 
	\sum_{i,j =1 }^{N,M}
	V(x_i - y_j) \    . 
\end{equation}
Here,  all the physical parameters are effective (dimensionless) quantities
and for simplicity we shall make no distinction in the notation.
Thus, 
$\hbar>0$  corresponds to  Planck's   constant,   
$\lambda>0$ is the strength   of the interaction, and $m_F$ and $m_B$ are the respective masses of fermions and bosons. 
The first two terms in \eqref{hamiltonian HNM}
correspond to the kinetic energies of the two subsystems, 
whereas the third one corresponds to the potential energy associated to their interaction. 
In particular,    
$V$ will always be assumed to be regular enough so that the Hamiltonian 
$H$ is self-adjoint in its natural domain, and
its time evolution is well-defined. 

\vspace{2mm}

Since Bose-Fermi mixtures   have not been extensively studied in the mathematical literature, 
to better focus on the interplay between bosons and fermions through the  potential $ V $, 
this work intentionally omits two-body boson-boson and fermion-fermion interactions
\begin{equation}
	\qquad 
	\mathbb W _{BB}
	= 
	\mu_{BB}
	\sum_{j_1 < j_2}
	W(y_{j_1}  - y_{j_2} )
	\qquad 
	\t{and}
	\qquad
	\mathbb W _{FF} = 
	\mu_{FF}
	\sum_{i_1 < i_2} 
	W (x_{i_1}- x_{i_2})
\end{equation}
in the Hamiltonian \eqref{hamiltonian HNM}. 
On the other hand, systems of interacting bosons and interacting fermions have been broadly studied in the literature in the last few decades and we now give a brief historical review of the corresponding mathematical results.

\subsection{Historical background on single species systems}
The main goal of this paper is to study the   dynamics generated by the Hamiltonian 
$H$. 	
As is often the case in many-body systems,   the associated   Schr\"odinger equation 
\begin{eqnarray} \label{schrodinger intro}
	i \hbar \partial_t \psi(t) = H \psi(t) \ , \qquad \psi (t) \in \mathscr H 
\end{eqnarray}
is impossible to solve exactly 
and one must rely on effective approximations 
to understand the  physical   behavior of the system. 
In this context, one of the best understood approximations is the mean-field approximation.
In broad terms, 
one assumes here that the two-body interaction potential    between particles is weak, 
but the number of particles is large--one then   replaces
the total   interaction  by its average over position densities. 
Let us briefly describe what this idea leads to   in the context of single species gases. 

\vspace{2mm}

For cold systems of interacting bosons, 
assuming that the  potential $W(y_i - y_j)$  
that mediates their interaction 
varies over length scales
comparable to the size of the gas,  
the mean-field approximation leads 
to the derivation of the Hartree equation 
\begin{eqnarray}
	\label{hartree}
	&  i \partial_t \vp(t) = -  \frac{1}{2}
	\Delta \vp(t) + 
	\big(W* |\vp (t)|^2 \big) \,  \vp(t)  \ , 
\end{eqnarray}
where the solution $ \vp(t) \in L^2(\R^d)$  corresponds to the wave-function 
of a Bose-Einstein condensate; 
if $ W(x)$ varies on much smaller scales, one obtains 
the Gross-Pitaevskii or nonlinear Schr\"{o}dinger (NLS) equation. 
The  Hartree equation \eqref{hartree}
was first rigorously derived in \cite{Spohn1980} for bounded potentials, 
and later in \cite{ErdosYau2001}
for Coulomb systems. 
More recently, 
the second quantization formalism
was employed  in \cite{RodnianskiSchlein2009,ChenLeeSchlein2011}--inspired
by studies on the fluctuation dynamics originated in
\cite{Hepp74,GinibreVelo1974,GinibreVelo79}--to obtain a quantitative convergence rate.
See also \cite{DietzeLee2022} where uniform-in-time bounds for error estimates are proven. 
We  refer the reader to the following 
non-exhaustive list of references
on related works \cite{FrohlichKnowlesSchwarz2009,AmmariNier2011,KnowlesPickl2010,Pick2011,GrillakisMachedon2010,GrillakisMachedonMargetis2011,GrillakisMachedon2012} 
on the derivation of the Hartree equation, 
and to 
\cite{ErdosSchleinYau2006,ErdosSchleinYau2010,ErdosSchleinYau2007,ErdosSchleinYau2009,AdamiGolseTeta2007,KirkPatrickSchleinStaffilani2011, Chen15} on the derivation of the Gross-Pitaevskii equation.

\vspace{2mm}

For  cold gases of $M$ interacting fermions,  one obtains the Hartree-Fock equation 
\begin{equation}
	\label{hartree fock}
	i \hbar  \partial_t \omega(t) 
	=
	\Big[
	-\frac{\hbar^2}{2} \Delta + W*\rho (t) - X(t)  , \omega(t) 
	\Big] \ , 
\end{equation}
where 
$ \rho(t,x)= M^{-1} \omega(t;x,x)$
is the density of particles,  
and $X(t) $ is the so-called 
exchange term. 
Here, the solution 
$\omega(t) $
is a positive,  trace-class operator on $ L^2(\R^d)$
whose trace is equal to $M$; 
it ought to describe an interacting    Fermi gas of $M$ particles.
The Hartree-Fock equation
has been historically studied in two scaling regimes.
The first derivation in the ``microscopic regime" (namely, physical scales for which $\hbar=1$)
was carried out in 
\cite{BardosGolseGottliebMauser2003} for regular interactions, 
and later improved in
\cite{FrohlichKnowles2011}
for Coulomb systems. 
In the ``macroscopic regime" 
(namely, physical scales for which
$ \hbar = M^{-1/d}$), 
the first derivation was carried out in \cite{ElgartErdosSchleinYau2004} for real analytic potentials, yielding an optimal convergence rate for short macroscopic times. 
More recently, the derivation was revisited in 
\cite{BenedikterPortaSchlein2014}
using second quantization methods, significantly relaxing the regularity of the potentials and extending the time validity of the derivation--as a tradeoff, here one requires additional semiclassical structure on the initial data. 
This inspired substantial work in the literature; 
see for instance
\cite{BenedikterPortaSchlein2014,BenedikterPortaSchlein2014-2,BenedikterJaksicPortaSaffirioSchlein2016,PortaRademacherSaffirio2017,FrestaPortaSchlein2023}. 

\vspace{2mm}

On the other hand, the $\hbar\downarrow 0 $ limit of the Hartree-Fock equation \eqref{hartree fock}
leads to the Vlasov equation
\begin{align}
	\label{vlasov}
	& (\partial_t + p \cdot \nabla_x + F_f(t) \cdot \nabla_p) f (t,x,p)= 0 
\end{align}
where $ F_f (t,x) = - \int \nabla W(x-y ) f(t,y,p) d y dp$
is a mean-field force  
and $f(t )\in L^1_+ (\R_{x,p}^{2d} )$ is a  macroscopic phase-space distribution function.
In particular, the Pauli exclusion principle, $ 0 \leq \omega \leq 1$ viewed in the sense of quadratic forms, still holds in the macroscopic limit   $ 0 \leq f \leq 1$, in the pointwise sense. 
One can therefore understand
the solution of the Vlasov equation \eqref{vlasov} as the description
of a macroscopic gas with quantum features. 
As for  the  derivation of the Vlasov equation
from interacting quantum systems, 
the first works on the subject are
\cite{NarnhoferSewell1981,Spohn1981}.
Here, the derivation is carried out 
in the   macroscopic regime, 
by studying directly 
the BBGKY hierarchy
associated to 
the many-body Schr\"odinger dynamics.
The convergence from the
Hartree/Hartree-Fock equation 
to the Vlasov equation
was later analyzed in 
\cite{LionsPaul1993,MarkowichMauser1993,GasserIllnerMarkowichSchmeiser1998}, 
although providing no convergence rate.
The first work
to provide a convergence rate for regular potentials
was \cite{AthanassoulisPaulPezzottiPulvirenti2011}.
Later, the derivation of a convergence rate
from the Hartree to the Vlasov dynamics was revisited
and established in
\cite{BenedikterPortaSaffirioSchlein2016, Chong21} 
for a larger class of potentials.

\subsection{Bose-Fermi mixtures}
Investigating degenerate mixtures of bosons and fermions is an extremely active area of research in experimental physics for constructing and understanding novel quantum bound states such as those in superconductors, superfluids, and supersolids \cite{FB14, Schreck01, Duda23}. These ultra-cold Bose-Fermi mixtures are fundamentally different from degenerate gases with only bosons or fermions. They not only show an enriched phase diagram, but also a fundamental instability due to energetic considerations coming from the Pauli exclusion principle \cite{Molmer1998}. In particular, the fermionic particles maintain a higher energy than the bosonic particles in the ground state, causing a physical instability due to the energetic difference. This difference bounds from above the number of fermions allowed to exist in these doubly degenerate mixtures \cite{DeSalvo17}. On the other hand, by varying the ratio of masses of bosons and fermions in these mixtures, experimentalists have studied Bose-Einstein condensates with bose-bose interactions mediated by fermions \cite{DeSalvo19}. 

\vspace{2mm }

Inspired by this activity in the physics community, in this paper we start exploring
the mathematical theory of Bose-Fermi mixtures
by studying the mean-field dynamics of the Hamiltonian introduced in \eqref{hamiltonian HNM}.
Here, one of the main challenges is understanding the physical scales of the system
that allow for suitable analysis. 
Indeed, the Pauli Exclusion Principle implies that for  confined gases of fermions at low temperatures, 
fermions have a characteristic energy 
that varies in 
a scale 
$ 
\hbar^2 M^{\frac{2}{d}}/ m_F,
$
whereas 
for bosons
this is only of order 
$\hbar^2 / m_B$. 
Thus, finding a   scaling regime in which one can capture
the effective dynamics of the system presents a challenge in itself that we have to address. 

\vspace{2mm}

Let us informally describe the main results of this paper, stated rigorously in Section \ref{section main results}. The first result, formulated in Theorem \ref{thm1}, 
contains 
a quantum mean-field approximation
of the many-body Schr\"odinger dynamics. 
We prove that 
the one-particle reduced density matrices for the corresponding fermionic and
bosonic subsystems (see \eqref{definition kernels} for the definition) are
effectively described by a pair of  interacting variables
\begin{equation}
	(\omega, \vp )
	: \R 
	\rightarrow 
	\mathscr{L}^1 
	( L^2 ( \R^d)   ) 
	\times L^2 ( \R^d  )   \ , 
\end{equation} 
satisfying the system of self-consistent equations, 
which we  shall refer to 
as the  {Hartree-Hartree equation}. 
\begin{align}
	\label{hartree hartree intro}
	\begin{cases}
		& i \hbar \partial_t \omega   
		= 
		[   - (\hbar^2 / 2m_F)\Delta 
		+ 
		\lambda N   \, 
		(V  * \rho_B )  , 
		\omega 
		]			\\
		&   i \hbar \partial_t  \vp 
		= 
		- (\hbar^2 / 2m_B)\Delta  \vp 
		+ 
		\lambda  M    \, 
		(V  *  \rho_F)   
		\vp  	
	\end{cases}   \   .
\end{align}
Here $\rho_F (t , x) = \frac{1}{M } \omega( t; x , x )$
and $\rho_B(t,x) = |\vp(t,x)|^2$
are    the 
fermionic and bosonic position densities, respectively.  
A few comments are in order. 

\begin{itemize}[leftmargin=* , label=$\square$]

	\item 
	The derivation of the above equation can be heuristically justified as follows. 
	We assume that at time $t=0$, 
	the system
	has been externally confined by means of an harmonic trap, 
	at zero temperature.
		The many-body wave function is   then
		assumed to be of the form
		$\psi (0) =     \psi_F(0) \otimes \psi_B(0) \in \mathscr H ,$
		where the fermionic component $\psi_F(0)$ is a 
		Slater determinant, 
		and the bosonic component $\psi_B(0)$ is a fully factorized state--for a precise meaning, see Assumption \ref{assumption 1}. 
		These assumptions 
		have been employed in  single-species systems across various works (see e.g. \cite{BenedikterPortaSchlein2014} and \cite{RodnianskiSchlein2009}).
		Additionally,    this condition has recently been  verified 
		for ground states of 
		Bose-Fermi mixtures in our collaboration with D. Mitrouskas 
		\cite[Corollary 1]{CardenasMillerMitrouskasPavlovic2025}, although in a somewhat different scaling. 
	If the interactions between particles are weak enough, 
	then the above structure for the wave function is approximately 
	valid also for later times.
	In particular, a direct calculation shows that plugging the formal ansatz 
	$\psi(t)  =   \psi_F(t) \otimes \psi_B(t)$--preserving the initial data  structure--in the Schr\"odinger dynamics \eqref{schrodinger intro} 
	leads to  the Hartree-Hartree equation  \eqref{hartree hartree intro}
	as a self-consistent approximation.

	\item 
	While the informal justification of the emergence of  \eqref{hartree hartree intro}
	may not present a   challenge,
	obtaining a scaling regime in which
	the above system 
	describes the leading order dynamics  
	is non-trivial. 
	More specifically, using Second Quantization methods, 
	we provide rigorous control
	of error terms.  They
	become small only in a particular non-empty parameter window, including
	both macroscopic and microscopic scaling regimes. See Theorem \ref{thm1} for more details.

	\item 
		In the physics literature, 
		the Hartree-Hartree system 
		appears for instance in \cite[Eq. 1]{Kar04}, 
		although   written in terms of the orbitals of $\omega(t) $ (i.e. as a system of equations). 
		Additionally, the authors
		consider  contact interactions, e.g. formal potentials  of the form 
		$V(x) = \delta(x)$ in 
		\eqref{hartree hartree intro}. 
		Note also these  authors  include additional
		Fermi-Fermi and Bose-Bose interactions which we here neglect, 
		as well as external trapping potentials. 
\end{itemize}

 In our   second result, stated in Theorem \ref{thm2}, 
we study a new scaling regime (contained in the aforementioned parameter window) 
for the combined system.
In this   scaling regime, 
the fermion component $\omega$
is described semi-classically, 
but the bosonic component $\vp$
remains quantum-mechanical. 
More precisely, the regime that we focus on  is given by
\begin{equation}
	\label{scaling regime intro}
	\lambda = \frac{1}{N} 
	\ , 
	\qquad \hbar =  \frac{1}{M^{\frac{1}{d}}}  \ , 
	\qquad  m_B = \hbar \ , 
	\qquad m_F =1 \, 
	\quad \t{and} \quad 
	N = M^{	 1 +		\frac{1}{d}	} \ . 
\end{equation} 
Let us note here that under this scaling regime, 
the Hamiltonian that drives the boson field  $\vp$
is proportional to $\hbar$. 
In other words,  it has the form
$	\hbar ( -  (1/2) \Delta +   V*\rho_F^\hbar ) 	$.
Thus, it follows that factors of  $\hbar$ cancel out in the second equation of \eqref{hartree hartree intro}, 
enabling  us to analyze the semi-classical limit. 
To this end, we consider $f^\hbar = W^\hbar[\omega^\hbar]$, 
the Wigner transform of the fermionic component (see \eqref{wigner transform} for the definition of the Wigner transform), 
and prove that in the $\hbar \downarrow 0$ limit, 
there is convergence 
$ (f^\hbar , \vp^\hbar) \rightarrow (f, \vp)$, 
where the latter variables 
satisfy a coupled system of equations. 
This system has the following form, 
and we shall refer to it  as the \textit{Vlasov-Hartree equation}
\begin{align}
	\label{vlasov hartree intro}
	\begin{cases} 
		& 		(\partial_t  + p \cdot  \nabla_x + F_B(t,x)  \cdot  \nabla_p   ) f  = 0 	\\
		&		  i \partial_t \vp 
		=   
		- \frac{1 }{2} \Delta   \vp +   ( V*\rho_F   )   \vp 	
	\end{cases}  \ . 
\end{align}
Here,  $F_B(t,x) \equiv 
- \int\nabla V(x-y ) | \vp(t,y)|^2 \d y 
$
is a mean-field force that the bosons exert over the fermions, and 
$\rho_F 	(t,x) \equiv  \int_{	\R^d	} f (t,x,p ) \d p 	 $
is the fermionic position density. 
Our proof of convergence is quantitative, 
and implements   for the problem at hand  recently developed techniques of Quantum Optimal Transportation (QOT) \cite{GolseMouhotPaul2016,GolsePaul2017,GolsePaul2021,GolsePaul2022,Lafleche2023}.

\vspace{2mm}

In addition to 
recognizing a mean-field scaling regime 
that allows us to  rigorously derive the Hartree-Hartree system \eqref{hartree hartree intro}, 
one of the main contributions of this article 
is the identification of a novel mean-field  
semi-classical scaling regime 
in which the limiting dynamics of  \eqref{hartree hartree intro} is non-trivial. 
To the authors best knowledge, this regime
had not been identified previously in the literature.

	In order to conclude this introductory section, 
	lets us briefly give a short experimental background on ultra-cold atomic gases. First, 
	the experimental realization of Bose-Einstein condensates  goes back to the groundbreaking works 
	\cite{Anderson95, Davis95}  which led to the Nobel prize in 2001 
	and promoted much activity in the field.
	Soon after, experiments for Fermi gases were realized, 
	and the first observation  of a degenerate Fermi gas 
	is due to \cite{DeMarco99}.
	For a nice discussion between theory and experiments, 
	we refer the reader to the review article \cite{Gio08}.
	As for  Bose-Fermi mixtures, 
	a  nice review article with
	recent experiments can be found for instance
	in
	\cite[Table 1]{Onofrio16}. 
	We would like to point out in particular the article \cite{Park12}
	which studies a Bose-Fermi mixture in which  
	$ N / M  \sim 10     $ and $ m_B / m_F  \sim  0.5  $; 
	that is, an ultracold Bose-Fermi mixture
	of numerous, lighter bosons interacting with heavier fermions. 
	Finally, 
	we also 
	note that   small mass limits
	have been studied theoretically in  physics in the search for evidence of ultralight bosons in cosmology, see e.g. \cite{Moczetal2018}.
	Here, the authors consider the Schr\"odinger-Poisson system, and compare it to the Vlasov-Poisson equation in the classical limit with $\hbar/m_B \rightarrow 0$.

\subsection{Organization of this paper}
In Section \ref{section main results}
we formulate our main results in Theorem \ref{thm1}
and
\ref{thm2}. 
In Section \ref{section second quantization}
we give preliminaries on the Second Quantization formalism, that we will extensively use.
In Section \ref{section fluctuation} we study the dynamics of the fluctuations
around a combined Bose-Einstein condensate and degenerate Fermi gas, 
which then we use to prove Theorem \ref{thm1} in Section \ref{section proof thm 1}.
Next, in Section \ref{section QOT} 
we adapt  the formalism of   Quantum Optimal Transportation
and utilize it 
to prove Theorem \ref{thm2}.
Finally, we include  Appendix \ref{section analysis PDEs} where we  state some basic well-posedness results
regarding the PDEs introduced in this paper, 
and  Appendix \ref{appendix inf generator}
where we give details of the calculation of the infinitesimal generator of the fluctuation dynamics. 

\subsection{Acknowledgments}
E.C is very thankful to  François Golse for an enlightening conversation regarding QOT, 
and to Niels Benedikter, Marcello Porta and  Chiara Saffirio for  helpful discussions
regarding the mean-field dynamics of Fermi systems. 
The authors are deeply grateful to Thomas Chen and Laurent Lafleche 
for their valuable comments that helped improve the first version of this manuscript. 
E.C. gratefully acknowledges support from the Provost  Graduate Excellence Fellowship at The University of Texas at Austin and from the NSF grant DMS-2009549, 
and the NSF grant DMS-2009800 through Thomas Chen. 
J.M. gratefully acknowledges support from the Provost Graduate Excellence Fellowship at The University of Texas at Austin and from the NSF grants No. DMS-1840314.
N.P. gratefully acknowledges support from the NSF under grants No. DMS-1840314, DMS-2009549
and DMS-2052789.

\section{Main Results}
\label{section main results}
In this section, we describe the main results of this article, 
that have already been announced in the introductory section. 
In particular, in subsection \ref{subsection thm 1} we present Theorem \ref{thm1}, 
describing
the quantum mean-field approximation
of the many-body Schr\"odinger dynamics. 
Here, we prove 
an upper bound on the error term 
that comes from the approximation of 
the one-particle reduced density matrices for the corresponding fermionic and
bosonic subsystems, 
and the solution of the Hartree-Hartree equation \eqref{hartree hartree intro}. 
In subsection \ref{subsection thm 2} we present Theorem \ref{thm2}, in which 
we study the scaling regime \eqref{scaling regime intro} for the Bose-Fermi system. 
We prove that in the $\hbar \downarrow 0$ limit, 
there is convergence towards 
the Vlasov-Hartree equation \eqref{vlasov hartree intro}. 
As stated in the Introduction,
one of the main contributions of this article 
is the identification of a semi-classical scaling regime
in which the limiting dynamics of the coupled system is non-trivial--to the authors best knowledge, this regime
had not been identified previously in the literature. 
In subsection \ref{subsection strategy of proofs} we briefly discuss
the strategy of our proofs and the methods that we employ.

\smallskip

\textit{Notations}. 
Before we move on to the 
main results of this section, let us introduce some notation that we will be using in the rest of the article. 
\begin{itemize}[label=- , leftmargin=*]
	\item   $L^p (\R^n )$   denotes the   Lebesgue spaces of $p$-th integrable functions,  
	for $p \in [1,\infty]$. 
	The subset of non-negative functions is denoted by $L^p_+(\R^n )$.
	\item $\mathscr P_m (\R^n)$ is the space of probability measures on $\R^n$
	that have finite $m\in \mathbb N $ moments. 
	\item $\mathscr S(\R^n ) $ denotes the space of Schwartz functions of rapid decay. 
	\item $W^{k,p} (\R^n )$
	for $k\in \mathbb N$ and $p\in[1,\infty]$, 
	denotes the Sobolev space of functions with derivatives of order $k$, that are $p$-th integrable. 
	\item $H^s (\R^n ) = W^{s,2} (\R^n )$ for $s \geq 1 $ 
	and $\dot H ^s (\R^n) $ is the usual homogeneous Sobolev space. 
	
	\item $\mathscr L ^1 (X)$ stands for the Banach space of trace-class operators over $X$, endowed 
	with the  norm  $\|   A \|_{\tr} \equiv \tr |A|$. 
	Similarly, $\mathscr L ^2 (X)$ is the space of Hilbert-Schmidt
	operators with norm $\| A\|_{HS} \equiv   \|   A^*A  \|_{\tr}^{1/2}$. 
	
	\item 
	We say that $C>0$ is a
	\textit{constant} if it is a positive number, 
	independent of the physical parameters
	$N,M , \hbar , \lambda , m_F, m_B$ and $t$. 
	
	\item 
We write
  $A + h.c \equiv A + A^*$
  to denote
   ``adding the hermitean conjugate".
	
	\item   $\<\xi \> =  (1 + \xi^2)^{1/2}$ denotes the standard angle bracket. 
\end{itemize}

\subsection{The mean-field approximation}
\label{subsection thm 1}

As we have previously discussed, the main interest in this article is 
to consider the mean-field  dynamics generated by the Hamiltonian $H$, 
introduced in 
\eqref{hamiltonian HNM}. 
To this end, we introduce the wave function of the system at time $t \in \R$ 
\begin{eqnarray}
	\label{wave function}
	\psi(t) 
	\equiv 
	\exp \Big(
	- i t H / \hbar 
	\Big)  \psi 
\end{eqnarray}  
where $\psi \in \mathscr H $ is the initial data of the system. 
Since our gas corresponds of two subsystems, each composed of identical particles, it will be crucial
to introduce  
the corresponding fermionic and bosonic one-particle reduced density matrices. 
These 
are the time-dependent
trace-class operators
$\gamma_F(t), \,  \gamma_B(t) \in \mathscr{L}^1(L^2(\R^d))$ 
whose kernels are defined as  the partial traces 
\begin{align}
	\label{definition kernels}
	\begin{cases}
		&  \gamma_F (t ; x ,x') 
		\equiv 
		M \, 
		\int_{  \R^{d(M-1)} \times \R^{dN }	}
		\psi (  t ;  x ,  \bx_{M-1};  \by_N  )
		\overline{ \, \psi \,  }   ( t; x' ,   \bx_{M-1};  \by_N ) 
		\d   \bx_{M-1} \d    \by_N  	\\ 
		& 	\gamma_B (t ; y ,y') 
		\equiv 
		N  \, 
		\int_{  \R^{dM } \times \R^{d(N-1)  }	}
		\psi   (   t;    {\bx}_M ;   y  ,   {\by}_{N-1}  )
		\overline{ \,  \psi \, }  (   t ;   {\bx}_M ;   y'  ,   {\by}_{N-1}  )   
		\d  {\bx}_M  \d  {\by}_{N-1}    \\ 
	\end{cases}
\end{align}
for $t \in \R$ and $x,x',y, y' \in \R^d$.
Here,  
we denote by $\bx_{M-1} = (x_1, \ldots, x_{M-1})$, 
$\by_N = (y_1, \ldots, y_N)$
and similarly
$\bx_M$
and $\by_{N-1}$, the variables that are being traced out. 
In particular, we note here that 
the normalizations are chosen so that  
for all times $t\in \R $
there holds 
\begin{equation}
	\tr\gamma_F(t) = M  
	\qquad 
	\t{and} 
	\qquad 
	\tr \gamma_B(t) =N  \ .    
\end{equation}

\vspace{2mm}
 
	We describe now 
	the conditions that we shall impose
	in the initial data $\psi \in \mathscr H $
	associated to the solution of the Schr\"odinger dynamics \eqref{wave function}. 
	Physically, the situation we consider  
	concerns the description of an initially prepared cold gas of fermions and bosons.
	We assume that  the 
	fermion component is    described as a degenerate Fermi gas--parametrized by a   Slater determinant--whereas the boson gas  
	undergoes  Bose-Einstein condensation, described by a single-particle wave function. 
	This is   made rigorous in Assumption \ref{assumption 1}, 
	and is motivated by previous results in single-species systems.
	In addition, we require additional assumptions on the scales
	in which the Fermi gas varies--see Remark \ref{remark semiclassical bounds} for more details.  
	
\vspace{2mm}

Let us now discuss the effective dynamics of this system. 
If the interactions between particles are weak enough, we expect the zero temperature structure described above to approximately persist for times $t>0$. 
 More precisely,  in our chosen scaling the force-per-particle remains  $O(1)$, 
	and the   initially prescribed structure    will be shown to   approximately persist in time. 
Thus,  a  mean-field approximation for the reduced density matrices
$\gamma_F$ and $\gamma_B$
is given in terms of  a pair of  interacting   variables
\begin{equation}
	(\omega, \vp )
	: \R 
	\rightarrow 
	\mathscr{L}^1 
	( L^2 ( \R^d)   ) 
	\times L^2 ( \R^d  )  
\end{equation} 
that solve a self-consistent equation.
A formal calculation using a time-dependent Slater determinant/fully factorized
ansatz combined with replacing the full interaction $V(x-y)$ 
with an average over the position densities then yields
\begin{align}
	\label{hartree hartree}
	\begin{cases}
		& i \hbar \partial_t \omega   
		= 
		[   - (\hbar^2 / 2m_F)\Delta 
		+ 
		\lambda N   \, 
		(V  * \rho_B )  , 
		\omega 
		]			\\
		&   i \hbar \partial_t  \vp 
		= 
		- (\hbar^2 / 2m_B)\Delta  \vp 
		+ 
		\lambda  M    \, 
		(V  *  \rho_F)   
		\vp  	\\
		& (\omega ,\vp)(0)  = (\omega_0 , \vp_0) \in 
		\mathscr{L}^1 
		( L^2 ( \R^d)   ) 
		\times L^2 ( \R^d  )
	\end{cases}   \   ,
\end{align}
up to leading order.
Here $\rho_F (t , x) = \frac{1}{M } \omega( t; x , x )$
and $\rho_B(t,x) = |\vp(t,x)|^2$
correspond to the 
fermionic and bosonic position densities, respectively. 
We shall refer to \eqref{hartree hartree}
as the  {Hartree-Hartree equation}.

We are now ready to rigorously state our assumptions which we indicated above.

\begin{assumption}
	[Schr\"odinger initial data]
	\label{assumption 1}
	We assume that  the initial data $\psi\in\mathscr H$
	satisfies the following conditions. 
	\begin{enumerate}[leftmargin=*]

		\item (Zero temperature)
		$\psi$ is  a factorized state of the form  
		\begin{equation}
			\notag
			\psi = \psi_F \otimes \psi_B \ . 
		\end{equation}
		Additionally, each factor satisfies the following assumptions. 
		\begin{itemize}[leftmargin=1cm]
			\item[(1.1)]
			There exists a rank-$M$ orthogonal projection $\omega_0 = \sum_{i=1}^M \ket{\phi_i} \bra{\phi_i} $
			on  $L^2(\R^d)$
			such that  
			\begin{equation}
				\notag
				\psi_F
				(x_1 , \ldots, x_M )
				= 
				\frac{1}{\sqrt{ M! }}
				\det_{1 \leq i,j\leq  M } 
				\big[
				\phi_i(x_j) 
				\big]  \ . 
			\end{equation}
			
			\item[(1.2)] There exists a unit vector  in the one-particle space $\vp_0 \in L^2(\R^d)$, 
			such that 
			\begin{equation}
				\notag 
				\psi_B (y_1, \ldots, y_N )
				= 
				\vp_0 (y_1) \cdots \vp_0 (y_N ) \  . 
			\end{equation}
		\end{itemize}
		
		\item (Semi-classical bounds) 
			We assume that there exists $C>0$
			such that
			\begin{align} 
				\label{semi1}
				\| [ x ,\omega_0]   \|_{\tr}  
				& \ \leq \ 
				C 
				m_F^{-1/4 }    
				M \sqrt{\hbar M^{-1/d} }   \\
				\label{semi2}
				\| [ i\hbar \nabla  ,\omega_0]   \|_{\tr}  
				& \ \leq \ 
				C m_F^{1/4 }    
				M \sqrt{\hbar M^{-1/d}}
			\end{align}  
			for all values of $\hbar,$ $M$ and $m_F$. 	 
	\end{enumerate}  
\end{assumption}

\begin{remark}
	[Reduced density matrices]
	\label{remark 1pdm}
	Let us observe that under the above assumptions, 
	one can calculate that the following relations hold at $t = 0$
	\begin{equation}
		\gamma_F(0) =
		\omega_0 
		\qquad
		\t{and}
		\qquad
		\gamma_B(0)
		= 
		N 
		\ket{\vp_0}
		\bra{\vp_0} \ . 
	\end{equation} 
	In other words, the initial data 
	is such that
	the one-particle reduced density matrices are given by
	$\omega_0$, 
	and 
	$N
	\ket{\vp_0}
	\bra{\vp_0}
	$, respectively. 
\end{remark}

	\begin{remark}[Semi-classical bounds]
		\label{remark semiclassical bounds}
		Two comments are in order regarding the semiclassical bounds that are present in Assumption \ref{assumption 1}. 
		\begin{enumerate}[leftmargin=*]
			\item[(i)]  
			Let $(\omega,\vp)$ be the solution of the Hartree-Hartree system
			\eqref{hartree hartree}
			with initial data $(\omega_0,\vp_0 ) \in \mathscr L ^1 (L^2(\R^d)) \times L^2(\R^d)  $
			verifying Assumption \ref{assumption 1}.
			Additionally, assume that the 
			parameters are constrained so that
			\begin{align}
				\label{scaling 2}
				m_F   \geq 1  , \qquad 
				\lambda N \leq m_F 
				\quad 
				\t{and}
				\quad 
				M^{ -\frac{1}{d}}
				\leq \hbar \ . 
			\end{align}
			Then, 
			we may adapt the proof of
			\cite[Proposition 3.4]{BenedikterPortaSchlein2014}
			and show that
			the semi-classical bounds \eqref{semi1} and \eqref{semi2}
			are propagated in time, 
			provided one updates the 
			constant $C(t)$. 
			In particular, there exists
			a constant $C_0>0$ such that
			for all $t\in \R$ 			\begin{eqnarray}
				\label{semiclassical structure}
				\|     [ e^{ i \xi \cdot x } , \omega (t) ]	   \|_{\tr }
				\leq C_0
				\exp(C_0 | t| ) 
				M 
				\hbar \<\xi \> \  \ , \qquad \forall \xi \in \R^d  . 
			\end{eqnarray}
			In order to avoid repetition, we do not repeat the proof here. 
			Let us comment that only the time dependent bound
			\eqref{semiclassical structure}
			enters the estimates in our proof. 
			Let us also note that the bound \eqref{semiclassical structure} holds automatically in the case $\hbar=1 $ corresponding to microscopic scaling; here, it would  not be  necessary to assume \eqref{semi1} and \eqref{semi2}. 
			In order to give an integrated proof that works in all cases, we choose to assume \eqref{semi1}
			and \eqref{semi2}. 
			
			\vspace{1mm}
			
			\item[(ii)]
			The semi-classical bounds \eqref{semi1}-\eqref{semi2} appeared first in 
			\cite{BenedikterPortaSchlein2014}
			in the derivation of the Hartree-Fock equation
			from interacting Fermi systems, 
			with scaling
			$m_F =1 $, $\lambda =1/M$ and
			$\hbar = 1/M^{1/d}$. 
			From the physical point of view,
			these commutator estimates 			
			state 
			that $\omega$
			is varying in a 
			\textit{macroscopic} scale. 	
			We refer the reader to the original reference for a more in-depth physical discussion. 
			In general, proving that an   orthogonal projection $\omega$
			verifies \eqref{semi1}-\eqref{semi2} is a non-trivial task.
			The first   known  examples
			were given for non-interacting systems, in which 
			$\omega =\textbf{1}_{( -\infty ,0 ]} (   H ) $
			where  
			$ H = -\hbar^2 \Delta + V(x)$
			is   a Schr\"odinger operator 
			with smooth potential $V(x)$. 
			See for instance 
			\cite[Theorem 3.2]{Benedikter2022} and  \cite[Theorem 1.2]{Soren2020}. 
			More recently, in \cite{Cardenas2025}  L. Lafleche and the first author of this paper
			verified the validity of the commutator estimates for potentials in the class $V\in C^{1, \frac{1}{2} }(\R^3)$. 
			As a consequence, one is able to verify \eqref{semi1}-\eqref{semi2}
			for an \textit{interacting} particle system in the Hartree approximation; 
			$\omega$ corresponds to the minimizer of a non-linear functional, 
			and one replaces $V(x)$ with $K * \rho_{\omega } (x)$  where  
			$K(x) = \pm |x|^{-a}$ for $0 < a < 1$. 
			Finally, let us comment 
			that  at positive temperature the situation is much better, see e.g.
			\cite{Chong2022}. 
		\end{enumerate}

	\end{remark}

\vspace{2mm}

The natural  topology in which convergence is expected to hold corresponds to that of trace-class operators. 
Our main theorem is the following result.

\begin{theorem}
	[The mean-field approximation]
	\label{thm1}
	Assume that the interaction potential satisfies
	$
	\int_{ \R^d 	}	
	\<\xi\>^2	|\hat V (\xi)	| \d \xi < \infty .
	$
	Let us consider the following: 
	\begin{itemize}[leftmargin=0.9cm, label=$\square$]
		\item 
		Let $\psi(t)= \exp(-itH/\hbar)\psi $ be the wave function of the system, 
		with initial data verifying 
		Assumption \ref{assumption 1}. 
		Let $\gamma_F(t)$
		and $\gamma_B(t)$
		be the one-particle reduced density matrices, as defined in \eqref{definition kernels}.
		
		\item 
		Let $(\omega(t), \vp(t))$
		be the solution of the Hartree-Hartree equation  \eqref{hartree hartree}.
	\end{itemize}
 Additionally, assume that
		the scaling regime is chosen so that 
		for all physical parameters
		$\lambda$, $\hbar$, $N$,  $M$, $m_F$ and $m_B$:  
		\begin{enumerate}
			\item[(i)] 
			For all $\ell \geq 1$ there exists 
			$ k_\ell  \geq  1$ such that 
			\begin{equation}
				\label{k condition}
				\frac{ \lambda \sqrt N }{\hbar}
				M^\ell 
				\leq 
				( \hbar M )^{k_\ell }. 
			\end{equation}
			
			\item[(ii)] 
			$m_F   \geq 1 , $  
			$ 	\lambda N \leq m_F $ 
			and 
			$M^{ -\frac{1}{d}}
			\leq \hbar $. 
		\end{enumerate}
 
	\noindent 	Then, there exists 
	a constant $C   > 0  $ 
	such that 
	for all $ t\in \R$
	there holds 
	\begin{align}
		\label{thm1 fermion estimate}
		\frac{1}{M }
		\| 
		\gamma_F  (t)  - \omega (t)   
		\|_{\tr}
		& \leq 
		\frac{C}{\sqrt  M }
		{  \exp   \bigg[  C   \lambda \sqrt{  \frac{NM}{\hbar }	 } 
			\bigg( 1 + \sqrt{\frac{\hbar M}{N}}    \bigg)  \exp |t|      \bigg]		}   \ ,  \\ 
		\label{thm 1 boson estimate}
		\frac{1}{N}
		\|
		\, \gamma_B (t)  - 
		N 
		\ket{\vp (t)}
		\bra{\vp(t) } 
		\|_{\tr}
		& \leq 
		\frac{C}{\sqrt  N  }
		{  \exp   \bigg[  C    \lambda \sqrt{   \frac{NM}{\hbar }	  } 
			\bigg( 1 + \sqrt{\frac{\hbar M}{N}}    \bigg)  \exp |t|      \bigg]		}  \ . 
	\end{align} 
\end{theorem}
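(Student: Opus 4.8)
The plan is to pass to the second quantized (Fock space) formalism and control the growth of the number of fluctuations around the effective pair $(\omega(t),\varphi(t))$, whose global existence is recorded in Appendix~\ref{section analysis PDEs}. I would first embed $\mathscr H$ into $\mathscr F_a(L^2(\R^d))\otimes\mathscr F_s(L^2(\R^d))$ and write the initial datum as $(R_{\omega_0}\Omega_F)\otimes(W(\sqrt N\varphi_0)\Omega_B)$, where $R_{\omega_0}$ is the fermionic particle--hole Bogoliubov transformation implementing the Slater determinant as in \cite{BenedikterPortaSchlein2014}, and $W(\sqrt N\varphi_0)=\exp(a^*(\sqrt N\varphi_0)-a(\sqrt N\varphi_0))$ is the Weyl operator implementing the fully factorized boson state as in \cite{RodnianskiSchlein2009}. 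Setting $\mathcal V(t)=R_{\omega(t)}\otimes W(\sqrt N\varphi(t))$ (a unitary), the fluctuation vector $\xi(t)=\mathcal V(t)^*\,e^{-itH/\hbar}\,\mathcal V(0)\,(\Omega_F\otimes\Omega_B)$ solves $i\hbar\,\partial_t\xi(t)=\mathcal L(t)\xi(t)$ with $\mathcal L(t)=\mathcal V(t)^*H\,\mathcal V(t)-i\hbar\,\mathcal V(t)^*\partial_t\mathcal V(t)$, the last term encoding the time dependence of $(\omega(t),\varphi(t))$. Computing $\mathcal L(t)$ explicitly — deferred to Appendix~\ref{appendix inf generator} — is exactly where the Hartree--Hartree system \eqref{hartree hartree} is used: the choice of $(\omega(t),\varphi(t))$ cancels the terms of $\mathcal L(t)$ that are linear in the creation/annihilation operators, which are the dangerous ones since they carry the large prefactors $\sqrt N$ and $\tr\omega=M$; what remains is a sum of quadratic, cubic and quartic monomials in the fermionic and bosonic fields, together with boson--fermion cross monomials generated by $V(x-y)$.

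The core of the argument is a Gr\"onwall estimate for the moments of the total number operator $\mathcal N=\mathcal N_F+\mathcal N_B$ along $\xi(t)$: I would prove that for every $k\ge1$ there is a constant $C_k>0$ with
\begin{equation*}
\big\langle\xi(t),(\mathcal N+1)^k\xi(t)\big\rangle\;\le\;\exp\!\Big(C_k\,\lambda\sqrt{\tfrac{NM}{\hbar}}\,\big(1+\sqrt{\tfrac{\hbar M}{N}}\,\big)\exp|t|\Big),
\end{equation*}
starting from $\langle\xi(0),(\mathcal N+1)^k\xi(0)\rangle=1$. One differentiates in $t$, uses $\tfrac{d}{dt}\langle(\mathcal N+1)^k\rangle=\tfrac{i}{\hbar}\langle[(\mathcal N+1)^k,\mathcal L(t)]\rangle$, and bounds each resulting commutator by $\langle(\mathcal N+1)^k\rangle$ up to a parameter-dependent constant, by combining: (a) the hypothesis $\int\langle\xi\rangle^2|\widehat V(\xi)|\,d\xi<\infty$, which lets one write every interaction contribution as a $\xi$-integral over plane-wave-conjugated fields and, in particular, furnishes $V\in L^\infty$, so the quartic terms cost nothing; (b) the CAR bound $\|a^\sharp(f)\|\le\|f\|_2$ for the fermionic fields and $\|a(f)(\mathcal N_B+1)^{-1/2}\|\le\|f\|_2$ for the bosonic ones, combined with Cauchy--Schwarz and Young's inequality, which reabsorb the half-integer powers of $\mathcal N$ from the cubic and cross terms at the cost of passing to one higher moment; and (c) the semiclassical commutator estimate $\|[e^{i\xi\cdot x},\omega(t)]\|_{\tr}\le C_0e^{C_0|t|}M\hbar\langle\xi\rangle$ from \eqref{semiclassical structure}, propagated in time via hypotheses~(ii) as in Remark~\ref{remark semiclassical bounds}, which turns the naively $O(M)$ fermionic contributions into $O(M\hbar)$ ones. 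The global factor $\tfrac{1}{\hbar}$ in $\tfrac{1}{\hbar}\mathcal L(t)$, the coherent-state scale $\sqrt N$, and the $\sqrt{M\hbar}$ from (c) then combine, after a Cauchy--Schwarz, into $\lambda\sqrt{NM/\hbar}$, while the purely fermionic mean-field contribution to the boson equation — where no $\sqrt N$ is available and $\tr\omega=M$ enters directly — produces the companion $\lambda M=\lambda\sqrt{NM/\hbar}\cdot\sqrt{\hbar M/N}$ term. It is here that scaling hypothesis~(i), $\tfrac{\lambda\sqrt N}{\hbar}M^\ell\le(\hbar M)^{k_\ell}$, is indispensable: the iteration propagating the $(\mathcal N+1)^k$-moments generates at each order spurious powers $M^\ell$ multiplying $\tfrac{\lambda\sqrt N}{\hbar}$, and (i) forces these to be dominated by powers of $\hbar M\ge1$ (note $M^{-1/d}\le\hbar$ with $d\ge2$), so that the constants $C_k$ stay finite and parameter-independent. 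The double exponential in $t$ is the combined effect of the $e^{C_0|t|}$ in \eqref{semiclassical structure} and the Gr\"onwall integration.

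Finally, I would convert the number estimate into \eqref{thm1 fermion estimate}--\eqref{thm 1 boson estimate}. Expanding $\gamma_F(t;x,x')=\langle\xi(t),\,\mathcal V(t)^*a^*_{x'}a_x\,\mathcal V(t)\,\xi(t)\rangle$ through the Bogoliubov conjugation one writes $\gamma_F(t)=\omega(t)+(\text{fluctuation terms})$; tested against any $J$ on $L^2(\R^d)$ with $\|J\|_{\mathrm{op}}\le1$, the diagonal fluctuation terms are controlled by $\langle(\mathcal N_F+1)\rangle$ and the pairing/off-diagonal ones by $\|v\|_{HS}\,\langle(\mathcal N_F+1)\rangle^{1/2}$, where $\|v\|_{HS}=\sqrt{\tr\omega(t)}=\sqrt M$. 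Taking the supremum over such $J$ gives $\|\gamma_F(t)-\omega(t)\|_{\tr}\le C\sqrt M\,\langle(\mathcal N_F+1)\rangle^{1/2}$, hence $\tfrac1M\|\gamma_F(t)-\omega(t)\|_{\tr}\le CM^{-1/2}\exp(C\Lambda\exp|t|)$ with $\Lambda=\lambda\sqrt{NM/\hbar}\,(1+\sqrt{\hbar M/N})$. The estimate for $\gamma_B(t)-N\ket{\varphi(t)}\bra{\varphi(t)}$ is obtained identically through the Weyl conjugation, where the analogous scale is $\|\sqrt N\varphi(t)\|_2=\sqrt N$, yielding the $N^{-1/2}$ prefactor. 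The main obstacle is the middle step: organizing the many terms of $\mathcal L(t)$ — in particular the boson--fermion cross terms, which have no counterpart in the single-species derivations — so that each is dominated by $(\mathcal N+1)^k$ with parameter factor no worse than $\Lambda$, \emph{while simultaneously} propagating all moments of $\mathcal N$; the semiclassical bound \eqref{semiclassical structure} together with hypothesis~(i) are precisely what make this joint control possible, and pinning down the non-empty parameter window in which it can be carried out is itself part of the difficulty.
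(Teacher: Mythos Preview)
Your overall architecture is right, but there is a genuine gap in the Gr\"onwall step, and it is precisely the obstacle the paper's proof is designed around. When you compute $[\mathcal L(t),\mathcal N_B]$, the contribution from $\lambda\sqrt N\,\mathcal L_{2,1}$ contains terms of the form
\[
\int \hat V(\xi)\,\d\Gamma_F[u_t e_\xi u_t]\otimes\big(b^*[e_{-\xi}\varphi_t]+b[e_\xi\varphi_t]\big)\,\d\xi
\]
(and the analogue with $v^*e_\xi v$). Here $u_t e_\xi u_t$ is only bounded --- it is \emph{not} Hilbert--Schmidt or trace class, since $u_t=\1-\omega_t$ --- so the best fermionic estimate is $\|\d\Gamma_F[u_t e_\xi u_t]\Psi\|\le\|\mathcal N_F\Psi\|$, while the bosonic factor costs $(\mathcal N_B+1)^{1/2}$. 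Thus the commutator is controlled by $\mathcal N_F(\mathcal N_B+1)^{1/2}\sim\mathcal N^{3/2}$, not $\mathcal N$, and no Cauchy--Schwarz or ``passing to one higher moment'' closes this: you get $\tfrac{d}{dt}\langle\mathcal N^k\rangle\lesssim\langle\mathcal N^{k+1/2}\rangle$, an infinite regress. The semiclassical bound \eqref{semiclassical structure} does not help on these terms because $ue_\xi u$ contains no commutator structure.

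The paper fixes this by introducing a \emph{truncated} generator $\widetilde{\mathcal L}(t)$ in which exactly these terms carry a cutoff $\chi(\mathcal N_B\le\hbar M)$; then $\chi_M\mathcal N_B^{1/4}\le(\hbar M)^{1/4}$ converts the bosonic half-power into the constant $\sqrt{\hbar M}$, and the Gr\"onwall for the truncated dynamics $\widetilde{\mathcal U}(t,s)$ closes with rate $\lambda\sqrt{NM/\hbar}\,(1+\sqrt{\hbar M/N})$. One then compares $\mathcal U$ to $\widetilde{\mathcal U}$ by Duhamel, combined with \emph{weak} number estimates $\|(\mathcal N_F+M)^\ell\,\mathcal U(t,s)\Psi\|\le C\|(\mathcal N_F+M)^\ell\Psi\|$ that hold unconditionally. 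It is in this comparison --- not in the moment iteration itself --- that hypothesis~(i) enters: the weak bound costs $M^\ell$, the Duhamel remainder gains $(\hbar M)^{-k}$ from the tail $\chi(\mathcal N_B>\hbar M)$, and (i) forces the quotient $\Theta_{k,\ell}=\tfrac{\lambda\sqrt N}{\hbar}M^\ell(\hbar M)^{-k}$ to be bounded. Your description of (i) as absorbing ``spurious powers $M^\ell$'' in the iteration misplaces its role.

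A smaller but non-negligible point: the bosonic initial datum $\varphi_0^{\otimes N}$ is \emph{not} $W(\sqrt N\varphi_0)\Omega_B$. The paper handles this via the standard Rodnianski--Schlein device, introducing two fluctuation vectors $\Omega_1(t)$ (coherent initial data) and $\Omega_2(t)$ (factorized initial data, with the Stirling factor $d_N=\sqrt{N!}/(N^{N/2}e^{-N/2})$) and the identity $\langle\Omega_2,\Omega_1\rangle=d_N^{-1}$; both are needed in the final conversion to trace norm.
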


\vspace{2mm}

\begin{remark}
	The above Theorem provides an explicit convergence rate 
	from the many-body Schr\"odinger dynamics
	to the solution of the Hartree-Hartree system.
	Note that we have chosen not to fix the 
	parameter regime in the theory--this is in contrast to 
	most works in the literature.
	The reason is that the scaling regime in which Theorem \ref{thm1}
	provides a reasonable approximation was \textit{not} known 
	by the authors in the onset of this investigation.
	Our interest then was not to prove an optimal convergence rate, 
	but to actually \textit{find} a meaningful scaling regime. 
\end{remark}

Regarding the previous remark, 	Theorem \ref{thm1} contains a meaningful approximation 
as long as  the argument in the time dependent function is $\O(1)$
with respect to the physical parameters. 
Let us   describe two scaling regimes that we regard as interesting. 

\vspace{2mm}

\textbf{Microscopic regime}. 
If one is working in microscopic units, we may set $\hbar  =1 $. 
One can then investigate the mean-field regime  in which the number of bosons and
fermions is the same. Namely 
\begin{equation}
	\lambda  =  \frac{1}{N }  \ , \qquad \hbar =1  
	\quad \t{and} \quad 
	N = M 
\end{equation}
and we also set $m_F = m_B =2 $ for completeness. 
Clearly, the condition \eqref{k condition}
is verified with $k_\ell= \ell $.
In this case, one should regard Theorem \ref{thm1}
as capturing the emergence of the mean-field equations 
\begin{align}
	\begin{cases}		
		& i    \partial_t \omega   
		= 
		[   -   \Delta 
		+ 
		(V   * \rho_B )  , 
		\omega 
		]			\\
		& i   \partial_t  \vp 
		= 
		-  \Delta  \vp 
		+   
		(V  *  \rho_F)    
		\vp   \  ,
	\end{cases}	
\end{align}
as the leading order term driving the dynamics of the Hamiltonian $H$, for our choice of initial data.  
Note that in this case, the semiclassical condition 
imposed in \eqref{semiclassical structure}
is verified immediately, independently of the structure of the initial data. 
However, since $ \mathrm{Tr}  \omega (t) = M $, the above equation  does not yield a non-trivial limit when  $M \rightarrow \infty$.

\smallskip 

\textbf{Macroscopic regime}.  In macroscopic units, the value of $\hbar$ becomes small.
As is well-known, 
for a system of  confined fermions, 
the energy  scale of each particle is 
$\hbar^2 M^{\frac{2}{d}}/m_F$. 
One is then interested in the regime for which  
this scale is  $\O(1)$ -- this is the so-called semi-classical limit that has been studied extensively in the literature for systems of interacting fermions.
On the other hand, for bosons the energy per particle has 
the scale 
$\hbar^2/m_B $. 
We can then tune the parameters so that the \textit{total energy} of the system is balanced.
For instance, we may look at
\begin{equation}
	\label{scaling regime}
	\lambda = \frac{1}{N} 
	\ , 
	\qquad \hbar =  \frac{1}{M^{\frac{1}{d}}}  \ , 
	\qquad  m_B = \hbar \ , 
	\qquad m_F =1 \, 
	\quad \t{and} \quad 
	N = M^{	 1 +		\frac{1}{d}	} \ . 
\end{equation} 
{It is possible to   check that   condition \eqref{k condition}
	is verified
	with 
	$ k_\ell 
	=
	\frac{1 + d (2\ell-1 )}{2(d-1) }.
	$
	Similarly,  
	one may readily verify that the condition \eqref{scaling 2} is satisfied.
	This leads to a natural candidate on the initial data for the fermionic component $\omega_0$
	that verifies Assumption \ref{assumption 1}; 
	see Remark \ref{remark semiclassical bounds} for more details.  
}

\subsection{The semi-classical limit}
\label{subsection thm 2}
In this subsection, we adopt the   
scaling regime given by  
\eqref{scaling regime}. 
Let us now motivate the upcoming semiclassical analysis of the 
coupled Hartree system.
In what follows, we shall   incorporate
the $\hbar$ dependence  on the solution $( \omega^\hbar, \vp^\hbar)$
of the coupled Hartree-Hartree equation \eqref{hartree hartree}.
We start by noting that one of the main consequences of the scaling regime \eqref{scaling regime}  is that 
$\lambda M = \hbar $.
Hence, the  Hamiltonian for the boson field  $\vp_t^\hbar$ is 
proportional to $\hbar$, i.e. it has the form
$	\hbar ( -  (1/2) \Delta +   V*\rho_F^\hbar ) 	$--it follows that factors of  $\hbar$ cancel out in the equation. 
Thus,  
the solution of the   coupled Hartree equation can now be analyzed semi-classically, in the limit
$\hbar \downarrow 0$.
Indeed,  for $t\in\R $
we   consider the Wigner transform of the fermionic density matrix 
\begin{equation}
	\label{wigner transform}
	f^\hbar  (t) 
	\equiv 
	W^\hbar[\omega^{\hbar} (t) ]
	\quad
	\t{where}
	\quad 
	W^\hbar[\omega ] ( x, p )
	\equiv 
	\frac{ 1}{(2 \pi)^{  d }   }
	\int_{\R^d }
	\omega 
	\Big( x +   \frac{y}{2} , x -  \frac{y}{2}
	\Big) 
	e^{		 - i  \frac{y\cdot p }{ \hbar }	}		 \d y   \ , 
\end{equation}
where $(x,p  ) \in      \R^d  \times \R^d .$
Heuristically, the pair $( f^\hbar , \vp^\hbar)$   converges to a solution 
$(f, \vp)  $  of 
the system 
\begin{align}
	\label{vlasov hartree}
	\begin{cases} 
		& 		(\partial_t  + p \cdot  \nabla_x + F_B(t,x)  \cdot  \nabla_p   ) f  = 0 	\\
		&		  i \partial_t \vp 
		=   
		- \frac{1 }{2} \Delta   \vp +   ( V*\rho_F   )   \vp 	\\ 
		&  (f,\vp) ( 0 )  = (f_ 0 , \vp_0)  
		\in L^1_+(\R^{2d})\times L^2(\R^d) 
	\end{cases} 
\end{align}
where $F_B(t,x) \equiv 
- \int\nabla V(x-y ) | \vp(t,y)|^2 \d y 
$
and
$\rho_F 	(t,x) \equiv  \int_{	\R^d	} f (t,x,p ) \d p  	. $
We shall refer to \eqref{vlasov hartree} as the {Vlasov-Hartree equation.}

\vspace{3mm}

\textbf{Distances}. 
As we have mentioned previously, 
we are interested in the case in which the initial system is  
at zero temperature. 
In this situation,  
Thomas-Fermi theory suggests that  
one cannot expect  the initial data of 
the classical fermion subsystem to     higher regularity than a characteristic function (see for reference
\cite{Fournais2018}).
In what follows, we introduce distances
which we will use throughout this article. 
In particular, they will be necessary in our analysis of convergence to the Vlasov-Hartree system 
in this context of ``low regularity''.

\begin{itemize}[leftmargin=*, label=$\square$]
	\item \textit{Wasserstein distance}. 
	Given $n \in \mathbb N$, we denote the  $n$-th Wasserstein distance between probability measures 
	$\mu,\nu \in \mathscr{P}_n (\R^{2d})$ by 
	\begin{equation}
		W_n (\mu , \nu )
		\equiv 
		\Big( 
		\inf_{ \pi } 
		\int_{\R^{2d} \times \R^{2d}}  | z - z ' |^n  \pi( \d z \otimes  \d z  ' )
		\Big)^{\frac{1}{n}}
	\end{equation}
	where the inifimum is taken over all couplings of $\mu$ and $\nu$, i.e.
	probability measures $\pi \in \mathscr{P}(\R^{2d} \times \R^{2d})$ 
	with  first marginal $\mu$, and second marginal $\nu$.

	\item \textit{Fourier-based norms}.
	Given $s \in \R$, 
	and $ g : \R^{2d} \rightarrow \C $
	we introduce the following 
	Fourier-based norm 
	\begin{equation}
		\label{fourier based norm definition}
		| g |_s
		\equiv 
		\sup_{ \zeta \in \R^{2d}} 
		(1 +  |\zeta|)^{-s} \ |  \hat g ( \zeta  )  |   \  . 
	\end{equation}
	In applications, we take $s \geq 0 $. Hence,  we also regard $ |\cdot|_s$ 
	as a negative Sobolev norm. 
\end{itemize}

\vspace{2mm}

Our assumption for the initial data now reads. 
\begin{assumption}[Hartree initial data]
	\label{assumption 2}
	The pair $(\omega_0^\hbar ,\vp_0^\hbar) \in \mathscr{L}^1 (L^2(\R^d))\times L^2(\R^d)$ satisfies the following conditions. 
	\begin{enumerate}[leftmargin=*] 
		\item 
		$\omega_0^\hbar \in \mathscr{L}^1 ( L^2(\R^d)  )$
		satisfies $0  \leq \omega_0^\hbar  = (\omega_0^\hbar)^* \leq 1$,  $\t{Tr} \omega_0^\hbar = M $
		and
		$\tr \omega_0^\hbar( -\hbar^2 \Delta + x^2  ) < \infty  $. 
		Further, 	we assume that there exists 
		a real-valued 
		$      f_0 \in L^1   (\R^{2d})$ 
		such that :
		\begin{enumerate}[leftmargin = 1.3cm]
			\vspace{1mm}
			\item[(1.1)] $ 0 \leq f_0(x,p) \leq 1$
			and
			$\int_{	\R^{2d}	} f_0 (x,p ) \d x \d p  = 1$. 
			\vspace{1mm}
			\item[(1.2)] 
			There are finite second moments: 
			$f_0 \in \mathscr P _2 (\R^{2d})$. 
			\vspace{1mm}
			\item[(1.3)]     
			$
			\lim_{\hbar \downarrow 0} 
			| f _0 - f_0^\hbar |_{1}
			= 0 
			$, 
			where  $f_0^\hbar = W^\hbar [\omega_0^\hbar]$ . 
		\end{enumerate}
		\vspace{2mm}
		\item 
		There exists $\vp_0 \in L^2  (\R^d)$
		with $\| \vp_0	\|_{L^2}=1 $ 
		such that 
		$  \lim_{\hbar \downarrow 0} \|   \vp_0 - \vp_0^\hbar		\|_{L^2}=0.$ 
	\end{enumerate}
\end{assumption}

\begin{remark}[Low regularity]
	Let us note that in Assumption \ref{assumption 2}
	there are \textit{no} regularity requirements
	on the limits of the sequence of initial data. 
	Of course, this comes with a price. 
	First, we shall need  the interaction potential to be  at least $V \in C^{1,1} (\R^d;\R)$.
	Second, the metric that we use to measure the distances between fermion densities is rather weak.
	Namely, it involves testing over functions $h (x,p)$
	for which both the integrals  
	$ \int_{	\R^{2d}	}  \< \zeta\>^2 |\hat h (\zeta)|^2 \d \zeta  $
	and
	$\int_{	\R^{2d}	} |\zeta | \,  |\hat h (\zeta)| \d \zeta $
	are finite. 
	Third--compared to similar results in the literature--we need
	two moments in phase space, rather than only one. 
\end{remark}

	\begin{remark}[Fermion mode of convergence]
		In Assumption \ref{assumption 2} we require  that  $f_0^\hbar \rightarrow f_0$
		with respect to the negative Sobolev norm $|\cdot|_1$. 
		Some comments are in order. 
		
		\vspace{1mm}
		
		\textit{(i)}
		This assumption  on the initial data
		can be verified 
		for examples  of interacting Fermi gases, 
		that arise 
		as minimizers of variational problems in the presence of an external trap $U_{ \t{ext} }(x)$. 
		More precisely, let  $\omega_0^\hbar $ 
		be  the one-particle reduced density matrix of an approximate ground state $\psi_F^\hbar $ of the   minimization problem 
		\begin{equation}
			E (M) \, = \,  
			\t{inf}_{\bigwedge_{i=1}^M L^2(\R^d )}  
			\ 
			\sigma 
			\bigg[  
			\sum_{i=1}^M  - \hbar^2  \Delta_{x_i}    + U_{\t{ext}}   (x_i)  
			+ 
			\frac{1}{N} \sum_{ i< j} V(x_i - x_j)
			\bigg]  \  , \qquad \hbar = M^{ -1 /d } \ . 
		\end{equation}   
Regarded as a trial state,  
		 we  may assume that  $\omega_0^\hbar$ is an approximate minimizer of the associated
		Hartree-Fock problem 
		\begin{align}
			E_{HF}( M ) 
			= 
			\inf
			\Big\{
			\tr  ( \omega [ - \hbar^2 \Delta + (\rho^\omega  * V) + U_{\t{ext}} (x) ] )  
			\ :  \ 
			\omega = \omega^2 = \omega^*  , \ \tr \omega = M 
			\Big\}  \ , 
		\end{align}  
		where $\rho^\omega(x) = N^{-1} \omega(x;x)$. 
		Thus, 
		$\omega_0^\hbar$
		can be assumed to be 
		an orthogonal projection, \textit{i.e.}
		$\omega_0^\hbar = (\omega_0^\hbar)^2$, 
		which  is equivalent to $\psi_F^\hbar$ being 
		a Slater determinant. 
		It has been proven in  \cite[Theorem 1.2]{Fournais2018}
		that,   as $\hbar \downarrow0$ (and, up to extraction of a subsequence), 
		the Wigner transform $f_0^\hbar = W^\hbar[\omega_0^\hbar]$
		converges   in a weak sense to  the function  $f_0 (x,p) = \1 (     p^2 \leq C_{TF}\rho(x)^{2/d}  )$. 
		Here,   
		$\rho$ is the minimizer of  the associated Thomas-Fermi problem
		\begin{align*}
 	\mathcal E (\rho)  
			 = \frac{d C_{\rm TF}}{d+2} \int_{\R^d } \rho(x)^{1 + \frac{2}{d}} \d x 
			+  \frac{1}{2}\int_{\R^{2d} } \rho(x) V(x-y)\rho(y) \d x \d y 
			+ \int_{\R^d }U_{\t{ext}}  (x)    \rho (x) \d x 
		\end{align*} 
		with constraints $ \rho (x) \geq 0 $, $\int \rho(x) dx =1 $ and $\rho \in L^1 \cap L^{1 + \frac{2}{d}}$
		and $C_{\rm TF} = 4\pi^2 |B_{\R^d}(0,1)|^{-2/d}$ is the Thomas-Fermi constant. 
		In particular, 
		the $\hbar\downarrow0 $ convergence   
		can be shown to hold  
		with respect to the negative Sobolev norm  \eqref{fourier based norm definition} as well. 
		In other words,   there holds $\lim_{\hbar\downarrow 0 }   |        f_0^\hbar - f_0   |_1  =0 $. 
		See e.g. \cite{Cardenas2023}. 
		
		\vspace{1mm}
		
		Additionally, it has been proven \cite[Theorem 3]{Lafleche2023(2)}
		that the Husimi transform $\widetilde{f}^\hbar $ of the orthogonal projection 
		\begin{equation}
			\omega^\hbar \equiv \1_{ |p|^2 \leq U(x)}    \qquad \t{where} \qquad  p = - i \hbar \nabla \ 
		\end{equation}
		converges locally in the Sobolev space $W^{s,p}$ for all $s < 1/p $,
		to the classical distribution $f (x,p ) = \1 (|p|^2 \leq U(x)$, 
		provided
		$U \in ( L^\infty \cap L^{\frac{d}{2}}) (\R^d) $ is a nice enough external potential. 
		This  local convergence result can be extended to $\R^d$ by means of Agmon-type estimates, and then to the convergence of the Wigner transform $f^\hbar$ with 
		respect to the negative Sobolev norm $|\cdot|_1$. The authors are very grateful to L. Lafleche for   his comments in this regard.

		\vspace{2mm}
		\textit{(ii)}
		The above discussion  should be compared with the
		$L^1$-norm convergence   considered in 
		\cite[Theorem 2.5]{BenedikterPortaSaffirioSchlein2016} 
		for the  initial data, in the context of interacting Fermi gases. 
		While their conclusion is strictly stronger--that is, stronger mode of convergence--to the authors best knowledge 
		the only examples in $\R^d$ for which the $L^1$ convergence  has been verified correspond to coherent states.
		Unfortunately,
		these are not examples  of  zero temperature states 
		(\textit{i.e.} orthogonal projections). 
		We believe that there is value in our approach 
		since--as the examples considered in \textit{(i)} arise form orthogonal projections--we are able to put together Theorem \ref{thm1}
		and our next result.
		In particular, with this approach   we obtain  a 
		\textit{quantitative} convergence 
		from the Schr\"odinger to the Vlasov-Hartree dynamics, 
		in the situation of low regularity or--put differently--the zero temperature situation.

	\end{remark}

Our main result concerning the semi-classical limit of the coupled Hartree equations is the following theorem.

\begin{theorem}
	[The semi-classical limit]
	\label{thm2}
	Assume that the interaction potential satisfies
	$$
	\int_{ \R^d 	}	
	\<\xi\>^2 |\hat V (\xi)	| \d \xi < \infty . 
	$$
	Let us consider the following: 
	\begin{itemize}[leftmargin=0.9cm, label=$\square$]
		\item 	Let $(\omega^\hbar , \vp^\hbar)$
		be the solution of the Hartree-Hartree  eq. 
		\eqref{hartree hartree},
		with initial data $(	 \omega_0^\hbar , \vp_0^\hbar	)$ satisfying Assumption \ref{assumption 2}.  
		Denote by 		  $f^\hbar (t  ) = W^\hbar[\omega^\hbar (t)]$ 
		its Wigner transform.
		\vspace{1mm}
		\item 
		Let $(f, \vp)$ be the solution to the Vlasov-Hartree 
		system     \eqref{vlasov hartree},
		with initial data $(f_0, \vp_0). $
	\end{itemize}
	Then,  there exists $C>0$ such that   
	for all times
	$ t \in \R $
	and test functions $h : \R^{2d} \rightarrow \C$, 
	the following inequalities hold true 
	\begin{align}
		\nonumber 
		|	 \<   h , ( f_t   -  f^\hbar_t  )  \>		|	 
		& \leq 
		C_2(t) 
		\|	  \<\zeta\>  \hat h  	\|_{L^1}
		\Big(
		|	   f^\hbar_0  -   f_0  			|_1 
		+ 
		\hbar 
		\Big)
		+ 
		C_1(t) 
		\|	   |\zeta|  \hat h  	\|_{L^2}
		\Big(
		\|	  \vp_0 -   \vp_0^\hbar 	\|_{L^2} 
		+ 		\hbar^{1/2} 
		\Big)   \, \\ 
		\| \vp_t    -  \vp^\hbar_t		\|_{L^2 }
		& 	\leq 
		C_2(t) 
		\Big(
		|	   f^\hbar_0  -   f_0  			|_1 
		+ \hbar 
		\Big) 
		+
		C_1 (t) 
		\Big(
		\|	  \vp_0 -   \vp_0^\hbar 	\|_{L^2} 
		+
		\hbar^{1/2} 
		\Big) \  . 
	\end{align}
	Here, we are denoting 
	$C_1(t) =
	C\exp(C t^2 )
	$
	and
	$C_2(t) = C\exp ( C \exp  C  | t |  ) $. 
\end{theorem}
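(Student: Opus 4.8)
The plan is to prove Theorem \ref{thm2} via a Grönwall-type estimate on a suitable coupling functional that simultaneously controls the ``distance'' between the semi-classical fermionic objects $(f^\hbar_t, f_t)$ and the $L^2$-distance between the boson fields $(\vp^\hbar_t, \vp_t)$. Since the two components of the Vlasov-Hartree and Hartree-Hartree systems are coupled only through the mean-field densities $\rho_F$ and $\rho_B = |\vp|^2$, the structure of the argument is a closed pair of differential inequalities: one bounding $\partial_t \| \vp_t - \vp^\hbar_t \|_{L^2}$ in terms of $\| \vp_t - \vp^\hbar_t \|_{L^2}$ and a fermionic distance, and one bounding the time derivative of the fermionic distance in terms of itself and $\| \vp_t - \vp^\hbar_t \|_{L^2}$.

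First, for the boson equation, I would use the Duhamel/energy method: subtracting the Hartree equation for $\vp^\hbar$ (after the factors of $\hbar$ cancel, as explained in the scaling discussion, leaving $i\partial_t \vp^\hbar = -\tfrac12 \Delta \vp^\hbar + (V * \rho_F^\hbar)\vp^\hbar$) from the Vlasov-Hartree boson equation, testing against $\vp_t - \vp^\hbar_t$, and taking real parts kills the kinetic term. This leaves two contributions to $\tfrac{d}{dt}\|\vp_t - \vp^\hbar_t\|_{L^2}^2$: one proportional to $\| V * (\rho_F - \rho_F^\hbar)\|_{L^\infty}$ times $\|\vp_t - \vp_t^\hbar\|_{L^2}$, and one of the form $\|(V * \rho_F^\hbar)\|_{L^\infty}\|\vp_t - \vp_t^\hbar\|_{L^2}^2$ which is handled by Grönwall since $\|V*\rho_F^\hbar\|_{L^\infty} \le \|V\|_{L^\infty}$ (or $\le \|\hat V\|_{L^1}$) uniformly. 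The key point is that $\|V * (\rho_F - \rho_F^\hbar)\|_{L^\infty}$ must be bounded by the Fourier-based negative-Sobolev distance $|f_t - f^\hbar_t|_1$: this works precisely because $\rho_F - \rho^\hbar_F$ is the $p$-marginal of $f - f^\hbar$, so its Fourier transform at $\xi$ is $\widehat{(f - f^\hbar)}(\xi, 0)$, and convolution with $V$ together with the hypothesis $\int \langle\xi\rangle^2 |\hat V(\xi)|\,d\xi < \infty$ lets us estimate $\|V*(\rho_F - \rho_F^\hbar)\|_{L^\infty} \lesssim |f_t - f_t^\hbar|_1$.

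Second, for the fermionic transport equation, I would combine the classical optimal-transport stability of Vlasov-type equations with the quantum optimal transport (QOT) pseudo-metric machinery developed in Section \ref{section QOT} (following Golse--Paul--Lafleche). One introduces a coupling (a transference plan, or its quantum analogue built from $\omega^\hbar_t$ and the transport flow of $f_t$) and differentiates the associated quadratic cost in time. The drift field for $f$ is the boson-generated force $F_B(t,x) = -\int \nabla V(x - y)|\vp_t(y)|^2\,dy$, while the drift seen by $f^\hbar$ (through the Hartree dynamics of $\omega^\hbar$) involves $-\nabla V * \rho_F^\hbar$ acting on the fermionic density itself — wait, more precisely under the scaling the fermion Hamiltonian is $-\tfrac12\Delta + V * \rho_B^\hbar$ with $\rho_B^\hbar = |\vp^\hbar|^2$, so the force on the fermions is $-\nabla V * |\vp^\hbar_t|^2$. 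Hence the difference of drifts splits into (a) a part $\nabla V *(|\vp_t|^2 - |\vp^\hbar_t|^2)$, controlled by $\|\nabla V\|_{W^{1,\infty}}$-type bounds times $\|\vp_t - \vp^\hbar_t\|_{L^2}$ (using $\| |\vp|^2 - |\vp^\hbar|^2\|_{L^1} \le 2\|\vp - \vp^\hbar\|_{L^2}$), which feeds the boson error into the fermion estimate; (b) a Lipschitz-in-transport-variable part, $\nabla V * |\vp^\hbar_t|^2$ evaluated at two nearby points, which since $V \in C^{1,1}$ is Lipschitz and yields a self-contained term closing the Grönwall loop; and (c) the intrinsic $O(\hbar)$ (or $O(\hbar^{1/2})$) semiclassical error coming from the Wigner-transform/commutator estimates, using the propagated semiclassical bound \eqref{semiclassical structure} and finite second moments from Assumption \ref{assumption 2}. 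To pass from the QOT cost back to the testable quantity $\langle h, f_t - f^\hbar_t\rangle$, one uses a Kantorovich--Rubinstein-type duality, which accounts for the $\| \langle \zeta \rangle \hat h\|_{L^1}$ and $\| |\zeta| \hat h\|_{L^2}$ norms appearing on the right-hand side: the first controls Lipschitz-type test against the transport cost, the second handles the $L^2$ boson contribution after Cauchy--Schwarz.

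Finally, I would assemble the two differential inequalities into a single $2\times 2$ Grönwall system of the form $\tfrac{d}{dt}(A_t + B_t) \le C(A_t + B_t) + (\text{source})$, where $A_t$ is the fermionic QOT cost, $B_t = \|\vp_t - \vp^\hbar_t\|_{L^2}$, and the source is $O(\hbar) + O(\hbar^{1/2}) + (\text{initial data errors})$. Integrating gives the stated bounds with the double-exponential $C_2(t) = C\exp(C\exp C|t|)$ — the double exponential arises because the fermionic QOT estimate already carries an exponential-in-time constant (from the propagated semiclassical bound \eqref{semiclassical structure}, which has the form $C_0 \exp(C_0|t|)$), and Grönwall exponentiates that once more — while the boson-only pieces carry only $C_1(t) = C\exp(Ct^2)$. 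The main obstacle I anticipate is step (b)/(c) for the fermion component: making the QOT differentiation rigorous at the very low regularity allowed by Assumption \ref{assumption 2} (limit data only a characteristic function, no regularity), ensuring that all the commutator error terms are genuinely $O(\hbar)$ or $O(\hbar^{1/2})$ uniformly in the scaling, and correctly tracking how the boson error $\|\vp_t - \vp^\hbar_t\|_{L^2}$ enters the fermion transport estimate without creating a circular dependence that Grönwall cannot close — this is where the precise structure of the coupling and the choice of which norm ($|\cdot|_1$ versus $W_2$) to propagate becomes delicate.
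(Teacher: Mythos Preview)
Your Gr\"onwall scheme on a single coupled functional has a genuine gap at $t=0$ on the fermionic side. The QOT propagation you describe (differentiating a coupling cost between $f_t$ and $\omega^\hbar_t$) produces an inequality of the form
\[
E_\hbar\big(f_t,\omega^\hbar_t\big)+\|\vp_t-\vp^\hbar_t\|_{L^2}
\ \le\ C_1(t)\Big(E_\hbar\big(f_0,\omega_0^\hbar\big)+\|\vp_0-\vp_0^\hbar\|_{L^2}\Big),
\]
but Assumption~\ref{assumption 2} gives you only $|f_0-f_0^\hbar|_1\to 0$, and there is no way to bound $E_\hbar(f_0,\omega_0^\hbar)$ in terms of $|f_0-f_0^\hbar|_1$. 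Conversely, if you try to propagate the $|\cdot|_1$ norm directly, the stability estimate you would need (your boson step is fine, and is exactly what the paper does in Proposition~\ref{proposition stability hartree}) compares two solutions of the \emph{same} Hartree--Hartree equation; it does not compare Hartree--Hartree to Vlasov--Hartree, and the $O(\hbar)$ commutator error that separates the two dynamics does not close in $|\cdot|_1$ without further input.

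The missing idea is an \emph{intermediate dynamics}: set $\widetilde\omega_0^\hbar\equiv\t{Op}_w^\hbar[f_0*\mathscr G_\hbar]$ (the anti-Wick quantization of the classical limit datum $f_0$), and let $(\widetilde\omega^\hbar_t,\widetilde\vp^\hbar_t)$ solve Hartree--Hartree with initial data $(\widetilde\omega_0^\hbar,\vp_0^\hbar)$. Then (i) Hartree--Hartree stability in $|\cdot|_1$ compares $(\omega^\hbar,\vp^\hbar)$ to $(\widetilde\omega^\hbar,\widetilde\vp^\hbar)$, with right-hand side $|f_0^\hbar-f_0*\mathscr G_\hbar|_1\le |f_0^\hbar-f_0|_1+C\hbar$; this step carries the double exponential $C_2(t)$ because the commutator lemma used in the stability proof already grows like $e^{C|t|}$. (ii) The QOT Gr\"onwall compares $(\widetilde\omega^\hbar,\widetilde\vp^\hbar)$ to $(f,\vp)$, and now initializes cleanly via $E_\hbar(f_0,\widetilde\omega_0^\hbar)\le\sqrt{d\hbar}$ (Golse--Paul); together with Lafleche's bound $\|f_t-\widetilde f^\hbar_t\|_{\dot H^{-1}}\le E_\hbar+C\sqrt\hbar$ this gives the $C_1(t)=C\exp(Ct^2)$ piece. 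The two distinct time constants and the two distinct test-function norms $\|\langle\zeta\rangle\hat h\|_{L^1}$ (dual to $|\cdot|_1$) and $\||\zeta|\hat h\|_{L^2}$ (dual to $\dot H^{-1}$) in the statement are precisely the footprint of this two-step triangle inequality, not of a single Kantorovich--Rubinstein duality as you suggest.
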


\begin{remark}[Convergence rates]
	The above result gives an explicit convergence rate from the 
	Hartree-Hartree to the Vlasov-Hartree dynamics.
	Of course, this is  not the optimal convergence rate in $\hbar$, which
	we believe should be $\mathcal O (\hbar)$
	on the right hand side.  
	In this work, we have not tried to optimize this rate.
	Indeed, our main goal was to \textit{identify}
	the leading order equations that drive the effective dynamics 
	of the Bose-Fermi mixture, 
	which Theorem \ref{thm2} appropriately does. 
	In a similar spirit, we have not tried to optimize the growth-in-time
	of the constants  involved in our estimates. 
\end{remark}

\begin{remark}[A variational norm]
	Here we have formulated our theorem in terms of test functions.
	Alternatively,  it can be formulated in terms of the norm 
	\begin{equation}
		\|		f \| \equiv \sup 
		\big\{ 
		\<  h , f \> 
		: 
		h\in \mathscr{S} (\R^{2d}) 
		, 
		\|   \< \zeta \> \hat h		\|_{L^1} \leq 1 
		\	    \t{ and } \ 
		\|    | \zeta| \,  \hat h		\|_{L^2} \leq 1 
		\big\}  , 
	\end{equation}
	which is 
	strictly weaker than the norms 
	$
	\|		\cdot \|_{\dot H ^{-1}}$
	and 
	$ | \cdot |_{1 }$. 
\end{remark}

\subsection{Strategy of the proofs}
\label{subsection strategy of proofs}
Let us  outline the proofs of our main results, Theorem \ref{thm1} and \ref{thm2}. 

\vspace{2mm}

The proof of Theorem \ref{thm1} 
consists of the study of an appropriate fluctuation dynamics.
For  gases of interacting bosons 
the approach was  first carried out in \cite{RodnianskiSchlein2009}, 
whereas for gases of interacting fermions  
the approach  was employed  in \cite{BenedikterPortaSchlein2014}. 
The difficulty  of tackling the Bose-Fermi 
mixture lies in how to properly combine these two approaches.
In the present paper, we adapt the approach of studying fluctuation dynamics for the problem at hand. Namely, we introduce in Section \ref{section second quantization}
the formalism of Second Quantization on Fock space $\F$. 
In this formalism, 
coherent states describing Bose-Einstein condensates 
are parametrized by a Weyl operator $\W[\sqrt N \vp(t)]$, 
whereas degenerate Fermi gases are implemented 
by a particle-hole transformation $\calR[\omega(t)]$; see Sections \ref{subsubsection PH}  and \ref{subsubsection CH} for more details. 
In Section \ref{section fluctuation} 
we then study the dynamics of \textit{fluctuations}
around the tensor product of these states.
Roughly speaking,  the problem  is then reduced 
to estimating  
the ``number of excitations" outside of $(\vp(t), \omega(t))$. 
We implement this point of view by introducing a new  unitary transformation on $\F$, 
denoted by $\U(t,s)$ and defined in \eqref{unitary U}. 
Its understanding is  fundamental in our analysis, and leads to the number estimates contained in Theorem \ref{prop number estimates}. 
The proof of these estimates
is based on the analysis of  its infinitesimal generator, which has the form (see Lemma \ref{lemma infinitesimal generator} for details)
\begin{equation}
	\L(t) 
	=
	\d \Gamma_F [h_F(t) ] \otimes \1 
	+
	\1 \otimes \d \Gamma_B [h_B(t)]
	+ \lambda \sqrt N \L_{2,1}(t) + \lambda \L_{2,2}(t) 
\end{equation}
Here, the difficulty lies in controlling the terms $\L_{2,1} (t)$ and $\L_{2,2}(t)$,	which do not commute with
particle number operators and can potentially increment the number of fluctuations.

The proof of Theorem \ref{thm2}  is essentially divided in two steps.
First, we rely  on techniques developed in 
\cite{BenedikterPortaSaffirioSchlein2016}
to understand the stability of the Hartree-Hartree equation \eqref{hartree hartree}
with respect  to the metric in $\mathscr{L}^1( L^2( \R^d)) $
that is induced by the norm $|\cdot |_s$, 
defined in \eqref{fourier based norm definition}. 
Second, 
using some recently developed tools 
from Quantum Optimal Transportation, 
we are able to 
show that 
the convergence from the Hartree-Hartree  \eqref{hartree hartree}
to the Vlasov-Hartree \eqref{vlasov hartree} dynamics 
can be controlled in   
the negative Sobolev space $\dot H ^{-1}$. 
These tools include the introduction in 
\cite{GolseMouhotPaul2016}
of the quantum analogue of the classical Wasserstein distance between two probability measures, 
which have been later developed and applied to the analysis of single species many-particle systems in a series of papers--see for instance
\cite{GolsePaul2017,GolsePaul2021,GolsePaul2022,Lafleche2023}. 
One of the main advantages 
of   these techniques is    the fact that they require \textit{no} regularity
on the initial data under consideration.
This is compatible with Assumption \ref{assumption 2}, in which we assume our initial data corresponds
to a zero temperature state--for fermionic systems, 
the $\hbar\downarrow 0 $ limit of the  Wigner function of 
these states is expected to be of Thomas-Fermi type, 
which fails  for instance to be in $W^{1,1}$.


\section{Second Quantization I: Preliminaries}
\label{section second quantization}
It is convenient to study the Hamiltonian $\eqref{hamiltonian HNM}$
in the second quantization formalism. 
Here, 
we allow the number of particles to fluctuate 
and thus consider the 
Hilbert space composed of the corresponding Fock spaces.
Namely, we let 
\begin{equation}\label{fock space definition}
	\F \equiv \F_F \otimes \F_B\ , 
	\quad \t{where} \quad 
	\mathscr{F}_F \equiv \C \oplus \bigoplus_{n=1}^\infty 
	L^2_a( \R^{dn} )
	\quad \t{and} \quad 
	\mathscr{F}_B \equiv \C \oplus \bigoplus_{n=1}^\infty 
	L^2_s(\R^{dn})  
\end{equation}
are the fermionic and bosonic Fock spaces, respectively. 
Here,  $L^2_a $ 
and  $ L^2_s  $ 
correspond to $L^2$-functions  that are antisymmetric and symmetric, respectively, with respect to permutations of  the particle   variables. 
On $\F_F$ we introduce \textit{fermionic} creation- and annihilation operators $a_x$ and $a_x^*$ 
as the operator-valued distributions 
that satisfy the Canonical Anticommutation Relations (CAR)
\begin{equation}
	\{  a_x , a_{x'}^*	\} = \delta(x- x '  )   
	\qquad
	\t{and}
	\qquad
	\{ a_x ,a_{x' } \}  
	=
	\{ a_x^* ,a_{x' }^*  \}  =0 
	, 
	\quad x,x ' \in \R^d  \,  , 
\end{equation}
where $ \{ \cdot, \cdot \} $ is the anticommutator bracket.
Similarly, on $\F_B$ we introduce \textit{bosonic} creation- and annihilation operators
$b_y$ and $b_y^*$
as the operator-valued distributions 
that satisfy the Canonical Commutation Relations (CCR)
\begin{equation}
	[b_y,  b_{y ' }^*]= \delta( y - y'  )   
	\qquad
	\t{and}
	\qquad
	[  b_y  ,b_{y' }   ] 
	=
	[  b_y^* ,b_{y'}^*  ]   =0
	, 
	\quad y,y ' \in \R^d  \,  , 
\end{equation}
where now $[ \cdot  , \cdot ] $ is the commutator bracket. 
We shall denote by 
$\Omega_F = (1, \0)$ 
and 
$\Omega_B = (1,\0 )$ 
the vacuum vector in each space, 
and by $\Omega \equiv \Omega_F \otimes \Omega_B$ 
the vacuum state of the combined system.  

\vspace{2mm}

In this setting, the  many-particle Hamiltonian 
introduced in the previous section
can be written in terms of
creation- and annihilation-
operators 
in the following form 
\begin{equation}
	\label{hamiltonian}
	\H = 
	\frac{\hbar^2 }{2m_F}
	\int_{  \R^d }
	a_x^* (  - \Delta_x )
	a_x \d x  
	+ 
	\frac{\hbar^2 }{2m_B}
	\int_{  \R^d }
	b_y^* (  - \Delta_y )
	b_y \d y  
	+ \lambda 
	\int_{ \R^{2d} } 
	V(x-y) \ 
	a_x^*a_x   b_y^* b_y 
	\ \d x \d y  \ ,
\end{equation}
where we do not display explicitly the tensor product symbols. 
As for the dynamics, 
we introduce 
the time evolution
of the second quantized system 
as 
\begin{equation}
	\Psi(t) 
	\equiv 
	\exp\big(
	- i t \H / \hbar
	\big)
	\Psi(0)  \ , \qquad \forall t \in \R \ .
\end{equation}
Since the Hamiltonian $\H$
is quadratic and diagonal in creation-
and annihilation operators, 
it commutes 
with the \textit{fermionic}
and \textit{bosonic}
number operators
\begin{equation}
	\N_F 
	\equiv 
	\int_{	\R^d	} a_x^* a_x 
	\d x 
	\qquad
	\t{and}
	\qquad 
	\N_B 
	\equiv 
	\int_{	\R^d	} 
	b_y^* b_y \d y \ .  
\end{equation}
Consequently, if 
$\Psi(0)^{(n,m)} 
=
\delta_{n,N} \delta_{m,M} \psi(0)$, 
then for all $ t \in \R $ it holds that 
\begin{equation}
	\Psi(t)^{(n,m)}
	= 
	\delta_{n,N} \delta_{m,M} \psi(t) \   ,
\end{equation} 
where $\psi (t)$
is the state
corresponding to the 
$(N+M)$-particle system, defined in \eqref{wave function}. 
Most importantly,  one may verify that
the following relations hold true
for the one-particle reduced density matrices
\begin{equation}
	\label{density matrices second quant}
	\gamma_F(t ; x_1 ,x_2)
	= 
	\< 
	\Psi (t) , 
	a_{x_2}^* a_{x_1}\otimes
	\Psi(t) 
	\>_\F    
	\quad
	\t{and}
	\quad
	\gamma_B (t ; y_1 ,y_2)
	= 
	\< 
	\Psi (t) , 
	\1 \otimes 
	b_{y_2}^* b_{y_1}
	\Psi(t) 
	\>_\F  
	\   . 
\end{equation}
The equations given in \eqref{density matrices second quant}
are  the starting point in the proof of Theorem \ref{thm1}. 

\vspace{2mm}

In the rest of this section, 
we introduce   preliminaries  
that we will need to prove Theorem \ref{thm1}.
Namely, 
In Subsections \ref{subsection fermions} and \ref{subsection fermions}
we give a more detailed account   of the second quantization formalism for both fermions and bosons. 
Our goal here is not   to be thorough, but to collect basic results and fix the notation that will be used throughout the article. 
The reader is referred to the book \cite{BenedikterPortaSchlein2016}
and the lecture notes 
\cite{Solovej2007}
for more details.

\subsection{Fermions}
\label{subsection fermions}
Throughout this subsection, we will write the Fermionic Fock space 
$\F_F$  as follows 
\begin{equation}
	\mathscr{F}_F
	= 
	\bigoplus_{n=0}^\infty 
	\mathscr{F}_F^{(n)}
	\quad
	\t{where}
	\quad 
	\F_F^{(0)} \equiv \C 
	\quad 	 
	\t{and}
	\quad 
	\mathscr{F}_F^{(n)} \equiv 	 L^2_a( \R^{dn} ) , \ \forall n \geq 1  , 
\end{equation}
where  $	L^2_a( \R^{dn} )  $ 
corresponds 
to the space of $L^2$ functions that are    antisymmetric 
with respect to the permutation of the particles position variables.
The space $\F_F$  becomes a Hilbert space when endowed  with the inner product 
\begin{equation}
	\< \Psi_1 ,  \Psi_2 \>_{\F_F}
	\equiv 
	\sum_{n=0}^\infty  
	\langle  
	\, 
	\Psi_1^{(n) } , \Psi_2^{(n)}  
	\,
	\rangle_{\F_F^{(n)}}  \  , 
	\qquad \forall \Psi_1 ,\Psi_2 \in \F_F \ . 
\end{equation} 
On the Fock space $\F_F$ one introduces the smeared-out
creation- and annihilation operators  
as follows. 
Given $f \in L^2(\R^d) $,  
we let   $a^*(f)$ and $a(f)$
be defined   for  $\Psi  \in \F_F$ as 
\begin{align}
	\big(		 a^*(f)  \Psi \big)^{(n)}
	(x_1, \ldots, x_n) 
	& \equiv 
	\frac{1}{\sqrt n }
	\sum_{i=1}^n 
	(-1)^i 
	f( x_i) \Psi^{(n-1)}
	(x_1 , \ldots, x_{i-1} , x_{i+1} , \ldots, x_n )	\ , \\ 
	\big(	 a  (f)	 \Psi \big)^{ (n)}
	(x_1, \ldots, x_n) 
	& \equiv 
	\sqrt{n+1 }
	\int_{\R^d }  \overline{f(x)} \Psi^{(n+1)} (x , x_1 , \ldots, x_n) \d x  \ . 
\end{align}
In particular, they satisfy
the following version of the CAR 
\begin{equation}
	\{ a(f)  , a^*(g) \}    = \< f,g\>_{L^2} 
	\quad
	\t{and}
	\quad 
	\{	  a^\# (f) , a^\# (g)		 \} = 0 
\end{equation}
where we recall $\{ \cdot, \cdot  \}$ stands for the anticommutator.
In particular,  it is easy to see that 
the CAR   turns them into bounded operators, with norms
\begin{equation}
	\|	  a^*(f)	\|_{B (\F_F)} = 
	\|	  a(f)	\|_{B (\F_F)}  = 
	\|	 f\|_{L^2 } \ . 
\end{equation}
Let us finally mention that the connection with the operator-valued distributions $a_x^*$ and $a_x$ is by means of the formulae  
\begin{equation}
	a^*(f) = \int_{\R^d }  f(x)  a_x^* \d x 
	\qquad
	\t{and}
	\qquad 
	a(f) = \int_{	\R^d	} \overline{f(x)} a_x \d x   \  , 
	\qquad f \in L^2(\R^dw)
\end{equation}

\subsubsection{Fermionic operators on $\F_F$}
Given a closed linear operator 
$\O: \calD (\O) \subset L^2(\R^d ) \rightarrow L^2(\R^d )$, 
we consider its second quantization  
$\d \Gamma_F [\O]$ 
as the diagonal operator 
on $\F_F$, defined as 
\begin{equation}
	( 	\d\Gamma_F [\O]  )^{ (n)}
	\equiv 
	\sum_{ i = 1 }^n 
	\mathbb{1}^{ i -1 } \otimes \O \otimes \mathbb{1}^{ n - i }  , 
	\ n \geq 1 
	\qquad
	\t{and}
	\qquad 
	( 	 	\d\Gamma[\O]  )^{ (0)} \equiv 0  \   , 
\end{equation}
initially   on tensor products of elements of $\calD(\O)$, and then closed. 
In most cases of interest $\O$ is bounded (or even trace-class)
and the domain of $ \d \Gamma_F [\O]$ is contained in $\calD(\N_F)$--the only exception will be the  Laplacian $\O = - \Delta$, in which case 
$\hbar^2/2m_F \d \Gamma_F[-\Delta]$ is the 
associated kinetic energy of the system.

\vspace{1mm}

The reader should be aware that, at least formally, 
if   $\O$ has an operator kernel
$\O(x,x')$, then  one may write in terms of creation- and annihilation operators 
\begin{equation}
	\d\Gamma_F  [\O ] = \int_{	\R^{2d}	} \O (x,x')      a_{x}^*a_{x'}     \d x \d x' \ . 
\end{equation}
In this context, one of the most important observables in second quantization corresponds to the 
the {fermionic number operator}.
It is defined as   the second quantization of the 
the identity operator  on $L^2(\R^d)$, which
has the distributional  kernel
$\1 (x,x') = \delta(x-x')$.
Namely,    
\begin{equation}
	\label{fermion identity N}
	\N_F =  \bigoplus_{n=0}^\infty  n = \d \Gamma_F [\1 ] = \int_{	\R^d	} a_x^*a_x \d x  \ . 
\end{equation}


Let us now  collect some basic results concerning estimates
for the second quantization of operators in fermionic Fock space
in the following lemma. 
For a proof, we refer the reader to 
\cite[Lemma 3.1]{BenedikterPortaSchlein2014}. 

\begin{lemma}[Estimates for fermionic operators]
	\label{lemma fermion estimates}
	Let $\O: L^2(\R^d) \rightarrow L^2(\R^d)$ 	 be a bounded operator. 
	Then, the following holds true.

	\begin{enumerate}[leftmargin=1cm]
		\item 
		For all $\Psi , \Phi \in \calD(\N_F)$  
		\begin{align}
			\|	 \d \Gamma_F (    \O	  	) \Psi	 \|_{\F_F}			&	
			\leq	\|	   \O  	\|_{B(  L^2 )} 	
			\|	  \N_F  \Psi 	\|_{\F_F}		 \ , \\ 
			| 		\< \Psi , \d  \Gamma_F  (	 \O 	)\Phi \>_{\F_F}	|   
			& 
			\leq	\|	   \O  	\|_{B(  L^2 )} 	 
			\| 			\N_F^{\frac{1}{2}}	 		\Psi 	\|_{\F_F}
			\| 				\N_F^{\frac{1}{2}}				 \Phi 	\|_{\F_F} \ . 
		\end{align}
		\item 
		If $\O$ is Hilbert-Schmidt   with kernel $\O(x,y)$, then for all $\Psi \in \calD (\N_F^{\frac{1}{2}})$
		\begin{align}
			\|		 \d \Gamma_F (\O)	 \Psi 	\|_{\F_F}		& \leq 		\|	 \O	\|_{HS}		\|		 \N_F^{\frac{1}{2}}	\Psi 	\|_{\F_F}		\\
			\|		\int_{	 \R^d \times \R^d 	}	 \O(x_1 , x_2 )  a_{x_1} a_{x_2} \d x_1 \d x_2  		\|_{\F_F}				 & \leq 		 \|	 \O	\|_{HS}		\|		 \N_F^{\frac{1}{2}}	\Psi 	\|_{\F_F}		\\
			\|		\int_{	 \R^d \times \R^d 	}	 \O(x_1 , x_2 )  a^*_{x_1} a^*_{x_2} \d x_1 \d x_2  		\|_{\F_F}				 & \leq 		 \|	 \O	\|_{HS}		\|	( 	 \N_F+\1)  ^{\frac{1}{2}}	\Psi 	\|_{\F_F}	 \ . 
		\end{align}
		\item 
		If $\O$ is trace-class with kernel $\O(x,y)$, then for all $\Psi \in \F_F$
		\begin{align}
			\|		 \d \Gamma_F  (\O)	 \Psi 	\|_{\F_F}		& \leq 		\|	 \O	\|_{Tr}		\|		 	\Psi 	\|_{\F_F}			\\
			\|		\int_{	 \R^d \times \R^d 	}	 \O(x_1 , x_2 )  a_{x_1} a_{x_2} \d x_1 \d x_2  		\|_{\F_F}			 & \leq 			\|	 \O	\|_{Tr}		\|		 	\Psi 	\|_{\F_F}			\\
			\|		\int_{	 \R^d \times \R^d 	}	 \O(x_1 , x_2 )  a^*_{x_1} a^*_{x_2} \d x_1 \d x_2  		\|_{\F_F}			 & \leq 			\|	 \O	\|_{Tr}		\|		 	\Psi 	\|_{\F_F}		 \ . 
		\end{align}
	\end{enumerate}
\end{lemma}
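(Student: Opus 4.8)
The plan is to prove each of the three groups of inequalities in Lemma~\ref{lemma fermion estimates} by reducing to the standard bound $\| a^\#(f) \|_{B(\F_F)} = \|f\|_{L^2}$ for fermionic creation/annihilation operators, combined with a Cauchy--Schwarz argument in the particle number. First I would treat part (1). Write $\d\Gamma_F(\O) = \int \O(x,x') a_x^* a_{x'} \, \d x \d x'$ and, fixing $\Psi \in \calD(\N_F)$, decompose the action on the $n$-particle sector. The key observation is the pointwise-in-$n$ bound $\| (\d\Gamma_F(\O)\Psi)^{(n)} \|_{\F_F^{(n)}} \le \|\O\|_{B(L^2)} \, n \, \|\Psi^{(n)}\|_{\F_F^{(n)}}$, which follows because $(\d\Gamma_F(\O))^{(n)}$ is a sum of $n$ copies of $\O$ acting on a single tensor slot, each of operator norm $\|\O\|_{B(L^2)}$, and the antisymmetrization does not increase the norm. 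Squaring and summing over $n$ gives the first inequality since $\sum_n n^2 \|\Psi^{(n)}\|^2 = \|\N_F \Psi\|^2$. For the sesquilinear version, I would instead estimate $|\langle \Psi, \d\Gamma_F(\O)\Phi\rangle| \le \|\O\|_{B(L^2)} \sum_n n \, \|\Psi^{(n)}\|\,\|\Phi^{(n)}\|$ and apply Cauchy--Schwarz in $n$ with the split $n = n^{1/2}\cdot n^{1/2}$, yielding $\|\N_F^{1/2}\Psi\|\,\|\N_F^{1/2}\Phi\|$.

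Next I would handle part (3), the trace-class case, since it is the cleanest. Writing $\O$ in its singular value decomposition $\O = \sum_k \mu_k |u_k\rangle\langle v_k|$ with $\sum_k |\mu_k| = \|\O\|_{\tr}$, one gets $\d\Gamma_F(\O) = \sum_k \mu_k\, a^*(u_k) a(v_k)$, and since $a^*(u_k) a(v_k)$ has operator norm at most $\|u_k\|_{L^2}\|v_k\|_{L^2} = 1$, the triangle inequality immediately gives $\|\d\Gamma_F(\O)\Psi\| \le \|\O\|_{\tr}\|\Psi\|$ with no number operator needed. The two pairing-operator bounds follow the same way: $\int \O(x_1,x_2) a_{x_1} a_{x_2} = \sum_k \mu_k\, \overline{?}$ --- here one must be a little careful since $\O$ is not self-adjoint; using the decomposition one writes the expression as $\sum_k \mu_k\, a(\bar u_k) a(\bar v_k)$ (after tracking complex conjugates through the kernel), and bounds each summand by $\|u_k\|\|v_k\| = 1$ in operator norm, so again the sum is controlled by $\|\O\|_{\tr}$; the $a^*a^*$ version is the adjoint statement.

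For part (2), the Hilbert--Schmidt case, the SVD argument gives $\sum_k \mu_k^2 = \|\O\|_{HS}^2$ but $\sum_k \mu_k = \infty$ in general, so I cannot simply sum operator norms. Instead I would argue directly on kernels: for $\Psi \in \calD(\N_F^{1/2})$, compute $\|\d\Gamma_F(\O)\Psi\|^2$ sector by sector, bound the action of $\sum_i \1^{\otimes(i-1)}\otimes\O\otimes\1^{\otimes(n-i)}$ using that $\O \colon L^2 \to L^2$ is Hilbert--Schmidt hence has an $L^2$ kernel, and pick up a factor $n$ rather than $n^2$ because one slot carries the kernel and the Hilbert--Schmidt norm is already an $L^2$-integral over two variables; summing $\sum_n n\|\Psi^{(n)}\|^2 = \|\N_F^{1/2}\Psi\|^2$ closes it. For the $a_{x_1}a_{x_2}$ and $a^*_{x_1}a^*_{x_2}$ expressions I would expand $\int \O(x_1,x_2) a_{x_1} a_{x_2}\Psi$ on the $n$-particle sector explicitly --- each application lowers (resp.\ raises) the particle number by two and contracts the kernel against two of the $\Psi$-variables --- and then estimate the resulting $L^2$ norm by Cauchy--Schwarz, the $\O(x_1,x_2)$ factor producing exactly $\|\O\|_{HS}$ and the remaining combinatorial prefactor producing $\|\N_F^{1/2}\Psi\|$ (resp.\ $\|(\N_F+\1)^{1/2}\Psi\|$, the shift by $\1$ reflecting the creation operators acting on the vacuum sector).

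The main obstacle, such as it is, is purely bookkeeping rather than conceptual: one must carefully track the antisymmetrization signs and the $\sqrt{n}$, $\sqrt{n+1}$ normalization factors in the definitions of $a^\#(f)$ when expanding the pairing operators $\int \O(x_1,x_2)a_{x_1}a_{x_2}$ and its adjoint on a fixed $n$-particle sector, and verify that the combinatorial sums collapse to exactly one power of the number operator. Since this is a restatement of \cite[Lemma~3.1]{BenedikterPortaSchlein2014}, I would either reproduce these standard manipulations briefly or simply cite that reference for the detailed computation, as the excerpt already does.
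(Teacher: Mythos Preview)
Your proposal is correct and standard; note that the paper does not actually prove this lemma but simply refers the reader to \cite[Lemma~3.1]{BenedikterPortaSchlein2014}, exactly as you suggest doing at the end of your sketch. The argument you outline (sectorwise operator-norm bounds for part (1), singular value decomposition for part (3), and a direct kernel/Cauchy--Schwarz computation for part (2)) is precisely the content of that reference, so there is nothing further to compare.
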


\subsubsection{Particle-hole transformation}
\label{subsubsection PH}

In this subsection we introduce a class of 
Bogoliubov transformations 
on Fock space 
that we will use in the proof of Theorem \ref{thm1}; 
they will be useful in
quantifying 
the number of fluctuations
outside of a degenerate Fermi gas.

\vspace{2mm}
More precisely, 
let us consider 
$\omega$
to be a   rank-$M$ orthogonal projection   
on $L^2(\R^d)$.
Thus, there exists an orthonormal basis 
$\{  \phi_i \}_{i=1}^\infty \subset L^2(\R^d)$ such that 
\begin{equation}
	\textstyle 
	\omega = \sum_{i=1}^M \ket{\phi_i}		\bra{\phi_i}   \ .
\end{equation} 
We   introduce a       map on Fock space
$\calR [\omega] : \F_F \rightarrow \F_F$, 
which we shall refer to as a 
\textit{particle-hole transformation} associated to  $\omega$.
We define it 
according to its action on creation-
and annihilation
operators as follows 
\begin{equation}
	\label{R omega}
	\calR ^*[\omega] a^*(\phi_i) 	\calR [\omega] 
	\equiv 
	\begin{cases}
		a^* (\phi_i) \, ,  & i \leq M \\
		a(\phi_i)  \, , & i> M 
	\end{cases} \ , 
\end{equation}
and its action on the vacuum 
$	\calR [\omega] \Omega_F
\equiv 
a^*(\phi_1) \cdots a^*(\phi_M) \Omega_F .$
Note that since 
the span 
of vectors of the form
$   a^*(\phi_{i_1})   \cdots a^*(\phi_{i_n}    )\Omega_F$
is dense in $\F_F$,  
the above prescription
completely determines $\calR[\omega]$. 

\vspace{2mm}

Let us now collect additional properties of the map $\calR[\omega]$
In order to state them, we need to introduce some important notation. 
Indeed, we consider the operators 
on the one-particle space
$u,v \in B(L^2(\R^d))$
defined as 
\begin{equation}
	\label{definition u and v}
	\textstyle 
	u \equiv \1 - \omega 
	\qquad
	\t{and}
	\qquad 
	v= \sum_{ i = 1 }^M \ket{\overline{\phi_i}} \bra{\phi_i} \ . 
\end{equation}
The following properties are recorded in the following Lemma.
We refer the reader to \cite{BenedikterPortaSchlein2014} for more details.

\begin{lemma}
	[Properties of $\calR$]
	\label{lemma fermion properties}
	Let $\omega$, $u$, $v$ and $\calR[\omega]$ be as above. 
	Then, the following statements hold true. 
	
	\begin{enumerate}
		\item 
		$\calR[\omega]$ is a unitary transformation on $\F_F$, and $\calR^*[\omega] = \calR[\omega]$.

		\item 
		We denote  $u_y (x) \equiv u(x,y)$   and $v_y(x) \equiv v(x,y)$. 
		Then, for all $x \in \R^d$  
		\begin{align}
			\calR ^*[\omega]   
			\ a^*_x  \ 
			\calR [\omega] 
			& 	= 
			a^* ( u_{x}) 
			+ 
			a( \overline{  v_{ x } } )	  \ ,   \\  
			\calR ^*[\omega] 
			\ a_x  \ 
			\calR [\omega] 
			& 	= 
			a ( u_{ x }) 
			+ 
			a^*( \overline{  v_{ x} } )	 \ . 
		\end{align}
		
		\item 
		For  all $x,y  \in \R^d$  there holds 
		\begin{equation}
			\< 	\calR [\omega]  \Omega_F    ,  \, a_{y}^*  \, a_x \,  	\calR [\omega]   \Omega_F    \>_{\F_F}
			=  \omega (x,y)   \  .
		\end{equation}
		In  words, the one-particle reduced density matrix of 	$\calR [\omega]  \Omega_F $ corresponds to
		$\omega$. 
		
		\item 
		$u^* = u^2 = u \ $
		and
		$ \ v^* = \overline v$.

		\item 
		$u^* u + v^*v = \1  \ $ 
		and 
		$ \ u \overline v = v u = 0 . $
		
	\end{enumerate}
	
\end{lemma}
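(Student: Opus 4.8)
\textbf{Proof proposal for Lemma \ref{lemma fermion properties}.}

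The plan is to verify each item directly from the defining prescription \eqref{R omega} together with the action on the vacuum, using the CAR and the structure of $u$ and $v$ defined in \eqref{definition u and v}. I would organize the work as follows. First I would establish item (4), the algebraic identities $u^* = u^2 = u$ and $v^* = \overline{v}$, together with item (5), namely $u^*u + v^*v = \1$ and $u\overline{v} = vu = 0$; these are purely one-particle-space computations. Since $u = \1 - \omega$ with $\omega$ an orthogonal projection, $u$ is itself an orthogonal projection, giving (4) for $u$. For $v = \sum_{i=1}^M \ket{\overline{\phi_i}}\bra{\phi_i}$, one computes $v^* = \sum_i \ket{\phi_i}\bra{\overline{\phi_i}}$ and checks this equals $\overline{v}$ by taking complex conjugates of kernels. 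Then $v^*v = \sum_{i,j}\ket{\phi_i}\braket{\overline{\phi_i}|\overline{\phi_j}}\bra{\phi_j} = \sum_i \ket{\phi_i}\bra{\phi_i} = \omega$ (using that $\{\overline{\phi_i}\}$ is orthonormal), whence $u^*u + v^*v = u + \omega = \1$. For the vanishing products, $vu = v(\1-\omega) = v - v\omega$; since $v\omega = \sum_{i,j\le M}\ket{\overline{\phi_i}}\braket{\phi_i|\phi_j}\bra{\phi_j} = v$, this gives $vu = 0$, and $u\overline{v} = uv^* = (vu)^* = 0$.

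Next I would prove item (1), unitarity and self-adjointness of $\calR[\omega]$. Since $\calR[\omega]$ maps the orthonormal basis of $\F_F$ consisting of vectors $a^*(\phi_{i_1})\cdots a^*(\phi_{i_n})\Omega_F$ (with $i_1 < \cdots < i_n$) to another orthonormal system — this follows because conjugation by $\calR$ sends creation operators for $i \le M$ to creation operators and for $i > M$ to annihilation operators, while $\calR\Omega_F$ is the filled Fermi sea $a^*(\phi_1)\cdots a^*(\phi_M)\Omega_F$ — one checks it is a bijection onto the full basis, hence unitary. For self-adjointness (equivalently $\calR^2 = \1$), one verifies that applying the prescription \eqref{R omega} twice returns the identity on creation/annihilation operators and that $\calR^2\Omega_F = \Omega_F$ (moving the $a^*(\phi_i)$, $i\le M$, past the filled sea using the CAR and $a(\phi_i)a^*(\phi_j)\cdots$ anticommutation). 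Item (3) then follows quickly: writing $\calR[\omega]\Omega_F$ as the Slater determinant of $\phi_1,\dots,\phi_M$, a direct CAR computation gives $\braket{\calR\Omega_F | a_y^* a_x | \calR\Omega_F} = \sum_{i=1}^M \overline{\phi_i(y)}\phi_i(x) = \omega(x,y)$ — this is the standard fact that the one-particle density matrix of a Slater determinant is the projection onto the occupied orbitals.

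The main work is item (2), the transformation rule for the pointwise operators $a_x^*, a_x$. The plan is to expand $a_x^* = \sum_{i=1}^\infty \overline{\phi_i(x)}\, a^*(\phi_i)$ in the orthonormal basis $\{\phi_i\}$, apply \eqref{R omega} termwise to get $\calR^* a_x^* \calR = \sum_{i\le M}\overline{\phi_i(x)}a^*(\phi_i) + \sum_{i>M}\overline{\phi_i(x)}a(\phi_i)$, and then recognize the two sums. The first sum is $a^*\big(\sum_{i\le M}\phi_i(x)\cdot\text{(wrong conjugation)}\big)$ — care is needed here with antilinearity of $f \mapsto a(f)$ versus linearity of $f \mapsto a^*(f)$, so I would track conjugates carefully. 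Using $u = \1 - \omega = \sum_{i>M}\ket{\phi_i}\bra{\phi_i}$ one identifies $\sum_{i>M}\overline{\phi_i(x)}a(\phi_i) = a\big(\overline{u_x}\big)$ where $u_x(\cdot) = u(\cdot,x)$, after verifying $u(y,x) = \sum_{i>M}\phi_i(y)\overline{\phi_i(x)}$ and matching with the antilinear argument of $a$. Similarly $\sum_{i\le M}\overline{\phi_i(x)}a^*(\phi_i)$ must be rewritten as $a^*(u_x) + a(\overline{v_x})$ — the point being that $\sum_{i\le M}\ket{\phi_i}\bra{\phi_i} = \omega$ but the prescription keeps these as creation operators, so one splits $a_x^* = \calR^*(\calR a_x^*\calR^*)\calR$ or rather works from the identity on the full space; actually the cleanest route is to note $u_x$ has components only for $i>M$ while the $i\le M$ part of $a_x^*$ is exactly $a^*$ of the $\omega$-part, which via $v$ and Lemma items (4)–(5) becomes $a(\overline{v_x})$. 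The delicate point — and the step I expect to be the main obstacle — is getting all the complex conjugates and the linear/antilinear bookkeeping exactly right so that the final identities read $\calR^* a_x^*\calR = a^*(u_x) + a(\overline{v_x})$ and $\calR^* a_x\calR = a(u_x) + a^*(\overline{v_x})$; once the conventions are pinned down the computation is mechanical, and one can cross-check consistency against item (3) by computing $\omega(x,y)$ both ways.
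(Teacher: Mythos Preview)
The paper does not prove this lemma; it records the statements and refers to \cite{BenedikterPortaSchlein2014} for details. Your plan is exactly the standard verification and is correct in outline.

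The confusion you flag in item (2) is not a defect of your argument but of the paper's display \eqref{R omega}: the two cases there are swapped. The particle--hole map must send $a^*(\phi_i)\mapsto a(\phi_i)$ for the \emph{occupied} orbitals $i\le M$ and fix $a^*(\phi_i)$ for $i>M$; you can confirm this by testing on $\Omega_F$ (since $\calR\Omega_F$ is the filled Fermi sea, $\calR^* a^*(\phi_i)\calR\,\Omega_F=0$ for $i\le M$, which rules out $a^*(\phi_i)$ on the right-hand side). With the cases corrected, your expansion $a_x^*=\sum_i \overline{\phi_i(x)}\,a^*(\phi_i)$ gives
\[
\calR^* a_x^*\,\calR \;=\; \sum_{i>M}\overline{\phi_i(x)}\,a^*(\phi_i)\;+\;\sum_{i\le M}\overline{\phi_i(x)}\,a(\phi_i),
\]
and the two sums are precisely $a^*(u_x)$ and $a(\overline{v_x})$ once you use linearity of $a^*(\cdot)$, antilinearity of $a(\cdot)$, and the kernel identities $u_x=\sum_{i>M}\overline{\phi_i(x)}\,\phi_i$ and $\overline{v_x}=\sum_{i\le M}\phi_i(x)\,\phi_i$. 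No further splitting or rerouting is needed; the bookkeeping you worried about then becomes mechanical, and your cross-check via item (3) goes through.
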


\begin{remark}
	The unitary map
	$\calR[\omega]$ is an example of the implementation of a Bogoliubov 
	transformation--that is, a map on Fock space that preserves the 
	Canonical Anticommutation Relations. 
	More precisely, consider the maps 
	\begin{equation} 
		\label{nu}
		\nu 
		=
		\begin{pmatrix}
			u & \bar v \\
			v & \bar u
		\end{pmatrix} \ 
		\qquad
		\t{and}
		\qquad 
		A(f,g)  \equiv  a(f) + a^* (\bar g) 
	\end{equation}
	for all $f,g \in L^2(\R^d).$
	In this context, $\nu$ is a \textit{Bogoliubov transformation}.
	Namely,  it holds true that  
	for all $f,f_1 ,f_2 , g \in L^2(\R^d): $
	\begin{align}
		\{  A(\nu (f_1, g_1) , \nu (f_2, g_2) )   \} & = \{ A(f_1, g_1) ,    A(f_2 , g_2)   \}  \ , \\
		A^*(\nu (f,g)    )  & =  A( \nu (\bar g , \bar f))   \, . 
	\end{align}
	Furthermore,   
	$\calR[\omega]$ implements $\nu$
	on the Fock space $\F_F$, 
	in the sense that  
	for all $f,g\in L^2(\R^d)$
	\begin{equation}
		\label{conjugation 1}
		\calR ^* [\omega ]A(f,g) 	\calR  [\omega]  
		=
		A( \nu (f,g) )  
	\end{equation}
\end{remark}

Let us note that while the notion of Bogoliubov transformations is quite general, 
in the physical situation at hand
it is sufficient to consider   particle-hole transformations, 
which has an explicit representation. 
This is because the initial state we consider  is a pure state
$\psi_F(0)$
corresponding to a Slater determinant of $M$ particles.
Consequently, 
its one-particle reduced density matrix is 
a rank-$M$ orthogonal projection. 
The situation is quite different in the positive temperature case, 
when states are mixed, and no longer orthogonal projections.

\subsection{Bosons}
\label{subsection bosons}
Similarly, throughout this subsection
we emply the following notation
to denote the bosonic Fock space $\F_B$
\begin{equation}
	\mathscr{F}_B
	\equiv
	\bigoplus_{n=0}^\infty 
	\mathscr{F}_B^{(n)}
	\quad
	\t{where}
	\quad 
	\F_B^{(0)} \equiv \C 
	\quad 	 
	\t{and}
	\quad 
	\mathscr{F}_B^{(n)} \equiv 	 L^2_s( \R^{dn} ) , \ \forall n \geq 1  
\end{equation}
where $	L^2_s( \R^{dn} )  $ 
corresponds to the subspace of  symmetric functions.
$\F_B$   is a Hilbert space when endowed with the inner product 
\begin{equation}
	\< \Phi_1,  \Phi_2 \>_{\F_B}
	\equiv 
	\sum_{n=0}^\infty  
	\<\Phi_1^{(n) } , \Phi_2^{(n)}    \>_{\F_B^{(n)}}  \  , 
	\qquad \forall \Phi_1 ,\Phi_2 \in \F_B \ . 
\end{equation}
On the bosonic Fock space $\F_B$ one introduces the smeared-out
creation- and annihilation operators  
as follows. 
Given $f \in L^2(\R^d) $,  
we let   $b^*(f)$ and $b(f)$
be defined   for  $\Phi  \in \F_F$ as  
\begin{align}
	\big(		 b^*(f)  \Phi \big)^{(n)}
	(y_1, \ldots, y_n) 
	& \equiv 
	\frac{1}{\sqrt n }
	\sum_{i=1}^n  
	f( y_i) \Phi^{(n-1)}
	(y_1 , \ldots, y_{i-1} , y_{i+1} , \ldots, y_n )	\ , \\ 
	\big(	 b (f)	 \Phi \big)^{ (n)}
	(y_1, \ldots, y_n) 
	& \equiv 
	\sqrt{n+1 }
	\int_{\R^d }  \overline{f(y)} \Phi^{(n+1)} (y , y_1 , \ldots, y_n) \d y  \ . 
\end{align}
In contrast to the fermions, they satisfy the Canonical Commutation Relations (CCR)
\begin{equation}
	[  b(f)  , b^*(g)  ]   = \< f,g\>_{L^2} 
	\quad
	\t{and}
	\quad 
	[   b^\# (f) , b^\# (g)	] = 0  \ . 
\end{equation}
In particular, they are unbounded operators on $\F_B$, but relatively bounded with respect to the bosonic number operator--see Lemma \ref{lemma boson estimates}. 
They are connected to the  operator-valued distributions $b_y^*$ and $b_y$  by means of the formulae   
\begin{equation}
	b^*(f) = \int_{\R^d }  f(y)  b_y^* \d y 
	\qquad
	\t{and}
	\qquad 
	b(f) = \int_{	\R^d	} \overline{f(y)} 
	b_y \d y 
	\  , \qquad f \in L^2(   \R^d )    \ . 
\end{equation}

\subsubsection{Operator estimates} 
We proceed analogously as we did for fermions.
Namely,  given an operator $\O$ in the one-particle space $L^2(\R^d)$, 
we  define its second quantization  $\d \Gamma_B [\O]$ 
acting on $\F_B$    as 
\begin{equation}
	( 	\d\Gamma_B[\O]  )^{ (n)}
	\equiv 
	\sum_{ i = 1 }^n 
	\mathbb{1}^{ i -1 } \otimes \O \otimes \mathbb{1}^{ n - i }  , 
	\ n \geq 1 
	\qquad
	\t{and}
	\qquad 
	( 	 	\d\Gamma[\O]  )^{ (0)} \equiv 0  \ . 
\end{equation}
Similarly, if $\O$ has an operator kernel
$\O(y,y')$,    one may write in terms of creation- and annihilation operators 
\begin{equation}
	\d\Gamma_B [\O ] = \int_{	\R^{2d}	} \O (y,y')      b_{y}^*b_{y'}     \d y \d y' \ . 
\end{equation}

Analogously as we did in the case of fermions,
we now   introduce
the corresponding   relations 
for the  { bosonic number operator}  
\begin{equation}
	\label{boson identity N}
	\N_B  =   \bigoplus_{n=0}^\infty  n 
	=
	\d \Gamma_B [\1 ] = \int_{	\R^d	} b_y^*b_y \d y  \ . 
\end{equation}

Let us now  collect some basic results concerning estimates
for the second quantization of operators in bosonic Fock space. 
For reference, see
\cite[Lemma 3.1]{ChenLeeSchlein2011}.

\begin{lemma}[Estimates for bosonic operators]
	\label{lemma boson estimates}
	Let $\O : L^2(\R^d)\rightarrow L^2(\R^d)$ be a bounded operator, 
	and let $ f  \in L^2(\R^d)$.
	Then, the following holds true

	\vspace{1mm}
	
	\noindent (1) For all $\Phi \in \calD (\N_B^{\frac{1}{2}})$
	\begin{align}
		\|	  b( f )  \Phi		\|_{\F_b}			&  \leq \|	  f 	\|_{L^2}		\|   \N_B^{\frac{1}{2}}  \Phi		\|_{	\F_b	}			\\ 
		\|	  b^*( f )  \Phi		\|_{\F_b}			&  \leq \|	  f 	\|_{L^2}		\|    (  \N_B+1)^{\frac{1}{2}}  \Phi		\|_{	\F_b	}		 \ . 
	\end{align}

	\noindent (2) For all $\Phi \in \calD(\N_B )$
	\begin{align}
		\|	 \d \Gamma_B (\O )	 \Phi	\|_{\F_b} \leq 	  \| \O		\|_{B(L^2)}		\| \N_B	 \Phi		\|_{	 \F_b	} \ . 
	\end{align}
\end{lemma}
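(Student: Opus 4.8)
The plan is to prove all three inequalities sector by sector on $\F_B = \bigoplus_{n\ge0}\F_B^{(n)}$, reducing each to an elementary Cauchy--Schwarz estimate on $L^2(\R^{dn})$ followed by a summation in $n$. It suffices to establish the bounds on the dense subspace of vectors $\Phi$ with only finitely many nonzero components $\Phi^{(n)}$, each Schwartz; since these form a common core for $\N_B$, $\N_B^{1/2}$, and the closures of $b(f)$, $b^*(f)$, $\d\Gamma_B(\O)$, the general statement on $\calD(\N_B^{1/2})$ (resp.\ $\calD(\N_B)$) follows by a routine approximation argument in the appropriate graph norm.

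For the annihilation bound I would start from $(b(f)\Phi)^{(n)}(y_1,\dots,y_n) = \sqrt{n+1}\int_{\R^d}\overline{f(y)}\,\Phi^{(n+1)}(y,y_1,\dots,y_n)\,\d y$, apply Cauchy--Schwarz in $y$ for fixed $(y_1,\dots,y_n)$, integrate over $\R^{dn}$, and use $\sum_{n\ge0}(n+1)\|\Phi^{(n+1)}\|^2 = \sum_{m\ge1}m\|\Phi^{(m)}\|^2 = \|\N_B^{1/2}\Phi\|_{\F_B}^2$; this gives the first inequality in (1). For the creation bound I would avoid recomputing and instead use the CCR $[b(f),b^*(f)] = \|f\|_{L^2}^2$, which yields $\|b^*(f)\Phi\|_{\F_B}^2 = \|b(f)\Phi\|_{\F_B}^2 + \|f\|_{L^2}^2\|\Phi\|_{\F_B}^2$, and then combine the annihilation bound with the identity $\|\N_B^{1/2}\Phi\|^2 + \|\Phi\|^2 = \|(\N_B+1)^{1/2}\Phi\|^2$. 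Finally, for $\d\Gamma_B(\O)$ I would note that on $\F_B^{(n)}$ it is a sum of $n$ operators of the form $\1^{\otimes(i-1)}\otimes\O\otimes\1^{\otimes(n-i)}$, each of operator norm $\le\|\O\|_{B(L^2)}$ on $L^2(\R^{dn})$; the triangle inequality then gives $\|\d\Gamma_B(\O)^{(n)}\Phi^{(n)}\|\le n\|\O\|_{B(L^2)}\|\Phi^{(n)}\|$ there, hence on the invariant symmetric subspace as well, and squaring and summing against $\|\N_B\Phi\|_{\F_B}^2 = \sum_n n^2\|\Phi^{(n)}\|^2$ yields (2).

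There is no genuine obstacle here: this is a textbook fact (compare \cite[Lemma 3.1]{ChenLeeSchlein2011}, and the fermionic analogue stated above in Lemma \ref{lemma fermion estimates}). The only mildly delicate point is the domain issue mentioned in the first paragraph, namely propagating the a priori bounds from the finite-particle core to the stated form domains, and this is entirely standard.
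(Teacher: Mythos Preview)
Your proof is correct and is precisely the standard argument; the paper itself does not give a proof of this lemma but merely states it and refers the reader to \cite[Lemma 3.1]{ChenLeeSchlein2011}, which is the same reference you cite.
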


\subsubsection{Coherent states}
\label{subsubsection CH}

Analogously as we did for fermions, we introduce a unitary transformation that we shall make use of in the rest of the article. 
Namely, for $f \in L^2(\R^d )$ we introduce the \textit{Weyl operator} as 
\begin{equation}
	\W[f] \equiv \exp \Big(     b (f) - b^*(f)    \Big) : \F_B \rightarrow \F_B \ . 
\end{equation}
Note that since the argument in the exponential is anti self-adjoint, $\W[f]$ is automatically a unitary map. 
Its action on the vacuum vector creates a \textit{coherent state}. 
That is,    thanks to the Baker-Camper-Hausdorff formula, one has 
on $\F_B $
\begin{equation}
	( 	\W[f] \Omega_B)^{(n) }   
	=
	e^{   - \frac{\|	f\|^2  }{2}     }
	\frac{ f^{\otimes n }}{\sqrt{n!}}\  , \quad \forall n \geq  0 \ . 
\end{equation}
In particular, the probability of finding the system with 
$n$ particles is given by 
$ e^{ - \| f \|} \|	 f\|^n / n!  $
which follows a Poisson distribution with parameter $\lambda  = \|	f\|_{L^2}$.

\vspace{1mm}

We collect in the following lemma 
some properties of  the  Weyl operator 
and coherent states.
For more details, we refer the reader to 
\cite[Lemma 2.2]{RodnianskiSchlein2009}

\begin{lemma}[Properties of $\W$]
	\label{lemma properties bosons}
	Let $f\in L^2(\R^d)$, and $\W[f]$ be as above. 
	Then, the following statements hold true 
	\begin{enumerate}
		\item 
		$\W[f]$ is a unitary transformation on $\F_B$, 
		and
		$\W^*[f] = \W[-f]$. 
		
		\item 
		For all $y \in \R^d$
		\begin{align}
			\label{conjugation f}
			\W^*[f] b_y  \W [f]
			& 	=
			b_y +    f(y) 	\ , \\ 
			\W^*[f] b^*_y  \W [f]
			& 				=
			b^*_y  +   \overline{ f(y) } \   . 
		\end{align}

		\item 
		For all $y,y' \in \R^d$
		\begin{equation}
			\< \W[f] \Omega_B , b_{y'}^* b_y \W[f] \Omega_B \>_{\F_B}
			= 
			\overline{ f(y')} f(y) \ .  
		\end{equation}
		In   words, the one-particle reduced density matrix of $\W[f] \Omega_B $
		corresponds to the projector $ \ket{f} \bra{f}.$

\end{enumerate}
\end{lemma}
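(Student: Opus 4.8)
The plan is to prove the three items of the statement in turn, each reducing to a short manipulation of the exponential $\W[f]=e^{b(f)-b^*(f)}$ together with the CCR; the only real care needed is the analytic treatment of the unbounded operators $b(f),b^*(f)$, for which I would work on the dense subspace $\mathcal D_0\subset\F_B$ of vectors having only finitely many nonzero components, all of them Schwartz. For item (1), I would first observe that $X_f:=b(f)-b^*(f)$ satisfies $X_f^*=b^*(f)-b(f)=-X_f$ on $\mathcal D_0$, so that $iX_f$ is symmetric there; a standard argument — Nelson's analytic-vector theorem with the number operator $\N_B$ as dominating operator, or a direct inspection of the action of $X_f$ on $\F_B$ — shows $iX_f$ is essentially self-adjoint, whence $\W[f]=e^{X_f}$ is a genuine unitary operator on $\F_B$. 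The identity $\W^*[f]=e^{X_f^*}=e^{-X_f}=e^{X_{-f}}=\W[-f]$ is then immediate, and combined with unitarity it also records $\W[-f]=\W[f]^{-1}$.

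For item (2), I would fix $f$ and $y$ and study, for vectors in $\mathcal D_0$, the curve $t\mapsto A(t):=\W^*[tf]\,b_y\,\W[tf]$. Using $\tfrac{\d}{\d t}\W[tf]\psi=\W[tf]X_f\psi$ (valid for $\psi\in\mathcal D_0$) together with $X_f^*=-X_f$, differentiation gives $A'(t)=\W^*[tf]\,[b_y,X_f]\,\W[tf]$, and the CCR reduce the inner commutator to a $c$-number, since $[b_y,b(f)]=0$ and $[b_y,b^*(f)]=f(y)$. Hence $A'(t)$ is constant in $t$, the differential equation integrates trivially, and evaluation at $t=1$ produces the affine shift $\W^*[f]\,b_y\,\W[f]=b_y+f(y)$ asserted in (2); taking adjoints, and using $\W^*[f]^*=\W[f]$, yields the companion identity $\W^*[f]\,b_y^*\,\W[f]=b_y^*+\overline{f(y)}$. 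Passing to closures extends these relations beyond $\mathcal D_0$.

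For item (3), I would combine item (2) with the vacuum relation $b_y\Omega_B=0$: since $b_y\,\W[f]\Omega_B=\W[f]\big(\W^*[f]b_y\W[f]\big)\Omega_B=\W[f]\big(b_y+f(y)\big)\Omega_B=f(y)\,\W[f]\Omega_B$, the vector $\W[f]\Omega_B$ is a simultaneous eigenvector of all annihilation operators, $b_y$ acting on it as multiplication by the scalar $f(y)$. Therefore $\langle\W[f]\Omega_B,\,b_{y'}^*b_y\,\W[f]\Omega_B\rangle=\langle b_{y'}\W[f]\Omega_B,\,b_y\W[f]\Omega_B\rangle=\overline{f(y')}\,f(y)\,\|\W[f]\Omega_B\|^2$, which equals $\overline{f(y')}f(y)$ by unitarity of $\W[f]$ together with $\|\Omega_B\|=1$. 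Comparing with the definition of the bosonic one-particle reduced density matrix in \eqref{density matrices second quant}, this is exactly the integral kernel of the rank-one projection $\ket{f}\bra{f}$, which proves (3).

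The conceptual content is entirely classical. The only point demanding attention is the analytic bookkeeping for the unbounded creation and annihilation operators — the differentiations and commutator identities above should first be read weakly against a dense set of regular vectors (such as $\mathcal D_0$) and then closed off — and even this is routine; it could alternatively be cited from \cite{RodnianskiSchlein2009}, so no genuinely new idea is required.
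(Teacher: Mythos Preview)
Your proposal is correct and follows the standard route; the paper itself does not prove this lemma at all but simply refers the reader to \cite[Lemma 2.2]{RodnianskiSchlein2009} for details. Your argument is precisely the one that reference (and essentially any treatment of coherent states) gives, so there is nothing to compare.
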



\section{Second Quantization II: The Fluctuation Dynamics}
\label{section fluctuation}
The main goal of this section is to set up the proof of Theorem \ref{thm1}. 
Namely, we will introduce and study the fluctuation dynamics around a state consisting of a   degenerate Fermi gas, and a Bose-Einstein condensate, evolving in time according to the mean-field equations \eqref{hartree}. 
We prove that the number of fluctuations around this state is small, relative to the numbers $N$ and $M$. 
In the next section, we show that these estimates imply Theorem \ref{thm1}. 

\vspace{2mm}

This approach is nowadays considered standard, 
and has been successfully employed in the derivation of several mean-field equations from many-particle systems. 
The first work to use these techniques in the derivation of the Hartree equation
for bosons is \cite{RodnianskiSchlein2009}, 
while for fermions is \cite{BenedikterPortaSchlein2014}.
Our proofs are heavily inspired by their ideas,
and actually borrow a few estimates. 
Finally, let us also   refer the reader to the book \cite{BenedikterPortaSchlein2016}
for a cohesive treatment of the subject. 

\vspace{2mm}

\textit{A note on domains}. 
In what follows, we will be extensively 
manipulating the number operators $\N_F$ and $\N_B$.
These are positive, unbounded self-adjoint
operators   on the Hilbert space $\mathscr F$, 
with domains 
$\calD(\N_F) = \{ \Psi \in \F : \sum_{n,m} n^2 \| \Psi^{(n,m)} \|^2_{ \F_F^{(n)} \otimes \F_B^{(m)}    }  < \infty  \}$
and
$\calD(\N_B) = \{ \Psi \in \F : \sum_{n,m} m^2 \| \Psi^{(n,m)} \|^2_{ \F_F^{(n)} \otimes \F_B^{(m)}    }  < \infty  \}$.
In order to simplify the exposition, 
we shall 
avoid making reference to the unbounded nature of these operators.
Let us note that there is no risk in doing so. 
Indeed, in applications 
all the states $\Psi \in \mathscr F$
that we manipulate
belong to the intersection $\cap_{k=1 }^\infty \calD (\N_F^k) \cap \calD(\N_B^k)$, 
and  all the dynamics   we consider 
leave the above intersection invariant.

\subsection{Number Estimates}
Throughout this section, we denote by   $(\omega , \vp)$   the  pair of variables that solves 
the Hartree-Hartree equations
\eqref{hartree}.

\vspace{2mm}

Let us   now introduce a fundamental  family of  unitary transformations in our analysis. 
Namely, using the notation from Section \ref{section second quantization}, 
we define the following time-dependent 
particle-hole transformation, and Weyl operator 
\begin{align}
	\label{definition W and R}		
	\mathcal{R}_t 
	\equiv 
	\mathcal{R} [\omega (t) ] : \F_F \rightarrow \F_F 
	\quad 
	\t{and}
	\quad 
	\mathcal{W}_t \equiv 
	\mathcal{W} [\sqrt N \vp(t) ] : \F_B \rightarrow \F_B \ . 
\end{align} 
Both  of these transformations     map
the respective vacuum vectors 
into states in Fock space whose one-particle reduced density matrices
correspond to  $\omega (t)$ for fermions, whereas for bosons
 $  N \ket{\vp(t)} \bra{\vp (t)}$. 
In other words, for all $t \in \R$ and $(x,y)\in \R^{2d}$
\begin{align}
	\< 
	\mathcal{R}_t  \Omega_F , a_y^* a_x 
	\mathcal{R}_t \Omega_F
	\>_{\F_F} 
	= 
	\omega(t ; x,y) 
	\quad
	\t{and}
	\quad 
	\< 
	\mathcal{W}_t  \Omega_b , b_y^* b_x 
	\mathcal{W}_t \Omega_b
	\>_{\F_B}  = 
	N 
	\overline{  \vp(t, y ) }  \vp(t,x) . 
\end{align}

\vspace{2mm}

We proceed to define the two-parameter
family of unitary maps 
$\U(t,s) : \F \rightarrow \F$, which we refer to as  the \textit{fluctuation dynamics}, 
as follows  
\begin{equation}
	\label{unitary U}
	\U(t,s)
	\equiv 
	\big(
	\mathcal{R}_t \otimes \mathcal{W}_t
	\big)^*
	\exp \bigg[ - i \frac{(t-s)}{\hbar} \H   \bigg]
	\big( 
	\mathcal{R}_s \otimes \mathcal{W}_s
	\big)  \ , \qquad t,s \in\R \ . 
\end{equation} 
The fluctuation dynamics measures
how far the many-body  Schr\"odinger dynamics
is from the mean-field variables $(\omega (t),\vp(t))$. 
In order to make this statement precise, 
we 	recall that we have defined on $\F_F$ and $\F_B$ the
fermionic and bosonic number operators, respectively, 
\begin{align}
	\mathcal{N}_F    = 
	\int_{\R^d } a_x^* a_x  \d x 
	\qquad
	\t{and}
	\qquad 
	\mathcal{N}_b 
	= 
	\int_{\R^d }
	b_y^* b_y \d y \  . 
\end{align}
Unless confusion arises, we denote with the same symbols their natural extension to
$\F$. Finally, we introduce   on $\F$  the \textit{total number operator}
\begin{equation}
	\N \equiv \N_F   +  \N_B \ . 
\end{equation}

\vspace{2mm}

The main result of this section is the following theorem.
It contains estimates for the growth-in-time
for the expectations of $\N_F$ and $\N_B$
with respect to the fluctuation dynamics $\U(t,s)$. 

\begin{theorem}[Number estimates]
	\label{prop number estimates}
	Let $(\omega,\vp)$ satisfy \eqref{hartree}, and assume that the assumptions in Theorem \ref{thm1} hold true.  
	Let $\U(t,s)$ be the fluctuation dynamics. 
	Then, the following statements hold true 
	\vspace{1mm}
	
	\begin{enumerate}[leftmargin=.7cm]
		\item 
		For all $\ell,  k \in \mathbb N$
		there is a constant $C>0$ such that    
		for all $\Psi \in \F $ and $t,s\in \R $
		\begin{align}
			\langle      \Psi,  &  \U^*(t,s)    \,   \calN_F^\ell   	 \, \U(t,s)  \Psi	\rangle_\F  \\ 
			\nonumber 
			& 	
			\leq 
			K(t-s ) 
			\Big[
			\|	 \Psi	 \|_\F 
			\| (\N^\ell + \1 )	 \Psi	\|_\F 
			+  
			\Theta_{k,\ell}
			\|  (\1 + M^{-1} \N_F)^\ell    \Psi		\|_\F 
			\|	 	  ( \N +\1)^{k + 3/2 	}    \Psi 		\|_\F   
			\Big]   \  . 
		\end{align}

		\vspace{1mm}
		
		\item 
		For all $\ell,  k \in \mathbb N$
		there is a constant $C>0$ such that    
		for all $\Psi \in \F $ and $t,s\in \R $
		\begin{align}
			\langle   \Psi,  & \U^*(t,s)     \,   \N_B^\ell   \, 	\U(t,s)  \Psi	\rangle_\F 	\\ 
			\nonumber 
			& 
			\leq 
			K(t-s ) 
			\Big[
			\|	 \Psi	 \|_\F
			\| (\N^\ell+ \1 )	 \Psi	\|_\F 
			+  
			\Theta_{k,\ell}
			\|  (\1 + N^{-1} \N_B)^\ell   \Psi		\|_\F 	\|	 	  ( \N+\1)^{k + 3/2  }    \Psi 		\|_\F   
			\Big]  \ . 
		\end{align}
		
	\end{enumerate}
	Here, we denote 
	$K(t)  \equiv  C 	{  \exp     [  C   \lambda  \sqrt{NM/\hbar}\, 
		( 1 + \sqrt{  \hbar M / N}    )	\exp |t|       ]		} $ 
	together with 
	\begin{equation}
		\Theta_{k,\ell} 
		\equiv 
		\frac{\lambda \sqrt N }{\hbar}
		\frac{M^\ell}{(\hbar M)^k} \geq 0 \ . 
	\end{equation}
\end{theorem}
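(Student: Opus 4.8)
\textbf{Proof strategy for Theorem \ref{prop number estimates}.}

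The plan is to differentiate the quantities $t \mapsto \langle \Psi, \U^*(t,s)\,\calN_F^\ell\,\U(t,s)\Psi\rangle_\F$ and $t \mapsto \langle \Psi, \U^*(t,s)\,\calN_B^\ell\,\U(t,s)\Psi\rangle_\F$ in time and set up a Gr\"onwall argument. First I would compute $\partial_t \U(t,s) = -i\L(t)\U(t,s)$, where $\L(t)$ is the infinitesimal generator of the fluctuation dynamics established in Lemma \ref{lemma infinitesimal generator}, so that
\begin{equation}
	\frac{\d}{\d t} \langle \Psi, \U^*(t,s)\,\calN_F^\ell\,\U(t,s)\Psi\rangle_\F = i \langle \U(t,s)\Psi, [\L(t), \calN_F^\ell]\, \U(t,s)\Psi\rangle_\F \ .
\end{equation}
The key observation is that the quadratic diagonal pieces $\d\Gamma_F[h_F(t)]\otimes\1$ and $\1\otimes\d\Gamma_B[h_B(t)]$ commute with $\calN_F$ and $\calN_B$ (they preserve particle number in each sector), so only the off-diagonal terms $\lambda\sqrt N\,\L_{2,1}(t)$ and $\lambda\,\L_{2,2}(t)$ contribute to the commutator. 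The task then reduces to bounding $[\lambda\sqrt N\,\L_{2,1}(t) + \lambda\,\L_{2,2}(t), \calN_F^\ell]$ (and analogously for $\calN_B$) in terms of controlled powers of $\N_F$, $\N_B$, and $\N$.

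For the commutator estimates, I would expand $\L_{2,1}$ and $\L_{2,2}$ in terms of the particle-hole transformed and Weyl-shifted creation/annihilation operators, using Lemma \ref{lemma fermion properties}(2) and Lemma \ref{lemma properties bosons}(2). The resulting monomials in $a^\#, b^\#$ carry a fixed imbalance between creation and annihilation operators, so commuting them through $\calN_F^\ell$ produces telescoping differences that are themselves lower-order polynomials in the number operators; these I would estimate using Lemmas \ref{lemma fermion estimates}, \ref{lemma boson estimates}, together with the trace-class/Hilbert--Schmidt bounds on the kernels of $h_F, h_B$ coming from the assumption $\int \langle\xi\rangle^2 |\hat V(\xi)|\,\d\xi < \infty$ and the semiclassical structure \eqref{semiclassical structure}. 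The factor $\Theta_{k,\ell} = \tfrac{\lambda\sqrt N}{\hbar}\tfrac{M^\ell}{(\hbar M)^k}$ should emerge precisely as the prefactor multiplying the ``bad'' terms that cannot be absorbed into the $\calN_F^\ell$ itself: the $\lambda\sqrt N/\hbar$ is the coupling strength of $\L_{2,1}$ after dividing by $\hbar$ in $\partial_t = -i\L/\ldots$, the $M^\ell$ comes from crudely bounding the $\ell$ extra number operators spawned by commuting with $\calN_F^\ell$ using $\N_F \le M(\1 + M^{-1}\N_F)$ on the relevant states, and the $(\hbar M)^{-k}$ is the room one buys by trading $k$ powers of the kinetic/semiclassical scale via condition \eqref{k condition} — this is exactly the mechanism that lets the final bound be stated with an arbitrary $k$ chosen large enough to beat any polynomial growth. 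One would then integrate the differential inequality, control the resulting cross terms by Cauchy--Schwarz splitting them as in the statement (one $\|\Psi\|_\F$ factor, one high-power factor), and finally invoke the a priori propagation of high moments of $\N$ along $\U(t,s)$ — which itself is proven by the same Gr\"onwall scheme run first for all powers without the $\Theta$ refinement — to close the estimate, producing the exponential-of-exponential constant $K(t-s)$.

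The main obstacle, as the authors already flag, is controlling $\L_{2,1}(t)$ and $\L_{2,2}(t)$, which do not commute with the number operators and can increase the number of fluctuations. Concretely, the delicate point is that a naive bound on $[\L_{2,1}, \calN_F^\ell]$ loses a full power of $M$ per commutator (since $\sqrt N$ is large), so one must be careful to extract only the combination $\lambda\sqrt N/\hbar$ and not accidentally pick up extra $N$ or $M$ factors; this forces one to keep track of how many $b^\#(\sqrt N\vp)$-type shifts are actually generated and to exploit the cancellation $u\bar v = v u = 0$ from Lemma \ref{lemma fermion properties}(5) to kill would-be divergent contributions. A second, more technical difficulty is that to run Gr\"onwall one needs the bound at level $\ell$ to feed off a bound at level $k + 3/2$ for large $k$, so the argument is genuinely a bootstrap over an infinite tower of moment estimates, and one must verify that condition \eqref{k condition} — giving $k_\ell$ for each $\ell$ — is strong enough to make each rung of the ladder close with an $O(1)$ constant in the chosen scaling regime. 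Once these two points are handled, the remaining steps are routine applications of the operator estimates collected in Section \ref{section second quantization}.
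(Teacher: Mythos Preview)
Your proposal misses the central device in the paper's argument: the introduction of a \emph{truncated} fluctuation dynamics $\widetilde\U(t,s)$, generated by a modified $\widetilde\L(t)$ in which the diagonal-in-$a$ pieces of $\L_{2,1}$ (the terms $a^*(u_x)a(u_x)$ and $a^*(\bar v_x)a(\bar v_x)$ tensored with $b^\#$) are multiplied by the cutoff $\chi(\N_B\le \hbar M)$. A direct Gr\"onwall argument on $\U$ itself, as you describe, does not close. The problem is the commutator $[\L_{2,1},\N_B]$: after Fourier expansion it contains $\d\Gamma_F[ue_\xi u]\otimes b^\#[e_{\pm\xi}\vp]$, and since $\|ue_\xi u\|_{B(L^2)}\le 1$ the best one gets without a cutoff is a bound of the type $\|\N_F^{1/2}\Psi\|\,\|\N_F^{1/2}\N_B^{1/2}\Phi\|$, which is order $3/2$ in total number on one side. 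This forces the estimate at level $\ell$ to feed off level $\ell+\tfrac12$, and no fixed-level Gr\"onwall inequality closes. The cutoff converts the stray $\N_B^{1/2}$ into the constant $(\hbar M)^{1/2}$, which is exactly what produces the factor $\lambda\sqrt{NM\hbar}$ (equivalently $\lambda\sqrt{NM/\hbar}\cdot\hbar$) in the exponent of $K$.

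Correspondingly, your account of where $\Theta_{k,\ell}$ comes from is off. The factor $(\hbar M)^{-k}$ does not arise from ``trading kinetic/semiclassical scales''; it comes from the Duhamel comparison of $\U$ with $\widetilde\U$: the difference $\widetilde\L_{2,1}-\L_{2,1}$ lives on $\{\N_B\ge\hbar M\}$, so one can insert $(\N_B/\hbar M)^k\ge 1$ there and trade the resulting $\N_B^k$ for a factor $(\hbar M)^{-k}$. Likewise the $M^\ell$ is not obtained by commuting number operators through $\L_{2,1}$; it comes from the separate \emph{weak} bound $\|\N_F^\ell\,\U(t,s)\Psi\|\le C\,M^\ell\|(\1+M^{-1}\N_F)^\ell\Psi\|$ (conjugation by $\calR_t$ plus $[\H,\N_F]=0$), which is applied to one factor after splitting $\|\N_F^{\ell/2}\U\Psi\|^2$ as $\langle\U\Psi,\N_F^\ell\widetilde\U\Psi\rangle+\langle\U\Psi,\N_F^\ell(\U-\widetilde\U)\Psi\rangle$. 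In short: the architecture is (i) Gr\"onwall for the truncated dynamics, (ii) a crude weak bound for the full dynamics, (iii) a Duhamel comparison that produces $\Theta_{k,\ell}$; your single Gr\"onwall-on-$\U$ scheme collapses at step (i).
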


\begin{remark}[Evolution of  the vacuum vector]
	\label{remark control omega}
	In applications, 
	we will consider $\Psi = \Omega$.
	In particular,  there holds  $\| ( \N_F + \1)^\ell\Omega	\|_\F =1 $
	for all $\ell \in \mathbb N$. 
	Thus, given $\ell \in \mathbb N$, we may take $ k = k_\ell$  as in the statement of Theorem \ref{thm1}.
	Hence,  there holds 
	$\Theta_{k,\ell }\leq 1$
	uniformly in the physical parameters $\lambda$, $\hbar$, $N$ and $M$. 
	In this situation one obtains the following estimate, 
	provided one re-updates the constant $C> 0 $
	\begin{equation}
		\|	 (  \N+\1)^\ell  \U(t,s)  \Omega  	\|_{\F} 
		\leq 
		K(t - s)  \ , \qquad \forall t,s\in \R \ . 
	\end{equation}
\end{remark}

\begin{remark}[Boundedness of operators]
	\label{remark boundedness of N}
	An important consequence of      Theorem \ref{prop number estimates}
	is that for all $\ell$, 
	there exists $K_\ell \geq 1$ such that
	for all $\Psi \in \F$
	\begin{equation}
		\|	 (  \N+1)^\ell  \U(t,s)	 \Psi	\|_\F
		\leq
		K(t-s )
		\|   (\N + \1 )^{K_\ell }	 \Psi	\|_{\F} \ , \qquad \forall t,s\in \R  \ . 
	\end{equation}
	Consequently,  
	$ 
	(\N+\1)^\ell \U(t,s) (\N+\1)^{-K_\ell}
	$
	is a bounded linear operator in $\F$, 
	and the same holds for its adjoint. 
	We record this in the following statement
	\begin{equation}
		\|		(\N+\1)^\ell \U(t,s) (\N+\1)^{-K_\ell}	\|_{B( \F)}
		+
		\|		(\N+\1)^{-K_\ell } \U(t,s) (\N+\1)^{\ell }	\|_{B( \F)} 
		\leq K(t-s ) \  . 
	\end{equation}
	Here of course, we have used that 
	$\U^*(t,s) = \U(s,t)$
	and the symmetry
	$K(t-s) = K(s-t). $
	
\end{remark}

We dedicate the rest of this section to the proof of the above Theorem.

\subsection{The infinitesimal generator}
In order to establish the number estimates contained in Theorem \ref{prop number estimates} 
we need to study the time evolution of the fluctuation dynamics. 
To this end, we introduce 
infinitesimal generator $\mathcal L(t) $ of  $\U(t,s) $
as  the time-dependent, 
self-adjoint operator on $\F $
determined by the equation 
\begin{equation} \label{L operator}
	\begin{cases}
		& i \hbar \partial_t \U (t,s )
		= \mathcal L (t) 
		\U (t,s ) \ , 	\\
		& \U (t, t ) = \1 \ . 
	\end{cases} 
\end{equation}
The computation of $\L(t)$ is 
tedious, but can be carried out explicitly. 
Let us record the result of the  calculation in Lemma \ref{lemma infinitesimal generator} below, and postpone the proof to an Appendix. 

\vspace{2mm}

\textit{Notation.}
Before we state the explicit form of the infinitesimal generator, we   introduce 
useful notations. 
Recall that $(\omega(t), \vp(t))$
is a solution of the Hartree-Hartree equation
\eqref{hartree}. 
\begin{enumerate}[leftmargin=0.7cm, label=$\square$]
	\item 
	We denote by
	$h_F(t)$
	and $h_B(t)$
	the following  time-dependent 
	one-particle  Hamiltonians on $H^2(\R^d)$ 
	\begin{align}
		\label{hamiltonians}
		h_F(t)
		& 	\equiv 
		- \frac{\hbar^2 }{2m_F}
		\Delta 
		+ 
		\lambda  N  
		\big(V * \rho_B (t)  \big)    \ ,   \\ 
		h_B(t)
		& 	\equiv 
		- \frac{\hbar^2 }{2m_B}
		\Delta 
		+ 
		\lambda  M   \big(V * \rho_F (t)	\big)    \ . 
	\end{align}
	Here, 
	$\rho_F (t,x) = M^{-1 } \omega(t ; x,x)$
	and 
	$\rho_B(t,x) = |\vp(t,x)|^2$
	are the corresponding fermionic and bosonic position densities. 
	
	\item 
	Upon decomposing $\omega(t) = \sum_{ i = 1 }^M \ket{\phi_i (t)} \bra{\phi_i(t)}$ we denote 
	with the same notation as in Section \ref{section second quantization}, 
	for all $ t \in \R$
	\begin{equation}
		\label{u and v}
		\textstyle
		u (t ) \equiv  \1 - \omega(t) 
		\qquad
		\t{ and }
		\qquad 
		v(t) \equiv \sum_{ i = 1 }^M 
		\ket{  \overline{ \phi_i (t) } } \bra{ \phi_i(t)  }   
	\end{equation} 
	
	\item 
	For fixed 
	$x \in \R^d $ and $ t \in \R$, 
	we denote by $u_{x,t}$ and $v_{x,t}$ 
	the distributions 
	given by  
	\begin{equation}
		u_{t,x} ( y ) = u( t;  y, x)
		\qquad
		\t{and}
		\qquad 
		v_{t,x} (y) = v(t ; y,x) 
	\end{equation}
	for all $x' \in \R^d . $
\end{enumerate}

\begin{lemma}
	\label{lemma infinitesimal generator}
	Let $\U(t,s)$ be the unitary transformation defined in \eqref{unitary U}, and let $\L(t)$ be its inifinitesimal generator. 
	Then, $\L(t)$ admits the following representation (modulo scalars)
	\begin{equation}
		\L(t) 
		=
		\d \Gamma_F [h_F(t) ] \otimes \1 
		+
		\1 \otimes \d \Gamma_B [h_B(t)]
		+ \lambda \sqrt N \L_{2,1}(t) + \lambda \L_{2,2}(t) 
		\qquad
		\forall t \in \R \ . 
	\end{equation}
	Here, $h_F(t)$ and $h_B(t)$ are the one-particle Hamiltonians defined in Eq. \eqref{hamiltonians}.
	The  time-dependent 
	operators  $\L_{2,1} (t) $ and $\L_{2,2} (t) $ 
	are self-adjoint operators on $\F$,
	and 
	are given by the expressions (here, we suppress time labels for convenience)
	\begin{align}
		\nonumber 
		\L_{2,1}   		& = 
		\int_{ \R^d \times \R^d  } v(x - y ) 
		a^*(u_x) a(u_x) \otimes \big(		
		\vp (y) b_y^* +  \overline \vp (y) b_y
		\big)
		\ \d x \d y 	\\
		\nonumber 
		& - 
		\int_{ \R^d \times \R^d  } v(x - y ) a^*( \overline{v _x}) a(\overline{v_x})
		\otimes 
		\big(	
		\vp (y) b_y^* +  \overline \vp (y) b_y
		\big) \ \d x \d y 	\\
		\nonumber 
		& + 
		\int_{ \R^d \times \R^d  } v(x - y ) a^*(  {u _x}) a^*(\overline{v_x}) 	 \otimes 
		\big(	
		\vp (y) b_y^* +  \overline \vp (y) b_y
		\big) \ \d x \d y 	
	 \ 	+ \  h.c \ . 
	\end{align}
	and 
	\begin{align}
		\nonumber 
		\L_{2,2}  
		& = 
		\int_{ \R^d \times \R^d  } v(x - y ) a^*(u_x) a(u_x) 
		\otimes b_y^*b_y
		\ \d x \d y 	\\
		\nonumber 
		& - 
		\int_{ \R^d \times \R^d  } v(x - y ) a^*( \overline{v _x}) a(\overline{v_x}) 
		\otimes b_y^*b_y
		\ \d x \d y 	\\
		\nonumber 
		& +
		\int_{ \R^d \times \R^d  } v(x - y ) a^*(  {u _x}) a^*(\overline{v_x})  \otimes b_y^*b_y
		\ \d x \d y   \ + \  h.c \ . 
	\end{align} 
\end{lemma}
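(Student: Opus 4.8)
The plan is to differentiate the defining relation \eqref{unitary U} directly in $t$. Writing $U_t \equiv \calR_t \otimes \W_t$ and using that $\exp[-i(t-s)\H/\hbar]$ is the many-body propagator, one gets
\begin{equation}
	\L(t) \ = \ i\hbar\,\big(\partial_t U_t^*\big)\,U_t \ + \ U_t^*\,\H\,U_t \ ,
\end{equation}
which we interpret on the dense invariant domain $\bigcap_{k\ge1}\big(\calD(\N_F^k)\cap\calD(\N_B^k)\big)$. Since $U_t$ factorizes across the tensor product, the first ``gauge'' summand splits as $i\hbar\big(\partial_t\calR_t^*\big)\calR_t\otimes\1 + \1\otimes i\hbar\big(\partial_t\W_t^*\big)\W_t$. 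The first step is to compute these two pieces. For the Weyl factor $\W_t = \W[\sqrt N\vp(t)]$, a Baker--Campbell--Hausdorff computation gives, modulo an additive scalar (which we are allowed to discard), $i\hbar\big(\partial_t\W_t^*\big)\W_t = -i\hbar\sqrt N\big(b^*(\dot\vp_t) - b(\dot\vp_t)\big)$. For the particle--hole factor $\calR_t = \calR[\omega(t)]$, expanding with Lemma \ref{lemma fermion properties} shows that $i\hbar\big(\partial_t\calR_t^*\big)\calR_t$ is the second quantization $\d\Gamma_F$ of a one-particle operator built from $\dot u(t),\dot v(t)$, plus purely off-diagonal quadratic pieces of the form $\int\int(\cdots)\,a^*a^* + \mathrm{h.c.}$, plus a scalar.

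The second step is to conjugate $\H$, term by term, by $U_t$. Using $\W_t^* b_y\W_t = b_y + \sqrt N\vp_t(y)$ from Lemma \ref{lemma properties bosons}, the bosonic kinetic term becomes $\d\Gamma_B[-\tfrac{\hbar^2}{2m_B}\Delta]$ plus a term linear in $b^\#$ plus a scalar, and in the interaction one writes $b_y^* b_y \mapsto b_y^* b_y + \sqrt N\big(\vp_t(y)b_y^* + \overline{\vp_t(y)}\,b_y\big) + N|\vp_t(y)|^2$, producing three pieces. Next one conjugates the fermionic factors by $\calR_t$ using $\calR_t^* a_x\calR_t = a(u_{t,x}) + a^*(\overline{v_{t,x}})$; after normal ordering, $a_x^* a_x$ becomes a sum of quadratic fermionic monomials of the types $a^*(u_{t,x})a(u_{t,x})$, $a^*(\overline{v_{t,x}})a(\overline{v_{t,x}})$, $a^*(u_{t,x})a^*(\overline{v_{t,x}})$ together with hermitian conjugates, plus the scalar self-contraction $\|\overline{v_{t,x}}\|^2 = \omega(t;x,x)$. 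Multiplying everything out and tracking the powers of $N$: the $b_y^* b_y$ piece against the quadratic fermionic monomials is $\lambda\,\L_{2,2}(t)$; the $\sqrt N\big(\vp_t b_y^* + \overline{\vp_t}\,b_y\big)$ piece against them is $\lambda\sqrt N\,\L_{2,1}(t)$; the $N|\vp_t(y)|^2$ piece recombines those monomials (and the scalar) into $\calR_t^*\,\d\Gamma_F[\lambda N(V*\rho_B(t))]\,\calR_t$; and the scalar $\omega(t;x,x)$ against $b_y^* b_y$ gives $\d\Gamma_B[\lambda M(V*\rho_F(t))]$, since $\int V(x-y)\,\omega(t;x,x)\,\d x = M\,(V*\rho_F(t))(y)$.

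The third step is to collect and cancel. The bosonic kinetic part and the $\d\Gamma_B[\lambda M(V*\rho_F(t))]$ contribution combine to $\d\Gamma_B[h_B(t)]$. The terms linear in $b^\#$ — coming from the bosonic kinetic conjugation, from the scalar $\omega(t;x,x)$ against the $\sqrt N$-bosonic piece of the interaction, and from $i\hbar\,\1\otimes\big(\partial_t\W_t^*\big)\W_t$ — assemble into $\sqrt N\big(b^*(h_B(t)\vp_t - i\hbar\dot\vp_t) + \mathrm{h.c.}\big)$, which vanishes because $\vp$ solves the second equation of \eqref{hartree hartree}. On the fermionic side, $\calR_t^*\,\d\Gamma_F[-\tfrac{\hbar^2}{2m_F}\Delta]\,\calR_t$ and $\calR_t^*\,\d\Gamma_F[\lambda N(V*\rho_B(t))]\,\calR_t$ add up to $\calR_t^*\,\d\Gamma_F[h_F(t)]\,\calR_t$; adding $i\hbar\big(\partial_t\calR_t^*\big)\calR_t$, the off-diagonal quadratic $a^*a^*$ and $aa$ pieces cancel and the diagonal part collapses to $\d\Gamma_F[h_F(t)]$ — this is exactly where the first equation $i\hbar\dot\omega = [h_F(t),\omega]$ of \eqref{hartree hartree} is used. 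Every leftover c-number is dropped modulo scalars, so what survives is precisely $\d\Gamma_F[h_F(t)]\otimes\1 + \1\otimes\d\Gamma_B[h_B(t)] + \lambda\sqrt N\,\L_{2,1}(t) + \lambda\,\L_{2,2}(t)$; the self-adjointness of $\L_{2,1}(t),\L_{2,2}(t)$ is immediate from their explicit form and the symmetry of $V$.

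The main obstacle is the bookkeeping in the second and third steps. The conjugated quartic interaction generates a large collection of monomials carrying the three different weights $N$, $\sqrt N$ and $1$, and one must normal-order carefully and exploit the structural identities $u^*u + v^*v = \1$ and $u\overline v = vu = 0$ of Lemma \ref{lemma fermion properties} in order to see that every potentially ``dangerous'' linear or off-diagonal contribution is exactly the one removed by the Hartree--Hartree equations, while the $\omega(t;x,x)$ self-contractions reassemble into the mean-field potentials $V*\rho_F$ and $V*\rho_B$ appearing in $h_F$ and $h_B$. Since this part is purely computational, the details are deferred to Appendix \ref{appendix inf generator}.
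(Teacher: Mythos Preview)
Your proposal is correct and follows essentially the same approach as the paper's own proof in Appendix \ref{appendix inf generator}: the same three-term decomposition $\L(t) = i\hbar(\partial_t \calR_t^*)\calR_t \otimes \1 + \1 \otimes i\hbar(\partial_t \W_t^*)\W_t + (\calR_t\otimes\W_t)^*\H(\calR_t\otimes\W_t)$, the same use of the conjugation relations for $\calR_t$ and $\W_t$ to expand each piece, and the same cancellation mechanism in which the bosonic Hartree equation kills the terms linear in $b^\#$ while the fermionic Hartree equation kills the off-diagonal $a^*a^*$, $aa$ contributions. The paper organizes the computation into three separate lemmas (one for each of the three summands) before assembling them in Proposition \ref{prop appendix}, but the content is the same.
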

Unfortunately--as it happens often with   similar mean-field theories--the above generator 
is not exactly to prove estimates as the ones contained in Proposition \ref{prop number estimates}. 
We shall introduce a modified generator, 
together with a new \textit{truncated} fluctuation dynamics, 
and then prove that these two are close together.

\subsection{Truncated dynamics}
We define $\widetilde \U (t,s)$
as the strongly continuous, two-parameter family of  unitary operators that solves
\begin{equation}
	\begin{cases}
		& i \hbar \partial_t \widetilde \U (t,s )
		=   \widetilde \L (t) 
		\widetilde 	\U (t,s ) \ , 	\\
		& \widetilde \U (s, s  ) = \1 \  , 
	\end{cases} 
\end{equation}
where the \textit{truncated infinitesimal generator}  $\widetilde  \L(t)$ 
is defined as  follows 
\begin{equation}
	\label{truncated generator}
	\widetilde 	\L(t) 
	=
	\d \Gamma_F [h_F(t) ] \otimes \1 
	+
	\1 \otimes \d \Gamma_B [h_B(t)]
	+ \lambda \sqrt N \widetilde \L_{2,1}(t) + \lambda \L_{2,2}(t) 
	\qquad
	\forall t \in \R \    . 
\end{equation} 
The first, second, and fourth terms in \eqref{truncated generator}
are identical to those found in Lemma \ref{lemma infinitesimal generator}
for $\L(t)$.
However,  for the third term   we have
introduced a cut-off in one of the off-diagonal terms originally found in $\L_{2,1}(t)$: 
\begin{align}
	\nonumber 
	\widetilde	\L_{2,1}   
	& = 
	\1 \otimes \chi(\N_B \leq \hbar M ) 		\int_{ \R^d \times \R^d  } v(x - y ) 
	a^*(u_x) a(u_x) \otimes \big(		
	\vp (y) b_y^* +  \overline \vp (y) b_y
	\big)
	\ \d x \d y 	\\
	\nonumber 
	& - 
	\1 \otimes  		\chi(\N_B \leq \hbar M )
	\int_{ \R^d \times \R^d  } v(x - y ) a^*( \overline{v _x}) a(\overline{v_x})
	\otimes 
	\big(	
	\vp (y) b_y^* +  \overline \vp (y) b_y
	\big) \ \d x \d y 	\\
	\nonumber 
	& + 
	\int_{ \R^d \times \R^d  } v(x - y ) a^*(  {u _x}) a^*(\overline{v_x}) 	 \otimes 
	\big(	
	\vp (y) b_y^* +  \overline \vp (y) b_y
	\big) \ \d x \d y 	\\
	& + 	 
	\int_{ \R^d \times \R^d  } v(x - y ) a(  \overline{v _x}    ) a( u_x  ) 	 \otimes 
	\big(	
	\vp (y) b_y^* +  \overline \vp (y) b_y
	\big) \ \d x \d y  \ . 
\end{align}
The cut-off $\N_B \leq \hbar M $  is specifically tailored for the problem at hand, 
and is introduced so that one can ``close the inequalities''
when   running  a Gr\"onwall-type argument. 

\vspace{1mm}

Let us recall that we have introduced the number operators
$\calN_F$ and $\calN_B$. 
In this subsection, 
we shall give estimates for the growth-in-time
of expectations of these observables with
respect to the truncated fluctuation dynamics $\widetilde \U(t,s).$
The main result of this subsection is the following proposition, containing   
relevant number estimates. 
\begin{proposition}
	[Number estimates for the truncated dynamics]
	\label{prop number estimates truncated}
	Assume that the solution of the mean-field equations $(\omega,\vp)$
	satisfies the following apriori bound
	\begin{eqnarray}
		\label{trace}		
		\|       [e^{i \xi \cdot x}   ,\omega(t)  ]       \|_{\tr} 
		\leq C \exp(C t)  M \hbar \<\xi \> \  \ , \qquad \forall \xi \in \R^d  \ . 
	\end{eqnarray}
	Then, for every $k \in \mathbb N $  there exists a constant 
	$C>0$ such that for all   $t,s \in \R$
	and 
	$\Psi \in\F$  
	there holds 
	\begin{align}
		\langle      \Psi,  \widetilde \U^*(t,s)  \N^k  	  \widetilde 	\U(t,s)  \Psi	\rangle_\F  
		\nonumber 
		& 		\leq 
		C 		 
		{  \exp     \bigg[  C   \lambda  \sqrt \frac{NM}{\hbar } 
			\bigg( 1 + \sqrt{\frac{\hbar M}{  N}}    \bigg)	\exp |t-s|      \bigg]		} 
		\<  \Psi ,  (\calN^k  + \1 ) \Psi	\>_\F \  . 
	\end{align}
\end{proposition}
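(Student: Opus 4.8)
The goal is a Grönwall-type estimate for $g_k(t) \equiv \langle \Psi, \widetilde\U^*(t,s)\,(\calN+\1)^k\,\widetilde\U(t,s)\,\Psi\rangle_\F$. The plan is to differentiate in $t$, use the defining equation $i\hbar\partial_t\widetilde\U(t,s) = \widetilde\L(t)\widetilde\U(t,s)$, and reduce everything to commutator estimates of the form $\pm\,[(\calN+\1)^k, i\widetilde\L(t)/\hbar] \leq (\text{rate})\cdot(\calN+\1)^k$. First I would observe that the diagonal parts $\d\Gamma_F[h_F(t)]\otimes\1$ and $\1\otimes\d\Gamma_B[h_B(t)]$ commute with $\calN_F$, $\calN_B$, and hence with $(\calN+\1)^k$, so they drop out of the commutator entirely; likewise $\lambda\L_{2,2}(t)$ is number-conserving on the bosonic side in its first two (diagonal $a^*a\otimes b^*b$) terms, though its off-diagonal $a^*a^*\otimes b^*b$ piece shifts $\calN_F$ by $\pm 2$ and must be handled. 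The bulk of the work is therefore the cubic term $\lambda\sqrt N\,\widetilde\L_{2,1}(t)$, which changes $\calN_F$ by $0$ or $\pm 2$ and $\calN_B$ by $\pm 1$.

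The key step is to bound $|\langle\Phi, [(\calN+\1)^k, \widetilde\L_{2,1}(t)]\,\Phi\rangle|$ for $\Phi = \widetilde\U(t,s)\Psi$. Writing $(\calN+\1)^k = \sum_j \binom{k}{j}\cdots$ via the binomial-type telescoping $[(\calN+\1)^k, A] = \sum$ of lower powers times the shift of $A$, I would reduce to estimating expressions like $\|\,a^*(u_x)a(u_x)\otimes(\vp(y)b_y^* + \bar\vp(y)b_y)\,(\calN+\1)^{k-1/2}\Phi\|$ integrated against $v(x-y)$. Here the operator estimates from Lemma~\ref{lemma fermion estimates} and Lemma~\ref{lemma boson estimates} give: the bosonic factor contributes $\|\vp\|_{L^2}\,\|(\calN_B+\1)^{1/2}\cdots\|$, the fermionic pair contributes a factor controlled by $\|V\|$-type norms (via $\int\<\xi\>^2|\hat V(\xi)|\d\xi$) after expanding $v(x-y)$ in Fourier modes, and — crucially — the $a^*(u_x)a^*(\overline{v_x})$ off-diagonal term is where the semiclassical bound \eqref{trace}, $\|[e^{i\xi\cdot x},\omega(t)]\|_{\tr}\leq C e^{Ct}M\hbar\<\xi\>$, enters: it converts the a priori bound on $\|v(t)\|_{HS}\cdot(\text{something})$ or on the relevant Hilbert–Schmidt/trace norms into the gain of a factor $M\hbar$, turning the naive $\lambda\sqrt N\cdot(\text{norm}) \sim \lambda\sqrt N \cdot \sqrt{M}$ into the claimed $\lambda\sqrt{NM/\hbar}$ after the cutoff is used. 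The $\N_B\leq\hbar M$ cutoff in $\widetilde\L_{2,1}$ is precisely what lets us replace a factor $\|(\calN_B+\1)^{1/2}\Phi\|$ that would otherwise be uncontrolled by $\sqrt{\hbar M}\,\|\Phi\|$ when it sits on the ``wrong'' side, which is how the secondary factor $(1 + \sqrt{\hbar M/N})$ appears.

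Assembling these, I would obtain a differential inequality $\hbar\,|g_k'(t)| \leq C\lambda\sqrt{NM/\hbar}\,(1+\sqrt{\hbar M/N})\,e^{C|t-s|}\,g_k(t)$, where the $e^{C|t-s|}$ comes from the time-dependence in \eqref{trace} (the semiclassical bounds on $\omega(t)$ grow exponentially) and from the $L^\infty_t$ control of $\rho_F(t),\rho_B(t)$ via the well-posedness of the Hartree–Hartree system. Integrating from $s$ to $t$, with $g_k(s) = \langle\Psi,(\calN+\1)^k\Psi\rangle_\F$, yields exactly the stated bound with the double-exponential structure $\exp[C\lambda\sqrt{NM/\hbar}(1+\sqrt{\hbar M/N})\exp|t-s|]$. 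The main obstacle I anticipate is the bookkeeping in the off-diagonal cubic terms: one must carefully track which factors of $(\calN_F+\1)$ or $(\calN_B+\1)$ land on which side of the (non-number-conserving) operators, commute them through at the cost of shifting indices by bounded amounts, and verify that at each step either the Hilbert–Schmidt norm of $v(t)$ (bounded by $\sqrt M$), the trace norm gain from \eqref{trace} (a factor $M\hbar$), or the cutoff ($\N_B\leq\hbar M$) is available to absorb the apparently dangerous powers — getting the exponents of $N$, $M$, $\hbar$ to combine into precisely $\lambda\sqrt{NM/\hbar}\,(1+\sqrt{\hbar M/N})$ rather than something larger is the delicate part.
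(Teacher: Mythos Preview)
Your proposal is correct and follows essentially the same route as the paper: differentiate, expand $[\widetilde\L(t),\N^k]=\sum_{i}\N^{i-1}[\widetilde\L(t),\N]\N^{k-i}$, kill the $\d\Gamma$ pieces, control the off-diagonal $a^*(u_x)a^*(\bar v_x)$ contributions via the Fourier decomposition of $V$ and the semiclassical bound \eqref{trace} (in the form $\|u_t e_\xi v_t^*\|_{HS}\lesssim\sqrt{M\hbar\<\xi\>}\,e^{Ct}$ and $\|u_t e_\xi v_t^*\|_{\tr}\lesssim M\hbar\<\xi\>\,e^{Ct}$), use the cutoff $\chi(\N_B\le\hbar M)$ on the $\d\Gamma_F[u e_\xi u]$ and $\d\Gamma_F[v^* e_\xi v]$ pieces of $[\widetilde\L_{2,1},\N_B]$, and close by Gr\"onwall. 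One small correction to your accounting: the secondary factor $\sqrt{\hbar M/N}$ does \emph{not} come from the cutoff terms (those give the same $\lambda\sqrt N\cdot\sqrt{M\hbar}$ as the main off-diagonal term); it comes from the $\lambda\L_{2,2}$ contribution to $[\widetilde\L,\N_F]$, where the \emph{trace} norm $\|u_t e_\xi v_t^*\|_{\tr}\lesssim M\hbar$ is used (rather than the Hilbert--Schmidt norm), producing $\lambda\cdot M\hbar$, and $\lambda M\hbar/\hbar=\lambda\sqrt{NM/\hbar}\cdot\sqrt{\hbar M/N}$.
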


The proof of the above estimates is based on the Gr\"onwall inequality, and the commutator estimates contained in the following lemma. 

\begin{lemma}[Commutator estimates]
	\label{lemma commutator estimates}
	Under the same assumptions of Proposition \ref{prop number estimates truncated}, 
	for all $\theta \in \R $ there is $C>0$ such that for all $t,s\in\R$   for all $\Psi , \Phi \in \F$
	there holds 
	\begin{align} 
		|	 \langle	\Psi , [  \widetilde \L (t) , \calN_F	] \Phi  	\rangle_\F		|  
		& \leq
		C e^{Ct }
		\lambda
		\Big(  \sqrt{NM \hbar }   + M \hbar \Big) 
		\bigg( 
		\|	 (\N + \1 )^{	 \frac{1 + \theta}{2}	}	 \Psi	\|_{\F}^2
		+ 
		\|	 (\N + \1 )^{	 \frac{1 -  \theta}{2}	}	 \Phi	\|_{\F}^2
		\bigg)  \ , 
		\label{prop estimate fermion}
		\\ 
		|	 \langle	\Psi , [  \widetilde \L (t) , \calN_B	] \Psi 	\rangle_\F		|  
		& \leq
		C e^{Ct }
		\lambda
		\Big(  \sqrt{NM \hbar }   + M \hbar \Big) 
		\bigg( 
		\|	 (\N + \1 )^{	 \frac{1 + \theta}{2}	}	 \Psi	\|_{\F}^2
		+ 
		\|	 (\N + \1 )^{	 \frac{1 -  \theta}{2}	}	 \Phi	\|_{\F}^2
		\bigg)  \ . 
		\label{prop estimate boson}
	\end{align}
\end{lemma}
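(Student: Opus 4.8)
The strategy is to reduce both commutators to a short list of explicit bilinear forms and to estimate each using the operator bounds of Lemmas \ref{lemma fermion estimates} and \ref{lemma boson estimates}, the conjugation formulas of Lemma \ref{lemma fermion properties}, and the a priori bound \eqref{trace}. In the decomposition \eqref{truncated generator} the one-body terms $\d\Gamma_F[h_F(t)]\otimes\1$ and $\1\otimes\d\Gamma_B[h_B(t)]$ are diagonal in creation/annihilation operators, hence commute separately with $\N_F$ and with $\N_B$ and drop out. Since $\L_{2,2}$ is built from the boson-density operator $b_y^*b_y$ and $[b_y^*b_y,b_z^*b_z]=0$, it commutes with $\N_B$, so $[\L_{2,2},\N_B]=0$. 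On a monomial $a^\#(f)a^\#(g)\otimes(\cdots)$ the commutator with $\N_F$ is $+2$, $-2$, or $0$ times the monomial according to the number of creation operators, and the commutator of $\1\otimes(\cdots)\otimes b^\#(h)$ with $\N_B$ is $\mp1$ times it; pulling the cutoff through via $[\chi(\N_B\le\hbar M),\N_B]=0$, one finds that $[\widetilde\L(t),\N_F]$ is a combination with coefficients $\pm2$ of the $a^*a^*$ and $aa$ pieces of $\lambda\sqrt N\widetilde\L_{2,1}+\lambda\L_{2,2}$, and $[\widetilde\L(t),\N_B]$ is a combination with coefficients $\pm1$ of the $b^*$- and $b$-linear pieces of $\lambda\sqrt N\widetilde\L_{2,1}$, the cutoff $\chi(\N_B\le\hbar M)$ still attached to its fermion-diagonal pieces. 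So it suffices to bound $\langle\Psi,D\Phi\rangle$ for each such $D$, the hermitean conjugates following by exchanging $\Psi$ and $\Phi$.

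Next I would Fourier-decompose $V(x-y)=\int_{\R^d}\hat V(\xi)\,e^{i\xi\cdot x}\,e^{-i\xi\cdot y}\,\d\xi$, which factors each $D$ into a fermionic operator times a bosonic one. With $\M_\xi$ denoting multiplication by $e^{i\xi\cdot x}$ on $L^2(\R^d)$ and using Lemma \ref{lemma fermion properties}, the $a^*a^*$ piece becomes $T_\xi:=\int_{\R^{2d}}(u\M_\xi v^*)(z,w)\,a_z^*a_w^*\,\d z\,\d w$ (its adjoint the $aa$ piece), the fermion-diagonal pieces become $\d\Gamma_F$ of bounded operators built from $u$ and $v$ of operator norm $\le1$ (since $u=u^2\le\1$ and $v^*v=\omega\le\1$), and the bosonic factors become $b^\#$ of the unit vectors $e^{\pm i\xi\cdot y}\vp(t,y)$ or $\d\Gamma_B[\M_{-\xi}]$. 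The key algebraic observation, from Lemma \ref{lemma fermion properties}(4)--(5), is $u\omega=0$ and $vu=0$, so $u\M_\xi v^*=u\M_\xi\omega v^*=u[\M_\xi,\omega]v^*$; combined with \eqref{trace} and $\|[\M_\xi,\omega]\|_{B(L^2)}\le2$ this gives $\|u\M_\xi v^*\|_{\tr}\le\|[\M_\xi,\omega(t)]\|_{\tr}\le Ce^{Ct}M\hbar\langle\xi\rangle$ and, by interpolation, $\|u\M_\xi v^*\|_{HS}\le Ce^{Ct/2}(M\hbar\langle\xi\rangle)^{1/2}$.

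The estimates then split into three families. (i) For the $a^*a^*\otimes b^\#$ terms of $\lambda\sqrt N\widetilde\L_{2,1}$ (in both commutators), bound the fermion factor by the Hilbert--Schmidt inequality of Lemma \ref{lemma fermion estimates}(2), $\|(T_\xi^*\otimes\1)\Psi\|\le\|u\M_\xi v^*\|_{HS}\|\N_F^{1/2}\Psi\|$, and the boson factor by Lemma \ref{lemma boson estimates}(1), $\|(\1\otimes b^\#)\Phi\|\le\|(\N_B+\1)^{1/2}\Phi\|$; as $\int\langle\xi\rangle^{1/2}|\hat V(\xi)|\,\d\xi<\infty$ this family is $\le Ce^{Ct}\lambda\sqrt{NM\hbar}\,\|(\N+\1)^{1/2}\Psi\|\,\|(\N+\1)^{1/2}\Phi\|$. (ii) For the $a^*a^*\otimes b_y^*b_y$ terms of $\lambda\L_{2,2}$ (only in $[\widetilde\L,\N_F]$) one must use instead the trace-class inequality Lemma \ref{lemma fermion estimates}(3), $\|(T_\xi^*\otimes\1)\Psi\|\le\|u\M_\xi v^*\|_{\tr}\|\Psi\|$, together with $\|(\1\otimes\d\Gamma_B[\M_{-\xi}])\Phi\|\le\|\N_B\Phi\|$ from Lemma \ref{lemma boson estimates}(2); using $\int\langle\xi\rangle|\hat V(\xi)|\,\d\xi<\infty$ this family is $\le Ce^{Ct}\lambda M\hbar\,\|\Psi\|\,\|(\N+\1)\Phi\|$ (and the interchanged bound from the conjugate). (iii) For the fermion-diagonal terms $a^*(u_x)a(u_x)\otimes b^\#$ and $a^*(\overline{v_x})a(\overline{v_x})\otimes b^\#$ carrying the cutoff (only in $[\widetilde\L,\N_B]$), move $\chi(\N_B\le\hbar M)$ onto $\Psi$ and use $\|(\N_B+\1)^{1/2}\chi(\N_B\le\hbar M)\|_{B(\F_B)}\le(\hbar M+1)^{1/2}$ to turn the boson factor into a scalar of size $\sqrt{\hbar M}$, the fermion factor obeying $\|\d\Gamma_F[u\M_\xi u]\Psi\|\le\|\N_F\Psi\|$ (and similarly for the $v$-piece) by Lemma \ref{lemma fermion estimates}(1); this family is $\le C\lambda\sqrt{NM\hbar}\,\|(\N+\1)\Psi\|\,\|\Phi\|$, with no exponential factor, which is exactly the role of the cutoff. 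Adding the three families and using $e^{Ct}\ge1$ gives the prefactor $Ce^{Ct}\lambda(\sqrt{NM\hbar}+M\hbar)$ and a bound of one of the shapes $\|(\N+\1)\Psi\|\|\Phi\|$, $\|\Psi\|\|(\N+\1)\Phi\|$, or $\|(\N+\1)^{1/2}\Psi\|\|(\N+\1)^{1/2}\Phi\|$. To reach the $\theta$-dependent form, observe that each $D$ changes the total number $\N$ by at most $3$, so $(\N+\1)^{-\mu}D(\N+\1)^{\mu}$ obeys the same bound as $D$ up to a $\mu$-dependent constant; writing $\langle\Psi,D\Phi\rangle=\langle(\N+\1)^{\mu}\Psi,[(\N+\1)^{-\mu}D(\N+\1)^{\mu}](\N+\1)^{-\mu}\Phi\rangle$ with $\mu$ chosen to match $\theta$ and applying $2ab\le a^2+b^2$ produces $\|(\N+\1)^{(1+\theta)/2}\Psi\|^2+\|(\N+\1)^{(1-\theta)/2}\Phi\|^2$.

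The step I expect to be the main obstacle is family (ii): one must recognize that the $\L_{2,2}$ contribution to $[\widetilde\L,\N_F]$ has to be controlled through the trace norm of $u\M_\xi v^*$ (degree zero on the fermion side), not its Hilbert--Schmidt norm, because the accompanying $b_y^*b_y$ already costs a power of $\N_B$; with the Hilbert--Schmidt bound the total degree would be $3/2$, and then the Gr\"onwall argument behind Proposition \ref{prop number estimates truncated} would not close. This is precisely where the a priori bound \eqref{trace} in its full trace-norm strength and the integrability $\int_{\R^d}\langle\xi\rangle|\hat V(\xi)|\,\d\xi<\infty$ (guaranteed by the hypothesis on $\hat V$) genuinely enter. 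A secondary point requiring care is the bookkeeping of which number operator ($\N_F$, $\N_B$, or $\N$) and which power arises in each family, and in particular invoking the cutoff $\chi(\N_B\le\hbar M)$ for exactly the boson-number-changing, fermion-diagonal pieces of $\widetilde\L_{2,1}$.
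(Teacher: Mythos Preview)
Your proposal is correct and follows essentially the same approach as the paper's proof: Fourier-decompose $V$, control the off-diagonal fermion pieces through $\|u\M_\xi v^*\|_{HS}$ and $\|u\M_\xi v^*\|_{\tr}$ via the identity $u\M_\xi v^*=u[\M_\xi,\omega]v^*$ and the a priori bound \eqref{trace}, and use the cutoff $\chi(\N_B\le\hbar M)$ precisely on the fermion-diagonal pieces of $\widetilde\L_{2,1}$. The only cosmetic differences are that the paper distributes the number-operator weights symmetrically (e.g.\ $\|\N_F^{1/2}\Psi\|\,\|\N_F^{1/2}\Phi\|$ for family (iii), obtained by splitting the cutoff as $(M\hbar)^{1/4}$ on each side, and $\|(\N_B+1)^{1/2}\Psi\|\,\|(\N_B+1)^{1/2}\Phi\|$ for family (ii) via inserting $(\N_B+1)^{\pm1/2}$), whereas you distribute them asymmetrically and then shift to general $\theta$ by conjugation with $(\N+\1)^\mu$; both routes give total degree one and close the Gr\"onwall argument identically.
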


First, we turn to the proof of the above commutator estimates.
Subsequently, we show how they imply Proposition \ref{prop number estimates truncated}.

\begin{proof}[Proof of Lemma \ref{lemma commutator estimates}]
	Let us  introduce some notation and facts that we use throughout the proof. First, we recall that in \eqref{u and v} we associated  $u_t = u(t)$	 and
	$v_t = v(t)$    to the solution $\omega(t)$ of the mean-field equation. 
	In particular, $ u(t) v^*(t)  = 0$. 
	Hence, the trace estimate \eqref{trace} implies 
	that for all $\xi \in \R^d$
	\begin{equation}
		\|   u (t) e_\xi v^*(t)      \|_\tr  
		= 
		\|    [ u (t)  , e_\xi  ] v^* (t)       \|_\tr  
		= 
		\|    [\omega(t) , e_\xi ]  v^*(t) \|_\tr 
		\leq C e^{Ct} M \hbar \< \xi \> 
	\end{equation}    
	where we used  $\|    v^*(t) \|_{B (L^2)} \leq 1$. 
	In particular, thanks to $\|  A\|_{  HS } \leq  \|  A  \|_{\tr}^{1/2}$ for a trace-class operator $A$, we also obtain 
	for all $\xi \in \R^d$ 
	\begin{align}
		\|   u (t) e_\xi v^*(t)      \|_{HS}
		\leq C e^{Ct}  \sqrt{ M \hbar  \< \xi \> }    \ .   
	\end{align}
	For convenience, we also denote 
	$\| V\| \equiv \int_{\R^d } |\hat V (\xi)| \< \xi \> \d \xi  $. 
	
	\vspace{1mm}
	
	Let us now turn to the proof of Lemma \ref{lemma commutator estimates}. First, we prove the estimates for $\N_F$. 
	Secondly, we prove the estimates for $\N_B$. 
	Here, we only prove the case for $\theta = 0 $. 
	For general $\theta $ it suffices to insert an identity 
	$\1 = (\N + 4  )^{1+\theta} (\N + 4 )^{1-\theta}$
	and use the pull through formulas 
	$  \N a_x = a_x (\N + \1 )$,
	and $\N b_y = b_y (\N+ \1 )$
	on each term of $\widetilde \L (t). $

	\vspace{1mm}

	\noindent 	\textit{Proof of Eq. \eqref{prop estimate fermion}}.
	Using the relations  $[\N_F, a^*(g)] = + a^*(g)$ and $[\N_F , a(g)] = - a(g)$
	one is able to calculate the commutator 
	\begin{align}
		\nonumber 
		[  \widetilde \L(t) ,\N_F]
		= 
		&  + 
		\lambda \sqrt N 
		[  \L_{2,1}(t), \N_F]
		+ 
		\lambda 
		[\L_{2,2} (t), \N_F ]
	\end{align}
	where 
	\begin{align}
		\nonumber
		[  \L_{2,1}(t), \N_F]
		= 
		&  + 
		2
		\int_{	 \R^d \times \R^d 	} V(x-y )
		a^*(u_x) a^*(	 \overline v _x) \otimes \big(	  \vp(y) b_y^* + \overline{\vp(y)}  b_y					\big)	\d x \d y 		\\  
		\label{prop eq 1}
		&  -  	 2
		\int_{	 \R^d \times \R^d 	} V(x-y )
		a(	 \overline v _x)		a (u_x)  \otimes \big(	  \vp(y) b_y^* + \overline{\vp(y)}  b_y					\big)	\d x \d y 		\  , 	\\
		\nonumber 
		[\L_{2,2 }(t) , \N_F ]	 
		= &   + 
		2
		\int_{	 \R^d \times \R^d 	} V(x-y )
		a^*(u_x) a^*(	 \overline v _x) \otimes  b_y^* b_y 	\d x \d y 		\\ 
		\label{prop eq 2}
		&  -  	 2
		\int_{	 \R^d \times \R^d 	} V(x-y )
		a(	 \overline v _x)		a (u_x)  \otimes	 b_y^* b_y 	\d x \d y 	 \  .
	\end{align}
	Let us first estimate the terms in \eqref{prop eq 1}. 
	To this end, we use a Fourier decomposition			
	$V(x)  =   	\int_{\R^d }	 \hat V (\xi)	 e_\xi(x) 	 \d \xi 	$,
	where $e_{\xi} (x) \equiv (2\pi)^{-d/2} \exp[ i x \cdot \xi ]$ , 
	to find that 
	\begin{align}
		[  \L_{2,1} & (t)   ,   \N_F] \\ 
		\nonumber 
		& 	=  
		2			\int_{	\R^d	} 
		\hat V ( 		\xi)  
		\Big(    \int_{	 \R^d \times \R^d 	}   [   u e_\xi v^*   ] (x_1 , x_2)	 a_{x_1}^* a_{x_2}^* \d x_1 \d x_2 	 \Big)   
		\otimes 
		\big(
		b^*[ e_{-\xi} \vp 		]    
		+ 
		b[ e_{ \xi} \vp 		]    
		\big) \d \xi 		\\ 
		\nonumber 
		&  -  	
		2			\int_{	\R^d	} 
		\hat V ( 		\xi)  
		\Big(    \int_{	 \R^d \times \R^d 	} [   v e_\xi u   ] (x_1 , x_2)	 a_{x_1} a_{x_2} \d x_1 \d x_2 	 \Big)   
		\otimes 
		\big(
		b^*[ e_{-\xi} \vp 		]    
		+ 
		b[ e_{ \xi} \vp 		]     
		\big) \d \xi 	  \  .  
	\end{align}
	Thus, we use the estimates contained in Lemma \ref{lemma fermion estimates} and \ref{lemma boson estimates}
	and the Cauchy-Schwarz inequality for $\<  \cdot , \cdot \>_\F $
	to find that 
	there exists a constant $C>0$ 
	such that 
	for all $\Psi ,\Phi \in \F$
	there holds 
	\begin{align}
		\nonumber 
		|			 \langle	 \Psi , [  \L_{2,1} (t),  & \N_F]  \Phi  	\rangle_\F		|    \\ 
		\nonumber 
		&  \leq 
		C 
		\|	 V	\|
		\sup_{\xi \in\R^d}   \<\xi\>^{-1  }
		\|	 u e_\xi v^*	\|_{HS}
		\|   (\N_F+\1 )^{\frac{1}{2}} \otimes \1 	 \Psi 	\|_\F 
		\|	 \vp 	\|_{L^2} \| \1 \otimes (\N_B+\1)^{\frac{1}{2}}	 \Phi 	\|_\F  \  \\ 
		& \leq 
		C 
		\|	 V	\|
		\sqrt{M\hbar} \exp(Ct)
		\|     (\N_F+\1 )^{\frac{1}{2}} \otimes \1 	 \Psi 	\|_\F   \| \1 \otimes   (\N_B+\1 )^{\frac{1}{2}}	 \Phi 	\|_\F  \  	 \ ,   
		\label{L 21 NF}
	\end{align}
	where in the last line we have used  
	$	 \|	 u e_\xi v^*	\|_{HS}    
	\leq    C \sqrt{M\hbar   \< \xi \> }     e^{Ct } $
	and $\|	  \vp	\|_{L^2} =1 $. 
	Similarly, the second term in \eqref{prop eq 2} 
	may be expanded in its Fourier  coefficients--we find that 
	\begin{align} 
		[\L_{2,2}(t)  ,   \N_F] 
		\nonumber 
		& 	=  
		2 
		\int_{	\R^d	} 
		\hat V ( 		\xi)  
		\Big(    \int_{	 \R^d \times \R^d 	}   [   u e_\xi v^*   ] (x_1 , x_2)	 a_{x_1}^* a_{x_2}^* \d x_1 \d x_2 	 \Big)   
		\otimes 
		\d\Gamma[e_{-	 \xi 	}]
		\d \xi 		\\  
		&  -  	
		2 
		\int_{	\R^d	} 
		\hat V ( 		\xi)  
		\Big(    \int_{	 \R^d \times \R^d 	} [   v e_\xi u   ] (x_1 , x_2)	 a_{x_1} a_{x_2} \d x_1 \d x_2 	 \Big)   
		\otimes 
		\d\Gamma[e_{-	 \xi 	}]
		\d \xi 	  \  .  
	\end{align}
	Thus, we use the estimates contained in Lemma \ref{lemma fermion estimates} and \ref{lemma boson estimates}
	and the Cauchy-Schwarz inequality for $\<  \cdot , \cdot \>_\F $
	to find that 
	there exists a constant $C>0$ 
	such that 
	for all $\Psi \in \calD (\N_F) \cap \calD(\N_B)$
	there holds 
	\begin{align}
		\nonumber 
		|		  	 \langle	  \Psi ,&   [  \L_{2,2} (t),  \N_F]  \Phi  	\rangle_\F		|      \\
		\notag 
		& = 
		|			 \langle      (\N_B + \1 )^{\frac{1}{2}}  	  \Psi , [  \L_{2,2} (t),  \N_F]  	  (\N_B + \1 )^{ - \frac{1}{2}}  	\Phi	\rangle_\F		|     \\
		\nonumber 
		&  \leq 
		C 
		\|	 V	\|
		\sup_{\xi \in\R^d}   \<\xi\>^{-1  } 
		\|	 u e_\xi v^*	\|_{Tr} 
		\|   	  (\N_B + \1 )^{\frac{1}{2}}  \Psi 	\|_\F 
		\|	     \vp 	\|_{L^2} 
		\| \1 \otimes \N_B    (\N_B + \1 )^{- \frac{1}{2}}   \Phi 	\|_\F  \  \\ 
		& \leq 
		C 
		\|	 V	\|
		M  \hbar  \exp(Ct )
		\|   	(\N_B + \1 )^{	\frac{1}{2}	} \Psi 	\|_\F 
		\|  \1 \otimes ( \N_B  +\1)^{	 \frac{1}{2}	} \Phi 	\|_\F   \  , 
		\label{L 22 NF}
	\end{align}
	where in the last line we have used  $	 \|	 u e_\xi v^*	\|_{Tr}   \leq C M  \hbar \<\xi \>e^{Ct } $
	and $\|	  \vp	\|_{L^2} =1 $. 
	
	\vspace{1mm}
	
	We gather the estimates contained in Eqs. \eqref{L 21 NF} and \eqref{L 22 NF}, 
	and use     the basic estimates $\N_F \otimes \1 \leq \N $, 
	$ \1 \otimes \N_B \leq \N $
	together with Young's inequality $ab \leq a^2/2 + b^2/2$ for $a,b\geq0. $
	This finishes the proof.

	\vspace{1mm}
	
	\noindent 	\textit{Proof of Eq. \eqref{prop estimate boson}}.
	Using the relations  $[\N_B, b^*(g)] = + b^*(g)$ and $[\N_B , b(g)] = +  b(g)$
	one is able to calculate the commutator 
	\begin{align}
		\nonumber 
		[  \widetilde\L(t) ,\N_B]
		= 
		&  + 
		\lambda \sqrt N 
		[\widetilde \L_{2,1}(t), \N_B] 
	\end{align}
	where 
	\begin{align}
		\nonumber 
		[ \widetilde \L_{2,1}(t), \N_B]
		&	= 
		+ 
		2
		\chi(\N_B \leq \hbar M )
		\int_{	 \R^d \times \R^d 	} V(x-y )
		a^*(u_x)  a(u_x )\otimes \big(	  \vp(y) b_y^* + \overline{\vp(y)}  b_y					\big)	\d x \d y 		\\  
		\nonumber 
		&   +  	
		2
		\chi(\N_B \leq \hbar M )
		\int_{	 \R^d \times \R^d 	} V(x-y )
		a^*(	 \overline v _x)		a(	 \overline v _x)	 \otimes \big(	  \vp(y) b_y^* + \overline{\vp(y)}  b_y					\big)	\d x \d y 	 	\\
		\nonumber 
		&+ 
		2
		\int_{	 \R^d \times \R^d 	} V(x-y )
		a^*(u_x) a^*(	 \overline v _x) \otimes \big(	  \vp(y) b_y^* + \overline{\vp(y)}  b_y					\big)	\d x \d y 		\\  
		&   +  	 2
		\int_{	 \R^d \times \R^d 	} V(x-y )
		a(	 \overline v _x)		a (u_x)  \otimes \big(	  \vp(y) b_y^* + \overline{\vp(y)}  b_y					\big)	\d x \d y 		\ . 
	\end{align}
	Similarly as before, a Fourier decomposition
	for the interaction potential yields 
	\begin{align}
		[\widetilde \L_{2,1}(t), \N_B]
		&	= 
		\label{prop eq 3}
		2
		\int_{	\R^d	} 
		\hat V ( 		\xi)  
		\d \Gamma [u e_\xi u ]
		\otimes		\chi(\N_B \leq \hbar M )
		\big(
		b^*[ e_{-\xi} \vp 		]    
		+ 
		b[ e_{ \xi} \vp 		]    
		\big) 
		\d \xi 	\\  
		\label{prop eq 4}
		&   +  	 	2
		\int_{	\R^d	} 
		\hat V ( 		\xi)  
		\d \Gamma [ v^* e_\xi v ]
		\otimes 
		\chi(\N_B \leq \hbar M )	\big(
		b^*[ e_{-\xi} \vp 		]    
		+ 
		b[ e_{ \xi} \vp 		]    
		\big) 
		\d \xi 	\\
		\label{prop eq 5}
		&+ 
		2
		\int_{	\R^d	} 
		\hat V ( 		\xi)  
		\Big(    \int_{	 \R^d \times \R^d 	} 
		[   u e_\xi v^*   ] (x_1 , x_2)	 a_{x_1}^* a_{x_2}^* \d x_1 \d x_2 	 \Big)   
		\otimes 
		\big(
		b^*[ e_{-\xi} \vp 		]    
		+ 
		b[ e_{ \xi} \vp 		]     
		\big) 	\d \xi 	\\  
		\label{prop eq 6}
		&   +  	 2
		\int_{	\R^d	} 
		\hat V ( 		\xi)  
		\Big(    \int_{	 \R^d \times \R^d 	} [   v e_\xi u   ] (x_1 , x_2)	 a_{x_1} a_{x_2} \d x_1 \d x_2 	 \Big)     \otimes  	\big(
		b^*[ e_{-\xi} \vp 		]    
		+ 
		b[ e_{ \xi} \vp 		]     		 \big) 	\d \xi 	\ . 
	\end{align}
	Up to an overall minus sign, the terms contained in 
	Eqs. \eqref{prop eq 5} and \eqref{prop eq 6} 
	have already been estimated above.
	Thus, it suffices to estimate the two terms contained in 
	Eqs. \eqref{prop eq 3} and \eqref{prop eq 4}.  
	Let us start from the one in Eq. \eqref{prop eq 3} and, for simplicity, let us only show how to bound the 
	contribution arising from $b[e_\xi \vp]$--the other one is analogous.
	Let us fix $\xi \in \R^d$, and let  $\Psi , \Phi \in \F $. 
	Denoting $\chi_M \equiv 		\chi(\N_B \leq \hbar M )$, 
	we find using   
	and   \textit{pull-through formula}
	that 
	$\N_B b_x = b_x (\N_B + \1)$
	\begin{align}
		|	 \<   \Psi  ,  
		\d \Gamma[   u e_\xi u  ]  \otimes \chi_M 
		b[e_\xi \vp ]	
		\Phi  	 \>_\F 		| 
		& 		= 
		\nonumber 
		|	 \langle  \chi_M   \N_B^{\frac{1}{4}	}   \Psi  ,   \d \Gamma[   u e_\xi u  ]  \otimes \chi_M 
		b[e_\xi \vp ]   (\N_B+\1)^{-\frac{1}{4}	} 	 \Phi  	 \rangle_\F 		| \\
		\nonumber 
		& 
		\leq 
		\|		\N_F^{\frac{1}{2}} 		 \chi_M    \N_B^{\frac{1}{4}	}   \Psi  	 	\|_\F 
		\|			\N_F^{\frac{1}{2}} 		  \chi_M   b[e_\xi \vp ]   (\N_B+\1)^{-\frac{1}{4}	} 	 \Phi  	\|_\F    \\ 
		& \leq 
		\sqrt{ M \hbar }
		\|	 \N_F^{\frac{1}{2}}	 \otimes \1 		\Psi	\|_{\F}
		\|	 \N_F^{\frac{1}{2}}		 \otimes \1 	\Phi	\|_{\F}
	\end{align}
	In the last line, we have made use of 
	Lemma \ref{lemma fermion estimates},  \ref{lemma boson estimates}, 		$		\|  u e_\xi u 	 \|_{B(L^2)} \leq 1 $, 
	and the cut-off bounds 
	$ \|  \chi_M b[e_\xi \vp] (\N_B+1)^{-1/4} \| \leq  C (M \hbar)^{1/4}$
	and 
	$\|   \chi_M \N_B^{1/4} \| \leq (M \hbar )^{1/4}$.
	Similarly, we find  
	\begin{equation}
		|	 \<   \Psi  ,   \d \Gamma[   v^* e_\xi v	  ]  \otimes \chi_M  b[e_\xi \vp ]	 \Psi 	 \>_\F 		| 
		\leq 
		\sqrt{ M \hbar }
		\|	 \N_F^{\frac{1}{2}}	 \otimes \1 		\Psi	\|_{\F}
		\|	 \N_F^{\frac{1}{2}}		 \otimes \1 	\Phi	\|_{\F}
	\end{equation}
	The proof if finished, once we gather our estimates and use 
	the elementary  inequalities 
	$\N_F \otimes \1 \leq \N $, 
	$ \1 \otimes \N_B \leq \N $
	together with Young's inequality $ab \leq a^2/2 + b^2/2$ for $a,b\geq0. $
\end{proof}

\begin{proof}[Proof of Proposition \ref{prop number estimates truncated}]
	Let us fix $ k \in \mathbb N  $ and   recall that
	$\i \hbar \partial_t \widetilde \U(t,s ) = \widetilde\L(t)\widetilde \U(t,s)$. 
	Hence, we may compute the time derivative for $t,s \in\R $
	\begin{align}
		\nonumber 
		\frac{d}{dt} 
		\widetilde \U^* (t,s) \N^k 	\widetilde \U(t,s)  
		&  =
		\frac{1}{ \i \hbar }
		\widetilde \U^* (t,s) [  \widetilde \L (t ) , \N^k   ] 		\widetilde \U (t,s) 	 \\ 
		&   = 
		\frac{1}{ \i \hbar }
		\sum_{i =1 }^k 
		\widetilde \U^* (t,s)  \N^{ i -1 }  [  \widetilde \L (t ) , \N   ]  \N^{ k - i }		\widetilde \U (t,s)  \ . 
	\end{align}
	Taking $\Psi \in \F$, we can then estimate that 
	\begin{align}
		\frac{d}{dt}	 \<	  \Psi , 	\widetilde \U^* (t,s)  \N^k \widetilde \U (t,s) \Psi 	 \>_\F 
		\leq  
		\frac{1}{\hbar }
		\sum_{i =1 }^k 
		| \langle
		\N^{ i -1 }   \widetilde \U (t,s) \Psi,     [  \widetilde \L (t ) , \N   ]  \N^{ k - i }		\widetilde \U (t,s)  \Psi \rangle_\F  | \ . 
	\end{align}
	Fix $ i  =1, \ldots , k $ and let $\theta = 1 + k -2 i \in \R $. 
	In view of Lemma
	\ref{lemma commutator estimates} there exists $C > 0 $ such that 
	\begin{align}
		\nonumber 
		| \langle
		& 	\N^{ i -1 }     \widetilde \U (t,s) \Psi	 		,     [  \widetilde \L (t ) , \N   ] 
		\N^{ k - i }		\widetilde \U (t,s)  \Psi
		\rangle_\F  |  	\\
		\nonumber 
		& 	\leq 
		C e^{Ct }
		\lambda
		(  \sqrt{NM \hbar }   + M \hbar ) 
		\Big(    
		\|	 (\N + \1 )^{	 \frac{1 + \theta}{2}	} 	\N^{ i -1 }   \widetilde \U (t,s) \Psi 	\|_{\F}^2
		+ 
		\|	 (\N + \1 )^{	 \frac{1 - \theta}{2}	}		 \N^{ k - i }		\widetilde \U (t,s)     \Psi	\|_{\F}^2
		\Big)   , \\   
		\nonumber 
		& \leq 
		C e^{Ct }
		\lambda
		(  \sqrt{NM \hbar }   + M \hbar )  
		\|	 (\N + \1 )^{\frac{k}{2}	} 	  \widetilde \U (t,s) \Psi 	\|_{\F}^2 	  \\
		&   = 
		C e^{Ct }
		\lambda
		(  \sqrt{NM \hbar }   + M \hbar )  
		\langle 		\Psi , \widetilde \U^*(t,s)  ( \N^k+ \1)  \widetilde \U(t,s) \Psi 	 \rangle_\F 	  \ . 
	\end{align}
	The proof is then finished after we apply the Gr\"onwall inequality and
	use the initial condition $\widetilde\U(s,s)=\1 . $
\end{proof}

\subsection{Proof of Theorem \ref{prop number estimates}}
Let us now give a proof of Theorem \ref{prop number estimates}.
Essentially, we shall compare the number estimates generated by   $ \U(t,s)$
and $\widetilde \U(t,s)$. 
The latter have already been established in Proposition \ref{prop number estimates}.
Additionally, we shall need the following \textit{weak bounds}
on the growth of particle number with respect to the original fluctuation dynamics. 

\begin{lemma}[Weak number estimates]
	\label{lemma weak estimates}
	Let $\U(t,s)$ be the fluctuation dynamics, defined in  \eqref{unitary U}.
	Then, the following statements hold true. 
	\begin{enumerate}[leftmargin=*]
		\item 
		For all $\ell \in \mathbb N$ there is a constant $C>0$ such that for all $t,s\in \R$ and $\Psi \in \F$
		there holds 
		\begin{equation}
			\|    ( \N_F+M)^\ell \U(t,s) \Psi \|_\F 
			\leq 
			C 
			\|    ( \N_F+M)^\ell   \Psi \|_\F 
		\end{equation}

		\item 
		For all $\ell \in \mathbb N$ there is a constant $C>0$ such that for all $t,s\in \R$ and $\Psi \in \F$
		there holds 
		\begin{equation}
			\|    ( \N_B+N )^\ell \U(t,s) \Psi \|_\F 
			\leq 
			C 
			\|    ( \N_B+N)^\ell   \Psi \|_\F 
		\end{equation}
	\end{enumerate}
\end{lemma}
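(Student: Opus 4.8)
The plan is to exploit the explicit factored form $\U(t,s)=(\calR_t\otimes\W_t)^*\,e^{-i(t-s)\H/\hbar}\,(\calR_s\otimes\W_s)$ from \eqref{unitary U}, together with the fact recorded in Section \ref{section second quantization} that the Schr\"odinger propagator $e^{-i(t-s)\H/\hbar}$ commutes with both number operators $\N_F$ and $\N_B$, and hence with every Borel function of them. Thus the only source of change of fermion (resp. boson) number along $\U(t,s)$ is the pair of Bogoliubov maps $\calR_t$ and $\W_t=\W[\sqrt N\vp(t)]$: the particle--hole transformation $\calR_t$ moves the fermion number by $\O(M)$ and the Weyl operator $\W_t$ moves the boson number by $\O(N)$. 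This is precisely why the quantities that are weakly controlled here are $\N_F+M$ and $\N_B+N$, rather than $\N_F$ and $\N_B$ alone, and also why a Gr\"onwall argument on the generator $\L(t)$ --- whose off-diagonal pieces $\L_{2,1}(t),\L_{2,2}(t)$ fail to commute with $\N$ --- is not the right tool here.

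The first step is the pair of uniform operator bounds: for every $k\in\mathbb N$ there is $C_k>0$, depending only on $k$, such that for every rank-$M$ orthogonal projection $\omega$ on $L^2(\R^d)$ and every $g\in L^2(\R^d)$,
\begin{align*}
	\calR^*[\omega]\,(\N_F+\tr\omega)^k\,\calR[\omega] & \ \leq\ C_k\,(\N_F+\tr\omega)^k\ , \\
	\W^*[g]\,(\N_B+\|g\|_{L^2}^2)^k\,\W[g] & \ \leq\ C_k\,(\N_B+\|g\|_{L^2}^2)^k\ ,
\end{align*}
together with the same inequalities with $\calR[\omega]$, $\calR^*[\omega]$ (resp. $\W[g]$, $\W^*[g]$) interchanged. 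For $k=1$ the first bound follows by conjugating $\N_F=\int a_x^* a_x\,\d x$ through $\calR^*[\omega]a_x\calR[\omega]=a(u_x)+a^*(\overline{v_x})$ from Lemma \ref{lemma fermion properties}, controlling the diagonal part by $\N_F+\tr\omega$ using $0\le u\le\1$ and $\|v\|_{HS}^2=\tr\omega$, and bounding the off-diagonal pair terms by the Hilbert--Schmidt estimates of Lemma \ref{lemma fermion estimates} and Young's inequality; the Weyl case is the analogous computation starting from $\W^*[g]b_y\W[g]=b_y+g(y)$ (Lemma \ref{lemma properties bosons}) and Lemma \ref{lemma boson estimates}. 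The decisive feature is that $C_k$ depends on $\omega$ and $g$ only through the scalars $\tr\omega$ and $\|g\|_{L^2}$.

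The second step is to note that this spectral data is conserved: since $(\omega(t),\vp(t))$ solves \eqref{hartree}, $\omega(t)$ evolves by the unitary conjugation generated by the self-adjoint operator $h_F(t)$, so it remains a rank-$M$ orthogonal projection with $\tr\omega(t)=M$, while $\|\vp(t)\|_{L^2}=1$ for all $t$ (Appendix \ref{section analysis PDEs}). Applying the bounds above with $\omega=\omega(t)$ and $g=\sqrt N\vp(t)$ --- so that the parameters are $M$ and $N$ --- therefore gives constants \emph{uniform in $t$}. The proof is then finished by a symmetric sandwich: fixing $\ell\in\mathbb N$ and $k=2\ell$, and using that $(\N_F+M)^k$ acts on the factor $\F_F$ only and so commutes with $\1\otimes\W_t$, one has
\begin{equation*}
	(\calR_t\otimes\W_t)\,(\N_F+M)^k\,(\calR_t\otimes\W_t)^* = \big(\calR_t\,(\N_F+M)^k\,\calR_t^*\big)\otimes\1 \ \leq\ C_k\,(\N_F+M)^k\ ;
\end{equation*}
conjugating this operator inequality by the unitary $e^{-i(t-s)\H/\hbar}$ preserves it and leaves $(\N_F+M)^k$ invariant, and a second application of the same bound at time $s$ yields $\U^*(t,s)\,(\N_F+M)^k\,\U(t,s)\leq C_k^2\,(\N_F+M)^k$. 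Taking the expectation in $\Psi\in\F$ and a square root gives part (1) with $C=C_{2\ell}$; part (2) is identical, with $(\N_B+N)^k$, $\W_t$ and the Weyl bound replacing the fermionic one.

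The main obstacle is the higher-power ($k\ge2$) case of the operator inequalities in the first step. It cannot be deduced from the $k=1$ case by powering up, since $x\mapsto x^k$ is not operator monotone; instead one must expand $\calR^*[\omega]\,\N_F^k\,\calR[\omega]$ directly and run an induction in $k$, repeatedly using the pull-through relation $\N_F a_x=a_x(\N_F+\1)$ to move number operators past creation and annihilation operators. This is routine but lengthy, and I would simply invoke it from \cite[Section 4]{BenedikterPortaSchlein2014} for $\calR[\omega]$ and \cite[Section 3]{RodnianskiSchlein2009} for $\W[g]$, after remarking that the constants there depend on the transformation only through $\tr\omega$, respectively $\|g\|_{L^2}$.
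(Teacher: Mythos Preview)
Your overall strategy---factor $\U(t,s)$, use that $e^{-i(t-s)\H/\hbar}$ commutes with $\N_F,\N_B$, and sandwich between uniform bounds for $\calR_t$ and $\W_t$---is exactly the paper's. The difference lies in how the higher-power estimate $\calR_t^*(\N_F+M)^k\calR_t\leq C_k(\N_F+M)^k$ is obtained. You treat this as ``the main obstacle'' and propose an induction with pull-through formulae, but the paper sidesteps it completely by a structural observation you missed: when you integrate the off-diagonal pair terms $\int a^*(u_x)a^*(\bar v_x)\,\d x$ over $x$, the resulting kernel is $u\bar v$, which \emph{vanishes identically} by Lemma~\ref{lemma fermion properties}(5). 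Hence $\calR_t(\N_F+M)\calR_t^*$ is purely of the form $\d\Gamma_F[J_t]+\text{const}$ with $J_t$ bounded, and in particular it \emph{commutes} with $\N_F$. The $\ell$-th power bound then follows from the $k=1$ norm bound by straightforward iteration: $\|A^\ell\Phi\|\leq c_0\|(\N_F+M)A^{\ell-1}\Phi\|=c_0\|A^{\ell-1}(\N_F+M)\Phi\|\leq\dots\leq c_0^\ell\|(\N_F+M)^\ell\Phi\|$. No induction on monomials, no pull-through, no appeal to \cite{BenedikterPortaSchlein2014} is needed for the fermionic part. Your route via citations would also work, but it obscures the reason the fermionic case is in fact cleaner than the bosonic one.
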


\begin{proof}
	We only give a proof for the fermion number operator since the proof for bosons is similar;  
	we refer the reader to  \cite[Lemma 3.6]{RodnianskiSchlein2009} 
	for the situation in which only bosons are present (interactions do not change the proof). 
	
	\vspace{1mm}
	Let us then  consider the particle-hole transformation $\calR_t$ in the definition of $\U(t,s)$ and notice that
	\begin{equation}
		\calR_t (\N_F+M)^\ell \calR_t^* = [ \calR_t (\N_F+M) \calR_t^* ]^\ell  = 
		(
		\d \Gamma[v(t)] 
		-
		\d\Gamma[u (t)] 
		+ 2 M 
		)^\ell  \ . 
	\end{equation}
	The terms 
	$\d\Gamma[u (t)] $ and $ \d \Gamma[v(t)] $
	can be estimated using Lemma \ref{lemma fermion estimates}. 
	Namely,  it follows that there exists $c_0$ such that
	for all $\Phi\in \F$ 
	\begin{equation}
		\|
		(
		\d \Gamma[v(t)] 
		-
		\d\Gamma[u (t)] 
		+ 2 M 
		) \Phi
		\|_\F 
		\leq 
		c_0 \|
		(\N_F + M )
		\Phi\|_\F \ . 
	\end{equation}  
	Consequently, since $ [ \d\Gamma_F[J], \N_F]=0$
	for $J=u(t)$ and $J=v(t)$, 
	an $\ell$-fold application of the previous estimate yields 
	\begin{align}
		\label{l fold application}
		\|      
		( \N_F     +M)^\ell
		\calR_t^* 
		\Phi  
		\|_\F  
		& \leq 
		c_0^\ell 
		\|       (\N_F + M )^\ell \Phi     \|_\F \  , 
		\qquad \forall t \in \R\  . 
	\end{align}
	Here  we have used the fact that $\calR_t$ is a unitary transformation on Fock space. 
	We conclude using the fact that $(\N_F + M)$
	commutes with the Schr\"odinger dynamics $\exp(- i(t-s) \H / \hbar)$, 
	and the bosonic  Weyl operator $\W_t$.
	That is 
	\begin{align}
		\nonumber
		\|       (\N_F + M )^\ell   \U (t,s) \Psi        \|_\F 
		&  =
		\|             (\N_F + M )^\ell    \calR_t^*        \W_t^*e^{-i(t-s)\H /\hbar} \calR_s \W_s \Psi            \|_\F        \\
		\nonumber
		& \leq  
		c_0^\ell   
		\|        (\N_F + M)^\ell  \calR_s \Psi    \|_\F \\
		& \leq 
		c_0^{2\ell }
		\|        (\N_F + M)^\ell   \Psi    \|_\F \  . 
	\end{align} 
	where in the last line we have used that $\calR_s = \calR_{-s}^*$
	together with the    estimate \eqref{l fold application}. 
	This finishes the proof. 
\end{proof}

\begin{proof}[Proof of Theorem \ref{prop number estimates}]
	Since the proof is essentially the  same one for fermions and bosons, 
	we   only present the proof of the former in full details,  and point out the differences with respect to the latter. 
	To this end,  	let us now fix  $\Psi \in \F$   together with $t,s \in \R $
	and $k,\ell \in \mathbb N. $ 
	We start by computing that 
	\begin{align}
		\nonumber 
		\| \N_F^{\frac{\ell}{2}	} \U(t,s) \Psi		\|_\F^2 
		& 	 =  
		\langle 	
		\Psi , \U (t,s )	
		\N_F^\ell  
		\widetilde \U(t,s) 
		\Psi 		\rangle_\F 
		+
		\langle 	 \Psi , \U (t,s ) 
		\N_F^\ell 
		\Big( 				\U(t,s) 	- 			\widetilde \U(t,s) 				\Big) 
		\Psi 		\rangle_\F 	\\
		& \leq 
		\|	 \Psi 	\|_\F	 
		\|	 	 \N_F^\ell   \widetilde \U(t,s) \Psi 		\|_\F 
		+ 
		\|	 \N_F^\ell 		\U (t,s) 		\Psi 	\|_\F 
		\|	 	 (	  \widetilde \U(t,s)			-  \U(t,s))		 \Psi 		\|_\F 
		\label{eq 1} \ . 
	\end{align}
	We now estimate  the two terms in \eqref{eq 1}.  
	
	\vspace{1mm}
	For the first one, we use the fact that the truncated dynamics satisfies  the estimate 
	contained in Proposition \ref{prop number estimates truncated}. 
	Namely,  for a constant $C>0 $
	\begin{eqnarray}
		\|	 	 \N_F^\ell   \widetilde \U(t,s) \Psi 		\|_\F 
		\leq 
		C 	 
		K(t-s)
		\|	 	  ( \N+\1)^\ell     \Psi 		\|_\F  \ . 
	\end{eqnarray}
	
	\vspace{1mm}
	
	For the second term  in \eqref{eq 1}
	we use the weak number estimates from Lemma \ref{lemma weak estimates}.
	Namely, for a constant $C = C(\ell)$
	\begin{equation}
		\|	\N_F^\ell \U(t,s)	 \Psi \|_\F
		\leq 
		C 
		M^\ell 
		\|	    (\1 + \N_F / M)^\ell \Psi	\|  \  , \qquad \forall t, s \in \R \   .
	\end{equation}		
	Next,  we study the difference between
	the original and the truncated dynamics.
	That is, 
	we use Duhamel's formula
	to find that 
	\begin{align}
		\widetilde \U(t,s)  -  	\U(t,s )  
		= 
		-i  \delta 
		\int_s^t 
		\U(t, r  )
		\Big(
		\widetilde \L_{2,1} (r )  - \L_{2,1}(r ) 
		\Big)
		\widetilde \U(r ,  s ) \d r   ; 
	\end{align}
	where we have used the fact that 
	$ \widetilde \L (t)   - \L(t) =  \lambda \sqrt N  ( \widetilde \L_{2,1} (t)  -  \L_{2,1} (t))$, 
	and we have collected $\delta = \lambda \sqrt N  / \hbar $ as a pre-factor. 
	Next, for $\Phi = \U(t,s)\Psi  \in \F$ one may estimate using Lemma  \ref{lemma fermion estimates} 
	and \ref{lemma boson estimates} that 
	for all  $ k \geq 0 $
	(here, we omit time labels for notational convenience)
	\begin{align}
		\nonumber 
		\|	  (	 \widetilde \L_{2,1}	  	-   \L_{2,1}  	)	 \Phi	\|_{\F} 
		& \leq 
		\int_{\R^d}
		|	 \hat V(\xi)	|	
		\Big\|	
		\d \Gamma [u e_\xi u ]	 \otimes  \chi( \N_B \geq \hbar M 	)\Big(	 b[e_{\xi} \vp]  + b^*[ e_{-\xi} \vp ]		\Big)	  \Phi 	
		\Big\|_\F  
		\d \xi
		\\
		\nonumber 
		& + 
		\int_{\R^d}
		|	 \hat V(\xi)	|	
		\Big\| 
		\d \Gamma [ v^* e_\xi v ]	 \otimes \chi( \N_B \geq \hbar M 	)\Big(	 b[e_{\xi} \vp]  + b^*[ e_{-\xi} \vp ]		\Big) \Phi 		\Big\|_\F  \d \xi			\\
		\nonumber 
		& \leq 2 
		\|	 \hat V	\|_{L^1 }
		\|	   \N_F \otimes 	 \chi( \N_B \geq \hbar M 	)(\N_B +\1)^{\frac{1}{2}} \Phi	\|_\F 		\\ 
		\nonumber 
		& \leq 
		\frac { 2
			\|	 \hat V	\|_{L^1 }}{		(\hbar M)^k 		}
		\|	   \N_F \otimes 	 \chi( \N_B \geq \hbar M 	)(\N_B +\1)^{k+ \frac{1}{2}} \Phi	\|_\F 	  \\
		& \leq 
		\frac { C
			\|	 \hat V	\|_{L^1 }}{		(\hbar M)^k 		}
		\|	  (\N + \1 )^{k + \frac{3}{2}} \Phi	\|_\F 	 \   . 
	\end{align}
	We put our last three estimates together to find that 
	thanks to Proposition \ref{prop number estimates truncated} 
	\begin{align}
		\|	 	\Big(  \widetilde \U (t,s)      - \U(t,s) \Big)  		\Psi 		\|_\F 	 
		\leq 
		K(t-s)
		\frac{\lambda \sqrt N }{\hbar}
		\frac{1}{( \hbar M)^k}
		\|	 (1 + \N)^{k+3/2} \Psi	\|_{\F}  \ . 
	\end{align}
	Putting our estimates together, we find that 
	there exists $C>0 $ such that 
	\begin{align*} 	  
		\|  \N_F^{\frac{\ell}{2}	}   \U(t,s) &  \Psi		\|_\F^2  \\
		& 	  \leq 
		K(t-s ) 
		\Big[
		\|	 \Psi	 \|_\F  \| (\N + \1 )	 \Psi	\|_\F 
		+  
		\Theta_{k,\ell}
		\|  \big(\1 + \frac{\N_F}{ M } \big)^\ell   \Psi		\|_\F 	
		\|	 	  ( \N+\1)^{k + \frac{3}{2}	}    \Psi 		\|_\F   
		\Big]
	\end{align*}
	where $\Theta_{k,\ell}$ is as in the statement of the theorem. 
	
	\vspace{1mm}
	
	As for the bosons, the only modification comes from the weak number estimates
	obtained from Lemma \ref{lemma weak estimates}, 
	In the bosonic case, 
	one finds that for a constant $C = C(\ell)$
	\begin{equation}
		\|	\N_B^\ell \U(t,s)	 \Psi \|_\F
		\leq 
		C 
		N^\ell 
		\|	    (\1 + \N_B /  N )^\ell \Psi	\|  \  , \qquad \forall t, s \in \R \  , 
	\end{equation}		
	This finishes the proof. 	
\end{proof}


\section{Second Quantization III: Proof of Theorem \ref{thm1}}
\label{section proof thm 1}

Let us now  turn to the proof our first main result. 
We shall reduce the problem of  proving the estimates contained in Theorem \ref{thm1}, to that of proving number estimates for the fluctuation dynamics, as defined in Section \ref{section fluctuation}. 
Then, we apply Theorem \ref{prop number estimates}. 
The proofs we present are heavily inspired by the works 
\cite{RodnianskiSchlein2009,BenedikterPortaSchlein2014}
and also 
\cite{ChenLee2011,ChenLeeSchlein2011}, 
and are adapted to the case at hand.
Let us remark here 
that our proofs are shorter, because we do not aim at obtaining the optimal convergence rates
$1 / M $ for fermions, and $1/N$ for bosons.
Here instead, we content ourselves with $1/\sqrt M$ and $1/\sqrt N$, respectively. 

\vspace{2mm}

Let us first introduce some notation. 
Letting $\U(t,s)$ be the fluctuation dynamics, 
we consider  the following 
\textit{fluctuation vectors}
in Fock space $\F$
\begin{align}
	\Omega_1 (t) 
	\equiv 
	\U(t,0) \Omega   
	\quad
	\t{and}
	\quad 
	\Omega_2(t)
	\equiv 
	\U(t,0)
	\Omega_F \otimes \W^*[\sqrt N \vp_0] \frac{b^*(\vp_0)}{\sqrt  N } \Omega_B \ . 
\end{align} 
They will be extremely useful in the proof of Theorem \ref{thm1}.
Heuristically,   
$\Omega_1(t)$
determines the fluctuations
of the system when the initial data 
has a bosonic  coherent state.
On the other hand, $\Omega_2(t)$
describes the fluctuations 
when the initial data has a bosonic factorized state. 
Let us collect  in the following lemma 
some results concerning these vectors; 
for reference, see \cite{ChenLeeSchlein2011}
or
\cite{DietzeLee2022}.

\begin{lemma}[Properties of $\Omega_1$ and $\Omega_2$]
	Let us denote from now on 
	\begin{equation}
		d_N \equiv \frac{\sqrt{N!}}{N^{N/2} e^{-N/2}}
	\end{equation}
	Then, the following statements hold true
	\begin{enumerate}
		\item  $P_N \W[\sqrt N \vp_0]\Omega_B 
		= 
		\frac{1}{d_N} 
		\frac{b^*(\vp_0)}{\sqrt{N!}}\Omega_B$
		where 
		$P_N \equiv \1 (\N_B =N)$. 
		
		\item 
		$
		\<   \Omega_2(t) , \Omega_1(t) \>_\F
		= 
		\<   \Omega_2(0 ) , \Omega_1(0) \>_\F
		= 
		\frac{1}{d_N} \ . 
		$
		
		\item 
		There exists a constant $C>0$ such that 	
		$ 
		\|	 (\N_F+\1)^{-1/2}	 \Omega_2(0)	\|_\F 
		\leq 
		\frac{C}{d_N}  \ . $
		
	\end{enumerate}
\end{lemma}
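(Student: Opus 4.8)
The three claims are independent and essentially computational, so the plan is to handle them one at a time, using only the structure of Weyl operators and the Poisson statistics of coherent states recorded in Lemma~\ref{lemma properties bosons} and Section~\ref{subsubsection CH}.

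For item~(1), the plan is to apply the explicit action of the Weyl operator on the vacuum stated in Section~\ref{subsubsection CH}, namely $(\W[\sqrt N \vp_0]\Omega_B)^{(n)} = e^{-N/2} \, (\sqrt N \vp_0)^{\otimes n}/\sqrt{n!}$. Projecting onto the $N$-particle sector with $P_N = \1(\N_B = N)$ keeps only the $n=N$ term, giving $e^{-N/2} N^{N/2} \vp_0^{\otimes N}/\sqrt{N!}$. On the other hand, a direct computation from the definition of $b^*$ shows that $\tfrac{b^*(\vp_0)^N}{\sqrt{N!}}\Omega_B$ has as its only nonzero component the $N$-particle one, equal to $\vp_0^{\otimes N}$ (this is the standard identity relating $N$-fold creation operators to symmetrized tensor powers, using $\|\vp_0\|_{L^2}=1$). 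Comparing the two expressions and recalling $d_N = \sqrt{N!}/(N^{N/2} e^{-N/2})$, one reads off $P_N \W[\sqrt N \vp_0]\Omega_B = \tfrac{1}{d_N}\tfrac{b^*(\vp_0)^N}{\sqrt{N!}}\Omega_B$.

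For item~(2), the first equality follows from $\U(t,0)$ being unitary together with the fact that $\U(t,0)$ acts on $\Omega_1(0)=\Omega$ and on $\Omega_2(0)=\Omega_F\otimes \W^*[\sqrt N\vp_0]\tfrac{b^*(\vp_0)}{\sqrt N}\Omega_B$ — wait, one must be careful: the inner product $\<\Omega_2(t),\Omega_1(t)\>_\F = \<\U(t,0)\Omega_2(0),\U(t,0)\Omega\>_\F = \<\Omega_2(0),\Omega\>_\F$ by unitarity, which is time-independent. For the value, compute $\<\Omega_2(0),\Omega\>_\F = \< \W^*[\sqrt N\vp_0]\tfrac{b^*(\vp_0)}{\sqrt N}\Omega_B, \Omega_B\>_{\F_B}$ in the fermionic-trivial factor, which by moving $\W^*$ to the other side equals $\tfrac{1}{\sqrt N}\<\tfrac{b^*(\vp_0)^N}{\sqrt{(N-1)!}}\,\Omega_B$ — here I should instead use item~(1): $\<\Omega_2(0),\Omega\> = \langle \tfrac{b^*(\vp_0)}{\sqrt N}\Omega_B, \W[\sqrt N\vp_0]\Omega_B\rangle$; since $\tfrac{b^*(\vp_0)}{\sqrt N}\Omega_B$ lives in the $1$-particle sector only when $N$... the cleanest route is: $\Omega_2(0)$ by construction equals $\W^*[\sqrt N\vp_0]$ applied to a one-particle state, and pairing against $\Omega = \W^*[\sqrt N\vp_0]\W[\sqrt N\vp_0]\Omega$ reduces, via item~(1) and orthogonality of particle sectors, to $1/d_N$. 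I would write this out carefully using that only the $N$-particle component of $\W[\sqrt N\vp_0]\Omega_B$ survives the pairing.

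For item~(3), the plan is to expand $(\N_F+\1)^{-1/2}\Omega_2(0)$: since $\Omega_2(0) = \U(0,0)\,\Omega_F\otimes(\cdots) = \Omega_F \otimes \W^*[\sqrt N\vp_0]\tfrac{b^*(\vp_0)}{\sqrt N}\Omega_B$ and the fermionic factor is $\Omega_F$ on which $(\N_F+\1)^{-1/2}$ acts as the identity, the bound reduces to $\|\tfrac{1}{\sqrt N} b^*(\vp_0)\W[\sqrt N \vp_0]\Omega_B\|_{\F_B}$ — wait, $(\N_F+\1)^{-1/2}$ on $\Omega_F$ gives $\Omega_F$, so actually $\|(\N_F+\1)^{-1/2}\Omega_2(0)\|_\F = \|\W^*[\sqrt N\vp_0]\tfrac{b^*(\vp_0)}{\sqrt N}\Omega_B\|_{\F_B} = \tfrac1{\sqrt N}\|b^*(\vp_0)\Omega_B\|_{\F_B}$ by unitarity of $\W^*$, and $\|b^*(\vp_0)\Omega_B\|_{\F_B}=\|\vp_0\|_{L^2}=1$. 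That gives $\le 1/\sqrt N$, which is not obviously $\le C/d_N$. The point (and the \emph{main obstacle}) is the asymptotics of $d_N$: by Stirling, $d_N \to (2\pi N)^{1/4}\cdot$(bounded), so $d_N \sim c N^{1/4}$, and $1/\sqrt N \le C/N^{1/4} \le C/d_N$ for $N\ge 1$ is false in that direction — so I must instead realize that the intended statement pairs $(\N_F+\1)^{-1/2}$ acting where the $N$-particle \emph{bosonic} weight matters, i.e. one should decompose $\W[\sqrt N\vp_0]\Omega_B$ over $\N_B$-sectors and use that the surviving contribution to $\Omega_2(0)$ after the $b^*/\sqrt N$ and the $\W^*$ concentrates near $\N_B\approx N$ with Poisson weight $\sim d_N^{-1}\cdot$(Stirling factor), giving the factor $1/d_N$. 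I would therefore compute $\|(\N_F+\1)^{-1/2}\Omega_2(0)\|^2$ by expanding in the coherent-state basis, extracting the $n=N$ Poisson weight $e^{-N}N^N/N!$, whose square root is exactly $d_N^{-1}$ up to constants, thereby obtaining the claimed bound. The delicate step is matching combinatorial factors from $b^*(\vp_0)$ acting across sectors with the Poisson weights, and this is where I expect the bulk of the work to lie.
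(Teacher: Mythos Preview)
Your approaches to items~(1) and~(2) are correct and match the standard argument in \cite{ChenLeeSchlein2011,DietzeLee2022}, to which the paper defers rather than giving its own proof. For~(1) you correctly read $\tfrac{b^*(\vp_0)^N}{\sqrt{N!}}\Omega_B$ through the misprint, and the coherent-state computation is right. For~(2), unitarity of $\U(t,0)$ gives time-independence, and the value $1/d_N$ follows by pairing $\tfrac{b^*(\vp_0)^N}{\sqrt{N!}}\Omega_B$ against $\W_0\Omega_B$, inserting $P_N$ for free on the $N$-particle side, and invoking~(1) --- the route you eventually reach. Note this only works once $\Omega_2$ is corrected to carry $\tfrac{b^*(\vp_0)^N}{\sqrt{N!}}$; with the printed $\tfrac{b^*(\vp_0)}{\sqrt N}$ the inner product would be $e^{-N/2}$, not $1/d_N$.

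Item~(3) is where you go astray, and the cause is a second misprint you have sensed but misdiagnosed: the weight should be $(\N+\1)^{-1/2}$, equivalently $(\N_B+\1)^{-1/2}$ since the fermionic factor of $\Omega_2(0)$ is vacuum --- not $(\N_F+\1)^{-1/2}$. With the corrected $\Omega_2(0)$ and the printed $(\N_F+\1)^{-1/2}$, the left-hand side is simply $\|\Omega_2(0)\|=1$, so the inequality as stated is false for large $N$; the proof of Theorem~\ref{thm1} in fact writes $(\N+\1)^{-1/2}$ in the display where this bound is invoked. Your Stirling check also slips: $d_N\sim (2\pi N)^{1/4}$ gives $N^{-1/2}\le C\,N^{-1/4}\sim C'/d_N$, so the inequality you declared ``false in that direction'' actually holds --- the confusion comes from working with the misprinted $\Omega_2$.

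With the intended statement $\bigl\|(\N_B+\1)^{-1/2}\W_0^*\tfrac{b^*(\vp_0)^N}{\sqrt{N!}}\Omega_B\bigr\|_{\F_B}\le C/d_N$, your final paragraph points in the right direction but does not constitute a proof. Via~(1) the vector equals $d_N\,\W_0^* P_N \W_0\Omega_B$, so one must show $\bigl\|(\N_B+\1)^{-1/2}\W_0^* P_N \W_0\Omega_B\bigr\|\le C/d_N^{2}\sim C'/\sqrt N$. The cited references carry this out by expanding in $\N_B$-sectors and controlling the resulting sums combinatorially via the Poisson weights of the coherent state; that computation is the actual substance of item~(3) and remains to be done.
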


\begin{proof}[Proof of Theorem \ref{thm1}]
	We recall  recall here 
	that one-particle reduced densities  
	have been re-cast in creation-
	and annihilation operators in  
	\eqref{definition kernels}. 
	For the proof, we write
	$a_x (t) = e^{it/\hbar \H} a_x e^{-it/\hbar \H}$. 
	
	\vspace{1mm}
	\noindent
	\textit{Proof of \eqref{thm1 fermion estimate}}		
	First, we establish the result for fermions. 
We obtain 
	\begin{align}
		\nonumber 
		\gamma_F(t;x_1 ,x_2)
		&  = 
		\< \Psi(t) , a_{x_2}^*a_{x_1}\Psi (t) \>_\F 		\\ 
		\nonumber 
		& = 
		\<
		\calR_0 \otimes \frac{b^*(\vp_0)^N}{\sqrt{N!}} \Omega
		, 
		a_{x_2}^* (t)  a_{x_1}  (t)  
		\calR_0   \otimes \frac{b^*(\vp_0)^N}{\sqrt{N!}} \Omega
		\>_\F  \\ 
		\nonumber 
		& = 
		d_N 
		\<
		\calR_0   \otimes \frac{b^*(\vp_0)^N}{\sqrt{N!}} \Omega
		, 
		a_{x_2}^* (t)  a_{x_1}  (t)  
		\calR_0   \otimes 
		\W[\sqrt N \vp_0 ]
		\Omega
		\>_\F  \\ 
		& = 
		d_N 
		\<  \Omega_2(t)  ,  
		\calR_t^*
		a_{x_2}^*  a_{x_1}  
		\calR_t 
		\Omega_1 (t) 
		\>_\F   
		\label{thm 1 eq 1 fermion}
	\end{align}
	Next, we look at the conjugation relations
	\begin{align} 
		\label{conjugation fermions}
		\calR_t^*
		a^*_x
		\calR_t 
		= 
		a^* ( u_{t,x}) 
		+ 
		a( \overline{  v_{t,x} } )	
		\quad \t{and}\quad 	 
		\calR_t^* a_x \calR_t 
		= 
		a ( u_{t,x}) 
		+ 
		a^*( \overline{  v_{t,x} } )	 
	\end{align}
	and obtain, using $\<\bar v_x , \bar v_y \>= \omega(x,y)$
	and 
	$\<\Omega_1 , \Omega_2\> = d_N$ 
	\begin{align}
		\nonumber 
  		\gamma_F( t; x_1 , & x_2) 
		-  \omega ( t;  x_1, x_2) \\
		\nonumber
		& 			
		\ = \     
		d_N  \<   \Omega_2 (t ) ,  a^*(u_{x_2,t})  a(u_{x_1,t})  \Omega_1 (t)			\>_\F   
		\ 	- \   
		d_N 
		\<   \Omega_2 (t ) ,  a^*(\overline{v}_{x_1,t})  a(\overline{v}_{x_2,t})  \Omega_1 (t)			\>_\F 	\\  
		& 	 			\  + \   d_N 
		\<   \Omega_2 (t ) ,  a^*(u_{x_2,t})  a^*(v_{x_1,t})  \Omega_1 (t)			\>_\F   \  + \ 
		d_N  
		\<   \Omega_2 (t ) ,  a(\overline{v}_{x_2,t})a( u_{x_1,t}) \Omega_1 (t)\>_\F    \    . 
	\end{align}
	Next, we let $\O $ be a compact operator in $L^2(\R^d)$
	with kernel $\O(x_1 ,x_2) $.  The above identity now implies that 
	\begin{align}
		\nonumber 
		\mathrm{Tr}  \O \big( \gamma_F (t) - \omega (t)	\big)
		&   =  
		d_N 
		\<	 \Omega_2 (t) ,  
		\,  \d \Gamma_F [u_t \O u_t ]    
		\Omega_1 (t) 	 \>_\F  \\ 
		\nonumber 
		& 			-  d_N 
		\<	 \Omega_2 (t) ,  
		\,  \d \Gamma_F [v_t^* \overline{\O^*} v_t  ]   
		\Omega_1 (t) 	 \>_\F 	\\ 
		\nonumber 
		&     
		+
		d_N 
		\<	 \Omega_2 (t) ,  
		\int_{ \R^d \times \R^d } 				
		[v_t \O   u_t] (x_1, x_2) 
		a_{x_1}    a_{x_2}   \, 
		\Omega_1 (t)	\>_\F  \\
		& + 
		d_N 
		\<	 \Omega_2  (t) , 
		\int_{ \R^d \times \R^d } 				
		[v_t \O   u_t]^* (x_1, x_2) 
		a_{x_1}^*    a_{x_2}^*
		\, 
		\Omega_1 (t)	\>_\F 
	\end{align}
	Next, we make use of Remark \ref{remark boundedness of N}. 
	In particular, 
	we consider $K_{1/2} \geq 1 $ 
	to be the integer that controls of $\N^{1/2}$. 
	The Cauchy-Schwarz inequality
	and 
	Lemma \ref{lemma fermion estimates}
	then gives
	\begin{align}
		\nonumber
		\big|
		\mathrm{Tr}  \O \big( \gamma_F (t) - \omega (t)	\big)
		\big|    
		& 	\leq 
		\|		 u \O  u 	\|_{B(\F)} 
		d_N 
		\| (\N + \1 )^{-K_{1/2}}	\Omega_2(t)	\|_{\F}
		\|	  (\N +\1 )^{K_{1/2} + 1  }	  \Omega_1 (t)	\|_{\F } \\ 
		\nonumber
		& 	+  
		\|		 v \O  v 	\|_{B(\F)} 
		d_N 
		\| (\N + \1 )^{-K_{1/2}}	\Omega_2(t)	\|_{\F}
		\|	  (\N +\1 )^{K_{1/2} + 1  }	  \Omega_1 (t)	\|_{\F }	\\ 
		\nonumber
		& 	+  
		\|		 v \O  u 	\|_{HS} 
		d_N 
		\| (\N + \1 )^{-K_{1/2}}	\Omega_2(t)	\|_{\F}
		\|	  (\N +\1 )^{K_{1/2} + 1/2  }	  \Omega_1 (t)	\|_{\F }	\\ 
		& 	+  
		\|		(vOu)^*	\|_{HS} 
		d_N 
		\| (\N + \1 )^{-K_{1/2}}	\Omega_2(t)	\|_{\F}
		\|	  (\N +\1 )^{K_{1/2} + 1/2  }	  \Omega_1 (t)	\|_{\F } \ . 
	\end{align}
	First, we control $\Omega_2(t)$ and then, we control $\Omega_1(t)$. 
	Indeed, Remark \ref{remark boundedness of N} implies that
	\begin{align}
		\nonumber 
		\| (\N + \1 )^{-K_{1/2}}	\Omega_2(t)	\|_{\F}
		& 		  \leq 
		\| (\N + \1 )^{-K_{1/2}}   \U(t,0) (\N + \1 )^{1/2}	\|_{B(\F)}			
		\| (\N+\1)^{-1/2} 	\Omega_2 (0)	\|_{\F} \\ 
		& 	  	 \leq 
		K(t) / d_N  
		\label{thm1 eq 2}
	\end{align}
	where we have used
	$ 
	\|	 (\N_F+1)^{-1/2}	 \Omega_2(0)	\|_\F 
	\leq 
	c/d_N   \ . $
	On the other hand, Remark \ref{remark control omega} immediately
	implies that 
	\begin{equation}
		\label{thm1 eq 3}
		\|	  (\N +\1 )^{K_{1/2} + 1  }	  \Omega_1 (t)	\|_{\F } 
		\leq K (t) 
	\end{equation}
	provided one re-updates $K(t)$. 

	Finally, we use 
	$\|	 u \O u 	\|_{B(\F)} \leq \|	 \O	\|_{B(\F)}$, 
	$\|	 v \O v 	\|_{B(\F)} \leq \|	 \O	\|_{B(\F)}$, 
	$\|	 v \O u 	\|_{HS)} \leq    \sqrt M   \|	 \O	\|_{B(\F)}$, 
	and
	$\|	   ( v \O u)^*	\|_{HS} \leq    \sqrt M   \|	 \O	\|_{B(\F)}$ 
	and the fact that the space of compact operators is dual to the space of trace-class operators,  to conclcude that 
	\begin{equation}
		\|	 \gamma_F (t)  - \omega(t)		\|_{\tr}
		\leq 
		K(t) \sqrt M \ . 
	\end{equation}
	This finishes the proof of the estimate for fermions.

	\vspace{1mm}
	\noindent 
	\textit{Proof of \eqref{thm 1 boson estimate}.}
	As for the bosons,  we start by looking at the identity 
	that is the analogous of 
	\eqref{thm 1 eq 1 fermion}. 
	Namely,  a similar argument shows that 
	\begin{align}
		\gamma_B(t; y_1, y_2)
		= 
		d_N 
		\< \Omega_1(t) ,   \W_t^* b_{y_2}^* b_{y_1}^* \W_t  \Omega_2(t)\>_\F \  . 
	\end{align}
	Next, the conjugation relations
	\begin{align}
		\label{conjugation bosons}
		\W_t^* b_x  \W_t
		=
		b_x + \sqrt N \vp_t(x)
		\qquad
		\t{and}
		\qquad 
		\W_t^* b^*_x  \W_t
		=
		b^*_x +  \overline{ \vp_t(x) }  
	\end{align}
	combined with the identity 
	$\<\Omega_2, \Omega_1\>= 1/ d_N $ now imply that 
	\begin{align}
		\nonumber 
		\gamma_B(t; y_1, y_2)
		- 
		N 
		\vp_t(y_1 )
		\overline{\vp_t(y_2)}
		&  = 
		d_N 
		\< 	\Omega_2(t) ,				b_{y_2}^*  b_{y_1}					\Omega_1(t) 		\> \\
		\nonumber 
		&  + 
		d_N 
		\sqrt N 
		\< 	\Omega_2(t) ,			b^*_{y_2}			\vp_t(y_1) 			\Omega_1(t) 		\> \\
		&  + 
		d_N 
		\sqrt N 
		\< 	\Omega_2(t) ,			b_{y_1}  \overline{ \vp_t(y_2)	} 					\Omega_1(t) 		\>  \ . 
	\end{align}
	Next, we consider a trace-class operator
	$\O$ with kernel $O(x,y)$
	and integrate over $y_1, \ y_2$ to find that 
	\begin{align}
		\nonumber 
		\Big|
		\tr \Big(  
		\O (	\gamma_B(t) - N \ket{\vp_t} \bra{\vp_t}	)
		\Big)
		\Big| 	
		& 	\leq 
		d_N 
		\<	 \Omega_2(t) , \d \Gamma_B [\O ] \Omega_1 (t)		 \>  \\ 
		& + d_N 
		\sqrt N 
		\<	 \Omega_2(t) , 
		\big(    b^*[\O \vp_t] + b[\O \vp_t ]    \big)    
		\Omega_1 (t)		 \>   \ . 
	\end{align}
	Similarly as we did for fermions, 
	we consider $K_{1/2}$
	to be the integer from Remark \ref{remark control omega}. 
	Thus,  the Cauchy-Schwarz inequality, 
	the pull-through formula
	$\N b_y = b_y ( \N +1)$
	and 
	Lemma \ref{lemma boson estimates} now imply
	that 		
	\begin{align}
		\nonumber 
		\Big|
		\tr \Big(  
		\O (	\gamma_B(t) - N  & \ket{\vp_t} \bra{\vp_t}	)
		\Big)
		\Big| 			\\ 
		\nonumber
		& 	\leq 
		d_N 
		\|	(\N+\1)^{-K_{\frac{1}{2}}}  \Omega_1(t)	\|_\F
		\|	(\N+\1)^{K_{	\frac{1}{2}		}}    \d \Gamma_B[\O]      \Omega_1(t)	\|_\F \\
		\nonumber 
		& + 
		d_N \sqrt N 
		\|	(\N+\1)^{-K_{\frac{1}{2}}}  \Omega_1(t)	\|_\F
		\|	(\N+\1)^{K_{	\frac{1}{2}		}}      b^*[\O \vp_t]       \Omega_1(t)	\|_\F \\
		\nonumber
		& + 
		d_N \sqrt N  
		\|	(\N+\1)^{-K_{\frac{1}{2}}}  \Omega_1(t)	\|_\F
		\|	(\N+\1)^{K_{	 \frac{1}{2}		}}    b[\O \vp_t ]       \Omega_1(t)	\|_\F  \\
		\nonumber
		& \leq 
		d_N 
		\|	\O 	\|_{B(L^2)}
		\|	(\N+\1)^{-K_{\frac{1}{2}}}  \Omega_1(t)	\|_\F
		\|	(\N+\1)^{K_{	\frac{1}{2}	} +1 }         \Omega_1(t)	\|_\F \\
		& + 
		2 d_N 
		\sqrt N \|	 \O \vp_t	\|_{L^2} 
		\|	(\N+\1)^{-K_{\frac{1}{2}}}  \Omega_1(t)	\|_\F
		\|	(\N+\1)^{K_{	\frac{1}{2}	}+ \frac{1}{2}   }         \Omega_1(t)	\|_\F  \ . 
	\end{align}
	In order to conclude, 
	we control the right hand side of the above equation
	with the estimates \eqref{thm1 eq 2} and \eqref{thm1 eq 3}. 
	Using that $\|	 \O \vp	\|_{L^2} \leq \|	 \O	 \|_{B(L^2)}	$
	and
	recalling the space of compact operators is dual to the space
	of trace-class operators, we find 
	\begin{equation}
		\|    \gamma_B(t)   -  N    \ket{\vp_t}  \bra{\vp_t}        		\|_{\tr}
		\leq 
		K(t) \sqrt N \ . 
	\end{equation}
	This finishes the proof of the Theorem. 
\end{proof}


\section{Quantum Optimal Transportation: Proof of Theorem \ref{thm2}}
\label{section QOT}
In this section, we address the semi-classical limit of the coupled 
Hartree-Hartree system first introduced in \eqref{hartree hartree}.
Here, we adopt the scaling regime presented in \eqref{scaling regime}.
Thus, we study the 
solution
$(\omega^\hbar,\vp^\hbar)$ 
of the equation 
\begin{align}
	\label{hartree hartree 2}
	\begin{cases}
		& i\hbar \partial_t \omega^\hbar 
		= \Big[ -  \frac{1}{2}\hbar^2 \Delta + V*\rho_B , \omega^\hbar    \Big]  \\
		& i \partial_t \vp^\hbar   =  -  \frac{1}{2}\Delta \vp^\hbar  + (V*\rho_F) \vp^\hbar  \ . 
	\end{cases}
\end{align}
where $\rho_B(t,x) =  |\vp^\hbar (t,x)|^2$
and
$\rho_F (t,x) = M^{-1} \omega^\hbar (t;x,x)$
are the bosonic and fermionic position densities, 
with some prescribed initial data
$(\omega_0^\hbar , \vp_0^\hbar) \in \mathscr L ^1(L^2(\R^d))\times L^2(\R^d)$. 
We always assume that $\tr \omega_0^\hbar = M = \hbar^{-d}$
and $ (\omega_0^\hbar)^* = \omega_0^\hbar \geq 0$; 
these properties are of course preserved by the flow generated 
in \eqref{hartree hartree 2}. 

\vspace{2mm}

In order to analyze the macroscopic limit, 
given  $\hbar>0$ we denote the   \textit{Wigner transform} of a 
reduced density matrix $\omega \in \mathcal{L}^1(L^2(\R^d))$ by 
\begin{equation}
	W^\hbar [\omega] (x,p)
	\equiv 
	\frac{ \hbar^d }{(2 \pi)^{ d}} 
	\int_{\R^d }
	\omega  
	\Big(
	x +    \frac{\hbar }{2 } \xi  , x -   \frac{\hbar }{2 } \xi
	\Big) 
	e^{		 - i  \xi \cdot p	}		 \d \xi    \ , 
	\qquad    ( x,p )\in \R^{2d} \ . 
\end{equation}
We understand  the above map as a unitary transformation 
$W^\hbar : L^2(\R^d_x \times \R^d_{x'}) \rightarrow L^2 ( \R^d_x \times \R^d_p )$. 
Additionally, we record here the associated inverse map.
Namely, for a regular enough phase-space $f = f(x,p)$ distribution, 
we consider the operator on $L^2(\R^d)$
whose kernel is defined as 
\begin{equation}
	\t{Op}_w^\hbar[f] (x,x' )
	\equiv 
	\hbar^{-d}
	\int_{\R^d}
	f\Big(     \frac{x+x' }{2} , p         \Big) 
	e^{ i p \cdot (x-x')/ \hbar}  \d p  \ , 
	\qquad    ( x,y)\in \R^{2d} 
\end{equation}     
to be the \textit{Weyl quantization} map.

Intuitively, two phenomena happen when the $\hbar \downarrow  0 $ limit is taken.
First, the \textit{dynamics} (i.e. the time evolution of the system)      is changing. 
Secondly, the \textit{initial data} of the systems is converging from the quantum-mechanical, to the classical regime. 
We shall study each process separately.

\subsection{Stability of Hartree-Hartree}
Here and in the rest of the article, we shall be using the following notation
for a special class of semi-classical operators on $L^2(\R^d)$. 
Namely
\begin{equation}
	\label{semiclassical operator}
	\O_{\xi , \eta} \equiv \exp 
	\big(
	i   \xi \cdot x  
	+ 
	i 	  \eta \cdot  p  
	\big)
	,  \qquad \xi , \eta \in \R^d
\end{equation}
where $p = - i \hbar \nabla_x$.  
If $\omega \in \mathscr{L}^1(  L^1(\R^d)  )$ is a trace-class operator, 
then it is possible to define the additional (weaker) norm, which we shall use to measure distances 
\begin{equation}
	||| \omega  |||_s 
	\equiv  
	\sup_{\xi, \eta \in \R^d } 
	(1 + |\xi |  + |\eta|)^{-s }
	| 	\t{Tr} (  \O_{\xi , \eta} \omega )| 
	\ , \qquad s  \geq 0 \ . 
\end{equation}
The motivation for the introduction  of such norm comes from the following observation.
If we denote by $ f = W^\hbar [\omega]$ its Wigner transform,  then it is well-known that 
\begin{equation}
	\hat{f}(\xi , \eta ) =  \frac{1}{ M }	\t{Tr} (  \O_{\xi , \eta} \omega ) \ , \qquad \forall \xi, \eta \in \R^d 
\end{equation}
and, consequently, 
\begin{equation}
	\label{formula isomorphism}
	|f|_s  =\frac{1}{M}  ||| \omega |||_s \ , \qquad \forall s \geq  0 \  ; 
\end{equation}
we remind the reader that the norm $ | \cdot |_s $ has been defined in   \eqref{fourier based norm definition}. 
It is important to note that  if $ \tr |\omega| \leq N$, then  $|f|_s \leq 1 $.

\vspace{3mm}

Our first result towards the prove of Theorem \ref{thm2}
is the stability of the Hartree dynamics 
with respect to norm  $  ( |||  \cdot |||_1 , \|	 \cdot \|_{L^2})$. 
Let us note here that this result uses the ideas
presented in \cite{BenedikterPortaSaffirioSchlein2016}, 
although the result itself is nor stated or proved.

\begin{proposition}[Stability of Hartree-Hartree]
	\label{proposition stability hartree}
	Let $(\omega_1^\hbar, \vp_1^\hbar)$ and 
	$(\omega_2^\hbar , \vp_2^\hbar )$
	be two solutions of the coupled Hartree system   
	\eqref{hartree hartree}
	with initial data
	$ (\omega_{1,0}^\hbar , \vp_{1,0}^\hbar)$
	and
	$ (\omega_{2,0}^\hbar , \vp_{2,0}^\hbar) $, 
	satisfying
	$\tr \omega_{i,0}^\hbar  =M$
	and
	$( \omega_{i,0}^\hbar)^* 
	=
	\omega_{i,0}^\hbar 
	\geq0$
	for $i=1,2$.
	Then, there exists $C>$ such that  
	\begin{align}
		\frac{1}{M}	|||   \omega_1^\hbar  (t) -  \omega_2^\hbar (t) |||_1 
		+ 
		\| \vp_1^\hbar (t)  & - \vp_2^\hbar (t)		\|_{L^2}  \\ 
		\nonumber 
		& 	  \leq 
		C 
		\exp ( \exp C|t|)
		\Big(
		\frac{1}{M}	|||   \omega_{1,0}^\hbar  -  \omega_{2,0}^\hbar  |||_1 
		+ 
		\| \vp_{1 , 0 }^\hbar  - \vp_{2,0}^\hbar 	\|_{L^2}
		\Big) \  
	\end{align}
	for all $  t \in \R$
\end{proposition}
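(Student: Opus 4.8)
The plan is to reduce the claimed estimate to a coupled Grönwall inequality for the two scalar quantities
$$ a(t) \equiv \tfrac1M ||| \omega_1^\hbar(t) - \omega_2^\hbar(t) |||_1 \qquad \t{and} \qquad b(t) \equiv \| \vp_1^\hbar(t) - \vp_2^\hbar(t) \|_{L^2}, $$
using Duhamel's formula for each line of \eqref{hartree hartree 2} and the dictionary \eqref{formula isomorphism} between $|||\cdot|||_s$ on density matrices and $|\cdot|_s$ on Wigner transforms. Throughout I would only use $C_V \equiv \int_{\R^d}\langle\zeta\rangle^2 |\hat V(\zeta)|\,\d\zeta < \infty$ (finite by the hypothesis of Theorem \ref{thm1}), together with the elementary facts $\| \rho_{F,i}^\hbar \|_{L^1} = 1$ and $\| |\vp_1^\hbar|^2 - |\vp_2^\hbar|^2 \|_{L^1} \le 2\,b(t)$, and -- for the Weyl calculus below -- the Heisenberg relation $\O_\alpha \O_\beta = e^{\i\hbar\sigma(\alpha,\beta)/2}\O_{\alpha+\beta}$, where $\O_\alpha$ is the semiclassical observable in \eqref{semiclassical operator} and $\sigma$ is the symplectic form on $\R^{2d}$.

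\textbf{The bosonic component} is the easy one. Writing $\delta\vp \equiv \vp_1^\hbar - \vp_2^\hbar$, the second line of \eqref{hartree hartree 2} gives, via Duhamel in the free propagator $e^{\i t\Delta/2}$,
$$ \delta\vp(t) = e^{\i t\Delta/2}\delta\vp(0) - \i\int_0^t e^{\i(t-s)\Delta/2}\Big[ \big(V*\rho_{F,1}^\hbar\big)\delta\vp + \big(V*(\rho_{F,1}^\hbar - \rho_{F,2}^\hbar)\big)\vp_2^\hbar \Big](s)\,\d s . $$
Estimating in $L^2$, using $\| V*\rho_{F,1}^\hbar \|_{L^\infty} \le \|\hat V\|_{L^1}\|\rho_{F,1}^\hbar\|_{L^1} \le C_V$ on the first term and
$$ \| V*(\rho_{F,1}^\hbar - \rho_{F,2}^\hbar) \|_{L^\infty} \le \int |\hat V(\zeta)|\,\big| \tfrac1M \tr\big( e^{\i\zeta\cdot x}(\omega_1^\hbar - \omega_2^\hbar) \big) \big|\,\d\zeta \le C_V\, a(t) $$
on the source (here the norm $|||\cdot|||_1$ is exactly what controls the Fourier modes of the fermionic density difference), one arrives at the linear inequality $b(t) \le b(0) + C\int_0^t\big( a(s) + b(s) \big)\,\d s$.

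\textbf{The fermionic component} is the heart of the matter, and here I would follow the scheme of \cite{BenedikterPortaSaffirioSchlein2016}. Set $D(t) \equiv \omega_1^\hbar(t) - \omega_2^\hbar(t)$, which solves $\i\hbar\partial_t D = [\,h_1(t), D\,] + [\,V*(\rho_{B,1}^\hbar - \rho_{B,2}^\hbar), \omega_2^\hbar\,]$ with $h_1(t) = -\tfrac12\hbar^2\Delta + V*\rho_{B,1}^\hbar(t)$. Testing against $\O_{\xi,\eta}$ and using $\tr(\O_{\xi,\eta}[A,B]) = \tr([\O_{\xi,\eta},A]B)$, the kinetic commutator $[\O_{\xi,\eta}, -\tfrac12\hbar^2\Delta]$ is computed explicitly by conjugation and yields the transport term $\i\hbar\,\xi\cdot\nabla_\eta \tr(\O_{\xi,\eta}D)$; I would remove it by passing to the co-moving frame, i.e. by introducing $F_t(\xi,\eta) \equiv \tr\big( \O_{\xi,\eta}\, e^{\i t\hbar\Delta/2}\, D(t)\, e^{-\i t\hbar\Delta/2} \big)$, which equals $\tr(\O_{\xi,\eta+t\xi}D(t))$ up to an irrelevant scalar phase, so that $|||D(t)|||_1 \le (1+|t|)\sup_{\xi,\eta}(1+|\xi|+|\eta|)^{-1}|F_t(\xi,\eta)|$. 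In this frame the potential enters through $e^{\i t\hbar\Delta/2}(V*\rho)\,e^{-\i t\hbar\Delta/2} = \int \hat V(\zeta)\hat\rho(\zeta)\,e^{\i c_t(\zeta)}\O_{\zeta,t\zeta}\,\d\zeta$, and expanding $[\O_{\xi,\eta},\O_{\zeta,t\zeta}] = 2\i\sin\!\big(\tfrac{\hbar}{2}(t\xi-\eta)\cdot\zeta\big)\O_{\xi+\zeta,\eta+t\zeta}$ exhibits the crucial semiclassical cancellation: the bound $|2\sin(\hbar\theta/2)| \le \hbar|\theta|$ absorbs the prefactor $1/\hbar$ coming from $\i\hbar\partial_t$, leaving a Grönwall-type inequality for $\phi_t \equiv \sup_{\xi,\eta}(1+|\xi|+|\eta|)^{-1}|F_t(\xi,\eta)|$ in which the driving coefficients are finite (thanks to $C_V<\infty$) and grow at most polynomially in $t$ (from the shear $\eta\mapsto\eta+t\zeta$); the inhomogeneous term is $\le CM(1+|t|)\,b(t)$, using $|\tr(\O_{\xi+\zeta,\eta+t\zeta}\omega_2^\hbar)|\le M$ and $\||\vp_1^\hbar|^2 - |\vp_2^\hbar|^2\|_{L^1}\le 2b(t)$. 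Combined with the bosonic inequality and $a(0) = \tfrac1M|||D(0)|||_1$, a Grönwall argument with time-dependent coefficients closes the system for $a(t)+b(t)$ and gives the stated $C\exp(\exp C|t|)$ bound; as elsewhere in the paper no attempt is made to optimize the time growth.

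\textbf{The main obstacle} is the closure of the fermionic estimate within the weak norm $|||\cdot|||_1$: the commutator $[\O_{\xi,\eta}, V*\rho]$, evaluated at a shifted frequency $(\xi+\zeta,\eta+t\zeta)$, carries a factor $\sim(1+|t|)(1+|\xi|+|\eta|)|\zeta|$, so the naive bound produces $|||\cdot|||_2$ on the right-hand side rather than $|||\cdot|||_1$. Overcoming this is precisely the technical input borrowed from \cite{BenedikterPortaSaffirioSchlein2016}: one must exploit the precise symplectic structure $\sigma((\xi,\eta),(\zeta,t\zeta)) = (t\xi-\eta)\cdot\zeta$ together with the decay $\langle\zeta\rangle^2\hat V\in L^1$ -- and, where needed, the trivial a priori bound $|F_t|\le 2M$ -- to reabsorb the loss, at the cost of the time-growing coefficients responsible for the double exponential. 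Well-posedness of \eqref{hartree hartree 2}, which is needed to legitimize the Duhamel manipulations, is provided by Appendix \ref{section analysis PDEs}.
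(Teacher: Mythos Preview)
Your bosonic estimate is correct and coincides with the paper's. The divergence is in the fermionic part, and here there is a genuine gap.

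You conjugate $D(t)$ only by the \emph{free} flow $e^{\i t\hbar\Delta/2}$, which removes the kinetic commutator but leaves the homogeneous term $[V*\rho_{B,1}^\hbar,D]$ in the equation for $F_t$. After expanding via $[\O_{\xi,\eta},\O_{\zeta,t\zeta}]$ and bounding $|F_t(\xi+\zeta,\eta+t\zeta)|\le(1+|\xi+\zeta|+|\eta+t\zeta|)\phi_t$, the factor $|(t\xi-\eta)\cdot\zeta|$ from the sine combines with $(1+|\xi|+|\eta|)$ to produce a coefficient of order $(1+|\xi|+|\eta|)^2$. Dividing by $(1+|\xi|+|\eta|)$ leaves one surplus power, so the Gr\"onwall inequality for $\phi_t$ does \emph{not} close --- exactly the obstacle you name. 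Your proposed cure (symplectic structure, $\langle\zeta\rangle^2\hat V\in L^1$, the trivial bound $|F_t|\le 2M$) is too vague: none of these ingredients, as stated, absorbs the extra $(1+|\xi|+|\eta|)$.

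The paper sidesteps this entirely by conjugating with the \emph{full} Hartree flow $U_{F,1}(t,s)$ generated by $h_1(t)=-\tfrac12\hbar^2\Delta+V*\rho_{B,1}^\hbar(t)$. Then $[h_1,D]$ vanishes identically and only the source $[V*(\rho_{B,1}^\hbar-\rho_{B,2}^\hbar),\omega_2^\hbar]$ survives. Testing against $\O_{\xi,\eta}$, cycling the trace, and Fourier expanding the potential reduces everything to the quantity $\tr\big([e^{\i k\cdot x},\,U_{F,1}^*(t,s)\O_{\xi,\eta}U_{F,1}(t,s)]\,\omega_2^\hbar(s)\big)$, which is controlled by Lemma~\ref{lemma commutator 1}: this lemma (the real ``technical input from \cite{BenedikterPortaSaffirioSchlein2016}'') gives $\le \hbar M\,|k|\,(|\xi|+|\eta|)\,e^{C|t-s|}$, with exactly one power of $(|\xi|+|\eta|)$, so the $|||\cdot|||_1$ estimate closes immediately. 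The double exponential then arises from feeding the $e^{C|t-s|}$ into the coupled Gr\"onwall with $b(t)$. In effect, Lemma~\ref{lemma commutator 1} is itself proved by a separate Gr\"onwall argument; the paper keeps that argument encapsulated, whereas your attempt to merge both Gr\"onwalls into one is what produces the loss you cannot recover.
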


\begin{proof}
	In what follows, we shall drop the $\hbar$ superscript in order to ease the notation. 
	Let us  then 
	consider the unitary evolution groups associated to the solutions 
	$\{ (\omega_i , \vp_i)\}_{i=1,2}$. 
	That is, for $i=1,2$,
	in the notation of Appendix \ref{section analysis PDEs}, 
	we consider 
	\begin{equation}
		\omega_i(t) 
		= 
		U_{F,i}^* (t) \omega_{i.0}
		U_{F,i} (t)
		\qquad
		\t{and}
		\qquad 
		\vp_i(t) = U_{B,i}(t)
		\vp_{i,0}
	\end{equation}
	where $t\in\R$. 
	Then, a straightforward computation 
	using the generators of $U_{F,1}$
	and
	$U_{F,2}$ yields 
	\begin{align}
		i \hbar \frac{d}{dt} 
		U_{F,1}^*(t)
		\Big(
		\omega_{1} (t) -  \omega_{2} (t)
		\Big) 	
		U_{F,1}(t) 
		= 
		U_{F,1}^*  (t)
		\Big[
		V* \big(
		\rho_{B,1} (t)  -  \rho_{B,2} (t) 
		\big)
		, \omega_2(t)  
		\Big] 
		U_{F,1}(t)  \ . 
	\end{align}
	Here, $\rho_{B,i} (t,x) = | \vp_i(t,x)|^2$. 
	Thus, take a semi-classical observable $\O_{\xi , \eta}$ and compute the trace
	\begin{align}
		\nonumber 		
		\tr \O_{\xi , \eta}
		\Big(
		\omega_1 (t) -  & \omega_2 (t)
		\Big)    = 
		\tr \O_{\xi , \eta}
		\big( 
		\omega_{1,0} -  \omega_{2,0}
		\big) \\ 
		\nonumber 
		&  \qquad 
		\frac{ - i }{\hbar}
		\int_0^t
		\tr
		\Big( 
		\O_{\xi , \eta} 
		U_{F,1}(t-s) 
		\big[
		V* \big(
		\rho_{B,1} (s)  -  \rho_{B,2} (s) 
		\big)
		, \omega_2(s)  
		\big] 
		U_{F,1}^*(t-s) 
		\Big) 
		\d s  \ . 
	\end{align}
	Clearly, 
	$ 
	| \tr \O_{\xi , \eta}
	( 
	\omega_{1,0}-  \omega_{2,0}
	) | 
	\leq 
	(1 + | \xi|   +  | \eta|  )  \   |||    \omega_1 (0) - \omega_2 (0) |||_1 
	$.  
	For the second term, 
	we use ciclicity of the trace
	$\tr AB = \tr BA$, 
	combined with 
	the following  general
	Fourier series expansion
	\begin{equation}
		[ V * \rho , \omega ] 
		= 
		\int_{\R^d}
		\hat V (k)
		\hat\rho(k) 
		[e_k , \omega] \d k 
		\qquad
		\t{where}
		\qquad 
		e_k(x) 
		\equiv (2\pi)^{-d/2}
		e^{i k \cdot x} \ . 
	\end{equation}
	We obtain
	\begin{align}
		\nonumber
		| 	\tr
		\Big( 
		& 	\O_{\xi , \eta} 
		U_{F,1}
		(t-s) 
		\big[
		V* \big(
		\rho_{B,1} (s)  -  \rho_{B,2} (s) 
		\big)
		, \omega_2(s)  
		\big] 
		U_{F,1}^*(t-s) 
		\Big) 
		|  \\ 
		\nonumber
		& \leq 
		\int_{	\R^d	}  | \hat V (k)| \ 
		|  \hat  \rho_{B,1} (s,k) - \hat \rho_{B,2} (s,k)    | 
		\  |	 \tr  \Big(
		\big[   	e_k , 
		U_{F,1}^*(t-s)	 
		\O_{\xi , \eta}
		U_{F,1} (t-s)  \big]
		\Big)	 \omega_2 (s) 			| \  d k  \\ 
		& \leq \hbar M  (1 + |\xi| + |\eta| ) 
		\exp (C |t-s |)
		\|	 |k| \hat V	\|_{L^1}
		\|	\hat \rho_{B,1} (s) - \hat\rho_{B,2 }(s)	\|_{L^\infty  } \ . 
	\end{align}
	In the second  line, we   used the commutator estimate  from Lemma \ref{lemma commutator 1}, 
	with 
	$U_\rho  =  U_{F,1}$.
	Next, we use 
	$
	\|	\hat \rho_{B,1}  - \hat\rho_{B,2 }	\|_{L^\infty  }
	\leq 
	\|	  \rho_{B,1}  - \rho_{B,2 }	\|_{L^1 }  \leq  2 \|	 \vp_1 - \vp_2\|_{L^2} $ and analyze the boson fields. 
	A similar argument shows that 
	\begin{align}
		\nonumber 
		\|	 \vp_1 (s) - \vp_2(s)	\|_{L^2}
		& 	\leq 
		\|	 \vp_1 ( 0 ) - \vp_2( 0 )	\|_{L^2}
		+ 
		\int_0^s
		\|    V*(  \rho_{F,1} (r) -  \rho_{F,2} (r)  )    \vp_2 (r)		\|_{L^2} d r
		   \\ 
		\nonumber 
		& \leq 
		\|	 \vp_1 ( 0 ) - \vp_2( 0 )	\|_{L^2}
		+ 
		\|    |k| \hat V  \|_{L^1}	
		\int_0^s \d r   \sup_{ k \in \R^d }
		|k|^{-1} 
		\big|
		\hat  \rho_{F,1} (r,k ) -  \hat \rho_{F,2} (r,k)  
		\big| 	   
		\end{align} 
from which we obtain
\begin{align} 
		\|	 \vp_1 (s) - \vp_2(s)	\|_{L^2}		& \leq 
		\|	 \vp_1 ( 0 ) - \vp_2( 0 )	\|_{L^2}
		+ 
		\frac{ 	\|    |k| \hat V  \|_{L^1}	}{M}
		\int_0^s 
		|||  \omega_{1} (r)   - \omega_2 (r)    |||_{1 } \d r 
	\end{align}
	where we used $\hat \rho_F (k) = M^{-1 } \tr \O_{k,0} \omega $
	We can now close the inequalities and apply the Gronwall inequality.
\end{proof}

\subsubsection{Borrowed Commutator Estimates}
In this subsubsection, we state a result that we used in the proof of the above stability estimate. 
They concern   the propagation-in-time 
of a certain class of commutator estimates.
These results were originally proved in 
\cite[Lemma 4.2]{BenedikterPortaSaffirioSchlein2016}
for 
an interacting system of fermions in the  combined semi-classical
and mean-field regime. 
Since the proofs can be easily adapted to the present case, we shall omit them. 

\vspace{2mm}

In order to state    it, we introduce the following unitary dynamics. 
Namely, 
given a time-dependent position density 
$\rho : \R \rightarrow  L^1(\R^d) $ satisfying
$\rho (t,x )\geq0$
and $\int\rho (t,x) dx =1 $
we consider 
\begin{equation}
	\label{fermion flow}
	\begin{cases}
		&  i \hbar \partial_t 
		U_\rho 
		(t,s ) 
		=
		\Big(
		- \frac{\hbar^2}{2} \Delta
		+ 
		V*\rho(t) 
		\Big)
		U_\rho  (t,s) \\ 
		& 
		U_\rho   (t,t) = \1  \ . 
	\end{cases} 
\end{equation}
In the interacting fermion system, one chooses
$\rho(t,x) = N^{-1 } \omega (t ; x,x)$
whereas in our case, 
the density corresponds to that of bosons
$\rho(t,x) = | \vp(t,x)|^2$.
Since the estimates are quite robust with respect to the choice of density, 
we state here the result for arbitrary $\rho$.  One only needs 
$\sup_{t}  |\hat \rho (t )|_{L^\infty}< \infty$.

\begin{lemma} 
	\label{lemma commutator 1}
	Assume that 
	$\int_{\R^d}  \< \xi\>^2 |\hat V (\xi) |\d \xi  < \infty$
	and let $ U_\rho (t,s) $
	be the unitary flow defined in \eqref{fermion flow}. 
	Then, there exists $C>0$ such that 
	for all $(\xi , \eta ) \in \R^d \times \R^d $  
	and $t,s \in \R $
	there holds 
	\begin{equation}
		\sup\Biggl\{
		\frac{1}{ | k | } 
		\Big|  
		\mathrm{Tr} 
		\big[
		e^{ i k \cdot x }, 
		U_\rho^* (t,s) 
		\O_{\xi , \eta}
		U_\rho  (t,s)
		\big]
		\omega 
		\Big|
		\ : \ 
		k \in \R^d \ , 
		\mathrm{Tr}|\omega| \leq 1
		\Biggr\}
		\leq 
		\hbar ( | \xi| + |\eta |) 
		e^{ C  |t -s |   }
	\end{equation}
	where $ \O_{\xi , \eta}$ is the semi-classical observable   defined in \eqref{semiclassical operator}. 
\end{lemma}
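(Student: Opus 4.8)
The plan is to reduce the trace to an operator-norm bound and then close a Gr\"onwall inequality on the commutators of the Heisenberg-evolved position translation; this follows \cite[Lemma 4.2]{BenedikterPortaSaffirioSchlein2016} line for line, the only new point being that the density enters solely through the uniform bound $\|\widehat{\rho(t)}\|_{L^\infty}\le\|\rho(t)\|_{L^1}=1$. First I would use that $\sup\{|\mathrm{Tr}(B\omega)|:\mathrm{Tr}|\omega|\le1\}=\|B\|_{B(L^2)}$ for bounded $B$, so that the assertion is equivalent to
$$
\sup_{k\ne0}|k|^{-1}\big\|[e^{ik\cdot x},U_\rho^*(t,s)\O_{\xi,\eta}U_\rho(t,s)]\big\|_{B(L^2)}\le\hbar(|\xi|+|\eta|)e^{C|t-s|}.
$$
Conjugating the bracket by the unitary $U_\rho(t,s)$ rewrites it as $[\mathcal X_k(t),\O_{\xi,\eta}]$, where $\mathcal X_k(t):=U_\rho(t,s)e^{ik\cdot x}U_\rho^*(t,s)$ solves $i\hbar\partial_t\mathcal X_k=[H(t),\mathcal X_k]$ with $\mathcal X_k(s)=e^{ik\cdot x}$ and $H(t)=-\tfrac{\hbar^2}{2}\Delta+V*\rho(t)$. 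Since $\O_{\xi,\eta}=c_{\xi,\eta}\,e^{i\xi\cdot x}e^{i\eta\cdot p}$ with $|c_{\xi,\eta}|=1$, and $e^{i\eta\cdot p}=e^{\hbar\eta\cdot\nabla}$ so that the fundamental theorem of calculus along $\sigma\mapsto e^{\sigma\hbar\eta\cdot\nabla}$ gives $\|[\mathcal X_k(t),e^{i\eta\cdot p}]\|\le|\eta|\max_j\|[\hbar\partial_{x_j},\mathcal X_k(t)]\|$, the whole estimate reduces to controlling the ``position'' commutators $[\mathcal X_k(t),e^{iv\cdot x}]$ and the ``momentum'' commutators $[\hbar\partial_{x_j},\mathcal X_k(t)]$.

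These two families are coupled. The position commutator is a conjugation difference, $[\mathcal X_k(t),e^{iv\cdot x}]=\big(\mathcal X_k(t)-\mathcal X_k^{[v]}(t)\big)e^{iv\cdot x}$, where $\mathcal X_k^{[v]}(t)$ is the Heisenberg evolution of $e^{ik\cdot x}$ under the $e^{iv\cdot x}$-conjugated Hamiltonian $H^{[v]}(t)=H(t)-\hbar v\cdot p+\tfrac{\hbar^2}{2}|v|^2$ (and $\mathcal X_k^{[v]}(s)=\mathcal X_k(s)=e^{ik\cdot x}$). A Duhamel in the shift parameter — in which the scalar $\tfrac{\hbar^2}{2}|v|^2$ drops out of the commutator and the $1/\hbar$ from $i\hbar\partial_t$ cancels the $\hbar$ in $\hbar v\cdot p$ — gives $\|\mathcal X_k(t)-\mathcal X_k^{[v]}(t)\|\le|v|\int_s^t\max_j\|[\hbar\partial_{x_j},\mathcal X_k^{[v]}(r)]\|\,dr$. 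For the momentum commutator, $[\hbar\partial_{x_j},\mathcal X_k^{[v]}(t)]$ solves $i\hbar\partial_t[\hbar\partial_{x_j},\mathcal X_k^{[v]}]=[H^{[v]}(t),[\hbar\partial_{x_j},\mathcal X_k^{[v]}]]-\hbar[\partial_{x_j}(V*\rho)(t),\mathcal X_k^{[v]}(t)]$ — the kinetic term and the shift $-\hbar v\cdot p$ both commute with $\hbar\partial_{x_j}$, so only the potential sources it — with $\|[\hbar\partial_{x_j},\mathcal X_k^{[v]}(s)]\|=\|[\hbar\partial_{x_j},e^{ik\cdot x}]\|\le\hbar|k|$; treating the transport term as a unitary conjugation (legitimate since the source is bounded) and expanding $\partial_{x_j}(V*\rho)(t)=\int i m_j\widehat V(m)\widehat{\rho(t)}(m)e^{im\cdot x}\,dm$, one gets $\|[\hbar\partial_{x_j},\mathcal X_k^{[v]}(t)]\|\le\hbar|k|+\|\langle m\rangle^2\widehat V\|_{L^1}\int_s^t\sup_{v'\ne0}|v'|^{-1}\|[\mathcal X_k^{[v]}(r),e^{iv'\cdot x}]\|\,dr$, using $|\widehat{\rho(t)}(m)|\le1$. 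The key observation is that conjugating $\mathcal X_k^{[v]}$ by $e^{iv'\cdot x}$ again produces the same structure, $\mathcal X_k^{[v+v']}$; so setting $\alpha(t):=\sup_{v,v'\ne0}|v'|^{-1}\|[\mathcal X_k^{[v]}(t),e^{iv'\cdot x}]\|$ and $\beta(t):=\sup_{v,j}\|[\hbar\partial_{x_j},\mathcal X_k^{[v]}(t)]\|$, the two bounds become the closed system $\alpha(t)\le\int_s^t\beta(r)\,dr$ and $\beta(t)\le\hbar|k|+\|\langle m\rangle^2\widehat V\|_{L^1}\int_s^t\alpha(r)\,dr$. Since $\alpha(s)=0$ and $\beta(s)\le\hbar|k|$, Gr\"onwall yields $\alpha(t),\beta(t)\le\hbar|k|e^{C|t-s|}$ with $C$ depending only on $\|\langle m\rangle^2\widehat V\|_{L^1}$, whence $\|[\mathcal X_k(t),\O_{\xi,\eta}]\|\le|\xi|\alpha(t)+|\eta|\beta(t)\le\hbar|k|(|\xi|+|\eta|)e^{C|t-s|}$, which is the claim.

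The step I expect to require genuine care — rather than being purely routine — is the rigorous handling of the unbounded operators ($\hbar\partial_{x_j}$, the kinetic energy, $\hbar v\cdot p$): all of the identities above should be run on a common invariant dense core such as $\mathscr S(\R^d)$ (preserved by the flows $U_\rho^{[v]}$ under the assumed decay $\int\langle\xi\rangle^2|\widehat V(\xi)|\,d\xi<\infty$), the Duhamel representations must be set up so that the transport commutators $[H^{[v]}(t),\cdot]$ act as unitary conjugations rather than being estimated directly, and the finiteness of $\alpha,\beta$ should be established on short time intervals before iterating the Gr\"onwall bound. Since all of this is identical to \cite[Lemma 4.2]{BenedikterPortaSaffirioSchlein2016} apart from the harmless substitution of $\rho(t,\cdot)=|\varphi(t,\cdot)|^2$ (entering only via $\|\widehat{\rho(t)}\|_{L^\infty}\le1$), I would invoke that result directly.
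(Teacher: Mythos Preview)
Your proposal is correct and takes exactly the approach the paper takes: the paper omits the proof entirely, stating only that it is an easy adaptation of \cite[Lemma 4.2]{BenedikterPortaSaffirioSchlein2016} with the sole change that the density $\rho$ now comes from the boson field rather than the fermion self-interaction, which is harmless since one only uses $\sup_t\|\widehat{\rho(t)}\|_{L^\infty}<\infty$. Your sketch is in fact more detailed than what the paper provides.
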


	\subsection{Quantum Optimal Transportation}
	A lot of technology has been developed in the last decade in the context of Quantum Optimal Transporation,
	by means of the introduction of relevant pseudo-metrics.
	In this paper, we apply and adapt some of these results to the study of the problem at hand.

	\vspace{2mm}
	In this context, one is given a probability measure $f \in \mathscr{P}(\R^{2d})$
	describing a classical state, and wishes to compare it to a quantum state $\omega$,
	belonging to the space
	\begin{equation}
		\label{P h}
		\mathcal P (L^2(\R^{d}))
		\equiv
		\Bigl\{
		\omega \in \mathscr L ^1  (  L^2(\R^d) )
		\, : \, 
		\omega = \omega^* \geq 0 
		,  \ 
		\tr \omega = M = \hbar^{-d}
		\Bigr\} \ . 
	\end{equation}
	Let us immediately note here that our normalization is different than most results on the literature, 
	where one considers trace-class operators $\calR  $
	with  $\tr \calR = 1 $.
	This is of course only a matter of scaling, 
	and the passage from one to the other is   given by  $\calR = \hbar^d \omega$. 
	In particular, we choose the normalization in \eqref{P h}
	because it is more natural for the problem at hand. 
	
	\vspace{2mm}
	Let us now introduce the concept of a \textit{coupling}, between classical and quantum states		
	\begin{definition}[Coupling]
		Given $f \in \mathscr P (\R^{2d})$ 
		and
		$
		\omega \in 
		\mathcal P (L^2(\R^{d}))
		$, 
		we say that the operator-valued function
		$Q : \R^{2d}_{x,p} \rightarrow \mathscr{B}( L^2(\R^{2d}) )$
		is a \textbf{coupling} for $f$ and $\omega$, if the following is satisfied 
		\begin{enumerate}
			\item  For almost every $(x,p) \in \R^{2d}$ there holds $Q(x,p) = Q(x,p)^* \geq 0 $.
			\item $\int_{\R^{2d} } Q(x,p) \d x \d p = \omega .$
			\item $\tr Q(x,p) = \hbar^{-d } \, f(x,p).$
		\end{enumerate}
		The set of all couplings between $f $ and $\omega$ is denoted by $\mathcal C (f,\omega). $
	\end{definition}
	
	Throughout this section, 
	we denote by $\hat x : \mathscr D (\hat x) \subset L^2(\R^d) \rightarrow L^2(\R^d)$ 
	the standard multiplication operator on $L^2$, 
	and similarly for $\hat p = - i \hbar \nabla_x   $ on $H^1(\R^d).$
	We introduce the following cost function, 
	taking values 
	in the space of unbounded self-adjoint 
	operators on $L^2(\R^d)$
	\begin{equation}
		c_\hbar (x,p) \equiv \frac{1}{2} (x - \hat x)^2 +  \frac{1}{2} ( p - \hat p)^2  \ , \qquad (x,p)\in \R^{2d}  , 
	\end{equation}
	initially defined on $\mathscr S (\R^d)$ and then closed in $L^2$.
	Let us recall  that  we denote by $\mathscr{P}_n(\R^{2d})$
	the space of measures with finite $n$ moments.
	Similarly, we denote 
	by 
	$
	\mathcal P _n (L^2(\R^{d}))
	$
	the space of 
	quantum states $\omega 	 \in 	\mathcal P  (L^2(\R^{d}))$
	such that 
	$ \tr \big[ \sqrt \omega    ( \hat x^2 + \hat p ^2)^n   \sqrt \omega \big]  < \infty \ .  $

	\begin{definition}[Quantum Wasserstein]
		For all $f \in \mathscr P _2 (\R^d)$ 
		and 
		$\omega \in 		\mathcal P _2 (L^2(\R^{d})) $
		we define the \textbf{second quantum Wasserstein distance}
		as the quantity
		\begin{equation}
			E_\hbar (f, \omega)
			\equiv
			\hbar^{d/2}
			\inf_{Q \in \mathcal C (f,\omega)}
			\Big(
			\int_{	\R^{2d}	} \tr Q(x,p) c_\hbar (x,p)
			\d x \d p 
			\Big)^{1/2}
			\in [0,\infty] \ . 
		\end{equation}
	\end{definition}
	
	Up to scaling, 		the functional $E_\hbar$ has been the object of several studies in recent years.  
	In particular, it has been proven that it is a natural object to study when comparing the dynamics 
	of the Hartree and Vlasov equation, for system of interacting fermions.
	We adapt the proof of 
	\cite[Theorem 2.5]{GolsePaul2017}
	for the present case of interest.

	\begin{theorem}
		\label{proposition wasserstein}
		Assume $V\in C^{1,1} (\R^d,\R)$.	
		Let $(\omega_\hbar (t),\vp_\hbar (t))$ solve the Hartree-Hartree equation \eqref{hartree hartree 2} with initial data
		$(\omega_0^\hbar  , \vp_0^\hbar)$.
		Further, let 
		$(f (t) , \vp(t))$ solve the Vlasov-Hartree equation, with initial data
		$(f_0 , \vp_0) $. 
		Then, there exists $C = C (V)>0$ such that for all $ t \in \R$ there holds 
		\begin{equation}
			\label{theorem 3 rhs}
			E_\hbar  \big(  f(t) , \omega_\hbar (t) \big)
			+ 
			\|	 \vp(t)   -  \vp_\hbar (t)	\|_{L^2} 
			\leq 
			C\exp(C t^2 )
			\Big(
			E_\hbar  
			(f_0 , \omega_0^\hbar )
			+ 
			\|	 \vp_0   -  \vp_0^\hbar	\|_{L^2} 
			\Big) \ . 
		\end{equation}
	\end{theorem}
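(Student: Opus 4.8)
The plan is to adapt the coupling method of \cite{GolsePaul2017} to the two‑species setting, turning the estimate into a \emph{coupled} Gr\"onwall argument for the pair $\big(E_\hbar(f(t),\omega_\hbar(t)),\,\|\vp(t)-\vp_\hbar(t)\|_{L^2}\big)$. The two quantities feed into one another: the quantum Wasserstein distance controls the source term in the $L^2$‑estimate for the boson field (through the difference of fermion densities $\rho_F-\rho_F^\hbar$), while $\|\vp-\vp_\hbar\|_{L^2}$ controls the discrepancy between the forces driving the classical and the quantum fermion dynamics (through $\rho_B-\rho_B^\hbar=|\vp|^2-|\vp_\hbar|^2$). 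Throughout I would use the well‑posedness and moment bounds for the Hartree--Hartree and Vlasov--Hartree flows from Appendix \ref{section analysis PDEs}: recall $\tr\omega_0^\hbar=\hbar^{-d}$, $f_0\in\mathscr P_2$, and $V\in C^{1,1}$, so that $E_\hbar(f_0,\omega_0^\hbar)<\infty$ and the forces $-\nabla(V*|\vp|^2)$, $-\nabla(V*|\vp_\hbar|^2)$ are globally Lipschitz.

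\textbf{Transporting a coupling.} First I would fix $\varepsilon>0$, choose an $\varepsilon$‑optimal $Q_0\in\calC(f_0,\omega_0^\hbar)$, and transport it along the two dynamics. Let $\Phi_t$ be the characteristic flow of the Vlasov--Hartree equation ($\dot x=p$, $\dot p=F_B(t,x)$); it preserves Lebesgue measure on $\R^{2d}$ and $f(t,\cdot)=f_0\circ\Phi_t^{-1}$. Let $U_F(t)$ be the unitary propagator of $h_F^\hbar(t)=-\tfrac{\hbar^2}{2}\Delta+V*|\vp_\hbar(t)|^2$, so that $\omega_\hbar(t)=U_F(t)\,\omega_0^\hbar\,U_F(t)^*$. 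Then I would set $Q(t,x,p)\equiv U_F(t)\,Q_0\big(\Phi_t^{-1}(x,p)\big)\,U_F(t)^*$; using unitarity of $U_F(t)$, the unit Jacobian of $\Phi_t^{-1}$, and the two identities just recalled, one checks that $Q(t)\in\calC\big(f(t),\omega_\hbar(t)\big)$, whence $E_\hbar(f(t),\omega_\hbar(t))^2\le \mathcal D(t):=\hbar^{d}\int_{\R^{2d}}\tr\big[Q(t,x,p)\,c_\hbar(x,p)\big]\,\d x\,\d p$, and $\mathcal D(0)\le E_\hbar(f_0,\omega_0^\hbar)^2+\varepsilon$.

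\textbf{The two differential inequalities.} Because $U_F(t)$ does not depend on $(x,p)$, the transported coupling solves $\partial_t Q=-\tfrac{i}{\hbar}[h_F^\hbar(t),Q]-p\cdot\nabla_xQ-F_B(t,x)\cdot\nabla_pQ$. Differentiating $\mathcal D(t)$, using cyclicity of the trace on the commutator term, integrating by parts in $(x,p)$ on the transport terms, and computing $[c_\hbar,h_F^\hbar]$ from $[\hat x,\hat p]=i\hbar$, the terms reorganize into
\begin{equation}
\tfrac{d}{dt}\mathcal D(t)=\hbar^{d}\!\int\tr\Big[Q(t,x,p)\Big(\tfrac12\{x-\hat x,\,p-\hat p\}+\tfrac12\{p-\hat p,\,F_B(t,x)-F_B^\hbar(t,\hat x)\}\Big)\Big]\,\d x\,\d p ,
\end{equation}
with $F_B^\hbar(t,\cdot)=-\nabla(V*|\vp_\hbar(t)|^2)$. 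Cauchy--Schwarz for the trace bounds the first term by $2\mathcal D(t)$. Splitting $F_B(t,x)-F_B^\hbar(t,\hat x)=\big(F_B(t,x)-F_B(t,\hat x)\big)+\big(F_B(t,\hat x)-F_B^\hbar(t,\hat x)\big)$: the Taylor remainder bounds the first piece by $\|\nabla^2(V*|\vp(t)|^2)\|_{L^\infty}\,|x-\hat x|\le C\big(\int\langle\xi\rangle^2|\hat V|\big)|x-\hat x|$ (here the hypothesis on $\hat V$ and $\|\vp(t)\|_{L^2}=1$ enter), contributing $\lesssim\mathcal D(t)$; the second piece is a bounded multiplication operator of norm $\le C\big\||\vp(t)|^2-|\vp_\hbar(t)|^2\big\|_{L^1}\le C\|\vp(t)-\vp_\hbar(t)\|_{L^2}$, contributing $\lesssim\|\vp(t)-\vp_\hbar(t)\|_{L^2}\,\mathcal D(t)^{1/2}$ after Cauchy--Schwarz and $\int\tr Q=\hbar^{-d}$. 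Hence $\tfrac{d}{dt}\mathcal D\le C\mathcal D+C\|\vp-\vp_\hbar\|_{L^2}\mathcal D^{1/2}$. For the boson field, a standard $L^2$ energy estimate for $w=\vp-\vp_\hbar$ (both solve Schr\"odinger equations with real potentials) gives $\tfrac{d}{dt}\|w(t)\|_{L^2}\le \|V*(\rho_F-\rho_F^\hbar)(t)\|_{L^\infty}$, where $\rho_F(t,x)=\int f(t,x,p)\,\d p$ and $\rho_F^\hbar(t,x)=M^{-1}\omega_\hbar(t;x,x)$; writing $\widehat{\rho_F-\rho_F^\hbar}(t,\xi)=\hbar^{d}\int\tr\big[(e^{i\xi\cdot x}-e^{i\xi\cdot\hat x})Q(t,x,p)\big]\,\d x\,\d p$ and using $|e^{i\xi\cdot x}-e^{i\xi\cdot y}|\le|\xi||x-y|$ yields $|\widehat{\rho_F-\rho_F^\hbar}(t,\xi)|\le\sqrt2\,|\xi|\,\mathcal D(t)^{1/2}$, hence $\tfrac{d}{dt}\|w(t)\|_{L^2}\le C\mathcal D(t)^{1/2}$ (finite since $\int\langle\xi\rangle^2|\hat V|<\infty$). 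Running Gr\"onwall on $\mathcal D(t)^{1/2}+\|w(t)\|_{L^2}$ and then letting $\varepsilon\downarrow0$ produces \eqref{theorem 3 rhs}.

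\textbf{Main obstacle.} The delicate points are, first, making the transported coupling $Q(t)$ genuinely admissible and differentiable in time in the presence of the time‑dependent potential $V*|\vp_\hbar(t)|^2$ (this rests on the propagation of phase‑space moments and on the regularity of the Hartree--Hartree solution), and second, the commutator bookkeeping in the second step, where the operator‑ordering and $O(\hbar)$ corrections arising from $[\hat x,\hat p]=i\hbar$ must be shown to be absorbed into $\mathcal D(t)$ rather than generating uncontrolled errors, exactly as in the quadratic‑cost framework of \cite{GolsePaul2017}. The genuinely new difficulty, compared with the single‑species semiclassical limit, is the two‑way coupling: one cannot estimate $E_\hbar(f(t),\omega_\hbar(t))$ in isolation, since its evolution depends on $\|\vp(t)-\vp_\hbar(t)\|_{L^2}$, and conversely; the resolution is to close the two Gr\"onwall inequalities simultaneously, which is why the argument tracks the sum $\mathcal D(t)^{1/2}+\|\vp(t)-\vp_\hbar(t)\|_{L^2}$.
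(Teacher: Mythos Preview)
Your proposal is correct and follows essentially the same route as the paper: transport an (almost) optimal coupling $Q_0$ by the pair (Vlasov characteristics, Hartree unitary), differentiate the cost $\mathcal D(t)=\hbar^d\int\tr[Qc_\hbar]$, split the force discrepancy $F_B(t,x)-F_B^\hbar(t,\hat x)$ into a Lipschitz piece and a piece controlled by $\|\vp-\vp_\hbar\|_{L^2}$, and close with the $L^2$ energy estimate for the boson field using that the coupling also controls $\rho_F-\rho_F^\hbar$. The paper organizes the closing step slightly differently: it uses $A^*B+B^*A\le A^*A+B^*B$ to obtain $\tfrac{d}{dt}\mathcal E_\hbar\le C\mathcal E_\hbar+C\|\vp-\vp_\hbar\|_{L^2}^2$, then \emph{squares} the boson Duhamel estimate via Cauchy--Schwarz, picking up a factor of $t$ and hence $\exp(Ct^2)$. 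Your version keeps the cross term $\|\vp-\vp_\hbar\|_{L^2}\,\mathcal D^{1/2}$ and runs Gr\"onwall directly on $\mathcal D^{1/2}+\|\vp-\vp_\hbar\|_{L^2}$, which in fact yields the sharper growth $\exp(Ct)$ (still implying \eqref{theorem 3 rhs}); note that $\mathcal D(t)\ge d\hbar/2>0$ by the harmonic-oscillator lower bound on $c_\hbar$, so differentiating $\mathcal D^{1/2}$ is unproblematic. One small remark: in your Lipschitz bound for $F_B$ you invoke $\int\langle\xi\rangle^2|\hat V|<\infty$, whereas the theorem as stated only assumes $V\in C^{1,1}$; the paper uses $|\nabla V|_{\mathrm{Lip}}$ directly at that step, which is the cleaner choice here.
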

	
	Two questions remain. 
	The first one is: what norms are    bounded above by  $E_\hbar (f,  \omega_\hbar ) $? 
	The second one is: is the right hand side of \eqref{theorem 3 rhs} small when $\hbar \downarrow 0 $? 
	Neither of these questions has a trivial answer. 
	Fortunately, they have already been answered quite recently in the literature.
	In order to state it, we must introduce
	the following Gaussian mollifier at scale $\hbar>0$
	\begin{equation}
		\mathscr G_{\hbar} (z) 
		\equiv \hbar^{ -d/4}
		\mathscr G_1( \hbar^{-1 /2} z)
		\quad
		\t{with}
		\quad 
		\mathscr G_1 (z)
		\equiv 
		\pi^{-d/4}
		\exp(-z^2/2)
	\end{equation}
	for $z = (x,p) \in \R^{2d}$. 
	In our notation and scaling, we record the relevant results in the following lemma.

	\begin{lemma} 
		\label{lemma wasserstein}
		The following statements are true. 
		\begin{enumerate}[leftmargin=*]
			\item 
			
			\cite[Corollary 1.1]{Lafleche2023}	
			Let $d \geq 2 $, 
			$\omega \in \mathcal P_2 (L^2(\R^d))$ 
			and 
			$f \in \mathscr P_2 (\R^{2d})$.
			Assume that $  \omega \leq 1$. 
			Then, there exists $C = C(d)$ such that 
			\begin{equation}
				\|	 f -  W_\hbar [\omega ]		\|_{ \dot H ^{-1}   }
				\leq 
				E_\hbar [f , \omega]
				+ 
				C \sqrt \hbar \ . 
			\end{equation}
			
			\item 
			\cite[Theorem 2.4]{GolsePaul2017} 
			For all $ f \in \mathscr{P}_2(\R^{2d})$
			there holds
			\begin{equation}
				E_\hbar [f,  \t{Op}_w^\hbar (f * \mathscr{G}_\hbar )  ] \leq \sqrt{d\, \hbar}.	
			\end{equation}

		\end{enumerate}
	\end{lemma}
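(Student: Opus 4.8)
The plan is to obtain both inequalities as essentially immediate consequences of results already in the literature, the one genuine task being a careful translation of conventions: the paper normalizes trace-class operators by $\tr\omega = M = \hbar^{-d}$, whereas the cited works use unit trace. So first I would fix the dictionary. Given $\omega\in\mathcal P_2(L^2(\R^d))$ I would set $\mathcal R \equiv \hbar^d\omega$, note $\mathcal R=\mathcal R^*\ge0$ and $\tr\mathcal R = 1$, and check via the substitution $y=\hbar\xi$ in the defining Fourier integral that $W^\hbar[\omega]$ is exactly the standard mass-one Wigner transform of $\mathcal R$. Rescaling any $Q\in\calC(f,\omega)$ to $\widetilde Q\equiv\hbar^d Q$ yields $\int\widetilde Q = \mathcal R$, $\tr\widetilde Q(x,p)=f(x,p)$, and one verifies that the $\hbar^{d/2}$ prefactor in the definition of $E_\hbar$ is chosen precisely so that $E_\hbar(f,\omega)$ equals the quantum optimal-transport cost $MK_\hbar(f,\mathcal R)$ (equivalently $W_{2,\hbar}$) of \cite{GolseMouhotPaul2016,GolsePaul2017,Lafleche2023}. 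I would also record that $0\le\omega\le1$ is the Pauli bound for $\mathcal R$ after this scaling and that $2d\ge4>2$ makes $\dot H^{-1}(\R^{2d})$ well behaved on mass-balanced densities, so that both hypotheses $\omega\le1$ and $d\ge2$ are accounted for.

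With the dictionary in place, statement $(1)$ is \cite[Corollary 1.1]{Lafleche2023}, and for completeness I would include a short recollection of its mechanism. Introduce the Husimi transform $\widetilde f^\hbar$ of $\omega$ — that is $W^\hbar[\omega]$ convolved with a Gaussian of width $\sqrt\hbar$ — which is a genuine probability density on $\R^{2d}$, and split $\|f-W^\hbar[\omega]\|_{\dot H^{-1}}\le\|f-\widetilde f^\hbar\|_{\dot H^{-1}}+\|\widetilde f^\hbar-W^\hbar[\omega]\|_{\dot H^{-1}}$. The first summand is handled by the classical inequality bounding $\dot H^{-1}$ by the quadratic Wasserstein distance, hence by $W_2(f,\widetilde f^\hbar)$, which the quantum transport cost dominates up to the coherent-state spreading, giving $\le E_\hbar(f,\omega)+C\sqrt\hbar$. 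The second summand is $\O(\sqrt\hbar)$: convolution with a unit-mass kernel of width $\sqrt\hbar$ multiplies the Fourier transform by $1-\widehat{\mathscr G}(\sqrt\hbar\,\zeta)$, whose modulus is $\O(\sqrt\hbar\,|\zeta|)$, so the term is $\le C\sqrt\hbar\,\|W^\hbar[\omega]\|_{L^2}$, and here the Pauli bound enters through $\|W^\hbar[\omega]\|_{L^2}^2=(2\pi\hbar)^{-d}\|\mathcal R\|_{HS}^2\le(2\pi)^{-d}$, using $\|\omega\|_{HS}^2=\tr\omega^2\le\tr\omega=\hbar^{-d}$.

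Statement $(2)$ is \cite[Theorem 2.4]{GolsePaul2017}: after the rescaling above, $\t{Op}_w^\hbar(f*\mathscr G_\hbar)$ is the anti-Wick (T\"oplitz) quantization of $f$, and I would recall that the bound comes from feeding into the infimum defining $E_\hbar$ the explicit coupling that to each $(x,p)$ assigns a multiple of the rank-one projector onto the Gaussian coherent state centred at $(x,p)$ at scale $\hbar$; for that state the cost $c_\hbar(x,p)$ has expectation independent of $(x,p)$ and equal to the sum of the $2d$ coordinate variances $\hbar/2$, so that integrating against the probability density $f*\mathscr G_\hbar$ produces the constant $\sqrt{d\,\hbar}$.

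The hard part here is not analytic but bookkeeping: one must be scrupulous about the various normalizations — in particular the $\hbar^{d/2}$ prefactor of $E_\hbar$, the $L^2$-normalized Gaussian $\mathscr G_\hbar$, and the relation between Weyl and T\"oplitz symbols — so that the cited inequalities come out with exactly the stated powers of $\hbar$ and the constants $C(d)$ and $\sqrt{d\hbar}$. The only point requiring slight care beyond bookkeeping is checking that the $\O(\sqrt\hbar)$ discrepancy between the Wigner and Husimi transforms in $(1)$ is genuinely absorbable, which is precisely what the $L^2$-bound on $W^\hbar[\omega]$ — and hence the hypothesis $\omega\le1$ — is for.
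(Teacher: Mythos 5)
Your proposal matches the paper's treatment: Lemma \ref{lemma wasserstein} is established there purely by citing \cite[Corollary 1.1]{Lafleche2023} and \cite[Theorem 2.4]{GolsePaul2017} after the normalization remark $\calR=\hbar^d\omega$, and your dictionary (rescaling couplings by $\hbar^d$ so that the $\hbar^{d/2}$ prefactor turns $E_\hbar$ into the unit-trace pseudo-distance of the cited works, with the Wigner transform becoming the standard mass-one one) is exactly this translation, carried out correctly. Your additional sketches of the internal mechanisms are extra but consistent, up to minor bookkeeping (in part (1) the Pauli bound $\omega\le 1$ is also what makes the Husimi density bounded for the $\dot H^{-1}$-versus-$W_2$ comparison, not only the $L^2$ control of the Wigner function), which does not affect the correctness of the lemma as invoked.
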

	
	\begin{remark}[Anti-Wick Quantization]
		So far, we have introduced the Weyl quantization map, as
		the inverse of the Wigner transform.
		As is well-known, this quantization map does not preserve positivity.
		On the other hand, the anti-Wick (or, Toeplitz) quantization does, and is defined as follows. 
		Given a phase-space distribution
		$\mu \in \mathscr P(\R^{2d})$, 
		one defines
		on $L^2(\R^{d})$ the bounded, self-adjoint operator 
		\begin{equation}
			\t{Op}^{\hbar}_{\t{aw}}
			(\mu) 
			\equiv 
			\t{Op}^\hbar_w 
			(\mu * \mathscr G _\hbar )
			=
			\hbar^{-d}
			\int_{\R^{2d}}
			\ket{      f_{x,p}^\hbar   }
			\bra{f_{x,p}^\hbar }
			\d \mu(x,p)
		\end{equation}
		where $f_{x,p}^\hbar(y) \equiv  \hbar^{-d/4} g (  \hbar^{-1/2} (x-y)  ) \exp(i p \cdot y / \hbar)$		
		is a coherent state at scale $\hbar>0$, 
		with an $L^2$-normalized Gaussian profile $g $. 
		Note that preservation of positivity follows immediately from the last formula, 
		since it is the convex combination of positive operators (in this case, orthogonal projections). 
		In particular, it follows from the previous Lemma
		that
		$E_\hbar \big[      f ,   \t{Op}^{\hbar}_{\t{aw}}
		(f)   \big] \leq \hbar^{1/2}$. 
	\end{remark}
	Let us now prove Theorem  \ref{proposition wasserstein}.
	Since the   proof is a simple adaptation of that of 
	\cite[Theorem 2.5]{GolsePaul2017} 
	in which we replace the self-interacting term, 
	with the interaction with the boson field $\vp(t)$, 
	we only provide the sketch of the proof and refer to the original reference for the details. 
		\begin{proof}[Proof of Theorem \ref{proposition wasserstein}]
		First,  we   introduce some notation we shall make use of throughout the proof. 
		Namely, letting  
		$( \omega^\hbar (t) , \vp^\hbar (t))$ 
		and
		$(f(t) ,\vp(t))$
		be as in the statement of the Theorem, we let 
		the bosonic densities be 
		\begin{equation}
			\rho_f(t,x) \equiv 
			|\vp (t,x)|^2 
			\quad
			\t{and}
			\quad 
			\rho_\omega (t,x)
			\equiv
			|\vp^\hbar  (t,x)|^2 
			\ , \qquad \forall (t,x) \in \R\times \R^d  \ . 
		\end{equation}
		The notation is made such that they enter the equations for $f$ and $\omega$, respectively.

		\vspace{2mm}

		\textit{Step 1. Propagation of moments}.
		Let $f_0 \in \mathscr{P}_2(\R^{2d})$ be the initial datum of the Vlasov-Hartree system.
		Then,  it holds true
		that $f(t) \in \mathscr{P}_2 (\R^{2d})$. 
		Indeed, let $\Phi_\vp$ be the ODE characteristics map 
		from Theorem \ref{theorem wp 2}. 
		Then, the claim follows from the formula $f(t,z) = f_0 \circ \Phi_{\vp}^{-1}(t,z)$ and changing variables
		$z = (x,p)\in \R^{2d}$ in 
		\begin{equation}
			\int_{	\R^{2d}	} |z|^2  f(t,z) \d z 
			= 
			\int_{	\R^{2d}	}   |    \Phi_\vp (t,z) |^2  f_0(z) \d z  
			\leq 
			\kappa(t)^2 
			\int_{	\R^{2d}	}  |z|^2     f_0(z) \d z   
			< \infty \ . 
		\end{equation}
		Here, we have used that the estimate
		$
		| \Phi_\vp (t,z)| \leq  \kappa(t)  |z|,
		$ 
		see Remark \ref{remark grow}. 
		\vspace{2mm}
		
		\textit{Step 2. Choice of  Coupling}. 
		Given $Q_0^\hbar \in \mathcal C (f_0 , \omega_0)$
		we consider $Q^\hbar(t)$ to be the time-dependent coupling solving the PDE 
		\begin{equation}
			\label{coupling dynamics}
			\begin{cases}
				& 	\partial_t Q^\hbar 
				+ 
				\{ 
				\frac{1}{2}  p^2  + V*\rho_f (t,x)  , 
				Q^\hbar 
				\}
				+ 
				\frac{i}{\hbar}
				\big[
				\frac{1}{2}\hat p^2
				+ 
				V* \rho_\omega (t,\hat x)  , Q_\hbar 
				\big] = 0 \   ,   \\
				& Q^\hbar (0)  =  Q^\hbar_0 \ . 
			\end{cases}
		\end{equation}
		Here, we denote by $ \{ F,G\}  = \nabla_x F \nabla_p G - \nabla_p F \nabla_x G$ the Poisson bracket of two classical observables, 
		and we recall that $\hat p = - i \hbar \nabla_x$ and $\hat x$ are 
		the standard momentum and position observables in $L^2(\R^d)$.
		One may define $Q^\hbar (t)$  by means of   conjugation of a unitary map, and composition of a vector field--see for details. 
		In particular, it follows from  such representation and    
		\cite[Lemma 4.2]{GolsePaul2017}
		that this procedure actually defines a coupling between $f(t)$ and $\omega^\hbar (t)$. 
		Namely,  $Q_\hbar (t) \in \mathcal{C} (f(t) , \omega^\hbar  (t))$ for all $t\in \R $.

		\vspace{2mm}
		
		\textit{Step 3. Dynamics of the coupling.}
		We now estimate the growth of the second order moments of the coupling $Q_\hbar (t)$. 
		Namely, we define 
		\begin{equation}
			\mathcal E _\hbar (t)
			\equiv 
			\hbar^d 
			\int_{	\R^{2d}	} \tr \big[
			c_\hbar (x,p) Q_\hbar (t;x ,p)
			\big] \d  x \d  p \  , \qquad \forall t \in \R \   
		\end{equation}
		and compute its time derivative as follows. 
		In view of \eqref{coupling dynamics}
		we find that 
		\begin{align}
			\nonumber 
			\hbar^{-d}
			\frac{d}{dt}
			\mathcal E _\hbar (t)
			& =
			\int_{	\R^{2d}	}
			\tr
			\Big[
			Q_\hbar (t;x,p)
			\Big\{ 
			\frac{1}{2} p^2  + V*\rho_f (t,x)  , 
			c_\hbar (x,p)
			\Big\}
			\Big]
			\d x \d p  \\ 
			&   +  
			\frac{i}{\hbar}
			\int_{	\R^{2d}	}
			\tr
			\Big[
			Q_\hbar (t;x,p)
			\big[
			\frac{1}{2}\hat p^2
			+ 
			V* \rho_\omega (t,\hat x)  , c_\hbar (x,p)
			\big] 
			\Big]
			\d x \d p 
		\end{align}
		where we have integrated by parts and put both  brackets on the cost function.
		Following 
		\cite{GolsePaul2017}
		we calculate explicitly the above brackets and find that 
		\begin{align}
			\big\{
			\frac{1}{2}p^2 + V*\rho_f (t,x) , c_\hbar (x,p)
			\big\} 
			& 	 = 
			p \cdot (x - \hat x)
			- \nabla V * \rho_f (t,x) (p - \hat p )  \\ 
			\frac{i}{\hbar}	\big[
			\frac{1}{2}\hat p ^2 + V*\rho_\omega (t,\hat x ) , c_\hbar (x,p)
			\big]
			& 	 = 
			- \frac{1}{2} (x - \hat x)  \hat p  -  \frac{1}{2}  \hat p (x - \hat x )  \\ 
			\nonumber 
			&  +   		  \frac{1}{2} \nabla V*\rho_\omega(t,\hat x) \cdot  ( p - \hat p )
			+     \frac{1}{2}		( p - \hat p )	\cdot  \nabla V*\rho_\omega(t,\hat x)    \ .  
		\end{align}
This implies the identity 
		\begin{align}
			\nonumber 
			\hbar^{-d}
			\frac{d}{dt}
			\mathcal E _\hbar (t)
			& =
			\int_{	\R^{2d}	}
			\tr
			\Big[
			Q_\hbar (t;x,p)
			\frac{1}{2}
			\Big(
			(x - \hat x) \cdot (p - \hat p)
			+ 
			(p - \hat p)\cdot  (x - \hat x) 
			\Big)
			\Big]
			\d x \d p 
			\\ 
			\nonumber 
			&   +  
			\int_{	\R^{2d}	}
			\tr
			\Big[
			Q_\hbar (t;x,p)
			\frac{1}{2}
			\Big(
			\nabla V * \rho_\omega (t,\hat x) 
			- 
			\nabla V  * \rho_f (t,x)
			\Big)
			\cdot(p-\hat p )
			\Big]
			\d x \d p \\
			&  + 
			\int_{	\R^{2d}	}
			\tr
			\Big[
			Q_\hbar (t;x,p)
			\frac{1}{2}
			(p-\hat p )
			\cdot 
			\Big(
			\nabla V * \rho_\omega (t,\hat x) 
			- 
			\nabla V  * \rho_f (t,x)
			\Big)
			\Big]
			\d x \d p  \ .
			\end{align}
			 Thus, 
 a straightforward manipulation,
			 combined  with 
			 $A^*B + B^*A \leq  A^*A + B^*B   $
			 yields  
			 \begin{align}
			\nonumber 
\hbar^{-d}
\frac{d}{dt}
\mathcal E _\hbar (t)
			& \leq 
			C 
			\hbar^{-d}
			\mathcal E _\hbar (t) 
			+ 
			\frac{1}{2}
			\int_{	\R^{2d}	}
			\tr  \Big[
			Q_\hbar (t;x,p)
			|\nabla V * \rho_\omega  (t,\hat x)     -    \nabla V *\rho_f(t,x)  |^2 
			\Big] \d x \d p  \ . 
		\end{align}
		The second	term in the last displayed  equation  can be estimated as follows.
		We use the triangle  inequality and the fact that $Q_\hbar (t)$ is a coupling for $f (t)$
		to find that 
		\begin{align}
			\nonumber 
			\int_{	\R^{2d}	}
			\tr  \Big[
			&Q_\hbar (t;x,p)
			|\nabla V * \rho_\omega  (t,\hat x)     -    \nabla V *\rho_f(t,x)  |^2 
			\Big] \d x \d p    \\ 
			\nonumber 
			& 	\leq 
			\int_{	\R^{2d}	}
			\tr  \Big[
			Q_\hbar (t;x,p)
			|\nabla V * \rho_\omega  (t,\hat x)     -    \nabla V *\rho_\omega (t, x)  |^2 
			\Big] \d x \d p   \\ 
			\nonumber 
			& 	+ 
			\int_{	\R^{2d}	}
			\tr  \Big[
			Q_\hbar (t;x,p)
			\Big] 
			\, 
			| \nabla V * \rho_\omega  (t,  x)     -    \nabla V *\rho_f(t,x)  |^2 
			\d x \d p    \\ 
			\nonumber 
			& \leq 
			| \nabla V   |_{Lip}^2   \|  \rho_\omega(t)	\|_{L^1}^2  \hbar^{-d } \mathcal E _\hbar (t) 
			+ 
			\|	 \nabla V * \rho_\omega  (t)     -    \nabla V *\rho_f(t)  	\|_{L^\infty}^2 
			\|	f (t)	\|_{L^1 }   \\ 
			& \leq 
			| \nabla V   |_{Lip}^2   \hbar^{-d }   \mathcal E _\hbar (t) 
			+ 
			\|   \nabla V \|_{L^\infty }^2 
			\|	 \vp(t) -  \vp^\hbar (t)	\|_{L^2}^2
		\end{align}
		We can now collect the previous estimates and   integrate in time
		to find that 
		\begin{equation}
			\label{gronwall 1}
			\hbar^{-d}
			\mathcal{E}_\hbar (t)  
			\leq 
			\hbar^{-d}	\mathcal{E}_\hbar (0) 
			+ 
			C 
			\int_0^t
			\Big(
			\hbar^{-d}
			\mathcal{E}_\hbar (s ) 
			+ 
			\|	 \vp(s) -  \vp^\hbar (s)	\|_{L^2}^2 
			\Big) \d s \ . 
		\end{equation}
		It suffices now to estimate the difference in the $L^2$ norm of the boson fields. 
		
		\vspace{2mm}
		
		\textit{Step 4. Estimates on the boson densities}.
		The equations for the boson fields can be written 
		in mild formulation as follows 
		\begin{align}
			\vp (t)  & = 
			e^{  -  i t \Delta /2  }
			\vp_0 
			- i 
			\int_0^t 
			e^{    - i (t-s) \Delta /2    } 
			V*  \widetilde \rho _f (s )   \vp(s) \d s  \\
			\vp^\hbar (t)  & = 
			e^{  -  i t \Delta /2  }
			\vp_0^\hbar  
			- i 
			\int_0^t 
			e^{    - i (t-s) \Delta /2    } 
			V*  \widetilde \rho _\omega (s)   \vp^\hbar (s) \d s    
		\end{align}
		where we denote for the fermion densities 
		$$\widetilde \rho_f (t,x) = \int_{\R^d} f(t;x,p) \ d p 
\quad 		\t{and}\quad 
		 \widetilde \rho _\omega (t,x) = (1/N) \omega(t;x,x).$$
		We estimate the $L^2$ norms  as follows 
		\begin{align}
			\nonumber
			\|	 &  \vp(t)  -  \vp^\hbar (t)     \|_{L^2} 
			\\ 
			\nonumber
			& 	\leq 
			\|	 \vp_0 -  \vp^\hbar_0    \|_{L^2}
			+ 
			\int_0^t 
			\|		  (V* \widetilde \rho_f (s)   -   V* \widetilde \rho_\omega (s)      )   \vp(s) 	   \| \d s  	 + 
			\int_0^t 
			\|		    V* \widetilde \rho_\omega (s)     (   \vp(s) 	 - \vp^\hbar (s))    \| \d s  \  \\ 
			\nonumber
			& \leq 
			\|	 \vp_0 -  \vp^\hbar_0    \|_{L^2} 
			+ 
			\int_0^t \|		V* \big(     \widetilde \rho_f (s) -  \widetilde \rho_\omega (s)  	 		\big)   	 \|_{L^\infty  }  \d s 
			+  
			\|	 V	\|_{L^\infty } 
			\int_0^t 
			\|	 \vp(s)  -  \vp^\hbar (s)     \|_{L^2} \d s \ . 
		\end{align}
		Finally, we notice that because $Q_\hbar (t)$ is a coupling between $f(t)$ and $\omega(t)$,
		we obtain that  for all $X \in \R^d$   
		\begin{align}
			V* \widetilde \rho_f (s,X) 
			-  
			V* \widetilde \rho_\omega (s,X)  
			&   =     \int_{	\R^{2d}	}
			\tr
			\Big[
			Q_\hbar (t;x,p)
			\big(
			\nabla V(X -x )
			- 
			\nabla V(X - \hat x )
			\big)
			\Big] \d x \d p   
		\end{align}
		and, consequently, using the Cauchy Schwarz inequality
		\footnote{In the following form 
			$
			\int_{	\R^{2d}	} \d z  \tr ( A(z)^* B(z)  ) \d z
			\leq 
			(\int_{	\R^{2d}	}   \tr (A^*(z)A(z))  dz     )^{1/2} 
			(\int_{	\R^{2d}	}   \tr (B^*(z)B(z))  dz     )^{1/2} 
			$} 
		we find 
		\begin{align}
			\nonumber 
			| V* \widetilde \rho_f (s,X) 
			-  
			V* \widetilde \rho_\omega (s,X)  |^2 
			& \leq  
			\Big| 	   \int_{	\R^{2d}	}
			\tr
			\Big[
			Q_\hbar (t;x,p)
			\big|
			\nabla V(X -x )
			- 
			\nabla V(X - \hat x )
			\big|^2 
			\Big] \d x \d p  \Big| 		 \\ 
			\nonumber 
			& \leq 
			\|	 \nabla V	\|_{Lip}^2
			\int_{	\R^{2d}	} 
			\tr
			\Big[
			Q_\hbar (t;x,p)  |x - \hat x |^2 
			\Big] \d x \d p \\ 
			& 	   \leq  2 
			\|	 \nabla V	\|_{Lip}^2 
			\hbar^{-d }
			\mathcal E_\hbar (t) \ . 
		\end{align}

		We conclude that for some $C = C(V)$ there holds
		\begin{equation}
			\label{gronwall 2.0}
			\|	   \vp(t)  -  \vp^\hbar (t)     \|_{L^2}  
			\leq 
			\|	 \vp_0 -  \vp^\hbar_0    \|_{L^2}  
			+ 
			C 
			\int_0^t
			\Big(
			\hbar^{ - \frac{d}{2}}				\mathcal E _\hbar (t)^{\frac{1}{2}} + 
			\|	   \vp(s)  -  \vp^\hbar (s)     \|_{L^2}  
			\Big) \d s \  . 
		\end{equation}
		In order to compare it with the inequality found in 			
		\eqref{gronwall 1}, we take the square of both sides in \eqref{gronwall 2.0}
		and use the Cauchy-Schwarz inequality to find that for some $C = C(V)>0$ there holds
		\begin{equation}
			\label{gronwall 2}
			\|	   \vp(t)  -  \vp^\hbar (t)     \|_{L^2}^2 
			\leq 
			C
			\Bigg( 
			\|	 \vp_0 -  \vp^\hbar_0    \|_{L^2} 
			+ 
			t
			\int_0^t
			\Big(
			\hbar^{-d }			\mathcal E _\hbar (t) 
			+ 
			\|	   \vp(s)  -  \vp^\hbar (s)     \|_{L^2}^2
			\Big) \d s 
			\Bigg) 
			\  . 
		\end{equation}

		\textit{Step 5. Conclusion.}
		In order to conclude our argument, we  put the inequalities \eqref{gronwall 1} and \eqref{gronwall 2} together
		to find that
		\begin{align}
			\nonumber 
			\hbar^{-d }				\mathcal E _\hbar (t)  
			+ 
			\|	   \vp(t)  -  \vp^\hbar (t)     \|_{L^2}^2 
			& 	\leq 
			C 
			\Big(
			\hbar^{-d }				\mathcal E _\hbar ( 0 )  
			+ 
			\|	   \vp_0  -  \vp^\hbar_0   \|_{L^2}^2 
			\Big)  \\ 
			& 				+ 
			C (1+t )
			\int_0^t
			\Big(
			\hbar^{-d }				\mathcal E _\hbar (s)  
			+ 
			\|	   \vp(s)  -  \vp^\hbar (s)     \|_{L^2}^2 
			\Big) \d s \  . 
		\end{align}
		Thus, we may  apply the Gr\"onwall inequality to find that 
		there exists $C>0$ such that 
		\begin{equation}
			\hbar^{-d }				\mathcal E _\hbar (t)  
			+ 
			\|	   \vp(t)  -  \vp^\hbar (t)     \|_{L^2}^2 
			\leq
			C \exp(Ct^2 )
			\Big(
			\hbar^{-d }				\mathcal E _\hbar ( 0 )  
			+ 
			\|	   \vp_0  -  \vp^\hbar_0   \|_{L^2}^2 
			\Big)  \ . 
		\end{equation}
		In order to conclude, 
		let us recall that
		$   				E_\hbar (f(t)  ,  \omega (t))^2
		\leq 
		\hbar^d 			\mathcal E _\hbar (t)  . 
		$ 
		Additionally,
		we recall  that $	 		\mathcal E_{\hbar }(0) $ is defined in terms of the initial datum 
		$Q^\hbar (0) = Q_0^\hbar \in \mathcal C (f_0 , \omega_0^\hbar)$, corresponding to an arbitrary coupling. 
		Therefore,
		we  minimize the right hand side over all couplings, 
		and take the square root,  to finally find that 
		\begin{equation}
			E_\hbar (f(t)  ,  \omega (t)) 
			+ 
			\|	   \vp(t)  -  \vp^\hbar (t)     \|_{L^2}  
			\leq 
			C \exp(C 
			t^2 
			)
			\Big(
			E_\hbar  
			(f_0 , \omega_0^\hbar )
			+ 
			\|	 \vp_0 -  \vp^\hbar_0    \|_{L^2}  
			\Big)  \ . 
		\end{equation}
		This finishes the proof.
	\end{proof}

	\subsection{Proof of Theorem \ref{thm2}}
	In this subsection, we combine the results  previously established in 
	Proposition \ref{proposition stability hartree} and 
	Theorem \ref{proposition wasserstein}.
	
	\begin{proof}[Proof of Theorem \ref{thm2}]
		Let us consider $(\omega_0^\hbar, \vp_0^\hbar)$, 
		$f_0^\hbar \equiv W^\hbar [\omega_0^\hbar]$, 
		and $(f_0, \vp_0)$
		as in the statement of Theorem \ref{thm2}.
		The proof is divided into two steps.
		In the first step, we consider the evolution given by the  Hartree-Hartree dynamics,
		and use Proposition \ref{proposition stability hartree}
		to change the initial data from $\omega_0^\hbar = \t{Op}_w^\hbar [ f_0^\hbar ]$
		to the intermediate initial data given by 
		\begin{equation}
			\label{intermediate}
			\widetilde	\omega_0^\hbar 
			\equiv 
			\t{Op}_w^\hbar [ f_0 * \mathscr{G}_\hbar] \ . 
		\end{equation}
		Here, $\mathscr{G}_\hbar (z )= \hbar^{-d/4}   \mathscr{G}_1 (\hbar^{-1} z ) $ 
		is the Gaussian mollifier at scale $\hbar$. 
		In the second step, we use Proposition \ref{proposition wasserstein} 
		and Lemma  \ref{lemma wasserstein} 
		to go from the Hartree-Hartree dynamics, to the Vlasov-Hartree dynamics.
		Let us note that these steps will involve different metrics when measuring the distance of the fermion density. 
		In order to conclude, we put these distances together  
		by restricting  our collection of test functions. 
		
		\vspace{2mm}
		\textit{Step 1.}
		Let $( \omega^\hbar (t) , \vp^\hbar(t))$ solve the Hartree-Hartree dynamics with initial data $(\omega_0^\hbar, \vp_0^\hbar)$,
		and let $( \widetilde \omega ^\hbar  (t),  \widetilde \vp^\hbar (t) )$
		solve the Hartree-Hartree dynamics with initial data
		$ ( \widetilde \omega_0^\hbar  ,  \vp_0^\hbar)$, where
		the fermion component has been defined in \eqref{intermediate}. 
		We consider its Wigner transform as 
		\begin{equation}
			\label{tilde f}
			\widetilde f^ \hbar (t) \equiv W^\hbar [\widetilde \omega ^\hbar (t)] \ , \qquad \forall t \in \R \ , 
		\end{equation}
		which we shall refer to as the \textit{intermediate dynamics}. 
		A direct application of  the stability estimate contained in  
		Proposition \ref{proposition stability hartree}, 
		together with 
		the isometric property \eqref{formula isomorphism} yields 
		\begin{align}
			\label{proof thm2 eq 1}
			|	   f^\hbar (t)  -  \widetilde f ^\hbar (t) 				|_1
			+ 
			\| \vp^\hbar  (t)   - \widetilde \vp^\hbar (t)		\|_{L^2}   
			\leq  C \exp ( C \exp C |t|)
			|	   f^\hbar_0  -   f_0* \mathscr G_\hbar 			|_1 
		\end{align}
		for a constant $C>0$, and all $t\in \R$. 
		It suffices to estimate the rigth hand side of Eq. \eqref{proof thm2 eq 1}.
		In particular, the triangle inequality gives 
		\begin{align}
			\label{proof thm 2 eq 2}
			|	   f^\hbar_0  -   f_0* \mathscr G_\hbar 			|_1 
			\leq 
			|	   f^\hbar_0  -   f_0  			|_1 
			+ 
			|	   f_0  -   f_0* \mathscr G_\hbar 			|_1    \ . 
		\end{align}
		The first term on the right hand side of Eq. \eqref{proof thm 2 eq 2} is already contained in the estimate of Theorem \ref{thm2}, 
		so it suffices to estimate the second term. 
		Indeed, we find that 
		for $\zeta \in \R^{2d}$
		\begin{align}
			|	\hat f_0 (\zeta) -  \widehat{ f_0 * \mathscr G _\hbar }  (\zeta )		| 
			= 
			|	 1 -  \widehat{\mathscr G_1 } (\hbar \zeta )	|	\, |	 \hat f_0 (\zeta) 	|
			\leq  	 \t{Lip}
			(  \widehat{ \mathscr G _1}  )				\hbar |\zeta|  \ . 
		\end{align}
		In the last line we have used the fact that
		$  \|	 \hat f_0	\|_{L^\infty }    \leq    \|	f_0\|_{L^1} \leq 1$, 
		and $\widehat{\mathscr{G} _1 }(0)=1$. 
		Upon taking supremum over $\zeta \in \R^{2d}$
		one finds that 
		$|	f_0 - f_0*\mathscr G _\hbar	|_1 \leq C \hbar$.
		Putting everything together, we find 
		\begin{equation}
			\label{proof thm 2 eq 3}
			|	   f^\hbar (t)  -  \widetilde f ^\hbar (t) 				|_1
			+ 
			\| \vp^\hbar  (t)   - \widetilde \vp^\hbar (t)		\|_{L^2}   
			\leq 
			C 
			\exp ( C \exp (C |t|))
			\Big(
			|	   f^\hbar_0  -   f_0  			|_1 
			+ 
			\hbar 
			\Big)
		\end{equation}
		for a constant $C>0$ and all $t \in \R$.

		\vspace{2mm}
		\textit{Step 2.}
		Let $(f(t), \vp(t))$ be the solution of the Vlasov-Hartree system with initial data
		$(f_0 , \vp_0)$.
		Similarly,  let $(	\widetilde \omega^\hbar (t),   \widetilde \vp ^\hbar (t)		)$ 
		be the solution of the Hartree-Hartree syste with 
		initial data $(\widetilde \omega_0^\hbar , \vp_0^\hbar)$, as defined in  Eq. \eqref{intermediate}. 
		Then, Proposition \ref{proposition wasserstein} immediately implies that 
		\begin{equation}
			\label{proof thm2 eq 3}
			E_\hbar \big( 
			f(t), \widetilde \omega _\hbar (t) 
			\big)
			+
			\|	  \vp(t) -   \widetilde \vp ^\hbar (t)	\|_{L^2} 
			\leq 
			C\exp(C t^2 )
			\Big(
			E_\hbar  
			(f_0 ,  \widetilde \omega _0^\hbar )
			+ 
			\|	 \vp_0   -  \vp_0^\hbar	\|_{L^2} 
			\Big) \ . 
		\end{equation}
		Consequently, letting $\widetilde f ^\hbar (t) = W^\hbar [\widetilde \omega^\hbar (t)]$
		be the Wigner transform of the intermediate dynamics, 
		we combine Eq. \eqref{proof thm2 eq 3} with  
		the two estimates found in  Lemma \ref{lemma wasserstein}
		to conclude that 
		\begin{equation}
			\label{proof thm 2 eq 4}
			\|	 f(t) - \widetilde f ^\hbar (t)	\|_{\dot H ^{-1}}
			+ 
			\|	  \vp(t) -   \widetilde \vp ^\hbar (t)	\|_{L^2}  
			\leq 
			C\exp(C t^2 )
			\Big(
			\hbar^{1/2} 
			+ 
			\|	  \vp_0 -   \vp_0^\hbar 	\|_{L^2} 
			\Big) \ . 
		\end{equation}
		\vspace{1mm}
		\textit{Conclusion.} 
		First, we estimate the density of the fermions.
		Namely, let $f^\hbar(t)$ and $f(t)$ be as in the statement of Theorem \ref{thm2}, 
		and let $\widetilde f^\hbar (t)$
		be the intermediate dynamics we have previously introduced. 
		Further, let us  consider a test function $h : \R^{2d} \rightarrow \C$
		such that 
		$  \|	  \< \eta\> \hat h 	 \|_{L^1}$
		and 
		$  \|	  \< \eta\> \hat h 	 \|_{L^2}$
		are both finite.
		Then, the triangle inequality  combined 
		with Eqs. \eqref{proof thm 2 eq 3} and \eqref{proof thm 2 eq 4}
		imply that  
		\begin{align}
			\nonumber 
			|	 \<   h , ( f(t)   -  f^\hbar (t))  \>		|	
			& \leq 
			|	 \langle    h , ( f^\hbar (t)   -   \widetilde f ^\hbar (t))  \rangle 		|		
			+
			|	 \<   h , ( f(t)   -  \widetilde f ^\hbar (t))  \>		|	 	\\
			& \leq  
			\|	  \<\zeta\>  \hat h  	\|_{L^1}
			|	 f^\hbar (t)  - 	  \widetilde f ^\hbar (t)  		|_1 
			+ 
			\|	   |\zeta|  \hat h  	\|_{L^2}
			\|	 f(t) - \widetilde f ^\hbar (t)	\|_{\dot H ^{-1}  } \\ 
			\nonumber 
			& \leq 
			C_2(t) 
			\|	  \<\zeta\>  \hat h  	\|_{L^1}
			\Big(
			|	   f^\hbar_0  -   f_0  			|_1 
			+ 
			\hbar 
			\Big)
			+ 
			C_1(t) 
			\|	   |\zeta|  \hat h  	\|_{L^2}
			\Big(
			\hbar^{1/2} 
			+ 
			\|	  \vp_0 -   \vp_0^\hbar 	\|_{L^2} 
			\Big)  
		\end{align} 
		where 
		$C_1(t) \equiv 
		C\exp(C t^2 )
		$
		and 
		$C_2(t) \equiv C \exp ( C \exp C |t|)$. 
		Similarly, for the boson fields we find
		\begin{align}
			\nonumber 
			\| \vp(t)     -  \vp^\hbar (t)		\|_{L^2 }
			& 	\leq 
			\| \vp(t)     -  \widetilde \vp ^\hbar (t)		\|_{L^2 }
			+
			\|   \vp^\hbar (t)	 -   \widetilde \vp^\hbar (t)    	\|_{L^2 }\\
			&  \leq 
			C_2(t) 
			\Big(
			|	   f^\hbar_0  -   f_0  			|_1 
			+ \hbar 
			\Big) 
			+
			C_1 (t) 
			\Big(
			\hbar^{1/2} 
			+ 
			\|	  \vp_0 -   \vp_0^\hbar 	\|_{L^2} 
			\Big) \ , 
		\end{align}
		where $C_1(t)$ and $C_2(t)$ are as above. This finishes the proof of the theorem. 
	\end{proof}

		\appendix			
		\section{Well-posedness of the PDEs}
		\label{section analysis PDEs}
		In this section, we state basic well-\-posedness results for the Hartree-Hartree eqs. and Vlasov-Hartree eqs.
		that we have introduced in Section \ref{section introduction}.
		For notational simplicity, denote by  
		\begin{equation}
			\mathfrak{h}
			\equiv L^2(\R^d)
		\end{equation}
		the one-particle Hilbert space. 
		
		\subsection{The Hartree-Hartree Equation}
		In what follows, we 
		consider the  
		Hartree-Hartree equation
		that couples the fermionic reduced density matrix, and the bosonic field.
		For notational simplicity, 
		we assume  
		the 
		microscopic scaling regime--of course, 
		every result in this section is independent of the
		scaling regime under consideration. 
		That is,  we consider the equation
		\begin{align}
			\begin{cases}
				\label{hartree hartree 1}
				& 	i      \partial_t \omega   
				= 
				[   -     \Delta 
				+ 
				(V   * \rho_B )  , 
				\omega 
				]			\\
				& 	i     \partial_t  \vp 
				= 
				-    \Delta  \vp 
				+   
				(V  *  \rho_F)    
				\vp   \  , \\ 
				& 	(\omega(0) , \vp(0) = (\omega_0 , \vp_0 )
			\end{cases} 
		\end{align}
		for some initial datum 
		$(\omega_0 , \vp_0 )
		\in 
		\mathscr{L}^1( \h  ) 
		\times
		\h 
		$. 
		Here we will be employing the notation
		for the bosonic and fermionic position densities, 
		for $(t,x) \in \R \times \R^d : $
		\begin{equation}
			\rho_B(t,x) \equiv 
			|   \vp (t,x)	|^2
			\qquad
			\t{and}
			\qquad 
			\rho_F (t,x) \equiv 
			M^{-1} 
			\omega(t;x,x) \ . 
		\end{equation}

		\vspace{1mm }
		Here, we only consider bounded potentials.
		The analysis of mean-field equations 
		with such interactions
		is   classical.
		Hence, we state  
		the main results in the next Theorem and omit the proofs.
		For instance, we refer the reader to 
		e.g \cite{Bove1974}
		whose proof can be adapted to the problem at hand. 
		
		\begin{theorem}
			\label{theorem wp 1}
			Let $(\omega_0 , \vp_0 ) \in 
			\mathscr{L}^1( \h  ) \times \h 
			$
			with $\omega_0^* = \omega_0$; 
			and assume the interaction potential  is bounded $V\in L^\infty(\R^d,\R) $. 
			Then, 
			the following statements hold true 
			\begin{enumerate}[leftmargin=0.7cm]
				\item
				{\rm{(Global well-posedness)}}
				There exists a unique global solution
				$(\omega, \vp) \in C ( \R, \mathscr{L}^1( \h ) \times \h ) $
				to the Hartree-Hartree equation \eqref{hartree hartree 1} in mild form: 
				\begin{align}
					\label{hartree hartree mild form}
					\omega(t) & = 
					e^{- it \Delta}
					\omega_0 
					e^{+  it \Delta}
					- 
					i
					\int_0^t 
					e^{- i (t-s ) \Delta}
					\big[
					V* \rho_B(s) , \omega(s) 
					\big]
					e^{ i (t-s ) \Delta}
					\d s  \\ 
					\vp(t) 
					&  = 
					e^{- i t \Delta}
					\vp_0 
					- 
					i
					\int_0^t 
					e^{- i (t-s ) \Delta}
					\big(
					V * \rho_F(s) 
					\big) \vp (s) 
					\d s \ .   
				\end{align}
				Furthermore, there is continuity with respect to the initial data. 
				
				\item 
				{\rm (Unitary evolution)}
				Let $(\omega,\vp)$ be the mild  solution of \eqref{hartree hartree mild form},
				and 
				consider the time-dependent, 
				mean-field Hamiltonians 
				on $H^2(\R^d)$
				\begin{align}
					h_F (t) =  -\Delta + V*\rho_B(t)
					\qquad
					\t{and}
					\qquad 
					h_B(t)  = - \Delta + V*\rho_F (t) \ . 
				\end{align}
				Further, 
				consider the 
				two-parameter unitary evolution groups on 
				$
				\h 
				$
				that the mean-field Hamiltonians generate 
				\begin{align}
					\begin{cases}
						& 	i \partial_t U_F (t,s) = h_F(t) U_F (t,s) \\ 
						& U_F (t,t) = \1 
					\end{cases}
					\quad 
					\t{and}
					\quad 
					\begin{cases}
						& 	i \partial_t U_B (t,s) = h_B (t) U_B (t,s) \\ 
						& U_B (t,t) = \1 
					\end{cases} \ . 
				\end{align}
				Then, for all $t \in \R$
				\begin{equation}
					\omega(t) = U^*_F (t,0) \omega_0 U_F (t,0)
					\qquad
					\t{and}
					\qquad 
					\vp(t) 
					=
					U_B
					(t,0) \vp_0 \ . 
				\end{equation}
				In particular, 
				$\|	 \omega(t)\|_{\tr} = \|	 \omega_0	\|_{\tr}$, 
				$\omega(t)^* = \omega(t)$, and
				$\| \vp(t) 	\|_{L^2 } = \| \vp_0	\|_{L^2}$
				for all $t \in \R$.
				Additionally, if $\omega_0 \geq 0 $, 
				then $\omega (t) \geq 0$
				and, similarly, 
				if $\omega_0^2= \omega_0$, 
				then 
				$\omega(t)^2 = \omega(t)$. 
				
				\vspace{1mm}
				
				\item 
				{\rm (Strong solutions)}
				Consider the spectral decomposition 
				$\omega_0 = \sum_{k=0}^\infty 
				\lambda_k 
				\ket{\phi_k }
				\bra{\phi_k},
				$
				and assume that 
				$ \sum_{k=0}^\infty  | \lambda_k|   \|  \phi_k\|_{H^2}^2    $
				and
				$\|	\vp_0\|_{H^2}$
				are finite. 
				Then,  
				the  solution map  is continuously differentiable
				$ (\omega , \vp) \in  
				C^1(\R; \mathscr{L}^1( \h  )\times \h )  $, 
				and 
				\eqref{hartree hartree 1}
				holds in the strong sense. 
				
			\end{enumerate}
			
		\end{theorem}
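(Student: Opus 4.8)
The plan is to treat the three assertions in order: first a contraction-mapping argument for local well-posedness in mild form, then the unitary representation of Part (2) (which also yields the conservation laws and hence globality), and finally a bootstrap for Part (3). For the local theory I would run a fixed-point argument in the Banach space $X_T \equiv C\big([-T,T];\mathscr{L}^1(\h)\times\h\big)$ for small $T>0$, using the integral identities \eqref{hartree hartree mild form}. The essential point is that, for bounded $V$, both mean-field potentials are bounded and Lipschitz in the solution: since $\|\rho_B(t)\|_{L^1}=\|\vp(t)\|_{L^2}^2$ one has $\|V*\rho_B(t)\|_{L^\infty}\le \|V\|_{L^\infty}\|\vp(t)\|_{L^2}^2$, and for the fermionic potential one interprets the diagonal density via $(V*\rho_F)(t,x)=M^{-1}\tr\big(\omega(t)\,\chi_x\big)$, where $\chi_x$ is the bounded multiplication operator associated with $y\mapsto V(x-y)$, so that $\|V*\rho_F(t)\|_{L^\infty}\le M^{-1}\|V\|_{L^\infty}\|\omega(t)\|_{\tr}$; the corresponding differences are controlled by $\|\omega_1(t)-\omega_2(t)\|_{\tr}$ and $\|\vp_1(t)-\vp_2(t)\|_{L^2}$. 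Combining these bounds with the facts that $e^{\i t\Delta}$ acts isometrically on $\h$ and that conjugation by the free unitary group is an isometry of $\mathscr{L}^1(\h)$, the mild map is a contraction on a suitable ball of $X_T$ with $T$ depending only on $\|\omega_0\|_{\tr}$ and $\|\vp_0\|_{L^2}$; continuity with respect to the initial data follows from the same Lipschitz estimates and Grönwall.

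\textbf{Part (2) and global existence.} Given the local mild solution $(\omega,\vp)$, I would then freeze the densities $\rho_B(t)=|\vp(t)|^2$ and $\rho_F(t,x)=M^{-1}\omega(t;x,x)$, so that $h_F(t)=-\Delta+V*\rho_B(t)$ and $h_B(t)=-\Delta+V*\rho_F(t)$ are time-dependent self-adjoint operators with common domain $H^2(\R^d)$ (Kato--Rellich, the potentials being bounded) depending norm-continuously on $t$ in $B(H^2(\R^d),\h)$ — this continuity is inherited from the continuity of $t\mapsto\rho_B(t)$ in $L^1$ and of $t\mapsto\rho_F(t)$ through $\omega(t)$ in $\mathscr{L}^1$. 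By the standard theory of such time-dependent Hamiltonians each generates a strongly continuous two-parameter unitary propagator $U_F(t,s)$, $U_B(t,s)$ on $\h$. Then $\vp$ solves the linear equation $\i\partial_t\vp=h_B(t)\vp$ in mild form, so by uniqueness $\vp(t)=U_B(t,0)\vp_0$; likewise, differentiating $U_F^*(t,0)\omega_0 U_F(t,0)$ shows it satisfies the same linear mild equation as $\omega$, whence $\omega(t)=U_F^*(t,0)\omega_0 U_F(t,0)$. From this representation every conservation claim is immediate: $\|\omega(t)\|_{\tr}=\|\omega_0\|_{\tr}$ and $\omega(t)^*=\omega(t)$ since $U_F(t,0)$ is unitary and $\omega_0^*=\omega_0$; $\|\vp(t)\|_{L^2}=\|\vp_0\|_{L^2}$; and the properties $\omega_0\ge 0$, $\omega_0^2=\omega_0$ are preserved under unitary conjugation. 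In particular the $\mathscr{L}^1(\h)\times\h$ norm of the solution does not grow, so the local existence time is uniform in time and the solution extends globally to $\R$.

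\textbf{Part (3): strong solutions.} Under the stated $H^2$-type hypotheses on $(\omega_0,\vp_0)$, I would show that $U_F(t,s)$ and $U_B(t,s)$ preserve $H^2(\R^d)$ with locally bounded norm. This follows because $h_F(t)$, $h_B(t)$ have domain exactly $H^2$ and $t\mapsto h_F(t),h_B(t)$ are, say, continuously differentiable as maps into $B(H^2(\R^d),\h)$ — using $\partial_t\rho_B=2\,\mathrm{Re}(\overline{\vp}\,\partial_t\vp)$ and the analogous identity for $\rho_F$, both now controlled thanks to the regularity just obtained. Consequently the orbitals $U_F^*(t,0)\phi_k$ of $\omega(t)$ remain in $H^2$ with $\sum_k|\lambda_k|\,\|U_F^*(t,0)\phi_k\|_{H^2}^2$ locally bounded, and $\vp(t)\in H^2$; then every term in \eqref{hartree hartree 1} is continuous in $t$ with values in $\mathscr{L}^1(\h)$, resp.\ $\h$, so the integral identities may be differentiated and the equation holds strongly, with $(\omega,\vp)\in C^1$.

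\textbf{Main obstacle.} The only genuinely non-routine ingredients are the construction and the $H^2$-regularity of the propagators $U_F,U_B$ for the time-dependent mean-field Hamiltonians, together with the consistency check that the mild solution produced by the fixed point coincides with the one built from these propagators; once the boundedness and Lipschitz estimates on $V*\rho_B$ and $V*\rho_F$ are in place — in particular the interpretation of the diagonal density $\rho_F$ as a trace against bounded multiplication operators — the rest is standard, which is why one may refer to \cite{Bove1974} for the details.
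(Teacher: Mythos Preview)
Your proposal is correct and follows the classical fixed-point/unitary-propagator route; the paper itself omits the proof entirely, merely noting that the analysis for bounded potentials is classical and referring the reader to \cite{Bove1974}, whose argument is precisely of the type you sketch. In short, there is nothing to compare against: your outline is the standard proof the authors had in mind.
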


		\subsection{The Vlasov-Hartree Equation}
		In this section, we analyze the   Vlasov-Hartree equation that we have introduced in Section \ref{section introduction}. 
		Namely, the coupled PDEs 
		\begin{align}
			\label{vlasov hartree 2}
			\begin{cases} 
				& 		(\partial_t  + p\cdot  \nabla_x + F_\vp  (t) \cdot    \nabla_p   ) f  = 0 	\\
				&		  i \partial_t \vp 
				=   
				- \frac{1 }{2} \Delta   \vp +   V_f(t)  \vp 	\\ 
				&  (f,\vp) ( 0 )  = (f_ 0 , \vp_0)  
				\in L^1_+(\R^{2d})\times L^2(\R^d) 
			\end{cases} 
		\end{align}
		where 
		$(f_0,\vp_0)\in L_+^1(\R^{2d}) \times L^2(\R^d)$
		is some initial  datum. 
		Here and in the sequel, we shall be using the following notation 
		for the boson-driven  force field, 
		and fermion-field potential energy, 
		for $(t,x) \in \R\times \R^d$: 
		\begin{equation}
			F_\vp (t,x) 
			\equiv 
			- \int_{\R^d }
			\nabla V(x-y ) | \vp(t,y)|^2 \d y 
			\qquad
			\t{and}
			\qquad 
			V_f	(t,x)
			\equiv 
			\int_{	\R^{2d}	}  V(x-y ) f (t,y,p ) \d p  \d y 	 
		\end{equation}

		\vspace{2mm}
		The notion of solution we use is the following. 
		Let  $  I = (a,b) \subset \R $ be an open interval 
		containing $0$. 
		We say that a bounded measurable map 
		$(f , \vp)  :  [a,b] \rightarrow L^1(\R^{2d})\times L^2(\R^d)$
		is a \textit{weak-mild} solution on $I $ of the 
		Vlasov-Hartree equation \eqref{vlasov hartree 2}
		with initial data $(f_0, \vp_0)$, 
		if the following three conditions are satisfied: 
		\begin{itemize}[leftmargin=.8cm, label=$\square$]
			\item For all $ h  \in  \mathscr S (\R^{2d}) $ 
			the map $t \mapsto  \<  f(t) , h\>$ is differentiable on $I $
			\item For all $ h \in \mathscr S (\R^{2d})$
			and $ t \in I $
			it holds true  that 
			\begin{align}
				& 		 \frac{d}{dt}
				\<  f(t), h \>  
				= \<	 f(t) , (p \cdot  \nabla_x + 
				F_\vp(t) \cdot   \nabla_p  ) h 	  	 	\> \\
				& 		
				\vp(t)  =   e^{- i t \Delta/2 } \vp_0 
				- i 
				\int_0^t e^{ - i (t-s ) \Delta/2 } 
				V_f (s) 
				\vp(s) \d s  
			\end{align} 
			\item $(f, \vp) (0)  =  (f_0 , \vp_0). $
		\end{itemize}
		We say that a solution is local-in-time if $I \neq \R$,
		and global-in-time if $ I = \R$. 
		
		\vspace{2mm}
		
		Similarly as for the Hartree-Hartree equation, 
		since we consider here only  regular potentials 
		we do not present a proof of the following well-posedness theorem.
		Rather, we refer the reader to
		\cite[Appendix A]{BenedikterPortaSaffirioSchlein2016}
		for a similar result whose proof can be adapted to our problem.

		\begin{theorem}
			\label{theorem wp 2}
			Let $(f_0, \vp_0) \in 
			L_+^1(\R^d) \times L^2(\R^d)
			$
			and assume that
			$\nabla V \in \mathrm{Lip}(\R^d; \R)$. 
			Then, the following statements hold true .
			\begin{enumerate}[leftmargin=0.7cm]
				\item 
				{\rm (Global well-posedness)}
				There exists 
				$(f,\vp) 
				\in L^\infty
				(\R ; 
				L_+^1(\R^d) \times L^2(\R^d)
				)
				$ 
				a
				unique global 
				weak-mild solution 
				to the   Vlasov-Hartree equation \eqref{vlasov hartree 2}.
				Furthermore, there is continuity with respect to the initial data.

				\vspace{1mm}	
				\item 
				{\rm (Characteristics/Unitary Representation)}
				Let 
				$
				(f,\vp)
				$
				be the global weak-mild solution
				to the Vlasov-Hartree equation.
				We denote by 
				$\Phi_\vp(t) : \R^{2d}\rightarrow \R^{2d}$
				volume-preserving diffeomorphism, 
				corresponding to  	the solution map of the ODE 
				\begin{align}
					\label{ODE}
					\begin{cases}
						& 			 d/dt  \, x (t) = p (t) 		\\
						& d/dt \, p (t)	 = F_\vp(t,  x(t) )	\\
						&  (x,v) (0) = (x_0 , p_0 )
					\end{cases}  \  ; 
				\end{align}
				and we denote by 
				$( U_f (t,s)  )_{t,s\in \R}$
				the two-parameter
				family of unitary transformations defined through 
				\begin{equation}
					\begin{cases}
						& i \partial_t U_f (t,s)
						= 
						\big(
						- (1/2)\Delta 
						+ 
						V_f (t) 
						\big)
						U_f(t,s) \\ 
						& U_f (t,t) = \1 
					\end{cases} \ . 
				\end{equation}
				Then, for all $t \in \R$
				there holds
				\begin{align}
					f(t) = f_0 \big(
					\Phi_\vp^{-1} (t)
					\big) \\ 
					\vp(t) = U_f(t,0) \vp_0  \ . 
				\end{align}
				In particular, $\|	f(t)	\|_{L^1} = \|f_0	\|_{L^1}$, 
				$f(t) \geq0$
				and
				$\|	 \vp(t)\|_{L^2} = \| \vp_0	\|_{L^2}, $
				for all $ t \in \R $.

				\vspace{1mm}

				\item 
				{\rm (Strong solutions)}
				Assume additionally that
				$\|	   (1 + |x| + |p|)   \< \nabla_{x,p }\> 	 f_0\|_{L^1}$
				and
				$\|	 \vp_0	\|_{H^2}$	
				are finite. 
				Then, 
				the solution map 
				is differentiable 
				$(f, \vp ) \in 
				C^1(\R; 		L^1_+(\R^{2d})\times L^2(\R^d 		)
				)$
				and the Vlasov-Hartree equation \eqref{vlasov hartree 2}
				holds in the strong sense. 
			\end{enumerate}

		\end{theorem}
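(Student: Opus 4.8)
The plan is to run a Banach fixed-point argument in the spirit of \cite[Appendix A]{BenedikterPortaSaffirioSchlein2016}, coupling the classical method of characteristics for the Vlasov part with Duhamel/propagator theory for the Schr\"odinger part. Throughout I will use that the potential assumption imposed in the main theorems gives $V\in W^{2,\infty}(\R^d)$ (this is marginally stronger than $\nabla V\in\mathrm{Lip}$, but is exactly what makes the force field $F_\vp=-\nabla V*|\vp|^2$ well defined with no moment hypothesis on $\vp_0$; the stated hypothesis is recovered after propagating a first phase-space moment of $f$). Given $\vp\in C([-T,T];L^2(\R^d))$ with $\sup_t\|\vp(t)\|_{L^2}=\|\vp_0\|_{L^2}$, the field $F_\vp(t,\cdot)$ is bounded and Lipschitz uniformly in $|t|\le T$, so the characteristics \eqref{ODE} define a unique volume-preserving $C^1$ flow $\Phi_\vp(t)$; setting $f(t):=f_0\circ\Phi_\vp^{-1}(t)$ gives the unique weak-mild solution of the transport equation (which is linear in $f$), with $f(t)\ge 0$ and $\|f(t)\|_{L^1}=\|f_0\|_{L^1}$. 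Then $V_f(t)=V*\rho_f(t)$ is bounded, $\|V_f(t)\|_{L^\infty}\le\|V\|_{L^\infty}\|f_0\|_{L^1}$, and norm-continuous in $t$, so the linear Schr\"odinger equation $i\partial_t\psi=-\tfrac12\Delta\psi+V_f(t)\psi$ has a unique solution $\psi\in C([-T,T];L^2)$ with $\|\psi(t)\|_{L^2}=\|\vp_0\|_{L^2}$ (the potential being real). This defines a map $\Gamma:\vp\mapsto\psi$ that preserves the $L^\infty_tL^2$-ball of radius $\|\vp_0\|_{L^2}$.

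The contraction estimate is where the low regularity of $f_0$ must be handled. For $\vp_1,\vp_2$ in that ball, a Gr\"onwall argument on \eqref{ODE}, using $\|\nabla_xF_{\vp_1}\|_{L^\infty}\le\|\nabla^2V\|_{L^\infty}\|\vp_0\|_{L^2}^2$ and $\|F_{\vp_1}(s)-F_{\vp_2}(s)\|_{L^\infty}\le\|\nabla V\|_{L^\infty}\big(\|\vp_1\|_{L^2}+\|\vp_2\|_{L^2}\big)\|\vp_1(s)-\vp_2(s)\|_{L^2}$, controls $\sup_{|s|\le t}\|\Phi_{\vp_1}(s)-\Phi_{\vp_2}(s)\|_{L^\infty(\R^{2d})}$ by $C(t)\int_0^{|t|}\|\vp_1(s)-\vp_2(s)\|_{L^2}\,\d s$. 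Since $f_0$ has no regularity one cannot bound $\|f_1(t)-f_2(t)\|_{L^1}$ by the flow distance directly; instead, because $V$ is globally Lipschitz one pushes the flow distance through the convolution: changing variables by the volume-preserving flow and writing $\pi_x$ for the spatial projection,
\[
|V_{f_1}(t,x)-V_{f_2}(t,x)|=\Big|\int_{\R^{2d}}\big(V(x-\pi_x\Phi_{\vp_1}(t,z))-V(x-\pi_x\Phi_{\vp_2}(t,z))\big)f_0(z)\,\d z\Big|\le\|\nabla V\|_{L^\infty}\|f_0\|_{L^1}\,\|\Phi_{\vp_1}(t)-\Phi_{\vp_2}(t)\|_{L^\infty}.
\]
Substituting into the Duhamel formula for $\psi_1-\psi_2$ and using $\|V_{f_2}\|_{L^\infty}\le\|V\|_{L^\infty}\|f_0\|_{L^1}$ yields, for $T$ small depending only on $\|\vp_0\|_{L^2}$, $\|f_0\|_{L^1}$ and $\|V\|_{W^{2,\infty}}$, the bound $\sup_{[-T,T]}\|\psi_1-\psi_2\|_{L^2}\le\tfrac12\sup_{[-T,T]}\|\vp_1-\vp_2\|_{L^2}$. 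Banach's theorem then gives the unique local solution, and the same estimates together with volume preservation (in $f_0$) give continuous dependence on the data.

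To upgrade to a global solution, note that $\|\vp(t)\|_{L^2}=\|\vp_0\|_{L^2}$ and $\|f(t)\|_{L^1}=\|f_0\|_{L^1}$ are conserved and the local existence time above depends only on these conserved quantities and on $V$; iterating gives existence on all of $\R$, proving (1). Statement (2) is then extracted from the construction: $f(t)=f_0\circ\Phi_\vp^{-1}(t)$ is by definition the weak-mild solution, and it is the unique transport solution since the field is Lipschitz; and since $V_f\in C(\R;L^\infty)$ is real-valued, the operators $-\tfrac12\Delta+V_f(t)$ generate a strongly continuous unitary propagator $U_f(t,s)$ by the classical theory for bounded time-dependent perturbations, and $\vp$ solves the same Duhamel equation as $U_f(\cdot,0)\vp_0$, hence equals it; the conservation laws follow. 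For (3), under the additional regularity one first propagates $\vp\in C(\R;H^2)$, using that $\nabla V_f=\nabla V*\rho_f$ and $\nabla^2V_f=\nabla^2V*\rho_f$ are locally uniformly bounded so that $U_f(t,s)$ preserves $H^2$; then $\partial_t\vp=-\tfrac12\Delta\vp+V_f\vp\in C(\R;L^2)$ gives $\vp\in C^1(\R;L^2)$. For $f$, the field $F_\vp$ lies in $C(\R;C^1_b(\R^d))$ (convolution of the bounded kernel $\nabla V$, resp.\ $\nabla^2V$, with the $L^1$ function $|\vp|^2$ is continuous), so $\Phi_\vp$ is $C^1$ in $(t,z)$; differentiating $f(t)=f_0\circ\Phi_\vp^{-1}(t)$ and propagating the weighted $W^{1,1}$-bound yields $f\in C^1(\R;L^1)$, so the transport equation holds strongly.

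The hard part is precisely closing the contraction in the second step without any regularity or moment of $f_0$: the naive $L^1$-stability of linear transport would require $f_0\in W^{1,1}$, which is not assumed (and, in the applications, is exactly what one wants to avoid, since the relevant fermion data is of Thomas--Fermi type). The device that resolves this is the observation that only the sup-norm distance of the two characteristic flows enters the estimate of $V_{f_1}-V_{f_2}$, via $\nabla V\in L^\infty$. The remaining ingredients---the propagator theory for the time-dependent Schr\"odinger operator with bounded potential, the characteristic representation of the (linear) Vlasov solution, and the bootstrap to strong solutions---are classical, following e.g.\ \cite[Appendix A]{BenedikterPortaSaffirioSchlein2016}.
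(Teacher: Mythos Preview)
Your proposal is correct and is precisely the approach the paper indicates: the authors do not give a proof of Theorem~\ref{theorem wp 2} at all, but only refer the reader to \cite[Appendix A]{BenedikterPortaSaffirioSchlein2016} for a similar result whose argument ``can be adapted to the present case''. Your write-up is a faithful and correct fleshing-out of that adaptation, including the key device (pushing the flow distance through the convolution with $\nabla V$ rather than estimating $\|f_1-f_2\|_{L^1}$ directly) that handles the low regularity of $f_0$; your remark that the stated hypothesis $\nabla V\in\mathrm{Lip}$ should really be read together with the standing assumption $\int\langle\xi\rangle^2|\hat V(\xi)|\,\d\xi<\infty$ (so that $V\in W^{2,\infty}$ and both $F_\vp$ and $V_f$ are well defined without moment hypotheses) is also well taken.
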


		\begin{remark}
			\label{remark grow}
			Let $(f,\vp)$ be the weak-mild solution
			of the Vlasov-Hartree equation extracted from Theorem \ref{theorem wp 2}, 
			and let $\Phi_\vp(t) $
			be the solution map of the associated ODE. 
			Then, a Gr\"onwall argument shows 
			that there exists a map 
			$\kappa : \R \mapsto (0,\infty)$ 
			such that the following bound is satisfied
			\begin{equation}
				| \Phi_\vp(t,z) | 
				\leq \kappa (t) |z| \ , \qquad \forall t\in \R , \ \forall z \in \R^{2d} \ . 
			\end{equation}
		\end{remark}

		\section{Calculation of the Infinitesimal Generator}
		\label{appendix inf generator}
		In this section, we give 
		more details of the calculation
		of the infinitesimal generator
		of the 
		fluctuation   dynamics $\U(t,s)$, 
		introduced in \eqref{unitary U}. 
		This is the time-dependent self-adjoint operator
		$\L(t)$ on $\F$ defined through the equations 
		\begin{equation}
			i \hbar \partial_t \U(t,s)
			= 
			\mathcal L (t) \U(t,s) \ , 
			\qquad
			\U(t,t)=\1 
		\end{equation}
		where $ t,s\in \R$. 
		In what follows, 
		we let 
		$(\omega,\vp)$  be a strong  solution 
		of the Hartree-Hartree equation \eqref{hartree hartree}, 
		so that the unitary maps
		$\W_t$ and $\calR_t$
		defined in \eqref{definition W and R}
		are differentiable with respect to $t\in\R$. 
		Note that the final result of this section is contained in Proposition \ref{prop appendix}
		and requires no $H^2$ regularity  
		of the solution $(\omega,\vp)$.
		Thus, 
		an approximation argument 
		shows that the result also holds for 
		mild solutions--we leave the details to the reader. 
		
		\vspace{2mm}
		
		Our purpose here is to given an explicit representation of $\L(t)$ in terms
		of creation- and annihilation-
		operators. 
		As a first step in our calculation, we see that the unitarity of the maps easily imply that 
		$\L(t)$ is the contribution of three terms. Namely, for all $ t \in \R$ there holds 
		\begin{equation}
			\label{generator 1} 
			\L(t) 
			=
			\i \hbar \partial_t \calR_t^*  \calR_t \otimes \1 
			+
			\1 \otimes \i \hbar \partial_t \W_t^* \W_t 
			+ 
			( \calR_t \otimes \W_t 		)^*
			\H 
			(\calR_t \otimes \W_t ) \ . 
		\end{equation}
		We now proceed to  calculate each term separately. 
		
		\subsection{Calculation of $\partial_t \calR^*_t \calR_t $}
		Let us first recall some facts and notations 
		that we have   introduced
		in Section \ref{section second quantization}.
		Namely, denoting by $\omega_t \equiv \omega(t) = \sum_{i=1}^M \ket{\phi_i(t)}    \bra{\phi_i(t)}$
		the fermion component of the solution of the Hartree-Hartree equation \eqref{hartree hartree}, 
		we let
		$u_t \equiv u(t)$ and $  v_t  \equiv v(t)$
		be the bounded operators on $L^2(\R^d)$
		defined as
		$u_t =    1 - \omega_t 
		$
		and
		$
		v_t 
		= 
		\sum_{i=1}^M 
		\ket{ \overline{ \phi_i(t) }} 
		\bra{ \phi_i(t)  } . 
		$
		The kernels of these operators define the  distributions 
		$	u_{t,x} (y)  \equiv  u_t (y,x)$
		and 
		$				v_{t,x} (y) \equiv v_t (y,x) $,
		for all $t\in \R$ and $x,y\in \R^d$. 
		Finally, 
		let us recall that we have introduced 
		in \eqref{hamiltonians} the  operator $h_F(t)$ 
		as   the one-particle   Hamiltonian, 
		driving the fermion dynamics. 
		In particular, 
		one may verify that for all $t\in \R$ there holds 
		\begin{align}
			\label{u and v equations}			
			\i \hbar \partial_t u_t   = [ h_F (t) , u_t  ]   
			\qquad 
			\t{and}
			\qquad 
			\i \hbar \partial_t \overline{ v_t }   
			& =
			h_F(t) \overline{v_t}
			+
			\overline{v_t} \overline{ h_F(t) }  \  .
		\end{align}

		\vspace{2mm}

		In order to calculate the first term in the expansion \eqref{generator 1}, 
		we start with the following auxiliary Lemma.
		The  proof is a simplification of the argument contained in 
		\cite[Proposition 3.1]{BenedikterJaksicPortaSaffirioSchlein2016}
		for   mixed states.
		Note that a similar calculation has been carried out in 
		\cite{BenedikterDesio2022} for pure states, 
		which unfortunately is not 
		precise enough for our interests.

		\begin{lemma}
			\label{lemma generator R}
			Assume that for all $j=1, \ldots, M $, 
			the orbitals are differentiable 
			in $L^2$, 
			$t\mapsto \phi_j(t)$. 
			Then, we find that for all $t\in \R$
			there holds 
			\begin{equation}
				\i \hbar \partial_t 
				\calR_t^* \, \calR_t 
				=
				\mathcal{S}_F(t) \1 
				+ 
				\int_{\R^{2d}} \calC  (t; x,y) a_x^* a_y \d x \d y 
				+
				\frac{1}{2}
				\Bigg( 
				\int_{ \R^{2d}}
				\mathcal{D}(t ; x,y) a_x^* a_y^* \, dx dy 
				+ h.c
				\Bigg)  \ . 
			\end{equation}
			Here, $\calC (t)$ and $\calD (t) $ are  the operators on $L^2(\R^d)$
			given 
			by 
			\begin{equation}
				\calC (t)  =
				(i \hbar \partial_t u_t) u_t
				+ 
				(i\hbar \partial_t \overline{v_t}) v_t  
				\qquad \t{and} \qquad 
				\calD (t)   = 
				(i \hbar \partial_t u_t) \overline{v_t}
				+ 
				(i\hbar \partial_t \overline{v_t}) \overline{u_t}       \ , 
			\end{equation}
			and $\calS_F(t)\in \R$ is the scalar term.
			\begin{align}
				\textstyle
				\mathcal{S}_F(t)   = 
				- \sum_{j=1}^M
				\<   
				\phi_j(t) 
				, 
				i \hbar \partial_t
				\phi_j(t) \>_{L^2} \ . 
			\end{align}
		\end{lemma}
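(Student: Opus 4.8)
The plan is to exploit that $\calR_t = \calR[\omega(t)]$ implements the time-dependent Bogoliubov transformation with blocks $u_t, v_t$, and to read off the generator $\mathcal{K}(t) \equiv i\hbar(\partial_t\calR_t^*)\calR_t$ from how it acts on creation and annihilation operators rather than computing $\partial_t\calR_t$ directly (this is the simplification of \cite[Proposition 3.1]{BenedikterJaksicPortaSaffirioSchlein2016}, which is phrased for mixed states). Since $t\mapsto\phi_j(t)$ is $L^2$-differentiable, $\partial_t u_t = -\partial_t\omega_t$ and $\partial_t\overline{v_t}$ are bounded finite-rank operators, so $\calR_t$ is strongly differentiable; from $\calR_t^*\calR_t = \1$ one gets $\calR_t^*(\partial_t\calR_t) = -i\hbar^{-1}\mathcal{K}(t)$ and $\mathcal{K}(t)^* = \mathcal{K}(t)$. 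Differentiating the conjugation relation $\calR_t^* a_x\calR_t = a(u_{t,x}) + a^*(\overline{v_{t,x}})$ of Lemma \ref{lemma fermion properties}(2) in $t$, inserting $\1 = \calR_t\calR_t^*$ between the factors in each of the two terms produced on the left, and using the two displayed identities for $\mathcal{K}(t)$, yields
$$ i\hbar\,\partial_t\big(\calR_t^* a_x\calR_t\big) \;=\; \big[\,\mathcal{K}(t)\,,\,\calR_t^* a_x\calR_t\,\big], \qquad x\in\R^d , $$
and the analogous identity for $a_x^*$. Because the Bogoliubov transformation is invertible, this prescribes $[\mathcal{K}(t), a(f)]$ and $[\mathcal{K}(t), a^*(f)]$ for every $f\in L^2(\R^d)$ as explicit operators, linear in the creation/annihilation operators, whose coefficients involve $\partial_t u_t$ and $\partial_t\overline{v_t}$.

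Next I would make the ansatz that $\mathcal{K}(t)$ is the general self-adjoint operator that is at most quadratic in $a^\#$ modulo a scalar,
$$ \mathcal{K}(t) = \mathcal{S}_F(t)\,\1 + \d\Gamma_F[\calC(t)] + \tfrac12\Big(\int_{\R^{2d}}\calD(t;x,y)\,a_x^* a_y^*\,\d x\,\d y \;+\; h.c.\Big), $$
with $\calC(t)=\calC(t)^*$, compute the commutators $[\d\Gamma_F[\calC],a^\#(f)]$ and $[\int\calD\,a^*a^*,a^\#(f)]$ via the CAR (the first producing $\pm a^\#$ of $\calC^{(*)}f$, the second producing $a^*$ of the antisymmetrization of $\calD$ applied to $\bar f$, and the hermitian conjugate its $a$-counterpart), and match term by term with the operators found in the first step. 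Simplifying with the algebraic relations $u_t^*=u_t^2=u_t$, $v_t^*=\overline{v_t}$, $u_t^*u_t + v_t^*v_t=\1$, $u_t\overline{v_t}=v_tu_t=0$ of Lemma \ref{lemma fermion properties}(4)--(5) then forces exactly $\calC(t) = (i\hbar\partial_t u_t)u_t + (i\hbar\partial_t\overline{v_t})v_t$ and $\calD(t) = (i\hbar\partial_t u_t)\overline{v_t} + (i\hbar\partial_t\overline{v_t})\overline{u_t}$. The scalar is invisible to all these commutators, so $\mathcal{K}(t)$ minus the quadratic part commutes with every $a^\#(f)$, hence equals $\mathcal{S}_F(t)\1$ by irreducibility of the Fock representation, with $\mathcal{S}_F(t)=\langle\Omega_F,\mathcal{K}(t)\Omega_F\rangle$.

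Finally I would evaluate this last expectation directly. Writing $\Psi_M(t)\equiv\calR_t\Omega_F = a^*(\phi_1(t))\cdots a^*(\phi_M(t))\Omega_F$ and noting $\calR_t^*\Psi_M(t)=\Omega_F$ is constant in $t$, differentiation gives $(\partial_t\calR_t^*)\Psi_M(t) = -\calR_t^*(\partial_t\Psi_M(t))$, whence
$$ \mathcal{S}_F(t) = i\hbar\,\langle\Omega_F,(\partial_t\calR_t^*)\calR_t\Omega_F\rangle = -i\hbar\,\langle\Psi_M(t),\partial_t\Psi_M(t)\rangle = -\sum_{j=1}^M\langle\phi_j(t),\,i\hbar\,\partial_t\phi_j(t)\rangle , $$
the last equality by the Leibniz rule applied to $\Psi_M(t)$ together with orthonormality of $\{\phi_j(t)\}_{j=1}^M$ (the Gram determinant collapsing to the diagonal term). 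This is the stated scalar, completing the identification.

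I expect the main obstacle to be the bookkeeping in the middle step: tracking the signs coming from the CAR and from the antilinearity of $f\mapsto a(f)$, and checking that the off-diagonal contribution emerges with the \emph{unsymmetrized} kernel $\calD(t)$ as written, with the antisymmetry of $a_x^* a_y^*$ absorbing the rest. This is precisely where the relations $u_t\overline{v_t}=v_tu_t=0$ and $u_t^2=u_t$ are essential, and it is the one place where a careless manipulation would produce a symmetrized kernel or the wrong sign.
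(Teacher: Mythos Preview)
Your proposal is correct and follows essentially the same approach as the paper's proof: differentiate the conjugation relation to obtain the commutator of $\mathcal{K}(t)$ with $a^\#$, make the quadratic ansatz, match commutators (the paper phrases the matching as inverting the unitary Bogoliubov matrix $\nu$, which is equivalent to your use of the algebraic relations from Lemma~\ref{lemma fermion properties}), and compute the scalar via the vacuum expectation using $\calR_t\Omega_F = a^*(\phi_1(t))\cdots a^*(\phi_M(t))\Omega_F$.
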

				\begin{proof}[Sketch of proof]
							For simplicity we write
					$\calR_t = \calR$, 
					$u_t = u$
					and
					$v_t = v$. 
					We start with the following observation: 
					for  $f\in L^2(\R^d)$  
					we obtain the conjugation relations
			\begin{equation}
				\calR^* a^* (f) \calR 
				= 
				a^*(uf)
				+ 
				a(\overline{vf}) \ . 
			\end{equation}
			and thus, taking the time derivative on both sides
			and using $(d/dt \calR )^* \calR = - \calR^* (d/dt \calR)$
			we obtain  
			\begin{equation}
				[ (d/dt \, \calR^*) \calR , \calR^* a^*(f) \calR    ] = 
				a^*\Big(  \frac{du}{dt} f \big) 
				+
				a\Big(
				\frac{d  \overline v}{dt} 
				\overline{f}
				\Big) \ . 
			\end{equation}  
			We plug in again  the conjugation relations back in the previous equation and multiply both sides with $i$
			to to obtain
			\begin{equation}
				\label{lemma b2 eq 1}
				[ 
				( i \, d/dt \, \calR^*) \calR
				,
				a^*(uf)
				+ 
				a(\overline{vf}) 
				]
				=
				a^*\Big(  i \frac{du}{dt} f \big) 
				- 
				a\Big(
				i \frac{d  \overline v}{dt}  
				\overline{f}
				\Big) \ . 
			\end{equation}
			The previous equation 
			determines
			the self-adjoint 
			operator 
			$( i \, d/dt \, \calR^*) \calR$
			up to a scalar term, 
			and implies that 
			is quadratic in creation-
			and annihilation operators.
			Thus, 
			there exists operators 
			$C$ and $D$
			on 
			$L^2(\R^d )$
			with operator kernels
			$C(x,y)$
			and $D(x,y)$, 
			and a scalar $S_F \in \R $
			such that 
			\begin{equation}
				( i \, d/dt \, \calR^*) \calR
				= 
				S_F  \1 
				+ 
				\int_{\R^{2d}} C(x,y) a_x^* a_y \d x \d y 
				+ 
				\frac{1}{2}
				\Bigg(
				\int_{\R^{2d}}
				D(x,y) a_x^* a_y^* \d x \d y 
				+ h.c 
				\Bigg)  \ . 
			\end{equation}
			Here,   we   assume without loss of
			 generality that $\overline{C(y,x)} = C(x,y)$
			and
			$D(y,x) =  - D(x,y)$. 
			Let us first compute that operator contributions. 
			A lengthy but straightforward calculation using the CAR 
			allows us to compute the commutators 
			\begin{align}
				\label{lemma b2 eq 2}
				[ ( i \, d/dt \, \calR^*) \calR , a^*(uf)
				+ 
				a(\overline{vf}) 
				]
				= 
				a^*(C uf )
				- a(C \overline{v} \overline{f})
				+ a^*(D v f )
				- 
				a (D \overline u \overline f) \  . 
			\end{align}
			Thus, we compare the right hand sides of  \eqref{lemma b2 eq 1} and \eqref{lemma b2 eq 2}
			to obtain, as operators on $L^2(\R^d)$, the following equations 
			\begin{align}
				\begin{cases}
					C u + D v &  = i  du/dt  \\
					C \overline v + D \overline u & = i d \overline v / dt  \  \\     
				\end{cases}
				\iff
				(C \ D )
				\begin{pmatrix}
					u & \overline v \\ 
					v & \overline u \\ 
				\end{pmatrix}
				= i d/dt ( u \ \overline v  ) \ . 
			\end{align}
			In particular, the matrix  containing $u$ and $v$
			is a unitary map on $L^2(\R^d) \oplus L^2(\R^d)$, 
			which we denote by $\nu$. 
			In fact, using the relations
			$u^* = u $, 
			$v^* = \overline v$
			it is easy to verify that  $\nu^* = \nu$.
			Thus we invert the last equation and solve for the operators  $C$ and $D$
			to find
			\begin{equation}
				C =  i  du/dt   \, u + i d \overline v /dt  \, v 
				\qquad
				\t{and}
				\qquad 
				D = i du/dt   \, \overline v + i d \overline v /dt \, \overline  u \ . 
			\end{equation}
			Finally,  we multiply with $\hbar$
			and identify 
			$\calC = \hbar C$
			and
			$\calD = \hbar D$. 
			As for the scalar contribution, we consider its vacuum expectation value to obtain 
			\begin{align}
				S_F(t) 
				=  \<  \Omega_F , i d\calR^* /dt \, \calR  \Omega_F\>_{\F_F}
				= - \<     \calR \Omega_F , i  d\calR / dt \, \Omega_F    \>_{\F_F} \ .
			\end{align}
			Let us now calculate the right hand side.
			To this end, we write in terms of the orbitals
			$\omega(t) = \sum_{j=1}^M \ket{\phi_j(t)} \bra{\phi_j(t)}$
			the vector
			$\calR \Omega_F  = a^*( \phi_1) \cdots a^*(\phi_M)\Omega_F$.
			A straightforward calculation
			using the CAR and the orthogonality relations 
			$\<\phi_i, \phi_j\>_{L^2}= \delta_{i,j}$ now implies
			\begin{equation}
				S_F(t)
				= - \sum_{j=1}^M \< \phi_j (t),  i d \phi_j(t)/dt \>_{L^2} \ . 
			\end{equation}
			This finishes the proof once we multiply with $\hbar$
			and identify $\calS = \hbar S $. 
		\end{proof}
		
		Let us now give an   explicit 
		representation 
		of the term that we just calculated. 
		Namely, in view of \eqref{u and v equations}
		it is easy to verify that
		\begin{equation}
			\mathcal C (t) 
			= 
			h_F(t) 
			- 
			u_t h_F(t) u_t
			+ 
			\overline{v_t}   \overline{h_F(t)}   v_t 
			\qquad 
			\t{and}
			\qquad 
			\mathcal D (t)  
			= 
			- u_t h_F(t) \overline{v_t}
			+ 
			\overline{v_t} \overline{h_F(t)} \overline{u_t} 
		\end{equation}
		In particular, the second term may be simplified. Namely,  in view of the relations 
		$h_F (t)  = h_F^*(t) $, 
		$ u_t = u^*_t$ and $ \bar v _t  = v_t^*  $, one may check that
		$ (vhu)^* = u h \bar v	$. 
		Consequently, using the  anti-commutation relation $\{a_x^* ,  a_y^*\} = 0 $, 
		one  finds that 
		\begin{equation}
			\int_{ \R^d \times \R^d }  (  \overline{v h u}  )  (x,y) a_x^* a_y^* \d x \d y
			= 
			- 
			\int_{ \R^d \times \R^d } 
			( u h \overline v)(x,y)
			a_x^* a_y^* \d x \d y \ . 
		\end{equation}
		We can then put the calculation from the above lemma in the following form, 
		which is the final result of our calculation
		(we omit the time labels for convenience) 
		\begin{equation}
			( 	i \hbar \partial_t \calR^*)   \calR 
			=
			- \tr \big(
			h_F \omega 
			\big) 
			+ 
			\d \Gamma_F \Big[
			h_F 
			- 
			u  h_F u 
			+ 
			\overline{v}   \overline{h_F}   v 
			\Big] 
			- 
			\Bigg( 				
			\int_{ \R^{2d} } 
			[ u h_F  \bar{v} ](x,y) a_x^* a_y^*   
			\d x \d y  
			+ h.c 
			\Bigg)
		\end{equation}

		\subsection{Calculation of $\partial_t \W^*_t \W_t $}
		Let us now calculate the second contribution of \eqref{generator 1}.
		The time derivatives of Weyl operators that are parametrized
		by a   field $t\mapsto \alpha (t) \in L^2(\R^d)$
		can be regarded  as classical result, 
		and we record it in the following lemma. 
		For reference, see
		\cite[Lemma 3.1]{GinibreVelo1974}.

		\begin{lemma}
			Assume that
			$t \mapsto \vp_t \in L^2(\R^d)$ 
			is   differentiable.
			Then, for all $t\in \R$ there holds 
			\begin{equation}
				i \hbar \partial_t 
				\W_t^* \W_t
				= 
				N  \mathrm{Im}
				\< \vp_t ,  \hbar  \partial_t \vp_t \> 
				-
				\sqrt N 
				\Big(
				b^* ({ i \hbar \partial_t \vp_t}   ) 
				+
				b  ( i \hbar \partial_t \vp_t) 
				\Big)  \ . 
			\end{equation}  
		\end{lemma}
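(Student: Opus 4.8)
The identity is the standard formula for the $t$-derivative of a one-parameter family of Weyl (displacement) operators, so the plan is simply to reproduce that computation in the present notation; see \cite[Lemma 3.1]{GinibreVelo1974}. First I would record that, since $\W_t^*\W_t=\1$, one has $(\partial_t\W_t^*)\W_t=-\W_t^*(\partial_t\W_t)$, so it suffices to compute $\W_t^*(\partial_t\W_t)$ and change sign; here $\partial_t\W_t^*\W_t$ is understood as in \eqref{generator 1}, i.e.\ as $(\partial_t\W_t^*)\W_t$.

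Writing $\alpha(t)\equiv\sqrt N\,\vp_t$, so that $\W_t=\W[\alpha(t)]$ with an exponent linear in $\alpha(t)$, I would compute $\W_t^*\partial_t\W_t$ in one of two equivalent ways. The first uses the Duhamel identity $\partial_t e^{A(t)}=\int_0^1 e^{sA(t)}\,\dot A(t)\,e^{(1-s)A(t)}\,\d s$ together with $e^{sA(t)}=\W[s\alpha(t)]$ and the conjugation relations of Lemma~\ref{lemma properties bosons}, which turn $\W^*[s\alpha(t)]\,\dot A(t)\,\W[s\alpha(t)]$ into $\dot A(t)$ plus a scalar proportional to $s\,(\langle\partial_t\alpha,\alpha\rangle-\langle\alpha,\partial_t\alpha\rangle)=\mp 2is\,\mathrm{Im}\langle\alpha,\partial_t\alpha\rangle$; integrating in $s$ (using $\int_0^1 s\,\d s=\tfrac12$) then produces a term linear in $b,b^*$ coming from $\partial_t\alpha(t)$ plus the c-number $\mp i\,\mathrm{Im}\langle\alpha(t),\partial_t\alpha(t)\rangle$. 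The second route uses the Weyl relation $\W[f]\W[g]=e^{\mp i\,\mathrm{Im}\langle f,g\rangle}\W[f+g]$ — which follows from Baker--Campbell--Hausdorff, the commutator $[b(f)-b^*(f),b(g)-b^*(g)]$ being a scalar — applied to $\W_t^*\W_{t+h}=\W[-\alpha(t)]\W[\alpha(t+h)]$, and then takes the difference quotient $h\to0$, using $\mathrm{Im}\langle\alpha(t),\alpha(t)\rangle=0$ and $\W[hg]=\1+h(b(g)-b^*(g))+o(h)$. Both give $\W_t^*\partial_t\W_t = \mp i\,\mathrm{Im}\langle\alpha,\partial_t\alpha\rangle + (b(\partial_t\alpha)-b^*(\partial_t\alpha))$, with the sign conventions of Section~\ref{section second quantization}.

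To conclude, I would multiply by $-i\hbar$, substitute $\alpha(t)=\sqrt N\vp_t$, and move the factor $i\hbar$ inside the creation/annihilation operators using that $b(\cdot)$ is antilinear while $b^*(\cdot)$ is linear, i.e.\ $i\hbar\,b(g)=-b(i\hbar g)$ and $i\hbar\,b^*(g)=b^*(i\hbar g)$; combining the two resulting terms into $-\sqrt N\big(b^*(i\hbar\partial_t\vp_t)+b(i\hbar\partial_t\vp_t)\big)$ and writing the scalar as $N\,\mathrm{Im}\langle\vp_t,\hbar\partial_t\vp_t\rangle$ yields precisely the claimed formula. I do not expect any real obstacle: since $t\mapsto\vp_t$ is assumed differentiable in $L^2$, all operations are justified by the boundedness of $b(f),b^*(f)$ relative to $\N_B^{1/2}$ (Lemma~\ref{lemma boson estimates}); the only thing to watch is the bookkeeping of the scalar term and of the (anti)linearity of $b$ versus $b^*$, which is also where the sign of $\mathrm{Im}$ is pinned down.
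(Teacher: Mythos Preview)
Your proposal is correct and matches the paper's treatment: the paper does not give a proof at all but simply records the formula as a classical result and cites \cite[Lemma 3.1]{GinibreVelo1974}, exactly as you do, so your outlined BCH/Weyl-relation computation is precisely the standard argument behind that citation. One minor caution on bookkeeping: with the paper's stated definition $\W[f]=\exp(b(f)-b^*(f))$ the operator part comes out with the opposite sign from the lemma, whereas the conjugation relation $\W^*[f]b_y\W[f]=b_y+f(y)$ of Lemma~\ref{lemma properties bosons} (which the rest of the paper uses consistently) corresponds to $\W[f]=\exp(b^*(f)-b(f))$; your hedging on signs is therefore warranted, and you should fix the convention by matching the conjugation formula rather than the displayed definition.
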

		
		We shall use the fact that $\vp(t)$
		solves the Hartree-Hartree equation \eqref{hartree hartree}. 
		Namely, we consider on $H^2(\R^d)$
		the time-dependent bosonic   Hamiltonian 
		$h_B(t)$
		defined in \eqref{hamiltonians} and we conclude that for
		all $t\in \R$
		there holds 
		\begin{equation}
			i \hbar \partial_t 
			\W_t^* \W_t
			= 
			-
			N \mathrm{Re}
			\< \vp_t , h_B(t)  \vp_t \> 
			-
			\sqrt N 
			\Big(
			b^* ( h_B(t)  \vp_t   ) 
			+
			b  ( h_B(t)  \vp_t) 
			\Big)  \ .
		\end{equation}
		
		\subsection{Calculation of $\calR^*_t \W^*_t \H \W_t \calR_t $}		
		
		Our next task it to compute 
		$
		( \calR_t \otimes \W_t 		)^*
		\H 
		(\calR_t \otimes \W_t ) \ . 
		$
		To this end, we shall use extensively
		the conjugation relations 
		for particle-hole transformations   (see Lemma \ref{lemma fermion properties}) 
		\begin{align}
			\calR_t^* a^*_x \calR_t 
			= 
			a^* ( u_{t,x}) 
			+ 
			a( \overline{  v_{t,x} } )		
			\qquad 
			\t{and}
			\qquad 
			\calR_t^* a_x \calR_t 
			= 
			a ( u_{t,x}) 
			+ 
			a^*( \overline{  v_{t,x} } ) \ , 
		\end{align}
		for all $(t,x) \in \R\times \R^d$, and similarly for Weyl operators (see Lemma \ref{lemma properties bosons}) 
		\begin{align}	 
			\W_t^* b_x  \W_t
			=
			b_x +  	\sqrt  N \vp_t(x)		 
			\qquad 
			\t{and}
			\qquad 
			\W_t^* b^*_x  \W_t
			=
			b^*_x +  \sqrt N 	\overline{ 	\vp_t(x)  } \  .
		\end{align}
		where $(t,x) \in \R\times \R^d$. 
		A lengthy but otherwise straightforward calculation
		using these conjugation relations, 
		together with $u^*=u$
		and
		$v^*=\overline v$, 
		yields the following result. 
		\begin{lemma}
			\label{lemma generator conjugation}
			The following holds for all $t \in \R$
			\begin{align}
				( \calR_t \otimes &  \W_t 		)^*					\nonumber 
				\H 		
				(\calR_t \otimes \W_t )  \\
				& 	=
				\bigg( 
				N \|	  \nabla \vp_t	\|_{L^2}^2 		+ \t{Tr }( -\Delta\omega_t ) 	
				\nonumber 
				+ 
				\lambda NM
				\int_{ \R^{2d} } 
				\rho_F(t,x )
				V(x-y) 
				\rho_B(t,y)
				\, dx dy 
				\bigg)  \, \1 \otimes \1 
				\\
				&  + 
				\Bigg(						\nonumber 
				\d \Gamma_F 
				\big[   
				u_t h_F(t) u_t 
				-
				\overline{v_t} h_F(t) v_t 
				\big] 
				+ 
				\int_{ \R^{2d} } 
				( u_t h_F(t) \bar v _t)(x,y) a_x^* a_y^* dx \, dy
				+ \t{h.c}
				\Bigg)
				\otimes \1 			 \\
				&   
				+  
				\1 \otimes
				\bigg(
				\d \Gamma_B 
				\big[ h_B(t) \big]
				+
				\sqrt N \big(	b^* (h_B(t) \vp_t )		+ 	b(h_B(t) \vp_t )	\big) 	
				\bigg) 
				\nonumber 			\\
				&  +  \lambda \sqrt  N \L_{2,1} (t) + \lambda \L_{2,2} (t) \ . 
			\end{align}
			Here, we denote
			\begin{align}
				\nonumber 
				\L_{2,1} (t) 
				& = 
				\int_{ \R^d \times \R^d  } V(x - y ) a^*(u_{t,x}) a( u_{t,x}) \otimes \big(		 \vp_t (y) b_y^* + h.c	\big)
				\ \d x \d y 	\\
				\nonumber 
				& - 
				\int_{ \R^d \times \R^d  } V(x - y ) a^*( \overline{v _{t,x}}) a(\overline{v_{t,x}}) \otimes
				\big(		 \vp_t(y) b_y^* + h.c	\big) \ \d x \d y 	\\
				\nonumber 
				& + 
				\int_{ \R^d \times \R^d  } V(x - y ) a^*(  {u_{t,x}}) a^*(\overline{v_{t,x}}) 
				\otimes
				\big(		 \vp_t(y) b_y^* + h.c	\big)  \ \d x \d y 	\\
				& + 	 
				\int_{ \R^d \times \R^d  } V(x - y ) a(  \overline{v _{t,x}}    ) a( u_{t,x} ) \otimes 
				\big(		 \vp_t(y) b_y^* + h.c	\big)	
				\ \d x \d y 	  
			\end{align}
			and 
			\begin{align}
				\nonumber 
				\L_{2,2} (t) 
				& = 
				\int_{ \R^d \times \R^d  } V(x - y ) 
				a^*(  u_{t,x}) a(    u_{t,x} ) 
				\otimes b_y^*b_y
				\ \d x \d y 	\\
				\nonumber 
				& - 
				\int_{ \R^d \times \R^d  } V(x - y ) 
				a^*( \overline{v _{t,x}}) 
				a(\overline{v_{t,x}}) 
				\otimes b_y^*  b_y
				\ \d x \d y 	\\
				\nonumber 
				& +
				\int_{ \R^d \times \R^d  } V(x - y ) 
				a^*(  {u_{t,x}}) 
				a^*(\overline{  v_{t,x} }) 
				\otimes b_y^*b_y
				\ \d x \d y 	\\
				& + 		 
				\int_{ \R^d \times \R^d  } V(x - y ) a(  \overline{v _{t,x}}    ) a( u_{t,x}  ) \otimes b_y^*b_y
				\ \d x \d y 	  \ . 
			\end{align} 
		\end{lemma}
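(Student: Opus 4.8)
The plan is to substitute the conjugation relations of Lemmas \ref{lemma fermion properties} and \ref{lemma properties bosons} directly into the three summands of $\H$ in the form \eqref{hamiltonian} and then normal-order using the CAR and CCR. Throughout I would suppress the time label, writing $\calR=\calR_t$, $\W=\W_t$, $u=u_t$, $v=v_t$, $\vp=\vp_t$, $\omega=\omega_t$, $h_F=h_F(t)$, $h_B=h_B(t)$, and use freely $u^*=u^2=u$, $v^*=\overline v$, $u\overline v=vu=0$, and $u^*u+v^*v=\1$ (hence $v^*v=\omega$), together with the fact that $-\Delta$ and multiplication by a real function have real integral kernel. First I would conjugate the fermionic kinetic energy $\frac{\hbar^2}{2m_F}\d\Gamma_F[-\Delta]$: replacing $a_x^*\mapsto a^*(u_x)+a(\overline{v_x})$ and $a_x\mapsto a(u_x)+a^*(\overline{v_x})$, multiplying out, and bringing $a(\overline{v_x})a^*(\overline{v_y})$ to normal order produces (a) a scalar, equal after contraction against $(-\Delta)(x,y)$ to $\frac{\hbar^2}{2m_F}\tr(-\Delta\,\omega)$ because $\langle\overline{v_x},\overline{v_y}\rangle$ is the kernel of $v^*v=\omega$; (b) a number-conserving part of the form $\d\Gamma_F$ of an operator built from $u$, $v$ and $-\Delta$; and (c) a pairing part $\int K(x,y)\,a_x^*a_y^*\,\d x\,\d y+\mathrm{h.c.}$ with $K$ built from $u$, $\overline v$ and $-\Delta$. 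Similarly, conjugating $\frac{\hbar^2}{2m_B}\d\Gamma_B[-\Delta]$ by $\W$ via $b_y\mapsto b_y+\sqrt N\vp(y)$ yields $\frac{\hbar^2}{2m_B}\big(\d\Gamma_B[-\Delta]+\sqrt N\big(b^*(-\Delta\vp)+b(-\Delta\vp)\big)+N\|\nabla\vp\|_{L^2}^2\big)$.

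Next I would conjugate the interaction $\lambda\int V(x-y)\,a_x^*a_x\,b_y^*b_y\,\d x\,\d y$, treating the two factors separately. On the bosonic side, $\W^* b_y^*b_y\,\W=b_y^*b_y+\sqrt N\big(\vp(y)b_y^*+\overline{\vp(y)}b_y\big)+N|\vp(y)|^2$. On the fermionic side, the same normal-ordering as above gives
\[
\calR^* a_x^*a_x\,\calR=\omega(x,x)+\big(a^*(u_x)a(u_x)-a^*(\overline{v_x})a(\overline{v_x})\big)+\big(a^*(u_x)a^*(\overline{v_x})+a(\overline{v_x})a(u_x)\big),
\]
using $\langle\overline{v_x},\overline{v_x}\rangle=(v^*v)(x,x)=\omega(x,x)$. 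Multiplying these two expansions, integrating against $V(x-y)$, and sorting the terms by which tensor factor acts as the identity splits the conjugated interaction into four groups: (i) the purely scalar term, from $\omega(x,x)\cdot N|\vp(y)|^2$, which equals $\lambda NM\!\int\rho_F(t,x)V(x-y)\rho_B(t,y)\,\d x\,\d y$ since $\omega(x,x)=M\rho_F(t,x)$ and $|\vp(y)|^2=\rho_B(t,y)$; (ii) the terms with the identity on the bosons, carrying the coefficient $\lambda N\!\int V(x-y)|\vp(y)|^2\,\d y=\lambda N(V*\rho_B)(t,x)$, producing a $\d\Gamma_F$ piece and an $a^*a^*+\mathrm{h.c.}$ piece; (iii) the terms with the identity on the fermions, carrying $\lambda M\!\int V(x-y)\omega(x,x)\,\d x=\lambda M(V*\rho_F)(t,y)$, producing a $\d\Gamma_B$ piece and linear-in-$b$ pieces; and (iv) the genuinely coupled terms — the fermion bilinears from $\calR^* a_x^*a_x\calR$ tensored with $\sqrt N\big(\vp(y)b_y^*+\overline{\vp(y)}b_y\big)$, which reassemble to $\lambda\sqrt N\,\L_{2,1}(t)$, and the same bilinears tensored with $b_y^*b_y$, which reassemble to $\lambda\,\L_{2,2}(t)$; here the four lines of each $\L_{2,j}$ match the three operator pieces displayed above, with the $a^*a^*+aa$ piece split across two lines.

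Finally I would assemble the pieces. In the fermionic sector the number-conserving operator from the kinetic term combines with $\d\Gamma_F\big[u(\lambda N\,V*\rho_B)u-\overline v(\lambda N\,V*\rho_B)v\big]$ from group (ii) into $\d\Gamma_F[u h_F u-\overline v h_F v]$, using $\overline{h_F}=h_F$ by the reality of the kernel, and the two pairing contributions combine, using the antisymmetry of $a_x^*a_y^*$, into $\int(u h_F\overline v)(x,y)a_x^*a_y^*\,\d x\,\d y+\mathrm{h.c.}$, giving the stated $(\cdots)\otimes\1$ term. In the bosonic sector the kinetic piece plus group (iii) assemble $-\frac{\hbar^2}{2m_B}\Delta+\lambda M(V*\rho_F)$ into $h_B$, both inside $\d\Gamma_B$ and in the linear term $\sqrt N\big(b^*(h_B\vp)+b(h_B\vp)\big)$. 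Collecting the three scalar contributions — $\frac{\hbar^2}{2m_F}\tr(-\Delta\,\omega_t)$, $\frac{\hbar^2}{2m_B}N\|\nabla\vp_t\|_{L^2}^2$, and the term from (i) — gives the stated $\1\otimes\1$ coefficient, and group (iv) supplies $\lambda\sqrt N\,\L_{2,1}(t)+\lambda\,\L_{2,2}(t)$, which completes the identity.

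I expect the main obstacle to be purely combinatorial: the fermion-density conjugation in the second step, once multiplied by the three-term bosonic expansion and integrated against $V(x-y)$, produces on the order of a dozen terms that must be sorted into the four groups without sign errors, and one must track the two normal-ordering scalars — the $\omega_t(x,x)$ from $a(\overline{v_x})a^*(\overline{v_x})$ and the $\langle\overline{v_x},\overline{v_y}\rangle$ in the kinetic term — carefully enough that the $\1\otimes\1$ coefficient and the recombinations into $h_F$ and $h_B$ come out exactly as stated. No conceptual difficulty arises beyond this bookkeeping, which is why only a sketch is warranted.
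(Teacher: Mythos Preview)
Your proposal is correct and follows precisely the approach the paper takes: the paper itself says only that ``a lengthy but otherwise straightforward calculation using these conjugation relations, together with $u^*=u$ and $v^*=\overline v$, yields the following result,'' and your sketch fills in exactly those details---conjugating each of the three pieces of $\H$ via Lemmas \ref{lemma fermion properties} and \ref{lemma properties bosons}, normal-ordering, and regrouping into the $h_F$, $h_B$ and $\L_{2,j}$ blocks. Your identification of the bookkeeping as the only obstacle is accurate; there is no additional conceptual input.
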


		\subsection{Putting everything together}
		We put together the last three lemmas to find the following explicit representation of the generator $\L(t)$. 
		
		\begin{proposition}
			\label{prop appendix}
			Let $\U(t,s)$ be the unitary transformation defined in \eqref{unitary U}, and let $\L(t)$ be its inifinitesimal generator. 
			Then, $\L(t)$ admits the following representation 
			\begin{equation}
				\L(t) 
				=
				S(t) \1 \otimes \1 
				+ 
				\d \Gamma [h_F(t) ] \otimes \1 
				+
				\1 \otimes \d \Gamma [h_B(t)]
				+
				\lambda \sqrt N \L_{2,1}(t) 
				+
				\lambda \L_{2,2}(t)  \ . 
			\end{equation}
			Here
			the scalar term is
			$S(t) = -  \lambda NM \int_{\R^{2d}}V(x-y) \rho_F(t,x) \rho_B(t,y) dxdy$, 
			$h_F(t)$ and $h_B(t)$ are the 1-particle Hamiltonians defined in 
			\eqref{hamiltonians}, and
			the operators $\L_{2,1}(t) $ and $\L_{2,2}(t)$ are defined in Lemma \ref{lemma generator conjugation}. 
		\end{proposition}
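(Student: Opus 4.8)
\noindent The plan is to assemble the three building blocks already established above. Starting from the identity \eqref{generator 1} for $\L(t)$, which follows by differentiating $\U(t,s)$ in $t$ and using that $\calR_t$ and $\W_t$ are unitary (so that $(\i\hbar\partial_t\calR_t^*)\calR_t$ and $(\i\hbar\partial_t\W_t^*)\W_t$ are self-adjoint), I would substitute: Lemma~\ref{lemma generator R} for $(\i\hbar\partial_t\calR_t^*)\calR_t$, the lemma on the time derivative of Weyl operators for $(\i\hbar\partial_t\W_t^*)\W_t$, and Lemma~\ref{lemma generator conjugation} for $(\calR_t\otimes\W_t)^*\H(\calR_t\otimes\W_t)$. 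The sum then decomposes into: a scalar; fermionic diagonal terms $\d\Gamma_F[\,\cdot\,]\otimes\1$ and a bosonic diagonal term $\1\otimes\d\Gamma_B[h_B(t)]$; fermionic off-diagonal pairing terms $\int(u_th_F(t)\bar v_t)(x,y)\,a_x^*a_y^*\,\d x\,\d y$ plus its hermitian conjugate; bosonic linear terms $\pm\sqrt N\big(b^*(h_B(t)\vp_t)+b(h_B(t)\vp_t)\big)$; and the interaction terms $\lambda\sqrt N\,\L_{2,1}(t)+\lambda\,\L_{2,2}(t)$ from Lemma~\ref{lemma generator conjugation}. It then remains to see that everything cancels except $\d\Gamma_F[h_F(t)]\otimes\1$, $\1\otimes\d\Gamma_B[h_B(t)]$, the interaction terms, and the asserted scalar $S(t)$.

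First I would use the evolution equations \eqref{u and v equations} for $u_t$ and $\bar v_t$, together with the projection identities $u_t^2=u_t$, $\bar v_tv_t=\omega_t$, $u_t\bar v_t=0$, $v_tu_t=0$ from Lemma~\ref{lemma fermion properties}, to rewrite the operators $\calC(t),\calD(t)$ appearing in Lemma~\ref{lemma generator R} as $\calC(t)=h_F(t)-u_th_F(t)u_t+\bar v_t\,\overline{h_F(t)}\,v_t$ and $\calD(t)=-u_th_F(t)\bar v_t+\bar v_t\,\overline{h_F(t)}\,\overline{u_t}$. Using the anticommutation relation $\{a_x^*,a_y^*\}=0$ to symmetrize the kernel of $\bar v_t\,\overline{h_F(t)}\,\overline{u_t}$ (which is the conjugate-transpose of $v_th_F(t)u_t$), the $a^*a^*$-part of $(\i\hbar\partial_t\calR_t^*)\calR_t$ collapses to $-\big(\int(u_th_F(t)\bar v_t)(x,y)\,a_x^*a_y^*\,\d x\,\d y+\text{h.c.}\big)$, i.e. exactly the negative of the pairing term produced by conjugating $\H$ in Lemma~\ref{lemma generator conjugation}; the two cancel. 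Likewise, the bosonic linear term $-\sqrt N\big(b^*(h_B(t)\vp_t)+b(h_B(t)\vp_t)\big)$ coming from $(\i\hbar\partial_t\W_t^*)\W_t$ cancels the $+\sqrt N\big(b^*(h_B(t)\vp_t)+b(h_B(t)\vp_t)\big)$ produced by conjugating $\H$.

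Next I would recombine the diagonal and scalar pieces. The $\calR_t$-derivative contributes $\d\Gamma_F\big[h_F(t)-u_th_F(t)u_t+\bar v_t\,\overline{h_F(t)}\,v_t\big]$ and the conjugation of $\H$ contributes $\d\Gamma_F\big[u_th_F(t)u_t-\bar v_th_F(t)v_t\big]$; since $h_F(t)=-\tfrac{\hbar^2}{2m_F}\Delta+\lambda N(V*\rho_B(t))$ has a real integral kernel we have $\overline{h_F(t)}=h_F(t)$, so the two add up to $\d\Gamma_F[h_F(t)]$, while $\1\otimes\d\Gamma_B[h_B(t)]$ and the interaction terms survive unchanged from Lemma~\ref{lemma generator conjugation}. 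Finally, the scalars $-\tr\big(h_F(t)\omega_t\big)$ (Lemma~\ref{lemma generator R}), $-N\,\mathrm{Re}\,\langle\vp_t,h_B(t)\vp_t\rangle$ (Weyl lemma), and $N\|\nabla\vp_t\|_{L^2}^2+\tr(-\Delta\omega_t)+\lambda NM\!\int\!\rho_F(t,x)V(x-y)\rho_B(t,y)\,\d x\,\d y$ (Lemma~\ref{lemma generator conjugation}) are added; substituting the explicit forms of $h_F(t),h_B(t)$ from \eqref{hamiltonians} and keeping the prefactors $\hbar^2/2m_F$, $\hbar^2/2m_B$ from $\H$ (see \eqref{hamiltonian}), the kinetic scalars cancel in pairs and the interaction scalars combine — using the evenness of $V$ and relabelling the integration variables — to $S(t)=-\lambda NM\int V(x-y)\rho_F(t,x)\rho_B(t,y)\,\d x\,\d y$, as claimed.

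The only genuinely laborious part is the one already encapsulated in Lemma~\ref{lemma generator conjugation}: expanding the three terms of $\H$ (see \eqref{hamiltonian}) through the conjugation relations for $\calR_t$ and $\W_t$ — which turns each of them into a polynomial in the $a^\#$ and $b^\#$ of degree up to four — and normal-ordering everything while tracking every scalar contraction term. Granting that lemma and Lemma~\ref{lemma generator R}, the proof of Proposition~\ref{prop appendix} is a matter of inspection; I expect the only points needing real care to be the sign bookkeeping in the cancellation of the pairing terms (which hinges on the anticommutation relations used to symmetrize kernels) and the reality identity $\overline{h_F(t)}=h_F(t)$ that forces the $\d\Gamma_F$-terms to collapse.
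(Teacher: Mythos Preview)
Your proposal is correct and follows exactly the same approach as the paper: the paper's proof of Proposition~\ref{prop appendix} is literally the one-line statement ``We put together the last three lemmas,'' and the intermediate computations you describe (rewriting $\calC(t),\calD(t)$ via \eqref{u and v equations}, using anticommutation to collapse the $a^*a^*$-term, invoking $\overline{h_F(t)}=h_F(t)$ to combine the $\d\Gamma_F$ pieces, and tracking the scalar cancellations) are precisely those carried out in the text between Lemma~\ref{lemma generator R} and Proposition~\ref{prop appendix}. Your write-up is in fact more detailed than the paper's own account.
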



\end{document}